\newtheorem{theorem}{Theorem}
\newtheorem{corollary}{Corollary}
\newtheorem{definition}{Definition}
\newtheorem{observation}{Observation}
\newtheorem{lemma}{Lemma}
\newtheorem{notation}{Notation}
\newenvironment{proof}[1][Proof]{\noindent\textbf{#1.} }{\ \rule{0.5em}{0.5em}}
\newtheorem{redrule}{Reduction Rule}
\begin{document}

\title{\vspace{-0.5cm}Polynomial Fixed-Parameter Algorithms: \\
A Case Study for Longest Path on Interval Graphs\thanks{%
A preliminary conference version of this work appears in the \emph{%
Proceedings of the 10th International Symposium on Parameterized and Exact
Computation (IPEC)}, Patras, Greece, September 2015, pages~102--113~\cite%
{Gian-Mertz-Niederm-IPEC15}.} \ \thanks{%
Partially supported by the EPSRC Grant EP/K022660/1.}}
\author{Archontia C.~Giannopoulou\thanks{%
Institut f\"ur Softwaretechnik und Theoretische Informatik, TU Berlin, Germany. 
Email: \texttt{archontia.giannopoulou@gmail.com}} \ \thanks{%
The main part of this paper was prepared while the author was affiliated at
the School of Engineering and Computing Sciences, Durham University, UK.}
\and George B.~Mertzios\thanks{%
School of Engineering and Computing Sciences, Durham University, UK. Email: 
\texttt{george.mertzios@durham.ac.uk}} 
\and Rolf Niedermeier\thanks{%
Institut f\"ur Softwaretechnik und Theoretische Informatik, TU Berlin,
Germany. Email: \texttt{rolf.niedermeier@tu-berlin.de}}}
\date{\vspace{-1.0cm}}
\maketitle

\begin{abstract}
We study the design of fixed-parameter algorithms for problems already known
to be solvable in polynomial time. The main motivation is to get more
efficient algorithms for problems with unattractive polynomial running
times. Here, we focus on a fundamental graph problem: \textsc{Longest Path},
that is, given an undirected graph, find a maximum-length path in~$G$. 
\textsc{Longest Path} is NP-hard in general but known to be solvable in $%
O(n^{4})$~time on $n$-vertex interval graphs. We show how to solve \textsc{%
Longest Path on Interval Graphs}, parameterized by vertex deletion number~$k$
to proper interval graphs, in $O(k^{9}n)$ time. Notably, \textsc{Longest Path%
} is trivially solvable in linear time on proper interval graphs, and the
parameter value~$k$ can be approximated up to a factor of~4 in linear time.
From a more general perspective, we believe that using parameterized
complexity analysis 
may enable a refined understanding of efficiency aspects for polynomial-time
solvable problems similarly to what classical parameterized complexity
analysis does for NP-hard problems.\newline

\noindent \textbf{Keywords:} polynomial-time algorithm, 
longest path problem, 
interval graphs, 
proper interval vertex deletion~set, 
data reduction, 
fixed-parameter algorithm. 
\end{abstract}

\section{Introduction\label{introduction-sec}}

Parameterized complexity analysis~\cite%
{Cyganetal15,DowneyF13,FG06,Niedermeierbook06} is a flourishing field
dealing with the exact solvability of NP-hard problems. The key idea is to
lift classical complexity analysis, rooted in the P~versus~NP phenomenon,
from a one-dimensional to a two- (or even multi-)dimensional perspective,
the key concept being ``fixed-parameter tractability (FPT)''. But why should
this natural and successful approach be limited to intractable
(i.e.,~NP-hard) problems? We are convinced that appropriately parameterizing
polynomially solvable problems sheds new light on what makes a problem far
from being solvable in \emph{linear} time, in the same way as classical FPT
algorithms help in illuminating what makes an NP-hard problem far from being
solvable in \emph{polynomial} time. 
In a nutshell, the credo and leitmotif of this paper is that ``FPT
inside~P'' is a very interesting, but still too~little explored, line of
research.

The known results fitting under this leitmotif are somewhat scattered around
in the literature and do not systematically refer to or exploit the toolbox
of parameterized algorithm design. This should change and ``FPT inside~P'' 
should be placed on a much wider footing, using parameterized algorithm
design techniques such as data reduction and kernelization. 
As a simple illustrative example, consider the \textsc{Maximum Matching}
problem. By following a ``Buss-like'' kernelization (as is standard
knowledge in parameterized algorithmics~\cite{DowneyF13,Niedermeierbook06})
and then applying a known polynomial-time matching algorithm, it is not
difficult to derive an efficient algorithm that, given a graph $G$ with $n$
vertices, computes a matching of size at least~$k$ in $O(kn+k^{3})$ time.
For the sake of completeness we present the details of this algorithm in
Section~\ref{matching-kernelization-sec}.

More formally, and somewhat more generally, we propose the following
scenario. Given a problem with instance size~$n$ for which there exists an $%
O(n^{c})$-time algorithm, our aim is to identify appropriate parameters $k$
and to derive algorithms with time complexity ${f(k)\cdot n^{c^{\prime}}}$
such that ${c^{\prime}<c}$, where $f(k)$ depends only on~$k$. First we
refine the class FPT by defining, for every polynomially-bounded function $%
p(n)$, the class FPT$(p(n))$ containing the problems solvable in ${f(k)\cdot
p(n)}$ time, where $f(k)$ is an arbitrary (possibly exponential) function of~%
$k$. It is important to note that, in strong contrast to FPT algorithms for
NP-hard problems, here the function $f(k)$ may also become \emph{polynomial}
in~$k$. 
Motivated by this, we refine the complexity class~P by introducing, for
every polynomial function~$p(n)$, the class \emph{P-FPT}$(p(n))$ \emph{%
(Polynomial~Fixed-Parameter Tractable)}, containing the problems solvable in 
${O(k^{t}\cdot p(n))}$ time for some constant ${t\geq 1}$, i.e., the
dependency of the complexity on the parameter~$k$ is at most polynomial. In
this paper we focus our attention on the (practically perhaps most
attractive)~subclass~\emph{PL-FPT} \emph{(Polynomial-Linear Fixed-Parameter
Tractable)}, where PL-FPT = P-FPT$(n)$. 
For example, the algorithm we sketched above for \textsc{Maximum Matching},
parameterized by solution size~$k$, yields containment in the class PL-FPT.

In an attempt to systematically follow the leitmotif ``FPT~inside~P'', we
put forward three desirable algorithmic properties: 

\begin{enumerate}
\item The running time should have a polynomial dependency on the parameter.

\item The running time should be as close to linear as possible if the
parameter value is constant, improving upon an existing ``high-degree''
polynomial-time (unparameterized) algorithm.

\item The parameter value, or a good approximation thereof, 
should be computable efficiently (preferably in linear time) for arbitrary
parameter values.
\end{enumerate}

In addition, as this research direction is still only little explored, we
suggest to focus first on problems for which the best known upper bounds of
the time complexity are polynomials of high degree, e.g., $O(n^4)$ or higher.

\medskip \noindent\textbf{Related work.} Here we discuss previous work on 
\emph{graph problems} that fits under the leitmotif ``FPT inside~P'';
however, there exists further related work also in other topics such as
string matching~\cite{AmirLP04}, XPath query evaluation in XML databases~%
\cite{Bojanczyk-XPath-2011}, and Linear Program solving~\cite{Megiddo84}.

The complexity of some known polynomial-time algorithms can be easily
``tuned'' with respect to specific parameters, thus immediately reducing the
complexity whenever these parameters are bounded. For instance, in $n$%
-vertex and $m$-edge graphs with nonnegative edge weights, Dijkstra's $%
O(m+n\log n)$-time algorithm for computing shortest paths can be adapted to
an $O(m + n\log k)$-time algorithm, where $k$ is the number of distinct edge
weights~\cite{KoutisMP11} (also refer to~\cite{OrlinMSW10}). In addition,
motivated by the quest for explaining the efficiency of several shortest
path heuristics for road networks (where Dijkstra's algorithm is too slow
for routing applications), the ``highway dimension'' was introduced~\cite%
{AbrahamFGW10} as a parameterization helping to do rigorous proofs about the
quality of the heuristics. Altogether, the work on shortest path
computations shows that, despite of known quasi-linear-time algorithms,
adopting a parameterized view may be of significant (practical) interest.

Maximum flow computations constitute another important application area for
``FPT inside~P''. An $O(k^3 n\log n)$-time maximum flow algorithm was
presented~\cite{HochsteinW07} for graphs that can be made planar by deleting~%
$k$~``crossing edges''; notably, here it is assumed that the embedding and
the $k$~crossing edges are given along with the input. An $O(g^8n\log^2
n\log^2 C)$-time maximum flow algorithm was developed~\cite{ChambersEN12},
where $g$ is the genus of the graph and $C$~is the sum of all edge
capacities; here it is also assumed that the embedding and the parameter~$g$
are given in the input. Finally, we remark that multiterminal flow~\cite%
{HagerupKNR98} and Wiener index computations~\cite{CabelloK09} have
exploited the treewidth parameter, assuming that the corresponding tree
decomposition of the graph is given. However, in both publications~\cite%
{HagerupKNR98,CabelloK09} the dependency on the parameter~$k$ is \emph{%
exponential}.

We finally mention that, very recently, two further works delved deeper into ``FPT inside~P'' 
algorithms for \textsc{Maximum Matching}~\cite{FominLPSW_Matching-Treewidth_SODA17,MertziosNN16}.

\medskip \noindent\textbf{Our contribution.} In this paper, to illustrate
the potential algorithmic challenges posed by the ``FPT inside~P'' framework
(which seem to go clearly beyond the known ``FPT inside~P'' examples), we
focus on \textsc{Longest Path on Interval Graphs}, which is known to be
solvable in $O(n^{4})$ time~\cite{longest-int-algo}, and we derive a
PL-FPT-algorithm (with the appropriate parameterization) that satisfies all
three desirable algorithmic properties described above.

The \textsc{Longest Path} problem asks, given an undirected graph $G$, to
compute a maximum-length path in~$G$. On general graphs, the decision
variant of \textsc{Longest Path} is NP-complete and many FPT algorithms have
been designed for it, e.g.,~\cite{AlonYZ95,ChenLSZ07,KneisMRR06,Williams09,FominLS14,BjorklundHKK-JCSS17},
contributing to the parameterized algorithm design toolkit techniques such
as color-coding~\cite{AlonYZ95} (and further randomized techniques~\cite%
{ChenLSZ07,KneisMRR06}) as well as algebraic approaches~\cite{Williams09}. 
The currently best known deterministic FPT algorithm runs in $O(2.851^k n\log^2 n \log W)$ time on weighted graphs with maximum edge weight $W$, 
where $k$ is the number of vertices in the path~\cite{FominLS14}, 
while the currently best known randomized FPT algorithm runs in $O(1.66^k n^{O(1)})$ time with constant, one-sided error~\cite{BjorklundHKK-JCSS17}. 
\textsc{Longest Path} is known to be solvable in polynomial time only on
very few non-trivial graph classes~\cite{longest-int-algo,MertziosCorneil12}
(see also~\cite{Uehara07} for much smaller graph classes). 
This problem has also been studied on directed graphs; 
a polynomial-time algorithm was given by Gutin~\cite{GutinSidma93} for the class of orientations of
multipartite tournaments, which was later extended
by Bang-Jensen and Gutin~\cite{Bang-JensenG97a}.
With respect to undirected graphs, a few years ago it was shown that \textsc{Longest Path on
Interval Graphs} can be solved in polynomial time, providing an algorithm
that runs in $O(n^{4})$ time~\cite{longest-int-algo}; this algorithm has
been extended with the same running time to the larger class of
cocomparability graphs~\cite{MertziosCorneil12} using a lexicographic depth
first search (LDFS) approach. Notably, a longest path in a \emph{proper}
interval graph can be computed by a \emph{trivial} linear-time algorithm
since every connected proper interval graph has a Hamiltonian path~\cite%
{Bertossi83}. Consequently, as the classes of interval graphs and of proper
interval graphs seem to behave quite differently, it is natural to
parameterize \textsc{Longest Path on Interval Graphs} by the size~$k$ of a 
\emph{minimum proper interval (vertex) deletion set}, i.e., by the minimum
number of vertices that need to be deleted to obtain a proper interval
graph. That is, this parameterization exploits what is also known as
``distance from triviality''~\cite{Cai03,GHN04,FJR13,Nie10} in the sense
that the parameter $k$ measures how far a given input instance is from a
trivially solvable special case. As it turns out, one can compute a $4$%
-approximation of~$k$ in~${O(n+m)}$ time for an interval graph with $n$%
~vertices and $m$~edges. Using this constant-factor approximation of $k$, we
provide a polynomial fixed-parameter algorithm that runs in $O(k^{9}n)$
time, thus proving that \textsc{Longest Path on Interval Graphs} is in the
class PL-FPT when parameterized by the size of a minimum proper interval
deletion set.

To develop our algorithm, we first introduce in Section~\ref%
{data-reductions-sec} two data reduction rules on interval graphs. Each of
these reductions shrinks the size of specific vertex subsets, called \emph{%
reducible} and \emph{weakly reducible} sets, respectively. Then, given any
proper interval deletion set $D$ of an interval graph $G$, in Section~\ref%
{special-interval-sec} we appropriately decompose the graph $G\setminus D$
into two collections $\mathcal{S}_{1}$ and $\mathcal{S}_{2}$ of reducible
and weakly reducible sets, respectively, on which we apply the reduction
rules of Section~\ref{data-reductions-sec}. The resulting interval graph $%
\widehat{G}$ is \emph{weighted} (with weights on its vertices) and has some
special properties; we call $\widehat{G}$ a \emph{special weighted interval
graph} with \emph{parameter} $\kappa $, where in this case $\kappa =O(k^{3})$%
. Notably, although~$\widehat{G}$ has reduced size, it still has $O(n)$
vertices. Then, in Section~\ref{algorithm-sec} we present a fixed-parameter
algorithm (with parameter $\kappa $) computing in $O(\kappa ^{3}n)$ time the
maximum weight of a path in a special weighted interval graph. We note here
that such a maximum-weight path in a special weighted interval graph can be
directly mapped back to a longest path in the original interval graph. Thus,
our parameterized algorithm computes a longest path in the initial interval
graph~$G$ in $O(\kappa ^{3}n)=O(k^{9}n)$ time.

Turning our attention away from \textsc{Longest Path on Interval Graphs} we
present for the sake of completeness our ``Buss-like'' kernelization of the 
\textsc{Maximum Matching} problem in Section~\ref{matching-kernelization-sec}%
. Using this kernelization an efficient algorithm can be easily deduced
which, given an arbitrary graph~$G$ with $n$~vertices, computes a matching
of size at least~$k$ in~$G$ in $O(kn+k^{3})$ time. Finally, in the
concluding Section~\ref{outlook-sec} we discuss our contribution and provide
a brief outlook for future research directions.

\medskip \noindent \textbf{Notation.} 
We consider finite, simple, and undirected graphs. Given a graph $G$, we
denote by $V(G)$ and $E(G)$ the sets of its vertices and edges,
respectively. A graph $G$ is \emph{weighted} if it is given along with a
weight function $w:V(G)\rightarrow \mathbb{N}$ on its vertices. An edge
between two vertices $u$ and $v$ of a graph $G=(V,E)$ is denoted by $uv$,
and in this case $u$ and $v$ are said to be \emph{adjacent}. The \emph{%
neighborhood} of a vertex $u\in V$ is the set $N(u)=\{v\in V\ |\ uv\in E\}$
of its adjacent vertices. The cardinality of~$N(u)$ is the degree~$\deg (u)$
of~$u$. For every subset $S\subseteq V$ we denote by $G[S]$ the subgraph of~$%
G$ induced by the vertex set $S$ and we define $G\setminus S=G[V\setminus S]$%
. A set $S\subseteq V$ induces an \emph{independent set} (resp.~a \emph{%
clique}) in $G$ if $uv\notin E$ (resp.~if $uv\in E$) for every pair of
vertices $u,v\in S$. Furthermore, $S$ is a \emph{vertex cover} if and only
if $V\setminus S$ is an independent set. For any two graphs $G_{1},G_{2}$,
we write $G_{1}\subseteq G_{2}$ if $G_{1}$ is an induced subgraph of $G_{2}$%
. A \emph{matching} $M$ in a graph $G$ is a set of edges of $G$ without
common vertices. 
All paths considered in this paper are simple. Whenever a path $P$ visits
the vertices ${v_{1},v_{2},\ldots ,v_{k}}$ in this order, we write $P=({%
v_{1},v_{2},\ldots ,v_{k})}$. Furthermore, for two vertex-disjoint paths $P={%
(a,\ldots ,b)}$ and $Q={(c,\ldots ,d)}$ where $bc\in E$, we denote by $(P,Q)$
the path $({a,\ldots ,b,c,\ldots ,d})$.

A graph $G=(V,E)$ is an \emph{interval} graph if each vertex $v\in V$ can be
bijectively assigned to a closed interval $I_{v}$ on the real line, such
that $uv\in E$ if and only if $I_{u}\cap I_{v}\neq \emptyset $, and then the
collection of intervals $\mathcal{I}=\{I_{v}:v\in V\}$ is called an \emph{%
interval representation} of $G$. The interval graph $G$ is a \emph{proper
interval graph} if it admits an interval representation $\mathcal{I}$ such
that $I_{u}\nsubseteq I_{v}$ for every $u,v\in V$, and then $\mathcal{I}$ is
called a \emph{proper interval representation}. Given an interval graph $%
G=(V,E)$, a subset $D\subseteq V$ is a \emph{proper interval deletion set}
of~$G$ if $G\setminus D$ is a proper interval graph. The \emph{proper
interval deletion number} of~$G$ is the size of the smallest proper interval
deletion set. Finally, for any positive integer $t$, we denote $%
[t]=\{1,2,\ldots ,t\}$.

\section{Data reductions on interval graphs\label{data-reductions-sec}}

In this section we present two data reductions on interval graphs. The first
reduction (cf.~Section~\ref{first-data-reduction-subsec}) shrinks the size
of a collection of vertex subsets of a certain kind, called \emph{reducible}
sets, and it produces a \emph{weighted} interval graph. The second reduction
(cf.~Section~\ref{second-data-reduction-subsec}) is applied to an arbitrary
weighted interval graph; it shrinks the size of a collection of another kind
of vertex subsets, called \emph{weakly reducible} sets, and it produces a
smaller weighted interval graph. Both reductions retain as an invariant the
maximum path weight. The proof of this invariant is based on the crucial
notion of a \emph{normal} path in an interval graph (cf.~Section~\ref%
{normal-paths-subsec}). The following vertex ordering characterizes interval
graphs~\cite{Ramalingam88}. Moreover, given an interval graph $G$ with~$n$
vertices and $m$ edges, this vertex ordering of $G$ can be computed in $%
O(n+m)$ time~\cite{Ramalingam88}.

\begin{lemma}[\hspace{-0.001cm}\protect\cite{Ramalingam88}]
\label{lem:ordering.edges} A graph $G$ is an interval graph if and only if
there is an ordering $\sigma $ (called \emph{right-endpoint ordering}) of~$%
V(G)$ such that for all $u<_{\sigma }v<_{\sigma }z$, if $uz\in E(G)$ then
also $vz\in E(G)$.
\end{lemma}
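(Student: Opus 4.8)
The plan is to prove both directions of the equivalence by translating between interval representations and vertex orderings in the natural way; the relevant ordering is, as its name suggests, the one obtained by sorting the vertices according to the right endpoints of their intervals.

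For the forward direction I would start from an interval representation $\{I_v = [l_v, r_v] : v \in V(G)\}$ of $G$. After an arbitrarily small perturbation I may assume that all $2|V(G)|$ endpoints are pairwise distinct, which leaves the intersection pattern, and hence $G$, unchanged but lets me work with strict inequalities. I then let $\sigma$ order the vertices by increasing right endpoint, so that $u <_\sigma v$ exactly when $r_u < r_v$. Given any $u <_\sigma v <_\sigma z$ with $uz \in E(G)$, I would argue as follows: from $I_u \cap I_z \neq \emptyset$ together with $r_u < r_z$ one gets $l_z \le r_u$, and since $r_u < r_v$ this yields $l_z \le r_v$; combined with $r_v < r_z$ the intervals $I_v$ and $I_z$ must overlap, so $vz \in E(G)$, which is the desired property of $\sigma$.

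For the converse I would construct an interval representation directly from a given ordering $\sigma$ with the stated property. Writing $\mathrm{pos}(v)$ for the rank of $v$ in $\sigma$, I set the right endpoint $r_v := \mathrm{pos}(v)$ and the left endpoint $l_v := \min\{\mathrm{pos}(w) : w \in N[v]\}$, the smallest rank in the closed neighborhood $N[v] = N(v) \cup \{v\}$. For two vertices with $u <_\sigma v$ one has $\min(r_u, r_v) = r_u = \mathrm{pos}(u)$ and $l_u \le \mathrm{pos}(u)$, so that $I_u \cap I_v \neq \emptyset$ holds precisely when $l_v \le \mathrm{pos}(u)$. If $uv \in E(G)$ then $u \in N[v]$ forces $l_v \le \mathrm{pos}(u)$, so adjacent vertices receive intersecting intervals, and the easy half of the equivalence is done.

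The crux is the reverse implication, where the ordering property must do the work. Assuming $I_u \cap I_v \neq \emptyset$ with $u <_\sigma v$, I have $l_v \le \mathrm{pos}(u)$, so by definition of $l_v$ there is some $w \in N[v]$ with $\mathrm{pos}(w) \le \mathrm{pos}(u) < \mathrm{pos}(v)$. This $w$ cannot be $v$, hence $wv \in E(G)$; either $w = u$, giving $uv \in E(G)$ at once, or $w <_\sigma u <_\sigma v$, in which case the defining property of $\sigma$ applied with first vertex $w$, middle vertex $u$, last vertex $z = v$ and the edge $wv \in E(G)$ yields $uv \in E(G)$. Thus intersecting intervals correspond exactly to edges and $\{[l_v, r_v]\}$ represents $G$. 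I expect the main obstacle to be precisely this last step: one must choose the left endpoints as closed-neighborhood \emph{minima} so that any spurious intersection is always witnessed by a genuine neighbor lying to the left, which is exactly the configuration the ordering property forbids from producing a non-edge.
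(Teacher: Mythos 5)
Your proof is correct. Note that the paper itself gives no proof of this lemma --- it is imported verbatim from Ramalingam and Rangan --- so there is nothing internal to compare against; your argument is the standard one for this characterization (sort by right endpoints for the forward direction; for the converse, set $r_v$ to the rank of $v$ and $l_v$ to the minimum rank over $N[v]$, so that any spurious intersection is witnessed by a neighbor $w$ of $v$ with $w <_\sigma u <_\sigma v$, whence the ordering property forces $uv \in E(G)$). Both directions are carried out without gaps, and the key step --- choosing left endpoints as closed-neighborhood minima precisely so that the ordering property can be invoked --- is correctly identified and executed.
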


In the remainder of the paper we assume that we are given an interval graph~$%
G$ with $n$~vertices and $m$~edges as input, together with an interval
representation~$\mathcal{I}$ of~$G$, where the endpoints of the intervals
are given sorted increasingly. Without loss of generality, we assume that
the endpoints of all intervals are distinct. For every vertex $v\in V(G)$ we
denote by $I_{v}=[l_{v},r_{v}]$ the interval of $\mathcal{I}$ that
corresponds to $v$, i.e.,~$l_{v}$ and~$r_{v}$ are the left and the right
endpoint of~$I_{v}$, respectively. In particular, $G$ is assumed to be given
along with the \emph{right-endpoint ordering} $\sigma $ of its vertices~$%
V(G) $, i.e.,~${u<_{\sigma }v}$ if and only if ${r_{u}<r_{v}}$ in the
interval representation~$\mathcal{I}$ (see also Lemma~\ref%
{lem:ordering.edges}). Given a set $S\subseteq V(G)$, we denote by $\mathcal{%
I}[S]$ the interval representation induced from $\mathcal{I}$ on the
intervals of the vertices of~$S$. We say that two vertices $u_{1},u_{2}\in S$
are \emph{consecutive in~$S$} (with respect to the vertex ordering~$\sigma $%
) if $u_{1}<_{\sigma }u_{2}$ and for every vertex $u\in S\setminus
\{u_{1},u_{2}\}$ either $u<_{\sigma }u_{1}$ or $u_{2}<_{\sigma }u$.
Furthermore, for two sets $S_{1},S_{2}\subseteq V(G)$, we write $%
S_{1}<_{\sigma }S_{2}$ whenever $u<_{\sigma }v$ for every $u\in S_{1}$ and $%
v\in S_{2}$. Finally, we denote by $\mathbf{span}(S)$ the interval $[\min
\{l_{v}:~v\in S\},\max \{r_{v}:~v\in S\}]$.

It is well known that an interval graph $G$ is a proper interval graph if
and only if~$G$ is $K_{1,3}$-free, i.e.,~if~$G$ does not include the claw 
$K_{1,3}$ with four vertices (cf.~Figure~\ref{claw-K-1-3-fig}) as an induced
subgraph~\cite{Roberts69}. It is worth noting here that, to the best of our
knowledge, it is unknown whether a minimum proper interval deletion set of
an interval graph $G$ can be computed in polynomial time. 
However, since there is a unique forbidden induced subgraph $K_{1,3}$ on four vertices, 
we can apply Cai's generic algorithm~\cite{Cai96} on an arbitrary given
interval graph $G$ with $n$ vertices to compute a proper interval deletion
set of~$G$ of size at most $k$ in FPT time $4^{k}\cdot \text{poly}(n)$. 
The main idea of Cai's bounded search tree algorithm in our case is that we repeat the following two steps until we either get 
a $K_{1,3}$-free graph or have used up the deletion of $k$ vertices from $G$: 
(i)~detect an induced $K_{1,3}$, (ii)~branch on deleting one of the four vertices of the detected $K_{1,3}$. 
At every iteration of this process we have four possibilities on the next vertex to delete. 
Thus, since we can delete up to $k$ vertices in total, 
the whole process finishes after at most $4^{k}\cdot \text{poly}(n)$ steps.

As we prove in the next theorem, a $4$-approximation of the minimum proper
interval deletion number of an interval graph can be computed much more
efficiently.

\begin{figure}[t]
\centering%
\subfigure[]{ \label{claw-K-1-3-fig-1}
\includegraphics[scale=0.6]{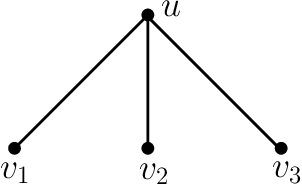}} \hspace{2,0cm} 
\subfigure[]{ \label{claw-K-1-3-fig-2}
\includegraphics[scale=0.6]{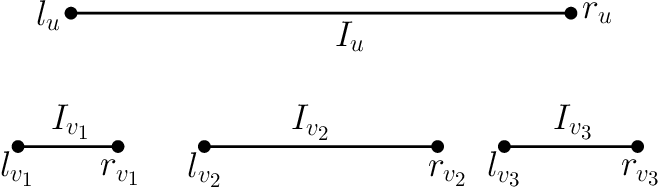}}
\caption{(a) The forbidden induced subgraph (claw $K_{1,3}$) for an interval
graph to be a proper interval graph and (b)~an interval representation of
the $K_{1,3}$.}
\label{claw-K-1-3-fig}
\end{figure}

\begin{theorem}
\label{proper-deletion-set-approximation-thm}Let $G=(V,E)$ be an interval
graph, where $|V|=n$ and $|E|=m$. Let $k$ be the size of the minimum proper
interval deletion set of~$G$. Then a proper interval deletion set $D$ of
size at most $4k$ can be computed in $O(n+m)$ time.
\end{theorem}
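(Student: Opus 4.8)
The plan is to exploit the characterization that an interval graph is a proper interval graph if and only if it contains no induced $K_{1,3}$ (the claw), combined with the right-endpoint ordering $\sigma$ from Lemma~\ref{lem:ordering.edges}. The central idea is that each claw must be ``hit'' by deleting at least one of its four vertices, so if I can greedily find a collection of vertex-disjoint induced claws, then any minimum deletion set must contain at least one vertex from each claw, giving a lower bound on $k$, while deleting all four vertices of each claw in my collection gives a feasible deletion set of size at most four times the number of claws found.

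\medskip
\noindent\textbf{The main steps.} First I would scan the vertices in the right-endpoint order $\sigma$ and detect claws locally. The key structural fact I need is a characterization of when a claw appears that can be read off from $\sigma$ and the interval endpoints: using the ordering property, a claw centered at some vertex $c$ corresponds to the existence of three pairwise-nonadjacent vertices all adjacent to $c$. In an interval representation, two neighbors of $c$ are nonadjacent exactly when one lies entirely to the left of the other within the neighborhood; so a claw at $c$ exists precisely when $c$ has a neighbor starting before $l_c$ (overlapping on the left) and is contained in the span in a way that forces three mutually disjoint intervals around it. I would make this detection precise as a local test at each vertex that runs in time proportional to its degree, so the total detection cost over all vertices is $O(n+m)$.

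\medskip
\noindent Second, I would build a \emph{maximal packing} of vertex-disjoint induced claws greedily: sweep through $\sigma$, and whenever an induced claw is detected on currently ``unused'' vertices, add it to the packing and mark all four of its vertices as used, skipping over used vertices thereafter. Let $t$ be the number of claws in this packing; set $D$ to be the $4t$ vertices appearing in these claws. For the upper bound, $|D|=4t$ by construction. For the lower bound, since the claws are vertex-disjoint and each is an induced $K_{1,3}$, any proper interval deletion set must delete at least one vertex from each, so $k \ge t$, whence $|D| = 4t \le 4k$. The remaining obligation is to verify \emph{correctness}: that $G \setminus D$ is actually claw-free. This follows from \emph{maximality} of the packing---if $G\setminus D$ still contained an induced claw, that claw would be vertex-disjoint from all packed claws (its vertices survive in $G\setminus D$), contradicting that we could no longer extend the packing. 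Here I must be careful that the greedy sweep indeed produces a \emph{maximal} packing, i.e., that no claw remains entirely within unused vertices when the sweep ends.

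\medskip
\noindent\textbf{The main obstacle.} The delicate part is the claim that a single left-to-right sweep in $\sigma$, detecting claws only among not-yet-used vertices, yields a maximal vertex-disjoint packing in linear time. Proving maximality requires showing that if no claw is found during the sweep among surviving vertices, then $G\setminus D$ genuinely has no induced claw anywhere---this is nontrivial because a claw could in principle span positions that the sweep examined at different, non-contiguous times. I expect to resolve this by leaning on the right-endpoint ordering: I would argue that the existence of \emph{any} induced claw forces the existence of one whose vertices are detectable at the moment the sweep processes the rightmost-ending vertex among its leaves, so that a purely local, amortized $O(\deg)$ check at each vertex suffices to certify claw-freeness of the residual graph. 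Making this locality argument watertight, while keeping the endpoint-comparison bookkeeping within the $O(n+m)$ budget, is the crux of the proof.
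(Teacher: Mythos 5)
Your overall scheme coincides with the paper's: greedily accumulate vertex-disjoint induced claws, put all four vertices of each into $D$, and conclude $k\ge |D|/4$ from disjointness while claw-freeness of $G\setminus D$ follows from maximality of the packing. That part of your argument is sound. The genuine gap is exactly the step you yourself flag as "the crux": you never supply the mechanism that lets you decide, in $O(\deg(u))$ amortized time, whether an unprocessed vertex $u$ is the center of an induced claw among the currently surviving vertices. Your proposed characterization ("a claw at $c$ exists precisely when $c$ has a neighbor starting before $l_c$ \dots and is contained in the span in a way that forces three mutually disjoint intervals around it") is not a correct or usable test as stated, and without a concrete test the claimed $O(n+m)$ bound and the maximality certificate are both unproven. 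Naively checking all triples of neighbors of $u$ is cubic in $\deg(u)$, so something nontrivial is needed here.

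The missing idea, which is the actual content of the paper's proof, is an \emph{extremal-neighbor reduction}: for each unmarked $u$, compute the unmarked neighbor $z_1(u)$ with the leftmost right endpoint and the unmarked neighbor $z_2(u)$ with the rightmost left endpoint; then $u$ is the center of an induced claw on unmarked vertices if and only if $\{u,v,z_1(u),z_2(u)\}$ induces a claw for some unmarked neighbor $v$. Indeed, if $\{u,v_1,v_2,v_3\}$ is a claw with $v_1<_{\sigma}v_2<_{\sigma}v_3$ and $I_{v_2}\subseteq I_u$, then $r_{z_1(u)}\le r_{v_1}<l_{v_2}$ and $l_{z_2(u)}\ge l_{v_3}>r_{v_2}$, so $\{u,v_2,z_1(u),z_2(u)\}$ is again a claw. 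This gives the $O(\deg(u))$ test (two passes over $N(u)$), hence $O(n+m)$ overall. Your second worry, that the sweep might miss a claw whose vertices are examined at "non-contiguous times," dissolves once this test is in place: marking is monotone, so any claw surviving in $G\setminus D$ had all four vertices unmarked at the moment its center was processed, and the extremal test would have detected one there. Also note the test should be run with $v$ ranging over all unmarked neighbors of $u$ (centers, not "rightmost-ending leaves" as you suggest); anchoring the detection at leaves is not what makes the locality argument go through.
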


\begin{proof}
Let $\{u,v_{1},v_{2},v_{3}\}$ be a set of four vertices that induces a $%
K_{1,3}$ in~$G$ such that $v_{1},v_{2},v_{3}\in N(u)$, $v_{1}<_{\sigma
}v_{2}<_{\sigma }v_{3}$, and $I_{v_{2}}\subseteq I_{u}$ in the interval
representation $\mathcal{I}$ (cf.~Figure~\ref{claw-K-1-3-fig}). Let $%
v_{1}^{\prime }$ (resp.~$v_{3}^{\prime }$) be the neighbor of vertex $u$
with the leftmost right endpoint $r_{v_{1}^{\prime }}$ (resp.~with the
rightmost left endpoint $l_{v_{3}^{\prime }}$) in the representation $%
\mathcal{I}$, i.e.,~$r_{v_{1}^{\prime }}=\min \{r_{v}:v\in N(u)\}$ and $%
l_{v_{3}^{\prime }}=\max \{l_{v}:v\in N(u)\}$. Then note that the set $%
\{u,v_{1}^{\prime },v_{2},v_{3}^{\prime }\}$ also induces a $K_{1,3}$ in~$G$.

We now describe an $O(n+m)$-time algorithm that iteratively detects an
induced $K_{1,3}$ and removes its vertices from the current graph. During
its execution the algorithm maintains a set $D$ of ``marked'' vertices; a
vertex is marked if it has been removed from the graph at a previous
iteration. Initially, $D=\emptyset $, i.e., all vertices are unmarked. The
algorithm processes once every vertex $u\in V$ in an arbitrary order. If $%
u\in D$ (i.e., if $u$ has been marked at a previous iteration), then the
algorithm ignores $u$ and proceeds with the next vertex in $V$. If $u\notin
D $ (i.e., if $u$ is unmarked), then the algorithm iterates for every vertex 
$v\in N(u)\setminus D$ and it computes the vertices $z_{1}(u),z_{2}(u)\in
N(u)\setminus D$ such that $r_{z_{1}(u)}=\min \{r_{v}:v\in N(u)\setminus D\}$
and $l_{z_{2}(u)}=\max \{l_{v}:v\in N(u)\setminus D\}$. In the case where $%
N(u)\setminus D=\emptyset $, the algorithm defines $z_{1}(u)=z_{2}(u)=u$.
Then the algorithm iterates once again for every vertex $v\in N(u)\setminus
D $ and it checks whether the set $\{u,v,z_{1}(u),z_{2}(u)\}$ induces a $%
K_{1,3}$ in~$G$. If it detects at least one vertex $v\in N(u)\setminus D$
such that $\{u,v,z_{1}(u),z_{2}(u)\}$ induces a $K_{1,3}$, then it marks all
four vertices $\{u,v,z_{1}(u),z_{2}(u)\}$, i.e.,~it adds these vertices to
the set $D$. Otherwise the algorithm proceeds with processing the next
vertex of $V$. It is easy to check that every vertex $u\in V$ is processed
by this algorithm in $O(\deg (u))$ time, and thus all vertices of $V$ are
processed in $O(n+m)$ time in total.

The algorithm terminates after it has processed all vertices of $V$ and it
returns the computed set $D$ of all quadruples of marked vertices. Note that
there are $\frac{|D|}{4}$ such quadruples. This set $D$ is clearly a proper
interval deletion set of~$G$, since $G\setminus D$ does not contain an
induced $K_{1,3}$, i.e.,~$k\leq |D|$. In addition, each of the detected
quadruples of the set $D$ induces a $K_{1,3}$ in the initial interval graph $%
G$, and thus any minimum proper interval deletion set must contain at least
one vertex from each of these quadruples, i.e.,~$k\geq \frac{|D|}{4}$.
Summarizing $k\leq |D|\leq 4k$.
\end{proof}

\medskip

Note that, whenever four vertices induce a claw $K_{1,3}$ in an interval
graph~$G$, then in the interval representation $\mathcal{I}$ of $G$ at least
one of these intervals is necessarily properly included in another one
(e.g.,~ $I_{v_{2}}\subseteq I_{u}$ in Figure~\ref{claw-K-1-3-fig-2}).
However the converse is not always true, as there may exist two vertices $%
u,v $ in $G$ such that $I_{v}\subseteq I_{u}$, although $u$ and $v$ do not
belong to any induced claw $K_{1,3}$ in $G$. 

\begin{definition}
\label{semi-proper-representation-def}Let $G=(V,E)$ be an interval graph. An
interval representation $\mathcal{I}$ of $G$ is \emph{semi-proper} when, for
any $u,v\in V$:

\begin{itemize}
\item if $I_{v}\subseteq I_{u}$ in $\mathcal{I}$ then the vertices $u$ and $%
v $ belong to an induced claw $K_{1,3}$ in $G$, i.e.~$\{u,v,a,b\}$ induces a
claw $K_{1,3}$ in $G$ for some vertices $a,b$.
\end{itemize}
\end{definition}

Every interval representation $\mathcal{I}$ of a graph $G$ can be
efficiently transformed into a semi-proper representation $\mathcal{I}%
^{\prime }$ of $G$, as we prove in the next theorem. In the remainder of the
paper we always assume that this preprocessing step has been already applied
to $\mathcal{I}$.

\begin{theorem}[preprocessing]
\label{interval-representation-preprocessing-thm}Given an interval
representation $\mathcal{I}$, a semi-proper interval representation $%
\mathcal{I}^{\prime }$ can be computed in~$O(n+m)$ time. 
\end{theorem}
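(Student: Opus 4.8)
The plan is to transform $\mathcal{I}$ into a semi-proper representation by identifying all "bad" containments—pairs $u,v$ with $I_v \subseteq I_u$ where $u,v$ do \emph{not} participate in any induced claw $K_{1,3}$—and then perturbing the endpoints of the offending intervals to eliminate each such containment while preserving the intersection graph, i.e.\ keeping $G$ unchanged. The key structural observation (assumed from the discussion preceding Definition~\ref{semi-proper-representation-def}) is that a containment $I_v \subseteq I_u$ forces a claw only when $u$ has a neighbor strictly to the left of $I_v$ and another strictly to the right; if $u$ has no such "flanking" neighbors, then every neighbor of $v$ is also a neighbor of $u$, and we may safely shrink $I_u$ (or, symmetrically, stretch $I_v$) to break the containment without adding or removing any edge.

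First I would make precise when a containment is \emph{removable}. Suppose $I_v \subseteq I_u$. If $\{u,v\}$ lies in an induced claw, there is nothing to fix, so assume it does not. I would argue that in this case the set $N(u) \cup \{u\}$ and $N(v)\cup\{v\}$ interact in a restricted way: because no claw is induced, $u$ cannot simultaneously have a neighbor whose interval lies entirely left of $l_v$ and one entirely right of $r_v$ (such a configuration, together with $v$, would be a claw centered at $u$). Hence either all neighbors of $u$ meet $I_v$ on one side, which lets me slide the corresponding endpoint of $I_u$ inward past $l_v$ or $r_v$ so that $I_v \nsubseteq I_u$, without disturbing any intersection. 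I would phrase this as a local endpoint-shift operation and verify that it changes no adjacency.

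Next I would turn this into an $O(n+m)$-time algorithm. Using the right-endpoint ordering $\sigma$ of Lemma~\ref{lem:ordering.edges}, I would process vertices in order and, for each $u$, detect the maximal neighbor interval containments efficiently; the ordering ensures that containment relations are nested consistently, so a single left-to-right sweep with a stack-like structure can identify, for each $u$, whether it has flanking neighbors on both sides, and thus which containments are removable. For each removable containment I apply the endpoint shift. Since each intersection test and each endpoint adjustment is local and the sweep touches each vertex and each edge a constant number of times, the total work is $O(n+m)$; sorting the endpoints is already given by hypothesis. I would conclude that after the sweep, no removable containment remains, so the resulting $\mathcal{I}'$ is semi-proper and still represents $G$.

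The main obstacle I anticipate is establishing correctness of the endpoint shifts in the presence of \emph{many simultaneous} containments: moving one endpoint to break one containment must not inadvertently create a new forbidden containment or destroy an edge elsewhere, and it must not break a containment that is genuinely claw-certified (those we must keep). The careful part is therefore showing that the shifts can be carried out consistently in a single pass—choosing new endpoint values (e.g.\ by interleaving them at rational coordinates strictly between existing endpoints) so that all removable containments are eliminated at once while every claw-certified containment and every original intersection is preserved. I expect the cleanest route is to define the new left and right endpoints implicitly via their relative order and prove that this order is realizable with distinct coordinates, reducing the geometric argument to a purely combinatorial statement about the endpoint permutation.
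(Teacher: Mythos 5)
Your overall strategy---perturb endpoints so as to destroy every containment that is not certified by a claw, while keeping the intersection graph fixed---is the same as the paper's, and your structural criterion (a containment $I_v\subseteq I_u$ such that $u$ lacks a non-neighbor of $v$ on one of the two sides of $I_v$ can be broken locally) is essentially what the paper implements via the two extreme neighbors $z_{1}(u)$ (leftmost right endpoint among $N(u)$) and $z_{2}(u)$ (rightmost left endpoint among $N(u)$). The genuine gap sits exactly where you flag your ``main obstacle'': you never resolve how to carry out all the shifts simultaneously without creating new uncertified containments, and your chosen operation makes this a real difficulty rather than a bookkeeping issue. You propose to \emph{shrink the container} $I_u$. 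Shrinking $r_u$ below $r_v$ can turn a neighbor $w$ that previously overlapped $I_u$ properly from the left (i.e.\ $l_w<l_u<r_w<r_u$) into a container of $I_u$, producing a fresh containment $I_u\subseteq I_w$ that need not be claw-certified; repairing that one can create others, and nothing in your sketch bounds or terminates the cascade. The closing appeal to ``defining the new endpoints implicitly via their relative order and proving the order realizable'' is precisely the statement that requires proof, not a route to one.

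The paper avoids the cascade by never touching the container: it \emph{stretches the contained interval} $I_v$ just past the boundary of $I_u$ (to $r_u+\varepsilon_i$ when $z_{2}(u)\in N(v)$, and symmetrically to $l_u-\varepsilon_i$ when $z_{2}(u)\notin N(v)$ but $z_{1}(u)\in N(v)$), with offsets ordered $\varepsilon_1<\varepsilon_2<\cdots$ according to the left-endpoint order of the contained intervals. The condition $z_{2}(u)\in N(v)$ guarantees every neighbor of $u$ already reaches $I_v$, so the stretch adds no adjacency; the monotone offsets make the stretched intervals a proper ``staircase'' among themselves; and any containment surviving in $\mathcal{I}'$ was already present in $\mathcal{I}$ and satisfies $z_{1}(u),z_{2}(u)\notin N(v)$, which exhibits the certifying claw $\{u,v,z_{1}(u),z_{2}(u)\}$. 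Two smaller points: your claim that a claw through $u$ and $v$ requires flanking non-neighbors of $v$ on \emph{both} sides is not an exact characterization (two pairwise disjoint avoiding neighbors on the same side of $I_v$ also yield a claw); this errs on the safe side for your algorithm but should not be stated as an equivalence. And the $O(n+m)$ bound needs a concrete mechanism---the paper precomputes $z_{1}(u),z_{2}(u)$ for all $u$ and realizes the $\varepsilon_i$'s implicitly by re-mapping endpoints to $\{1,\dots,2n\}$ with linked-list insertions---whereas your stack-based sweep is asserted rather than argued.
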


\begin{proof}
Similarly to the proof of Theorem~\ref{proper-deletion-set-approximation-thm}%
, the algorithm first computes for every vertex $u\in V$ the vertices $%
z_{1}(u),z_{2}(u)\in N(u)$, such that $r_{z_{1}(u)}=\min \{r_{v}:v\in N(u)\}$
and $l_{z_{2}(u)}=\max \{l_{v}:v\in N(u)\}$. If $N(u)=\emptyset$, then the
algorithm defines $z_{1}(u)=z_{2}(u)=u$.

The algorithm iterates over all $u\in V$. For each $u\in V$, the algorithm
iterates over all $v\in N(u)$ such that $I_{v}\subseteq I_{u}$ in the
current interval representation. Let these vertices be $\{v_{1},v_{2},\ldots
,v_{t}\}$, where $l_{v_{1}}<l_{v_{2}}<\ldots <l_{v_{t}}$. The algorithm
processes the vertices $\{v_{1},v_{2},\ldots ,v_{t}\}$ in this order. For
every $i\in \{1,2,\ldots ,t\}$, if $z_{2}(u)\in N(v_{i})$, then the
algorithm increases the right endpoint of~$I_{v_{i}}$ to the point $%
r_{u}+\varepsilon _{i}$, for an appropriately small $\varepsilon _{i}>0$.
The algorithm chooses the values of~$\varepsilon _{i}$ such that $%
\varepsilon _{1}<\varepsilon _{2}<\ldots <\varepsilon _{t}$. By performing
these operations no new adjacencies are introduced, and thus the resulting
interval representation remains a representation of the same interval graph $%
G$.

We note here that the algorithm can be efficiently implemented (i.e., in $%
O(n+m)$ time in total) without explicitly computing the values of these~$%
\varepsilon _{i}$, as follows. Since the endpoints of the $n$ intervals of $%
\mathcal{I}$ are assumed to be given increasingly sorted, we initially scan
them from left to right and map them bijectively to the integers $%
\{1,2,\ldots ,2n\}$. Then, instead of increasing the right endpoint of~$%
I_{v_{i}}$ to the point $r_{u}+\varepsilon _{i}$ as described above, where $%
i\in \{1,2,\ldots ,t\}$, we just store the vertices $\{v_{1},v_{2},\ldots
,v_{t}\}$ (in this order) in a linked list after the endpoint $r_{u}$. At
the end of the whole process (i.e., after dealing with all pairs of vertices 
$u,v$ such that $v\in N(u)$ and $I_{v}\subseteq I_{u}$ in the interval
representation $\mathcal{I}$), we scan again all interval endpoints from
left to right and re-map them bijectively to the integers $\{1,2,\ldots
,2n\} $, where in this new mapping we place the endpoints $%
\{r_{v_{1}},r_{v_{2}},\ldots ,r_{v_{t}}\}$ (in this order) immediately after 
$r_{u}$. This can be clearly done in $O(n+m)$ time.

Then the algorithm iterates (again) over all $v\in N(u)$ such that $%
I_{v}\subseteq I_{u}$ in the current interval representation. Let these
vertices be $\{v_{1},v_{2},\ldots ,v_{t^{\prime }}\}$, where $%
r_{v_{1}}>r_{v_{2}}>\ldots >r_{v_{t^{\prime }}}$. The algorithm processes
the vertices $\{v_{1},v_{2},\ldots ,v_{t^{\prime }}\}$ in this order. For
every $i\in \{1,2,\ldots ,t^{\prime }\}$, if $z_{2}(u)\notin N(v_{i})$ and $%
z_{1}(u)\in N(v_{i})$, then the algorithm decreases the left endpoint of~$%
I_{v_{i}}$ to the point $l_{u}-\varepsilon _{i}$, for an appropriately small 
$\varepsilon _{i}>0$. The algorithm chooses the values of~$\varepsilon _{i}$
such that $\varepsilon _{1}<\varepsilon _{2}<\ldots <\varepsilon _{t}$.
Similarly to the above, no new adjacencies are introduced by performing
these operations, and thus the resulting interval representation remains a
representation of the same interval graph $G$. Furthermore, the algorithm
can be efficiently implemented (i.e., in $O(n+m)$ time in total) without
explicitly computing these values of $\varepsilon _{i}$, similarly to the
description in the previous paragraph.

Denote by $\mathcal{I}^{\prime }$ the resulting interval representation of~$%
G $, which is obtained after performing all the above operations.
Furthermore denote by $\sigma ^{\prime }$ the right-endpoint ordering of the
intervals in~$\mathcal{I}^{\prime }$. Let $u,v\in V$. It can be easily
checked that, if $I_{v}\subseteq I_{u}$ in~$\mathcal{I}^{\prime }$, then
also $I_{v}\subseteq I_{u}$ in the initial representation $\mathcal{I}$.
Furthermore, it follows directly by the above construction that, if $%
I_{v}\subseteq I_{u}$ in~$\mathcal{I}^{\prime }$, then $z_{1}(u),z_{2}(u)%
\notin N(v)$, where $z_{1}<_{\sigma ^{\prime }}v<_{\sigma ^{\prime }}z_{2}$,
and thus the vertices $\{u,v,z_{1}(u),z_{2}(u)\}$ induce a $K_{1,3}$ in~$G$.

The computation of the vertices $z_{1}(u),z_{2}(u)$ for all vertices $u\in V$
can be done in~$O(n+m)$ time. Furthermore, for every $u\in V$ we can visit
all vertices $v\in N(u)$ in~$O(\deg (u))$ time in the above algorithm, since
the endpoints of the intervals are assumed to be given sorted in increasing
order. For every such edge $uv\in E$, where $I_{v}\subseteq I_{u}$, we can
check in~$O(1)$ time whether $z_{1}(u)\in N(v)$ (resp.~whether $z_{2}(u)\in
N(v)$) by checking whether $r_{v}<l_{z_{2}(u)}$ (resp.~by checking whether $%
r_{z_{1}(u)}<l_{v}$). Therefore, the total running time of the algorithm is $%
O(n+m)$.
\end{proof}

\subsection{Normal paths\label{normal-paths-subsec}}

All our results on interval graphs rely on the notion of a \emph{normal}
path~\cite{longest-int-algo} (also referred to as a \emph{straight} path in~%
\cite{Damasc93,Keil85}). This notion has also been extended to the greater
class of cocomparability graphs~\cite{MertziosCorneil12}. Normal paths are
useful in the analysis of our data reductions in this section, as well as in
our algorithm in Section~\ref{algorithm-sec}, as they impose certain \emph{%
monotonicity} properties of the paths. Informally, the vertices in a normal
path appear in a ``left-to-right fashion'' in the right-endpoint ordering $%
\sigma $. In the following, given a graph $G$ and a path $%
P=(v_{1},v_{2},\dots ,v_{l})$ of~$G$, we write $v_{i}<_{P}v_{j}$ if and only
if $i<j$, i.e.,~whenever $v_{i}$ precedes $v_{j}$ in~$P$.

\begin{definition}
\label{normal-path-def}Let $G=(V,E)$ be an interval graph and $\sigma $ be a
right-endpoint ordering of~$V$. The path ${P=(v_{1},v_{2},\ldots ,v_{k})}$
of~$G$ is \emph{normal} if:

\begin{itemize}
\item $v_{1}$ is the leftmost vertex among $\{v_{1},v_{2},\ldots ,v_{k}\}$
in~$\sigma $ and 

\item $v_{i}$ is the leftmost vertex of~$N(v_{i-1})\cap
\{v_{i},v_{i+1},\ldots ,v_{k}\}$ in~$\sigma $, for every $i=2,\ldots ,k$.
\end{itemize}
\end{definition}

\begin{lemma}[\hspace{-0.001cm}\protect\cite{longest-int-algo}]
\label{lem:normal.path.existence} Let $G$ be an interval graph and $\mathcal{%
I}$ be an interval representation of~$G$. For every path $P$ of~$G$, there
exists a normal path $P^{\prime }$ of~$G$ such that $V(P)=V(P^{\prime })$.
\end{lemma}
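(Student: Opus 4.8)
The plan is to fix the interval representation together with its right-endpoint ordering $\sigma$ and to transform the given path $P$ into a normal one by a finite sequence of local reroutings, each preserving the vertex set $V(P)$. Since every condition in Definition~\ref{normal-path-def} refers only to $\sigma$ and to neighborhoods, I may restrict attention to the connected induced subgraph $H=G[V(P)]$, in which $P$ is a Hamiltonian path; a normal Hamiltonian path of $H$ is exactly a normal path of $G$ on $V(P)$. The single combinatorial tool I expect to use throughout is Lemma~\ref{lem:ordering.edges}: from $x<_{\sigma}y<_{\sigma}z$ and $xz\in E$ one obtains $yz\in E$. Its role is to certify that the boundary edges created by a segment reversal are actually present, so that each reversal produces a genuine path.

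Concretely, I would attach to every path $Q=(q_{1},\dots,q_{k})$ on $V(P)$ the sequence of $\sigma$-ranks $\Phi(Q)=(\mathrm{rank}(q_{1}),\dots,\mathrm{rank}(q_{k}))$ and take $P'$ to be a path minimizing $\Phi$ lexicographically (a minimizer exists, as there are only finitely many paths on $V(P)$). I then claim that $P'$ is normal and argue by contradiction. Let $i$ be the first index at which $P'=(v_{1},\dots,v_{k})$ violates Definition~\ref{normal-path-def}, and let $w$ be the vertex that ought to occupy position $i$: the $\sigma$-leftmost vertex of $V(P)$ if $i=1$, and the $\sigma$-leftmost vertex of $N(v_{i-1})\cap\{v_{i},\dots,v_{k}\}$ if $i\ge 2$; write $w=v_{m}$ with $m>i$ and $w<_{\sigma}v_{i}$. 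For $i\ge 2$ I would reverse the segment $v_{i},\dots,v_{m}$, obtaining $(v_{1},\dots,v_{i-1},v_{m},v_{m-1},\dots,v_{i},v_{m+1},\dots,v_{k})$; this keeps the prefix $v_{1},\dots,v_{i-1}$ and replaces position $i$ by the strictly $\sigma$-smaller vertex $w$, so it strictly decreases $\Phi$, contradicting minimality---provided the reversal is a legal path. The edge $v_{i-1}v_{m}$ is present because $w\in N(v_{i-1})$, and if $m<k$ the new boundary edge $v_{i}v_{m+1}$ follows from Lemma~\ref{lem:ordering.edges} applied to $v_{m}<_{\sigma}v_{i}<_{\sigma}v_{m+1}$ together with $v_{m}v_{m+1}\in E$. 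The case $i=1$ is handled symmetrically: with $u=v_{j}$ the global $\sigma$-leftmost vertex and $1<j<k$, the chord $v_{j-1}v_{j+1}$ is present by Lemma~\ref{lem:ordering.edges} (applied to $u$ and its two path-neighbors, both $\sigma$-larger than $u$), and one uses it to reroute so that a strictly $\sigma$-smaller vertex becomes the first vertex.

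The step I expect to be the main obstacle is precisely the legality of the reversal in the boundary situation $v_{m+1}<_{\sigma}v_{i}$, where Lemma~\ref{lem:ordering.edges} does not directly yield $v_{i}v_{m+1}\in E$. Here I would exploit minimality in the contrapositive: if the reversal were illegal then $v_{i}v_{m+1}\notin E$, which by Lemma~\ref{lem:ordering.edges} forces $v_{m+1}<_{\sigma}v_{i}$ and further ordering relations among $v_{i-1},v_{m},v_{m+1}$; I then need to show these relations always enable an alternative $\Phi$-decreasing reroute---for instance reversing a shorter or longer segment, or using the extra edge $v_{i}v_{m}$ that Lemma~\ref{lem:ordering.edges} supplies when $v_{i-1}<_{\sigma}v_{m}$. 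Making this case analysis exhaustive, and separately verifying the $i=1$ rerouting (where, because $H$ is connected, the chord joins the two arms of $P$), is the delicate part. Once every violating configuration is shown to admit a strictly $\Phi$-decreasing reroute, the lexicographic minimality of $P'$ is contradicted unless $P'$ is already normal; and since $\Phi$ ranges over a finite, well-ordered set, the procedure terminates with the desired normal path on $V(P)$.
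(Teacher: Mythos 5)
First, note that the paper does not prove this lemma at all: it is imported verbatim from~\cite{longest-int-algo}, so there is no in-paper proof to compare against. Judged on its own terms, your strategy---take the Hamiltonian path of $G[V(P)]$ that lexicographically minimizes the sequence of $\sigma$-ranks and show that any violation of Definition~\ref{normal-path-def} admits a $\Phi$-decreasing reroute---is a legitimate extremal approach, and a posteriori the lex-minimal path is indeed the (unique) normal path on $V(P)$. The problem is that the two places you yourself flag as ``delicate'' are not routine details but the entire content of the lemma, and neither is closed.

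Concretely: (a) In the case $i=1$, the single chord $v_{j-1}v_{j+1}$ (which does exist, since both path-neighbours of the $\sigma$-leftmost vertex $u=v_j$ contain the point $r_u$) does \emph{not} yield a Hamiltonian path of $G[V(P)]$ starting at $u$ when $2<j<k-1$. Every reroute available from $P'$ together with that one chord either omits vertices (e.g.\ $(v_j,v_{j-1},\dots,v_1)$ drops the other arm) or requires an edge from the far endpoint of one arm into the other arm, which you do not have; ``one uses it to reroute so that a strictly $\sigma$-smaller vertex becomes the first vertex'' is therefore unjustified. Establishing that \emph{some} Hamiltonian path of $G[V(P)]$ starts at $u$ already needs an induction or iterated-exchange argument of its own. (b) In the reversal step for $i\ge 2$, the boundary configuration $v_{m+1}<_{\sigma}v_i$ with $v_iv_{m+1}\notin E$ is left as an unexecuted case analysis. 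You cannot escape it by arguing that this configuration cannot arise on a lex-minimal path---that assertion is equivalent to the lemma you are proving---so you must exhibit an explicit improving reroute in every such configuration, and the candidate moves you mention (shorter/longer reversals, the edge $v_iv_m$ when $v_{i-1}<_{\sigma}v_m$) are not shown to cover all cases (for instance, when $v_m<_{\sigma}v_{i-1}$, Lemma~\ref{lem:ordering.edges} gives you neither $v_iv_m$ nor $v_iv_{m+1}$). As it stands this is a proof plan rather than a proof; completing it would essentially amount to reconstructing the inductive argument of~\cite{longest-int-algo}.
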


We now provide a few properties of normal paths on interval graphs that we
will need later on.

\begin{observation}
\label{obs:appending.a.vertex} Let $G$ be an interval graph and $P$ be a
normal path of~$G$. Let $y$ be the last vertex of~$P$ and $z\in
V(G)\setminus V(P)$ such that $yz\in E(G)$ and $v<_{\sigma }z$ for every
vertex $v\in V(P)$. Then $(P,z)$ is a normal path of~$G$.
\end{observation}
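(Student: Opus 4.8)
The plan is to verify both conditions of Definition~\ref{normal-path-def} for the extended path $(P,z)$, using the fact that $z$ sits strictly to the right of every vertex of $P$ in $\sigma$ and that $P$ is already normal.

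First I would check the first condition: the starting vertex of $(P,z)$ must be the leftmost among all its vertices in $\sigma$. The starting vertex is $v_1$, the start of $P$. Since $P$ is normal, $v_1$ is the leftmost vertex of $V(P)$ in $\sigma$. The only new vertex is $z$, and by hypothesis $v<_{\sigma}z$ for every $v\in V(P)$, so in particular $v_1<_{\sigma}z$. Hence $v_1$ remains the leftmost vertex of $V(P)\cup\{z\}$, and the first condition holds immediately.

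Next I would check the second (``leftmost neighbor'') condition for each internal step $v_i$ with $i=2,\ldots,k$, and then separately for the final appended vertex $z$. For the steps internal to $P$, write $(P,z)=(v_1,\ldots,v_k,z)$ where $y=v_k$. For $i\le k$, the relevant suffix in $(P,z)$ is $\{v_i,v_{i+1},\ldots,v_k,z\}$, which is exactly the suffix $\{v_i,\ldots,v_k\}$ used in the normality of $P$ together with the extra vertex $z$. Since $P$ is normal, $v_i$ is the leftmost vertex of $N(v_{i-1})\cap\{v_i,\ldots,v_k\}$ in $\sigma$; adding $z$ to this intersection cannot change the leftmost element, because $z$ lies strictly to the right of every vertex of $P$ and in particular to the right of $v_i$. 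Thus the leftmost-neighbor condition is preserved verbatim at each internal position. Finally, for the appended vertex $z$ itself, the condition requires $z$ to be the leftmost vertex of $N(v_k)\cap\{z\}$ in $\sigma$; since this set is either $\{z\}$ (as $yz=v_kz\in E(G)$ by hypothesis) or could in principle be empty, but the edge $yz\in E$ guarantees $z\in N(v_k)$, the intersection is the singleton $\{z\}$, whose unique and hence leftmost element is $z$. Therefore $(P,z)$ satisfies both conditions and is a normal path.

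I do not expect any genuine obstacle here; the statement is essentially a bookkeeping verification against the definition, and the single hypothesis that $z$ lies to the right of all of $V(P)$ is precisely what neutralizes every place where $z$ could otherwise interfere. The only point demanding a sentence of care is confirming that inserting $z$ into the neighbor-suffix intersections at the internal positions does not displace the previously-leftmost choice; this follows because ``leftmost in $\sigma$'' is monotone under adding an element that is strictly larger in $\sigma$ than the current minimum, and $z$ is strictly larger than every $v_i$.
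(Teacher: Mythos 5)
Your verification is correct and is exactly the intended justification: the paper states this as an Observation without proof, and the natural argument is precisely your direct check of both conditions of Definition~\ref{normal-path-def}, using that $z$ is $\sigma$-rightmost (so it cannot displace any leftmost choice) and that $yz\in E(G)$ (so the final intersection is the nonempty singleton $\{z\}$). No gaps.
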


\begin{observation}
\label{obs:first.vertex} Let $G$ be an interval graph, $P$ be a normal path
of~$G$, and $u,w\in V(P)$. If $u<_{P} w$ and $w<_{\sigma} u$, then $u$ is
not the first vertex of~$P$.
\end{observation}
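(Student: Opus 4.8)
The plan is to argue by contradiction, using only the first defining property of a normal path. I would suppose, toward a contradiction, that $u$ is in fact the first vertex of $P$, say $P=(v_1,v_2,\ldots,v_k)$ with $u=v_1$. By the first bullet of Definition~\ref{normal-path-def}, the first vertex $v_1$ of a normal path is the leftmost vertex of $\{v_1,v_2,\ldots,v_k\}$ in the ordering $\sigma$; in particular $u=v_1\leq_\sigma v$ for every $v\in V(P)$.

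I would then apply this to $w$. Since $w\in V(P)$, the above yields $u\leq_\sigma w$, which directly contradicts the hypothesis $w<_\sigma u$ (recall that $\sigma$ is a total order, with $x<_\sigma y$ iff $r_x<r_y$, and all right endpoints are assumed distinct). This contradiction shows that $u$ cannot be the first vertex of $P$, as claimed.

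I expect no real obstacle here: the statement follows in a single step from the minimality of the first vertex that is built into the normal-path definition. Notably, the hypothesis $u<_P w$ is not actually used in this argument — only the facts $w\in V(P)$ and $w<_\sigma u$ drive the contradiction. The one point deserving a little care is to invoke the correct clause, namely that $v_1$ is leftmost among \emph{all} vertices of the path, and not merely among some suffix $\{v_i,v_{i+1},\ldots,v_k\}$ as in the second bullet; it is precisely the first bullet of Definition~\ref{normal-path-def} that provides this global minimality and hence the contradiction with $w<_\sigma u$.
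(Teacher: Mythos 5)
Your argument is correct and is exactly the one-line consequence of the first bullet of Definition~\ref{normal-path-def} that the paper has in mind when stating this as an unproved observation. Your remark that the hypothesis $u<_{P}w$ is only needed to guarantee $w\neq u$ (so that the strict inequality $u<_{\sigma}w$ holds) is also accurate.
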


\begin{lemma}
\label{lem:oxi.anapoda} Let $G$ be an interval graph, $P$ be a normal path
of~$G$, and $u,w\in V(P)$. If $u<_{P}w$ and $w<_{\sigma }u$, then $wu\in
E(G) $.
\end{lemma}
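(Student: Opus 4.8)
The plan is to translate the hypotheses into the language of the right-endpoint ordering and then run an induction on the \emph{path-position} of $u$. Write $P = (v_1, v_2, \ldots, v_k)$, and set $u = v_i$ and $w = v_j$. The assumption $u <_P w$ gives $i < j$, while $w <_\sigma u$ gives $r_{v_j} < r_{v_i}$; our goal is to show $v_i v_j \in E(G)$. I would first invoke Observation~\ref{obs:first.vertex} to record that $u$ is not the first vertex of $P$, so $i \geq 2$ and the predecessor $v_{i-1}$ exists, with $v_{i-1}v_i \in E(G)$ since these vertices are consecutive on $P$. The induction will be a strong induction on $i$: the statement $S(i)$ reads ``for every $j > i$ with $r_{v_j} < r_{v_i}$ we have $v_i v_j \in E(G)$'', and $S(1)$ holds vacuously because $v_1$ is $\sigma$-leftmost among the vertices of $P$, so no admissible $j$ exists.

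For the inductive step I would compare $r_{v_{i-1}}$ with $r_{v_j}$. In the first case, $r_{v_{i-1}} < r_{v_j}$, we have $v_{i-1} <_\sigma v_j <_\sigma v_i$ (using $r_{v_{i-1}} < r_{v_j} < r_{v_i}$); since $v_{i-1}v_i \in E(G)$, applying the right-endpoint ordering property of Lemma~\ref{lem:ordering.edges} with the triple $v_{i-1} <_\sigma v_j <_\sigma v_i$ immediately yields $v_j v_i \in E(G)$, as desired. In the remaining case $r_{v_j} < r_{v_{i-1}}$, I would apply the induction hypothesis $S(i-1)$ to the pair $(v_{i-1}, v_j)$: this is legitimate because $i-1 < i < j$ gives $v_{i-1} <_P v_j$, and $r_{v_j} < r_{v_{i-1}}$ supplies the $\sigma$-inversion. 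Thus $v_{i-1} v_j \in E(G)$.

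The key payoff is that this last conclusion contradicts normality. Since $j > i$, the vertex $v_j$ lies in the tail $\{v_i, v_{i+1}, \ldots, v_k\}$, so $v_j \in N(v_{i-1}) \cap \{v_i, \ldots, v_k\}$; but by Definition~\ref{normal-path-def} the vertex $v_i$ is the $\sigma$-leftmost element of exactly this set, forcing $r_{v_i} \le r_{v_j}$, which contradicts the hypothesis $r_{v_j} < r_{v_i}$. Hence the second case cannot occur, and only the first case survives, completing the induction.

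I expect the main obstacle to be choosing the correct inductive scaffold. The naive attempt—inducting on $j - i$ by peeling off the \emph{successor} $v_{i+1}$—gets stuck: one is left needing $l_{v_i} \le r_{v_j}$ from an edge $v_{i+1}v_i$ whose overlap only bounds $l_{v_i}$ by the larger value $r_{v_{i+1}}$, and the case analysis does not close. The decisive idea is to anchor the argument at the \emph{predecessor} $v_{i-1}$ and to induct on the position $i$, because then the ``bad'' case does not require any interval bookkeeping at all: it collapses directly onto the leftmost-neighbor clause of the normal-path definition, which is precisely the statement that $v_i$ minimizes the right endpoint among the tail-neighbors of $v_{i-1}$.
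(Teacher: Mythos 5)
Your proof is correct, and it is essentially the paper's argument in inductive rather than extremal-counterexample form: the paper fixes a violating pair $(u,w)$ with $\mathbf{dist}_P(u,w)$ maximum, passes to the predecessor $z$ of $u$, and uses exactly your two cases (Lemma~\ref{lem:ordering.edges} when $z<_\sigma w$, a normality violation when $wz\in E(G)$, and a recursion to $(z,w)$ otherwise). Your packaging via strong induction on the path-position of $u$, with the normality clause closing the second case, is the same decomposition with the same key ingredients.
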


\begin{proof}
The proof is done by contradiction. Let $u,w\in V(P)$, where $u<_{P}w$ and $%
w<_{\sigma }u$. Assume that $wu\notin E(G)$. Among all such pairs of
vertices, we can assume without loss of generality that ${\mathbf{dist}}%
_{P}(u,w)$ is maximum, where ${\mathbf{dist}}_{P}(u,w)$ denotes the distance
between the vertices $u$ and $w$ on the path $P$. From Observation~\ref%
{obs:first.vertex}, $u$ is not the first vertex of~$P$, and therefore $u$
has a predecessor, say $z$, in~$P$. Note that ${\mathbf{dist}}_{P}(z,w)={%
\mathbf{dist}}_{P}(u,w)+1$. Suppose that $wz\in E(G)$. Then, since $%
w<_{\sigma }u$ and $u<_{P}w$, it follows by the normality of~$P$ that $u$ is
not the next vertex of~$z$ in the path $P$, which is a contradiction.
Therefore $wz\notin E(G)$. Suppose now that $z<_{\sigma }w$. Then, since in
this case $z<_{\sigma }w<_{\sigma }u$ and $zu\in E(G)$, it follows by Lemma~%
\ref{lem:ordering.edges} that $wu\in E(G)$, which is a contradiction to our
assumption. Therefore $w<_{\sigma }z$. Recall that $z<_{P}u<_{P}w$, $%
zw\notin E(G)$, and ${\mathbf{dist}}_{P}(z,w)={\mathbf{dist}}_{P}(u,w)+1$.
This is a contradiction to our assumption that ${\mathbf{dist}}_{P}(u,w)$ is
maximum. Therefore $wu\in E(G)$. This completes the proof of the lemma.
\end{proof}

\begin{lemma}
\label{lem:turning.point}
Let $G$ be an interval graph and $P=(P_{1},u,w,P_{2})$ be a normal path of~$%
G $, $u,w\in V(G)$. If $w<_{\sigma} u$ then $I_{w}\subseteq I_{u}$.
\end{lemma}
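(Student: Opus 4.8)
The plan is to argue by contradiction on the left endpoints. Since $u$ and $w$ are consecutive on $P$ they are adjacent, so $I_u \cap I_w \neq \emptyset$, and the hypothesis $w <_\sigma u$ gives $r_w < r_u$; hence it suffices to prove $l_u \le l_w$. Suppose instead that $l_w < l_u$. Combining this with $r_w < r_u$ and the fact that the two intervals overlap (so $\max\{l_u, l_w\} \le \min\{r_u, r_w\}$), I obtain the chain $l_w < l_u \le r_w < r_u$; in other words $I_w$ starts strictly to the left of $I_u$ and ends strictly before it.

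Next I would locate a predecessor of $u$ on the path. Since $u <_P w$ and $w <_\sigma u$, Observation~\ref{obs:first.vertex} guarantees that $u$ is not the first vertex of $P$, so $u$ has a predecessor $z$ with $z <_P u <_P w$; in particular $zu \in E(G)$, so $I_z \cap I_u \neq \emptyset$. The crucial step is to show that $z$ and $w$ are non-adjacent. Indeed, $w$ lies after $u$ on $P$, and $u$ is, by normality at the step from $z$ to $u$, the leftmost vertex in $\sigma$ among the neighbors of $z$ that occur from $u$ onward on $P$. If we had $zw \in E(G)$, then $w$ would be such a neighbor lying strictly to the left of $u$ in $\sigma$ (as $w <_\sigma u$), contradicting the minimality of $u$. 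Hence $zw \notin E(G)$, i.e., $I_z \cap I_w = \emptyset$.

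Finally I would rule out both possible positions of $I_z$ relative to the (disjoint) interval $I_w$. If $I_z$ lies entirely to the left of $I_w$, then $r_z < l_w < l_u$, which forces $I_z \cap I_u = \emptyset$, contradicting $zu \in E(G)$. If instead $I_z$ lies entirely to the right of $I_w$, then $l_z > r_w$, whence $r_z \ge l_z > r_w$ and so $w <_\sigma z$; but together with $z <_P w$ this lets Lemma~\ref{lem:oxi.anapoda} conclude $wz \in E(G)$, contradicting the non-adjacency just established. Since $I_z \cap I_w = \emptyset$ forces one of these two positions, every case is contradictory, so the assumption $l_w < l_u$ fails. Thus $l_u \le l_w$ and $r_w < r_u$, giving $I_w \subseteq I_u$.

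I expect the main obstacle to be the tension exploited in the last two paragraphs: normality, applied at the edge entering $u$, forbids $z$ from being adjacent to $w$, whereas the monotonicity in Lemma~\ref{lem:oxi.anapoda} would force exactly that adjacency once $I_z$ is pushed to the right of $I_w$. The points requiring care are ensuring the predecessor $z$ exists (via Observation~\ref{obs:first.vertex}) and verifying that $w$ is a genuine candidate neighbor in the normality condition for the step from $z$ to $u$.
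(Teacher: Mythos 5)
Your proof is correct and follows essentially the same route as the paper's: assume $l_w<l_u$, take the predecessor $z$ of $u$ (which exists by Observation~\ref{obs:first.vertex}), use normality to rule out $zw\in E(G)$, deduce $w<_{\sigma}z$, and derive a contradiction via Lemma~\ref{lem:oxi.anapoda}. The only difference is cosmetic: you spell out the two-position case analysis for $I_z$ that the paper compresses into the single sentence ``as $z$ is a neighbor of~$u$, $w<_{\sigma}z$.''
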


\begin{proof}
Let us assume, to the contrary, that $I_{w}\nsubseteq I_{u}$, that is, $%
l_{w}<l_{u}$. From Observation~\ref{obs:first.vertex}, $u$ is not the first
vertex of~$P$ and therefore $u$ has a predecessor, say $z$, in~$P$. Since $P$
is normal, $w<_{\sigma }u$, and $u<_{P}w$, it follows that $w$ is not a
neighbor of~$z$. Notice then that $z<_{P}u<_{P}w$. Furthermore, as $z$ is a
neighbor of~$u$, $w<_{\sigma }z$. Summarizing, $z<_{P}w$, $w<_{\sigma }z$,
and $wz\notin G$. From Lemma~\ref{lem:oxi.anapoda}, this is a contradiction
to the assumption that $P$ is normal. Therefore, $I_{w}\subseteq I_{u}$.
\end{proof}

\begin{lemma}
\label{lem:propers.consecutive.in.path}
Let $G$ be an interval graph and $\mathcal{I}$ be an interval representation
of~$G$. Let $S\subseteq V(G)$ such that $\mathcal{I}[S]$ is a proper
interval representation of~$G[S]$. Let $P$ be a normal path of~$G$ and $%
u,v\in S\cap V(P)$. If $u<_{\sigma }v$ then $u<_{P}v$. 
\end{lemma}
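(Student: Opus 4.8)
The plan is to argue by contradiction: I would assume $u<_\sigma v$ but $v<_P u$ and then exhibit a vertex of $P$ whose behaviour is incompatible either with the normality of $P$ or with the properness of $\mathcal{I}[S]$. The two facts I would extract first are (i) that $u$ and $v$ stand in strict left-to-right order as intervals, and (ii) that $v$ cannot be the first vertex of $P$. For (i), since $u,v\in S$ and $\mathcal{I}[S]$ is a proper interval representation, no interval of $S$ is contained in another; as $u<_\sigma v$ gives $r_u<r_v$, the absence of nesting forces $l_u<l_v$. For (ii), I would apply Observation~\ref{obs:first.vertex} with the roles ``$u$''$:=v$ and ``$w$''$:=u$: from $v<_P u$ and $u<_\sigma v$ it follows that $v$ is not the first vertex of $P$, so $v$ has a predecessor $p$ on $P$ with $pv\in E(G)$.

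The heart of the argument is to exploit the normality condition at $p$. I would invoke Definition~\ref{normal-path-def}: the successor $v$ of $p$ is the $\sigma$-leftmost among those neighbours of $p$ that occur at or after $v$ on $P$. Since $p<_P v<_P u$, the vertex $u$ lies after $v$ on $P$, and by hypothesis $u<_\sigma v$. Hence, if $u$ were adjacent to $p$, it would be a $\sigma$-smaller candidate than $v$ in this set, contradicting the choice of $v$. Therefore $u\notin N(p)$, i.e.\ $I_p\cap I_u=\emptyset$.

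It then remains to derive a contradiction from $I_p\cap I_u=\emptyset$, which I would do by a case split on the $\sigma$-order of $p$ and $u$ (note $p<_P u$ since $p<_P v<_P u$). If $u<_\sigma p$, then $(p,u)$ is a ``reversed'' pair on the normal path $P$, so Lemma~\ref{lem:oxi.anapoda} yields $pu\in E(G)$, contradicting $I_p\cap I_u=\emptyset$. If instead $p<_\sigma u$, then $p<_\sigma u<_\sigma v$, so $r_p<r_u$; combined with the disjointness $I_p\cap I_u=\emptyset$ this forces $r_p<l_u$. On the other hand $pv\in E(G)$ gives $l_v\le r_p$, whence $l_v\le r_p<l_u$, i.e.\ $l_v<l_u$ --- contradicting $l_u<l_v$ from step (i). Either way a contradiction results, so $u<_P v$, as claimed.

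I expect the delicate point to be the normality step: one must invoke Definition~\ref{normal-path-def} at the predecessor $p$ of $v$ (whose existence is precisely what Observation~\ref{obs:first.vertex} guarantees) and correctly identify $u$ as a forbidden $\sigma$-smaller competitor of $v$. The subsequent case analysis is then a short interval computation, using Lemma~\ref{lem:oxi.anapoda} in one branch and the properness of $\mathcal{I}[S]$ in the other. Notably, this route needs neither a minimal counterexample nor any control over the intermediate vertices of the subpath between $v$ and $u$.
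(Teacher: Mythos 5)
Your proof is correct and follows essentially the same route as the paper's: assume $v<_P u$ for contradiction, use Observation~\ref{obs:first.vertex} to obtain the predecessor $p$ of $v$, use normality at $p$ to rule out $u\in N(p)$, and then split on the $\sigma$-order of $p$ and $u$, closing one branch with Lemma~\ref{lem:oxi.anapoda} and the other with the properness of $\mathcal{I}[S]$. You in fact streamline the paper's write-up slightly by dispensing with its (unused) preliminary reduction to the case $uv\in E(G)$ and its separate treatment of the case where $u$ immediately follows $v$ on $P$.
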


\begin{proof}
Let $P$ be a normal path of~$G$ where $u,v\in V(P)$, $u<_{\sigma }v$. If $%
uv\notin E(G)$ then from Lemma~\ref{lem:oxi.anapoda}, we obtain that $u<_{P}v
$. Thus, from now on, we assume that $uv\in E(G)$. Towards a contradiction
we further assume that $v<_{P}u$, that is, $P=(P_{1},v,P_{2},u,P_{3})$. From
Observation~\ref{obs:first.vertex}, it follows that $v$ is not the first
vertex of~$P$ and thus $P_{1}\neq \emptyset $, that is, $P=(P_{1}^{\prime
},y,v,P_{2},u,P_{3})$, for some $y\in V(G)$. Notice also that if $%
P_{2}=\emptyset $, then $P=(P_{1}^{\prime },y,v,u,P_{3})$ and from Lemma~\ref%
{lem:turning.point}, $I_{u}\subseteq I_{v}$, a contradiction to the
assumption that $u,v\in S$ and $\mathcal{I}[S]$ is a proper interval
representation of~$G[S]$. Thus, $P_{2}\neq \emptyset $. Therefore, $%
P=(P_{1}^{\prime },y,v,z,P_{2}^{\prime },u,P_{3})$, for some $z\in V(G)$.
Since $P$ is normal and $u<_{\sigma }v$ then $y\notin N(u)$. Notice that if
we prove that $u<_{\sigma }y$, then we obtain a contradiction from Lemma~\ref%
{lem:oxi.anapoda} and the lemma follows. Thus, it is enough to prove that $%
u<_{\sigma }y$.

To prove that $u<_{\sigma} y$ we claim towards a contradiction that $%
y<_{\sigma} u$. Notice then, that as $y\notin N(u)$, it also holds that $%
r_{y}<l_{u}$. However, since $v$ and $u$ are proper intervals and $%
u<_{\sigma} v$, $l_{u}<l_{v}$. Thus, $r_{y}<l_{v}$, a contradiction to the
assumption that $yv\in E(G)$ as $y$ is the predecessor of~$v$ in~$P$.
Therefore, $u<_{\sigma} y$ and this completes the proof of the lemma.
\end{proof}

\begin{lemma}
\label{lem:tobenamed1} Let $G$ be an interval graph and $P=(P_{1},u,u^{%
\prime },P_{2})$ be a normal path of~$G$. For every vertex $v\in V(P_{2})$,
it holds that $u<_{\sigma} v$ or $u^{\prime }<_{\sigma} v$.
\end{lemma}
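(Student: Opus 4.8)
The plan is to argue by contradiction: suppose the conclusion fails, so there is some vertex $v\in V(P_2)$ with $v<_{\sigma}u$ \emph{and} $v<_{\sigma}u'$ simultaneously. I would then derive a contradiction by playing the assumption $v<_{\sigma}u$ against the normality of $P$ at the step from $u$ to $u'$.

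First I would record the path order. Since $v$ lies in the suffix $P_2$, which follows $u'$, which in turn follows $u$, we have $u<_{P}u'<_{P}v$; in particular $u<_{P}v$. Combined with the standing assumption $v<_{\sigma}u$, Lemma~\ref{lem:oxi.anapoda} immediately yields $vu\in E(G)$, i.e.,~$v$ is a neighbor of $u$.

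Next I would invoke the defining property of a normal path (Definition~\ref{normal-path-def}) at the position of $u'$: since $u'$ is the successor of $u$ on $P$, it is the $\sigma$-leftmost vertex of $N(u)$ among all vertices occurring at or after $u'$ on $P$, i.e.~among the set $\{u'\}\cup V(P_2)$. As $v$ occurs strictly after $u'$ and is a neighbor of $u$ (just established), this forces $u'<_{\sigma}v$. But this directly contradicts the second part of our assumption, namely $v<_{\sigma}u'$. Hence no such $v$ can exist, and every $v\in V(P_2)$ satisfies $u<_{\sigma}v$ or $u'<_{\sigma}v$.

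I do not anticipate a genuine obstacle here; the only point requiring care is the correct translation of the normality condition---that $u'$ is $\sigma$-leftmost in $N(u)\cap(\{u'\}\cup V(P_2))$---into the \emph{strict} inequality $u'<_{\sigma}v$, which is legitimate because all right endpoints are assumed distinct, so $v\neq u'$. The single substantive ingredient is Lemma~\ref{lem:oxi.anapoda}, which converts the path/ordering inversion ``$v<_{\sigma}u$ while $u<_{P}v$'' into the edge $vu$; everything else is bookkeeping about the order of vertices along $P$.
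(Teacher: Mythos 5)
Your proof is correct and follows essentially the same route as the paper's: assume $v<_{\sigma}u$ and $v<_{\sigma}u'$, use Lemma~\ref{lem:oxi.anapoda} to get $uv\in E(G)$, and then contradict the normality of $P$ at the step from $u$ to $u'$. The only difference is cosmetic (you phrase the contradiction as $u'<_{\sigma}v$ versus $v<_{\sigma}u'$, while the paper phrases it as $u'$ failing to be the successor of $u$), so nothing further is needed.
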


\begin{proof}
Towards a contradiction we assume that $v<_{\sigma }u$ and $v<_{\sigma
}u^{\prime }$. Then from Lemma~\ref{lem:oxi.anapoda}, we obtain that $uv\in
E(G)$. Thus, since $u^{\prime },v\in N(u)$ and $v<_{\sigma }u^{\prime }$, it
follows by the normality of~$P$ that $u^{\prime }$ is not the next vertex of 
$u$ in~$P$, which is a contradiction. Therefore $u<_{\sigma }v$ or $%
u^{\prime }<_{\sigma }v$.
\end{proof}

\begin{lemma}
\label{lem:not.turning.point} Let $G$ be an interval graph and $%
P=(P_{1},u,v,w,P_{2})$ be a normal path of~$G$. If $v<_{\sigma} u$ then $%
v<_{\sigma} w$.
\end{lemma}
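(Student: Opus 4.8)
The plan is to prove Lemma~\ref{lem:not.turning.point} by contradiction, assuming that $v <_\sigma u$ but $w <_\sigma v$, and deriving a violation of the normality of~$P$. The hypothesis $v <_\sigma u$ already tells us (via the preceding lemmas) that $P$ makes a ``backward'' step from $u$ to $v$ in the ordering $\sigma$; the goal is to show that a second consecutive backward step to $w$ is impossible.

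First I would record the consequences of $v <_\sigma u$. Applying Lemma~\ref{lem:turning.point} to the subpath $(P_1,u,v,\dots)$ gives $I_v \subseteq I_u$, so in particular $l_u < l_v$ and $r_v < r_u$. Since $u$ and $v$ are adjacent on the path, $uv \in E(G)$, which means $I_u \cap I_v \neq \emptyset$. Next I would bring in the assumption $w <_\sigma v$, i.e.\ $r_w < r_v$, together with the path edge $vw \in E(G)$, so $I_v \cap I_w \neq \emptyset$.

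The core of the argument is to examine the normality condition at the step from $u$ to $v$. By Definition~\ref{normal-path-def}, since $v$ is the vertex following $u$ in~$P$, $v$ must be the \emph{leftmost} vertex in~$\sigma$ among the neighbors of $u$ that still lie in the suffix of~$P$ starting at~$v$. Now $w$ lies in that suffix (it comes after $v$), and I would argue that $w \in N(u)$: from $w <_\sigma v <_\sigma u$ together with the edge $vw \in E(G)$, Lemma~\ref{lem:ordering.edges} (the right-endpoint ordering property, with the roles $w <_\sigma v <_\sigma u$ and the edge $wu$ to be deduced)—or more directly the fact that $I_w$ reaches up to meet $I_v \subseteq I_u$—forces $I_w \cap I_u \neq \emptyset$, hence $wu \in E(G)$. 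But then $w$ is a neighbor of~$u$ lying in the suffix and satisfying $w <_\sigma v$, contradicting the choice of $v$ as the \emph{leftmost} such neighbor. This contradiction shows $v <_\sigma w$, completing the proof.

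I expect the main obstacle to be establishing cleanly that $wu \in E(G)$, since this is what lets the normality condition at $u$ be violated by~$w$. The safest route is to feed the chain $w <_\sigma v <_\sigma u$ and the edge $wv \in E(G)$ directly into Lemma~\ref{lem:ordering.edges}, which immediately yields $wu \in E(G)$; one should be careful to invoke the ordering characterization in the correct direction rather than relying on a geometric interval argument, which is more error-prone. Everything else is a routine unpacking of Definition~\ref{normal-path-def}.
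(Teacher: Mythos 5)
Your overall strategy is exactly the paper's: assume $v<_{\sigma}u$ and $w<_{\sigma}v$, show $w\in N(u)$, and then contradict normality at the step from $u$ to $v$, since $w$ lies in the suffix of $P$ starting at $v$, is a neighbour of $u$, and satisfies $w<_{\sigma}v$. The contradiction with Definition~\ref{normal-path-def} is correctly identified. However, the route you recommend as ``safest'' for the key step $wu\in E(G)$ does not work. Lemma~\ref{lem:ordering.edges} states that for $a<_{\sigma}b<_{\sigma}c$, if $ac\in E(G)$ then $bc\in E(G)$: the hypothesis is an edge between the \emph{outer} pair. With the chain $w<_{\sigma}v<_{\sigma}u$ you only have the edge $wv$ (outer--middle) and $vu$ (middle--outer), neither of which is the outer pair $wu$, so the lemma gives you nothing; indeed, $w<_{\sigma}v<_{\sigma}u$ with $wv,vu\in E(G)$ but $wu\notin E(G)$ is perfectly possible in a general interval graph. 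What saves the argument is precisely the containment $I_{v}\subseteq I_{u}$ from Lemma~\ref{lem:turning.point}, i.e.\ the ``geometric'' route you deprecate: any point of $I_{w}\cap I_{v}$ lies in $I_{u}$, so $wu\in E(G)$. Your instinct about which of your two suggested justifications is reliable is therefore inverted.

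The paper gets the same conclusion slightly more cleanly by applying Lemma~\ref{lem:turning.point} a second time, now to the consecutive pair $v,w$ in the normal path $((P_{1},u),v,w,P_{2})$: since $w<_{\sigma}v$, it yields $I_{w}\subseteq I_{v}$, hence $I_{w}\subseteq I_{v}\subseteq I_{u}$ and $w\in N(u)$, after which the normality contradiction is the one you describe. So the gap in your write-up is localized and easily repaired -- replace the appeal to Lemma~\ref{lem:ordering.edges} by either the containment chain or the intersection argument -- but as written the step you lean on is not justified.
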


\begin{proof}
Let us assume that $v<_{\sigma} u$ and $w<_{\sigma} v$. From Lemma~\ref%
{lem:turning.point} it follows that $I_{v}\subseteq I_{u}$ and that $%
I_{w}\subseteq I_{v}$. Therefore, $I_{w}\subseteq I_{u}$ and thus $w\in N(u) 
$. This is a contradiction to the assumption that $P$ is normal as $v,w\in
N(u)$, $w<_{\sigma} v$, $v<_{P} w$, and $v$ is the vertex that follows $u$
in~$P$.
\end{proof}

\begin{lemma}
\label{lem:endiamesa} Let $G$ be an interval graph and $%
P=(P_{1},u,w,P_{2},v,P_{3})$ be a normal path of~$G$ where $v\in N(u)$. If $%
u<_{\sigma} w$, then $u<_{\sigma} x<_{\sigma} v$, for every vertex $x\in
V(P_{2})\cup \{w\}$.
\end{lemma}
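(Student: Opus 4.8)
The plan is to prove the two inequalities $u <_\sigma x$ and $x <_\sigma v$ separately for every $x \in V(P_2) \cup \{w\}$. It is convenient to write the relevant subpath as $(x_0, x_1, \ldots, x_m) = (u, w, P_2, v)$, so that $x_0 = u$, $x_1 = w$, $x_m = v$, and $V(P_2) = \{x_2, \ldots, x_{m-1}\}$. The hypothesis already gives $x_0 <_\sigma x_1$. Moreover, since $v \in N(u)$ lies after $u$ on $P$, normality (Definition~\ref{normal-path-def}) makes $w = x_1$ the $\sigma$-leftmost neighbor of $u$ among the vertices following $u$, so $w <_\sigma v$. Both target inequalities therefore hold for $x = w$, and it remains to handle the vertices of $P_2$.

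For the lower bound I would argue by contradiction: suppose some $x \in V(P_2)$ satisfies $x <_\sigma u$. Since $u <_P x$, Lemma~\ref{lem:oxi.anapoda} yields $ux \in E(G)$, so $x$ is a neighbor of $u$ appearing after $u$ on $P$. But normality forces the successor $w$ of $u$ to be $\sigma$-leftmost among all such neighbors, giving $w <_\sigma x$ and contradicting $x <_\sigma u <_\sigma w$. Hence $u <_\sigma x$ for every $x \in V(P_2)$.

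The upper bound is the heart of the argument. Assuming towards a contradiction that some vertex of $V(P_2)$ is $\sigma$-larger than $v$, I would take the smallest index $j$ with $v <_\sigma x_j$; note $j \geq 2$ since $w <_\sigma v$. By minimality $x_{j-1} <_\sigma v$, and by the lower bound just established $u <_\sigma x_{j-1}$. The crucial step is to deduce $v \in N(x_{j-1})$: this is exactly where the hypothesis $v \in N(u)$ is used, because $u <_\sigma v$ together with $uv \in E(G)$ gives $l_v \le r_u$, and chaining with $r_u < r_{x_{j-1}}$ (from $u <_\sigma x_{j-1}$) and $r_{x_{j-1}} < r_v$ (from $x_{j-1} <_\sigma v$) forces the intervals $I_v$ and $I_{x_{j-1}}$ to overlap. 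With $v \in N(x_{j-1})$ in hand, normality at $x_{j-1}$ makes its successor $x_j$ the $\sigma$-leftmost neighbor of $x_{j-1}$ among the vertices following $x_{j-1}$; since $v$ is one of these neighbors, $x_j <_\sigma v$, contradicting the choice of $j$.

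I expect the main obstacle to be precisely this middle step of the upper-bound argument, namely establishing $v \in N(x_{j-1})$, as it is the only place that genuinely combines the edge $uv$ with the interval-endpoint inequalities coming from the lower bound; the remaining steps are routine applications of normality and Lemma~\ref{lem:oxi.anapoda}. I do not anticipate needing the turning-point Lemmas~\ref{lem:turning.point} and~\ref{lem:not.turning.point} for this statement.
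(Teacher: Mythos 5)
Your proof is correct and follows essentially the same route as the paper's: the same normality argument yielding $u<_{\sigma}w<_{\sigma}v$, the same combination of Lemma~\ref{lem:oxi.anapoda} and normality for the lower bound, and for the upper bound the same three ingredients (minimal violating vertex, the inequality $l_{v}<r_{u}$ from $uv\in E(G)$, and normality at the predecessor), with the paper merely running the final contradiction in the contrapositive direction --- deducing from normality that the predecessor is \emph{not} adjacent to $v$ and then contradicting $u<_{\sigma}z$, where you instead deduce adjacency and contradict normality. No gaps.
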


\begin{proof}
As $w,v\in N(u)$, $w<_{P}v$, $w$ follows $u$ in~$P$, and $P$ is normal we
obtain that $u<_{\sigma} w<_{\sigma} v$. Thus, it remains to prove that $%
x<_{\sigma} v$ for every $x\in V(P_{2})$. We prove first that $u<_{\sigma} x$
for every $x\in V(P_{2})$. Assume to the contrary that $x<_{\sigma}
u<_{\sigma} w$. If $x\in N(u)$ then we obtain a contradiction to the
assumption that $P$ is normal. Thus, $x<_{\sigma} u$, $u<_{P} x$, and $%
ux\notin E(G)$. This is again a contradiction to the assumption that $P$ is
normal (from Lemma~\ref{lem:oxi.anapoda}). Therefore, $u<_{\sigma} x$ for
every $x\in V(P_{2})$ and, since $u<_{\sigma} w$, $u<_{\sigma} x$ for every $%
x\in V(P_{2})\cup \{w\}$.

We assume towards a contradiction that there exists a vertex $x^{\prime }\in
V(P_{2})$ such that $v<_{\sigma }x^{\prime }$. Without loss of generality we
also assume that $x^{\prime }$ is the first such vertex of~$P_{2}$, that is, 
$x<_{\sigma }v$ for every vertex $x\in V(P_{2})$ with $x<_{P}x^{\prime }$.
We denote by $z$ the predecessor of~$x^{\prime }$ in~$P_{2}$ or $w$ if $%
x^{\prime }$ is the first vertex of~$P_{2}$. Then, as the path is normal, $%
x^{\prime }<_{P}v$, $v<_{\sigma }x^{\prime }$, and $zv\notin E(G)$. Then,
since $z<_{\sigma }v$, $r_{z}<l_{v}$. However, as $uv\in E(G)$ and $%
u<_{\sigma }v$, we obtain that $l_{v}<r_{u}$. Thus, $r_{z}<l_{v}<r_{u}$ and,
therefore, $z<_{\sigma }u$, a contradiction. Thus, $x<_{\sigma }v$ for every
vertex $x\in V(P_{2})\cup \{w\}$.
\end{proof}

\subsection{The first data reduction\label{first-data-reduction-subsec}}

Here we present our first data reduction on interval graphs (see~Reduction
Rule~\ref{data-reduction-1-redrule}). By applying this data reduction to a
given interval graph $G$, we obtain a weighted interval graph $G^{\#}$ with
weights on its vertices, such that the maximum weight of a path in~$G^{\#}$
equals the greatest number of vertices of a path in~$G$ (cf.~Theorem~\ref%
{first-data-reduction-thm}). We first introduce the notion of a \emph{%
reducible} set of vertices and some related properties, which are essential
for our Reduction Rule~\ref{data-reduction-1-redrule}.

\begin{definition}
\label{def:reducible.set} Let $G$ be a (weighted) interval graph and $%
\mathcal{I}$ be an interval representation of~$G$. A set $S\subseteq V(G)$
is \emph{reducible} if it satisfies the following conditions:

\begin{enumerate}
\item $\mathcal{I}[S]$ induces a connected proper interval representation of 
$G[S]$ and

\item for every $v\in V(G)$ such that $I_{v}\subseteq \mathbf{span}(S)$ it
holds $v\in S$.
\end{enumerate}
\end{definition}

The intuition behind reducible sets is as follows. For every reducible set~$%
S $, a longest path $P$ contains either all vertices of $S$ or none of them
(cf.~Lemma~\ref{full.component.or.none}). Furthermore, in a certain longest
path $P$ which contains the whole set $S$, the vertices of $S$ appear \emph{%
consecutively} in $P$ (cf.~Lemma~\ref{lem:consecutive.vertices}). Thus we
can reduce the number of vertices in a longest path $P$ (without changing
its total weight) by replacing all vertices of $S$ with a single vertex
having weight $|S|$, see Reduction Rule~\ref{data-reduction-1-redrule}.

The next two observations will be useful for various technical lemmas in the
remainder of the paper. Observation~\ref{extra-condition-reducible-set-obs}
follows by the two conditions of Definition~\ref{def:reducible.set} for the
reducible sets $S$ in a weighted interval graph $G$. Furthermore,
Observation \ref{obs:order.in.uirepresentation} can be easily verified by
considering any proper interval representation.

\begin{observation}
\label{extra-condition-reducible-set-obs} Let $G$ be a (weighted) interval
graph, $\mathcal{I}$ be an interval representation of~$G$, and $S\subseteq
V(G)$ be a reducible set. Then, for every $u\in S$ and every $v\in
V(G)\setminus \{u\}$, it holds $I_{v}\nsubseteq I_{u}$.
\end{observation}

\begin{observation}
\label{obs:order.in.uirepresentation}\label{obs:consecutive.joined.by.edge}
Let $G$ be a proper interval graph and $\mathcal{I}$ be a proper interval
representation of~$G$. For every $u,v\in V(G)$:

\begin{itemize}
\item If $u<_{\sigma }v$, then $l_{u}<l_{v}$.

\item If $u$ and $v$ are consecutive vertices in the ordering~$\sigma $ and $%
G$ is connected, then $uv\in E(G)$.
\end{itemize}
\end{observation}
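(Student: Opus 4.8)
The plan is to prove both statements directly from two facts: the definition of the right-endpoint ordering (namely $u<_{\sigma}v$ if and only if $r_{u}<r_{v}$) and the defining property of a proper interval representation, that no interval is properly contained in another, i.e.~$I_{u}\nsubseteq I_{v}$ for all $u,v\in V(G)$. Throughout I use that all endpoints are distinct.

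For the first statement, suppose $u<_{\sigma}v$, so that $r_{u}<r_{v}$. I would argue by contradiction. If $l_{u}<l_{v}$ fails, then by distinctness $l_{v}<l_{u}$, and combined with $r_{u}<r_{v}$ this yields $l_{v}<l_{u}\le r_{u}<r_{v}$, hence $I_{u}=[l_{u},r_{u}]\subseteq[l_{v},r_{v}]=I_{v}$ with strict containment. This contradicts the assumption that $\mathcal{I}$ is a proper interval representation, so $l_{u}<l_{v}$.

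For the second statement, assume without loss of generality that $u<_{\sigma}v$; then $u$ and $v$ being consecutive in $\sigma$ means no vertex $w$ satisfies $r_{u}<r_{w}<r_{v}$. Suppose towards a contradiction that $uv\notin E(G)$. Since $r_{u}<r_{v}$ and, by the first statement, $l_{u}<l_{v}$, the only way $I_{u}$ and $I_{v}$ can be disjoint is $r_{u}<l_{v}$. I would then set $A=\{w\in V(G):r_{w}\le r_{u}\}$ and $B=\{w\in V(G):r_{w}\ge r_{v}\}$. By consecutiveness every vertex lies in $A$ or $B$, the two sets are disjoint because $r_{u}<r_{v}$, and both are nonempty since $u\in A$ and $v\in B$. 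The key step is to rule out any edge between $A$ and $B$: for $a\in A$ and $b\in B$ we have $r_{a}\le r_{u}$, while applying the first statement to $b$ (where either $b=v$ or $v<_{\sigma}b$) gives $l_{v}\le l_{b}$; hence $r_{a}\le r_{u}<l_{v}\le l_{b}$, so $I_{a}\cap I_{b}=\emptyset$ and $ab\notin E(G)$. Thus $(A,B)$ is a partition of $V(G)$ with no crossing edges, contradicting the connectedness of $G$, and therefore $uv\in E(G)$.

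Both parts are elementary, as the paper suggests. The only place requiring a little care is the second statement: one must use consecutiveness to cover all vertices by the two sides and invoke the first statement to compare left endpoints across the gap $(r_{u},l_{v})$, which is precisely what forbids any edge between $A$ and $B$ and so produces the disconnection contradiction.
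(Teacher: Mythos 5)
Your proof is correct; the paper gives no explicit argument for this Observation (it only remarks that it "can be easily verified by considering any proper interval representation"), and your direct verification -- deriving containment $I_{u}\subseteq I_{v}$ from $l_{v}<l_{u}$ and $r_{u}<r_{v}$ for the first item, and the disconnection argument across the gap $(r_{u},l_{v})$ for the second -- is exactly the intended elementary reasoning.
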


\begin{lemma}
\label{full.component.or.none} Let $G$ be a weighted interval graph with
weight function $w:V(G)\rightarrow \mathbb{N}$ and let $S$ be a reducible
set in~$G$. Let also $P$ be a path of maximum weight in~$G$. Then either $%
S\subseteq V(P)$ or $S\cap V(P)=\emptyset $.
\end{lemma}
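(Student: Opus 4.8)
The plan is to argue by contradiction: suppose a maximum-weight path $P$ uses some but not all vertices of the reducible set $S$, and derive a contradiction by exhibiting a path of strictly greater weight. By Lemma~\ref{lem:normal.path.existence} I may assume without loss of generality that $P$ is normal, since replacing $P$ with a normal path on the same vertex set preserves both $V(P)$ and the total weight. The key structural facts I intend to exploit are the two defining conditions of a reducible set (Definition~\ref{def:reducible.set}): first, that $\mathcal{I}[S]$ is a \emph{connected proper} interval representation, so the vertices of $S$ appear in the $\sigma$-order exactly as they appear along a Hamiltonian path of $G[S]$ and consecutive ones are adjacent (Observation~\ref{obs:consecutive.joined.by.edge}); and second, the ``closure'' condition that every vertex whose interval sits inside $\mathbf{span}(S)$ already belongs to $S$.

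First I would set $S_P = S \cap V(P)$ and assume $\emptyset \neq S_P \subsetneq S$, so there is at least one vertex $s \in S \setminus V(P)$. The heart of the argument is to show that I can splice $s$ (and indeed all missing vertices of $S$) into $P$ without removing any existing vertex, which strictly increases the weight and contradicts maximality. To do this I would use Lemma~\ref{lem:propers.consecutive.in.path}: because $\mathcal{I}[S]$ is a proper interval representation, the vertices of $S_P$ occur along $P$ in exactly their $\sigma$-order. This lets me locate, for a missing vertex $s$, its neighbors within $S$ (which exist and are adjacent to $s$ by connectivity of $G[S]$ via Observation~\ref{obs:consecutive.joined.by.edge}) and identify the precise position along $P$ where $s$ ought to be inserted. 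The closure condition then guarantees that no ``foreign'' vertex of $V(G)\setminus S$ can be squeezed between consecutive $S$-vertices in a way that blocks the insertion, since any interval nested inside $\mathbf{span}(S)$ is forced to lie in $S$ itself.

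The main obstacle I anticipate is handling the \emph{boundary} cases cleanly, namely when the portion of $P$ adjacent to the insertion point involves vertices outside $S$ that are themselves adjacent to $\mathbf{span}(S)$, and ensuring that rerouting $P$ through the missing $S$-vertices keeps the path simple and does not accidentally disconnect the two ``tails'' of $P$ on either side of $S_P$. I expect to resolve this by treating $S$ as a single contiguous block: I would show that in a normal $P$ the vertices of $S_P$ form a contiguous subpath (using Lemma~\ref{lem:consecutive.vertices}, whose statement the paper forecasts, or reproving the needed consecutiveness from the proper-interval structure and Lemma~\ref{lem:propers.consecutive.in.path}), and then replace that block by a Hamiltonian path of all of $G[S]$ with the same endpoints' neighborhoods, which is possible precisely because $G[S]$ is a connected proper interval graph and hence has a Hamiltonian path with the right $\sigma$-monotone endpoints.

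Finally, I would verify that the modified path is valid: its two endpoints connect correctly to the predecessor and successor of the $S$-block in the original $P$ (using that the first and last $S$-vertices in $\sigma$-order retain the relevant adjacencies, together with Lemma~\ref{lem:ordering.edges} to transfer edges along the right-endpoint ordering), and that no vertex is repeated. Since the new path contains all of $S$ while retaining every vertex of $P \setminus S$, its weight strictly exceeds $w(P)$, contradicting that $P$ has maximum weight. Hence no maximum-weight path can intersect $S$ partially, which proves that either $S \subseteq V(P)$ or $S \cap V(P) = \emptyset$.
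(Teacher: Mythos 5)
Your overall strategy (assume a maximum-weight normal path meets $S$ partially and exhibit a strictly heavier path) is the paper's strategy, but the mechanism you propose for the augmentation has a genuine gap. You plan to show that the vertices of $S_P=S\cap V(P)$ form a contiguous block of $P$ and then swap that block for a Hamiltonian path of all of $G[S]$. The contiguity claim is the problem: Lemma~\ref{lem:consecutive.vertices} is stated (and proved) only under the hypothesis $S\subseteq V(P)$, and moreover it does not say the block is contiguous in $P$ itself, only that some rearranged path $P'$ with $V(P')=V(P)$ has this property. Its proof cannot be adapted to a proper subset $S_P\subsetneq S$: the rearrangement hinges on consecutive (in $\sigma$) members of the set being adjacent in $G$ (Observation~\ref{obs:consecutive.joined.by.edge}) so that Lemma~\ref{lem:endiamesa} can trap the intervening subpaths, but two $\sigma$-consecutive members of $S_P$ need not be adjacent once vertices of $S$ between them are missing from $P$ (e.g.\ $S=\{a,b,c\}$ inducing the path $a\hbox{-}b\hbox{-}c$ with $S_P=\{a,c\}$). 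The block swap also silently requires the predecessor/successor of the $S_P$-block to be adjacent to $\min_\sigma S$ and $\max_\sigma S$ rather than to $\min_\sigma S_P$ and $\max_\sigma S_P$, which is a separate argument you have not supplied.

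None of this machinery is needed. The paper's proof stays local: since $S\cap V(P)\neq\emptyset$ and $S\nsubseteq V(P)$, there are two vertices $u_1<_\sigma u_2$ that are consecutive \emph{in $S$} with exactly one of them on $P$; they are adjacent by Observation~\ref{obs:consecutive.joined.by.edge}. One then inserts the missing vertex immediately next to the present one in $P$, and the only thing to check is a single adjacency with that vertex's neighbor $w$ (or $z$) on $P$. This is where normality and reducibility actually enter: Lemma~\ref{lem:turning.point} plus the non-nesting property of reducible sets (Observation~\ref{extra-condition-reducible-set-obs}) force $u_1<_\sigma w$, after which Lemma~\ref{lem:ordering.edges} yields $u_2w\in E(G)$; the symmetric case uses the proper-interval ordering of $\mathcal{I}[S]$ (Observation~\ref{obs:order.in.uirepresentation}) and again non-nesting. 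I recommend you restructure your argument around inserting one vertex at a time in this way rather than around contiguity of $S_P$.
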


\begin{proof}
Let $P$ be a path of~$G$ of maximum weight and $S$ be a reducible set of~$G$%
. Without loss of generality we may assume by Lemma~\ref%
{lem:normal.path.existence} that $P$ is a normal path. Assume that $S\cap
V(P)\neq \emptyset $ and $S\nsubseteq V(P)$. Then there exist two
consecutive vertices $u_{1},u_{2}\in S$ in the vertex ordering $\sigma $
(where $u_{1}<_{\sigma }u_{2}$) such that either $u_{1}\in V(P)$ and $%
u_{2}\notin V(P)$, or $u_{1}\notin V(P)$ and $u_{2}\in V(P)$. In both cases
we will show that we can augment the path $P$ by adding vertex $u_{2}$ or $%
u_{1}$, respectively, which contradicts our maximality assumption on $P$.
Since, by Definition~\ref{def:reducible.set}, $\mathcal{I}[S]$ induces a
connected proper interval representation of~$G[S]$, it follows by
Observation~\ref{obs:consecutive.joined.by.edge} that $u_{1}u_{2}\in E(G)$.

First suppose that $u_{1}\in V(P)$ and $u_{2}\notin V(P)$. Let $%
P=(P_{1},u_{1},P_{2})$. Notice first that, if $P_{2}=\emptyset $, then the
path $P^{\prime }=(P_{1},u_{1},u_{2})=(P,u_{2})$ is a path of~$G$ with
greater weight than $P$, which is a contradiction to the maximality
assumption on $P$. Thus, $P_{2}\neq \emptyset $. Let $w\in V(P_{2})$ be the
first vertex of~$P_{2}$, i.e.,~$P=(P_{1},u_{1},w,P_{2}^{\prime })$. We show
that $u_{1}<_{\sigma }w$. Assume to the contrary that $w<_{\sigma }u_{1}$.
Then $I_{w}\subseteq I_{u_{1}}$ by Lemma~\ref{lem:turning.point}. This is a
contradiction, since $u_{1}\in S$ and $S$ is a reducible set (cf.~Definition~%
\ref{def:reducible.set}). Therefore $u_{1}<_{\sigma }w$. Then either $%
u_{1}<_{\sigma }u_{2}<_{\sigma }w$ or $u_{1}<_{\sigma }w<_{\sigma }u_{2}$.
Now we show that $u_{2}w\in E(G)$. If $u_{1}<_{\sigma }u_{2}<_{\sigma }w$,
then Lemma~\ref{lem:ordering.edges} implies that $u_{2}w\in E(G)$, since $%
u_{1}w\in E(G)$. If $u_{1}<_{\sigma }w<_{\sigma }u_{2}$ then again Lemma~\ref%
{lem:ordering.edges} implies that $u_{2}w\in E(G)$ since $u_{1}u_{2}\in E(G) 
$. Thus, since $u_{2}w\in E(G)$, it follows that there exists the path $%
P^{\prime }=(P_{1},u_{1},u_{2},w,P_{2^{\prime }})$ which has greater weight
than $P$, which is a contradiction to the maximality assumption on $P$.

Now suppose that $u_{1}\notin V(P)$ and $u_{2}\in V(P)$. Let then $%
P=(P_{1},u_{2},P_{2})$. Notice that, if $P_{1}=\emptyset $, then the path $%
P^{\prime }=(u_{1},u_{2},P_{2})=(u_{1},P)$ is a path of~$G$ with greater
weight than $P$, which is a contradiction. Thus $P_{1}\neq \emptyset $. Let $%
z\in V(P_{1})$ be the last vertex of~$P_{1}$, i.e.,~$P=(P_{1}^{\prime
},z,u_{2},P_{2})$. We show that $u_{1}z\in E(G)$. First let $u_{2}<_{\sigma
}z$. Then $I_{u_{2}}\subseteq I_{z}$ by Lemma~\ref{lem:turning.point}, and
thus $N(u_{2})\subseteq N(z)$. Therefore, since $u_{1}u_{2}\in E(G)$, it
follows that $u_{1}z\in E(G)$ in the case where $u_{2}<_{\sigma }z$. Let now 
$z<_{\sigma }u_{2}$. Suppose that $u_{1}z\notin E(G)$. Note that $%
l_{u_{2}}<r_{u_{1}}$, since $u_{1}<_{\sigma }u_{2}$ and $u_{1}u_{2}\in E(G)$%
. Furthermore, since $S$ is a reducible set, $\mathcal{I}[S]$ induces a
proper interval representation of~$G[S]$ by Definition~\ref%
{def:reducible.set}. Then, since $u_{1}<_{\sigma }u_{2}$ and $u_{1},u_{2}\in
S$ are consecutive in~$\sigma $, Observation~\ref%
{obs:order.in.uirepresentation} implies that $l_{u_{1}}<l_{u_{2}}$. That is, 
$l_{u_{1}}<l_{u_{2}}<r_{u_{1}}$. Hence, since $u_{2}z\in E(G)$ and $%
u_{1}z\notin E(G)$, it follows that $l_{u_{1}}<l_{u_{2}}<r_{u_{1}}<l_{z}$.
Finally $r_{z}<r_{u_{2}}$, since $z<_{\sigma }u_{2}$ by assumption, and thus 
$I_{z}\subseteq I_{u_{2}}$. This is a contradiction since $u_{1}\in S$ and $%
S $ is a reducible set (cf.~Definition~\ref{def:reducible.set}). Therefore, $%
zu_{1}\in E(G)$ in the case where $z<_{\sigma }u_{2}$. That is, we always
have $zu_{1}\in E(G)$. Therefore there exists the path $P^{\prime
}=(P_{1},u_{1},u_{2},w,P_{2^{\prime }})$ which has greater weight than $P$,
which is a contradiction to the maximality assumption on $P$.
\end{proof}

\begin{lemma}
\label{lem:consecutive.vertices} Let $G$ be a weighted interval graph with
weight function $w:V(G)\rightarrow \mathbb{N}$, and $S$ be a reducible set
in~$G$. Let also $P$ be a path of maximum weight in~$G$ and let $S\subseteq
V(P)$. Then there exists a path $P^{\prime }$ of~$G$ such that $V(P^{\prime
})=V(P)$ and the vertices of~$S$ appear consecutively in~$P^{\prime }$.
\end{lemma}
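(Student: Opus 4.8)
The plan is to reduce first to a normal path and then to show that the non-$S$ vertices interspersed among the vertices of $S$ all contain a common point, hence form a clique that can be slid to one side of the spine $s_1,\dots,s_t$. First I would invoke Lemma~\ref{lem:normal.path.existence} to replace $P$ by a normal path on the same vertex set (hence still containing $S$), so assume $P$ is normal. Writing $S=\{s_1,\dots,s_t\}$ with $s_1<_\sigma\cdots<_\sigma s_t$, Definition~\ref{def:reducible.set} gives that $\mathcal{I}[S]$ is a connected proper interval representation, so Lemma~\ref{lem:propers.consecutive.in.path} forces the $s_i$ to occur along $P$ exactly in $\sigma$-order; that is, $P=(A_0,s_1,A_1,s_2,\dots,s_t,A_t)$ with each $A_i\subseteq V(P)\setminus S$, and Observation~\ref{obs:consecutive.joined.by.edge} gives $s_is_{i+1}\in E(G)$. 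If all internal blocks $A_1,\dots,A_{t-1}$ are empty we are done, so assume otherwise.

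The key structural step, which I expect to be the heart of the argument, is to locate the internal-block vertices. Fix $z\in A_i$ with $1\le i\le t-1$. The successor $w$ of $s_i$ in $P$ satisfies $s_i<_\sigma w$: otherwise Lemma~\ref{lem:turning.point} would give $I_w\subseteq I_{s_i}$, contradicting Observation~\ref{extra-condition-reducible-set-obs} since $s_i\in S$. Applying Lemma~\ref{lem:endiamesa} with $u=s_i$ and $v=s_{i+1}\in N(s_i)$ then yields $s_i<_\sigma z<_\sigma s_{i+1}$ for \emph{every} $z\in A_i$. Hence $r_z<r_{s_{i+1}}\le r_{s_t}=\max\mathbf{span}(S)$, so $I_z\not\subseteq\mathbf{span}(S)$ (Definition~\ref{def:reducible.set}, as $z\notin S$) forces $l_z<l_{s_1}=\min\mathbf{span}(S)$. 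Combining $l_z<l_{s_1}$ with $r_z>r_{s_i}\ge r_{s_1}>l_{s_1}$ shows that every internal-block vertex $z$ contains the point $l_{s_1}$. Since $s_1$ trivially contains $l_{s_1}$ as well, the set $B:=A_1\cup\cdots\cup A_{t-1}$ satisfies that $B\cup\{s_1\}$ all share the point $l_{s_1}$, so $B\cup\{s_1\}$ induces a clique.

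It then remains to connect $B$ to the prefix $A_0$. If $A_0\neq\emptyset$, let $z_0$ be its last vertex, the $P$-predecessor of $s_1$, so $z_0s_1\in E(G)$. A short case analysis on the $\sigma$-position of $z_0$ relative to $s_1$ shows $l_{s_1}\in I_{z_0}$: if $z_0<_\sigma s_1$, then $I_{z_0}\not\subseteq I_{s_1}$ (Observation~\ref{extra-condition-reducible-set-obs}) together with $r_{z_0}<r_{s_1}$ gives $l_{z_0}<l_{s_1}<r_{z_0}$; if $s_1<_\sigma z_0$, then Lemma~\ref{lem:turning.point} gives $I_{s_1}\subseteq I_{z_0}$, again placing $l_{s_1}$ inside $I_{z_0}$. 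Thus $z_0$ is adjacent to every vertex of $B$.

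Finally I would reroute. Let $W$ be \emph{any} ordering of $B$ as a path (valid since $B$ is a clique) and set $P'=(A_0,W,s_1,s_2,\dots,s_t,A_t)$. Each junction is an edge: $z_0$ to the first vertex of $W$ and the last vertex of $W$ to $s_1$ by the adjacencies just established, the spine $s_1\cdots s_t$ by Observation~\ref{obs:consecutive.joined.by.edge}, and $s_t$ to the first vertex of $A_t$ because this was already an edge of $P$; the internal edges of $A_0$ and $A_t$ are inherited from $P$. Hence $P'$ is a path with $V(P')=V(P)$ in which the vertices of $S$ appear consecutively. The only genuine obstacle is the structural step: everything hinges on simultaneously using normality (Lemma~\ref{lem:endiamesa}) to squeeze each interspersed vertex $\sigma$-between consecutive elements of $S$ and reducibility to push its left endpoint below $l_{s_1}$, which is exactly what collapses the interspersed vertices into a single clique sharing the point $l_{s_1}$ and thus freely relocatable.
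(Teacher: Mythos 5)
Your proof is correct and follows essentially the same route as the paper's: normalize $P$, use Lemma~\ref{lem:propers.consecutive.in.path} and Lemmas~\ref{lem:turning.point} and~\ref{lem:endiamesa} to squeeze each interspersed vertex $\sigma$-between consecutive elements of $S$, invoke Condition~2 of Definition~\ref{def:reducible.set} to push its left endpoint below $l_{s_1}$ so that the interspersed vertices form a clique through $l_{s_1}$ adjacent to $s_1$ and to the predecessor of $s_1$, and then relocate them just before $s_1$. The paper keeps the blocks in their original order and argues the junctions via $N(s_1)\subseteq N(z)$ rather than your explicit case analysis on $z_0$, but these are cosmetic differences.
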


\begin{proof}
Let $P$ be a path of maximum weight of~$G$ such that $S\subseteq V(P)$.
Denote $S=\{u_{1},u_{2},\dots ,u_{|S|}\}$, where $u_{1}<_{\sigma
}u_{2}<_{\sigma }\dots <_{\sigma }u_{|S|}$. Without loss of generality we
may assume by Lemma~\ref{lem:normal.path.existence} that $P$ is normal.
Since $\mathcal{I}[S]$ induces a proper interval representation of~$G[S]$
(cf. Definition~\ref{def:reducible.set}) Lemma~\ref%
{lem:propers.consecutive.in.path} implies that also $u_{1}<_{P}u_{2}<_{P}%
\dots <_{P}u_{|S|}$, i.e.,~the vertices of~$S$ appear in the same order both
in the vertex ordering $\sigma $ and in the path $P$. Furthermore,
Observation~\ref{obs:consecutive.joined.by.edge} implies that $%
u_{i}u_{i+1}\in E(G)$ for every $i\in \lbrack |S|-1]$. Let $%
P=(P_{0},u_{1},P_{1},u_{2},\dots ,P_{|S|-1},u_{|S|},P_{|S|})$, where 
\begin{equation}
V(P_{i})\cap S=\emptyset ,\text{ for every }i\in \lbrack |S|-1].
\label{eq:eq.empty.intersection}
\end{equation}%
For every $i\in \lbrack |S|-1]$, we denote 
\begin{equation*}
z_{i}=%
\begin{cases}
\text{the first vertex of }P_{i} & \text{if }P_{i}\neq \emptyset  \\ 
u_{i+1} & \text{otherwise}%
\end{cases}%
.
\end{equation*}

Let $i\in \lbrack |S|-1]$ and suppose that $z_{i}<_{\sigma }u_{i}$. If $%
z_{i}=u_{i+1}$ then $u_{i}<_{\sigma }z_{i}=u_{i+1}$, which is a
contradiction. Therefore $z_{i}\neq u_{i+1}$. That is, $z_{i}$ is the first
vertex of~$P_{i}$, i.e.,~the successor of~$u_{i}$ in~$P$. Then, since we
assumed that $z_{i}<_{\sigma }u_{i}$, it follows by Lemma~\ref%
{lem:turning.point} that $I_{z_{i}}\subseteq I_{u_{i}}$. Thus $z_{i}\in S$,
since $u_{i}\in S$ and $S$ is a reducible set (cf.~Definition~\ref%
{def:reducible.set}). This is a contradiction to Eq.~(\ref%
{eq:eq.empty.intersection}). Therefore $u_{i}<_{\sigma }z_{i}$, for every $%
i\in \lbrack |S|-1]$.

Recall that $\mathcal{I}[S]$ induces a proper interval representation of~$%
G[S]$ and that $u_{1}<_{\sigma }u_{2}<_{\sigma }\dots <_{\sigma }u_{|S|}$.
Thus $l_{u_{1}}=\min \{l_{u}:~u\in S\}$ and $r_{u_{|S|}}=\max \{r_{u}~u\in
S\}$, i.e.,~$\mathbf{span}(S)=[l_{u_{1}},r_{u_{|S|}}]$. Now let $i\in
\lbrack |S|-1]$. Since $u_{i}u_{i+1}\in E(G)$ and $u_{i}<_{\sigma }z_{i}$,
Lemma~\ref{lem:endiamesa} implies that $u_{i}<_{\sigma }z<_{\sigma }u_{i+1}$
for every $z\in V(P_{i})$. Therefore $u_{1}<_{\sigma }z<_{\sigma }u_{|S|}$,
for every $z\in V(P_{i})$, where $i\in \lbrack |S|-1]$. Suppose that there
exists a vertex $z\in V(P_{i})$ such that $l_{u_{1}}<l_{z}$. Then, since $%
z<_{\sigma }u_{|S|}$, it follows that $I_{z}\subseteq \mathbf{span}(S)$.
Thus $z\in S$, since $S$ is a reducible set by assumption (cf.~Definition~%
\ref{def:reducible.set}). This is a contradiction to Eq.~(\ref%
{eq:eq.empty.intersection}). Thus $l_{z}<l_{u_{1}}$ for every $z\in V(P_{i})$%
, where $i\in \lbrack |S|-1]$. Since also $u_{1}<_{\sigma }z$ as we proved
above, it follows that $I_{u_{1}}\subseteq I_{z}$ for every $z\in V(P_{i})$,
where $i\in \lbrack |S|-1]$. Thus $N(u_{1})\subseteq N(z)$ for every $z\in
V(P_{i})$, where $i\in \lbrack |S|-1]$. Therefore $P^{\prime
}=(P_{0},P_{1},\dots ,P_{|S|-1},u_{1},u_{2},\dots ,u_{|S|},P_{|S|})$ is a
path of~$G$, where $V(P^{\prime })=V(P)$ and the vertices of~$S$ appear
consecutively in~$P^{\prime }$. This completes the proof of the lemma.
\end{proof}

\medskip

We now present two auxiliary technical lemmas that will be used to prove the
correctness of Reduction Rule~\ref{data-reduction-1-redrule} in Theorem~\ref%
{first-data-reduction-thm}.

\begin{lemma}
\label{lem-invariant-reducible}Let $G$ be an interval graph and $\mathcal{I}$
be an interval representation of~$G$. Let also $S,S^{\prime }\subseteq V(G)$
be two reducible sets of~$G$ such that $S\cap S^{\prime }=\emptyset $. Let $%
G^{\prime }$ be the graph obtained from $G$ by replacing $\mathcal{I}[S]$ by 
$\mathbf{span}(S)$. Then $S^{\prime }$ remains a reducible set of~$G^{\prime
}$.
\end{lemma}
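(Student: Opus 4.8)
The plan is to verify directly that $S'$ satisfies the two conditions of Definition~\ref{def:reducible.set} in the graph $G'$. Write $\mathcal{I}'$ for the interval representation of $G'$ obtained from $\mathcal{I}$ by deleting the intervals of all vertices in $S$ and introducing a single new vertex $s$ whose interval is $\mathbf{span}(S)$. The crucial preliminary observation, on which everything else rests, is that since $S\cap S'=\emptyset$ the operation leaves every interval of a vertex of $S'$ untouched; hence $\mathcal{I}'[S']=\mathcal{I}[S']$, the induced graphs coincide ($G'[S']=G[S']$), and $\mathbf{span}(S')$ is the same whether computed in $\mathcal{I}$ or in $\mathcal{I}'$.

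The first condition of Definition~\ref{def:reducible.set} then follows with no further work: since $\mathcal{I}'[S']=\mathcal{I}[S']$ and $S'$ is reducible in $G$, the representation $\mathcal{I}'[S']$ still induces a connected proper interval representation of $G'[S']=G[S']$. For the second condition I would take an arbitrary $v\in V(G')$ with $I_v\subseteq\mathbf{span}(S')$ and show $v\in S'$, distinguishing whether $v$ is an ``old'' vertex or the new vertex $s$. If $v\in V(G')\setminus\{s\}=V(G)\setminus S$, its interval is unchanged, so $I_v\subseteq\mathbf{span}(S')$ already holds in $\mathcal{I}$; applying condition~2 of the reducibility of $S'$ in $G$ yields $v\in S'$ immediately.

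The only substantive case is $v=s$, and this is where I expect the single point of content to lie. Since $s\notin S'$, I must rule out $I_s=\mathbf{span}(S)\subseteq\mathbf{span}(S')$; in other words the key step is the claim $\mathbf{span}(S)\nsubseteq\mathbf{span}(S')$, and it is precisely here that disjointness of $S$ and $S'$ enters. Suppose for contradiction that $\mathbf{span}(S)\subseteq\mathbf{span}(S')$. Pick any $u\in S$ (such a $u$ exists because, by the first condition, $G[S]$ is connected and hence $S\neq\emptyset$, which also makes $\mathbf{span}(S)$ well defined). Then $I_u\subseteq\mathbf{span}(S)\subseteq\mathbf{span}(S')$, so condition~2 of the reducibility of $S'$ in $G$ forces $u\in S'$, contradicting $S\cap S'=\emptyset$.

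Hence $\mathbf{span}(S)\nsubseteq\mathbf{span}(S')$, so the new vertex $s$ never satisfies the hypothesis $I_s\subseteq\mathbf{span}(S')$ and thus cannot violate condition~2. Both conditions of Definition~\ref{def:reducible.set} then hold for $S'$ in $G'$, and $S'$ is reducible in $G'$. I anticipate no real obstacles beyond the bookkeeping needed to record that the induced representation and the span of $S'$ are preserved by the reduction; the entire mathematical weight of the lemma sits in the short contradiction argument ruling out $\mathbf{span}(S)\subseteq\mathbf{span}(S')$.
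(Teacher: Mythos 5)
Your proposal is correct and follows essentially the same route as the paper's proof: both verify the two conditions of Definition~\ref{def:reducible.set} directly, noting that $\mathcal{I}'[S']=\mathcal{I}[S']$ for Condition~1, and both handle the new vertex $s$ by observing that $\mathbf{span}(S)\subseteq\mathbf{span}(S')$ would force the vertices of $S$ into $S'$, contradicting $S\cap S'=\emptyset$. The only cosmetic difference is that you isolate the claim $\mathbf{span}(S)\nsubseteq\mathbf{span}(S')$ as an explicit intermediate step, whereas the paper folds it into the case $v=u$ of the contradiction argument.
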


\begin{proof}
Let $u$ be the vertex of~$V(G^{\prime })\setminus V(G)$, i.e.,~$I_{u}=%
\mathbf{span}(S)$. Denote by $\mathcal{I}^{\prime }$ the interval
representation obtained from $\mathcal{I}$ after replacing $\mathcal{I}[S]$
by $\mathbf{span}(S)$. First note that $\mathcal{I}^{\prime }[S^{\prime }]=%
\mathcal{I}[S^{\prime }]$, and thus $\mathcal{I}^{\prime }[S^{\prime }]$
induces a connected proper interval representation as $S^{\prime }$ is a
reducible set by assumption. This proves Condition~1 of Definition~\ref%
{def:reducible.set}.

Let now $v\in V(G^{\prime })$ such that $I_{v}\subseteq \mathbf{span}%
(S^{\prime })$. Assume that $v\notin S^{\prime }$. If $v\neq u$, then $v\in
V(G)$ and thus $v\in S^{\prime }$, since $S^{\prime }$ is a reducible set of 
$G$. If $v=u$, then $\mathbf{span}(S)=I_{u}=I_{v}\subseteq \mathbf{span}%
(S^{\prime })$. That is, for every $u_{0}\in S$ we have $I_{u_{0}}\subseteq 
\mathbf{span}(S)\subseteq \mathbf{span}(S^{\prime })$, and thus also $%
u_{0}\in S^{\prime }$, since $S^{\prime }$ is a reducible set. This is a
contradiction, since $S\cap S^{\prime }=\emptyset $. Therefore, for every $%
v\in V(G^{\prime })$ such that $I_{v}\subseteq \mathbf{span}(S^{\prime })$,
we have that $v\in S^{\prime }$. This proves Condition~2 of Definition~\ref%
{def:reducible.set} and completes the proof of the lemma.
\end{proof}

\begin{lemma}
\label{lem:equal-weight-first-replacement}Let $\ell $ be a positive integer.
Let $G$ be a weighted interval graph, $w:V(G)\rightarrow \mathbb{N}$, $%
\mathcal{I}$ be an interval representation of~$G$, and $S$ be a reducible
set in~$G$. Let also $G^{\prime }$ be the graph obtained from $G$ by
replacing $\mathcal{I}[S]$ by an interval $I_{u}=\mathbf{span}(S)$ where $%
w(u)=\sum_{v\in S}w(v)$. Then the maximum weight of a path in~$G$ is $\ell $
if and only if the maximum weight of a path in~$G^{\prime }$ is $\ell $.
\end{lemma}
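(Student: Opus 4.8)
The plan is to prove both directions of the equivalence by exhibiting, for each path in one graph, a path of the same weight in the other. The weight-preserving replacement collapses the reducible set $S$ into a single vertex $u$ with $w(u)=\sum_{v\in S}w(v)$, so the natural correspondence is: in one direction, a path through $u$ in $G^{\prime}$ is ``expanded'' by replacing $u$ with the vertices of $S$ strung together in $\sigma$-order; in the other, a path through $S$ in $G$ is ``contracted'' by replacing the consecutive block of $S$-vertices with $u$. Since the sum of weights is preserved under both operations, showing that the maximum over one graph is $\ell$ iff the maximum over the other is $\ell$ reduces to showing that each maximum-weight path can be transformed into a path of equal weight in the other graph.

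For the direction from $G$ to $G^{\prime}$, let $P$ be a maximum-weight path in $G$. By Lemma~\ref{full.component.or.none}, either $S\cap V(P)=\emptyset$ or $S\subseteq V(P)$. In the first case $P$ survives verbatim as a path of $G^{\prime}$ (no vertex of $P$ is affected by the replacement), with the same weight. In the second case, Lemma~\ref{lem:consecutive.vertices} lets us assume the vertices of $S$ appear consecutively in $P$, say $P=(A,u_{1},u_{2},\dots,u_{|S|},B)$ where the $u_i$ are exactly the vertices of $S$ in $\sigma$-order. I would replace this consecutive block by the single vertex $u$ to obtain $P^{\prime}=(A,u,B)$; the key point is that $P^{\prime}$ is a genuine path of $G^{\prime}$, which follows because $I_u=\mathbf{span}(S)$ contains all the intervals $I_{u_i}$, so $u$ inherits in $G^{\prime}$ every adjacency that the endpoints $u_1$ and $u_{|S|}$ had to the neighbors in $A$ and $B$ respectively. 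Its weight is $w(A)+w(u)+w(B)=w(A)+\sum_{v\in S}w(v)+w(B)=w(P)$.

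For the reverse direction, from $G^{\prime}$ to $G$, let $P^{\prime}$ be a maximum-weight path in $G^{\prime}$. If $u\notin V(P^{\prime})$, then $P^{\prime}$ is already a path of $G$ of the same weight. If $u\in V(P^{\prime})$, write $P^{\prime}=(A,u,B)$ and expand $u$ into the chain $(u_1,u_2,\dots,u_{|S|})$; this chain is a path in $G$ because $\mathcal{I}[S]$ is a connected proper interval representation, so by Observation~\ref{obs:consecutive.joined.by.edge} consecutive vertices of $S$ in $\sigma$ are adjacent. Again the adjacencies joining this chain to $A$ and $B$ are recovered from the fact that $u$'s neighborhood in $G^{\prime}$ is realized by the extremal intervals of $S$, and the total weight is unchanged. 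I expect the main obstacle to be the careful verification that when $u$ is expanded (or a block of $S$ is contracted) the boundary adjacencies to $A$ and $B$ are exactly preserved---one must argue that any neighbor of $u$ in $G^{\prime}$ is adjacent in $G$ to one of the appropriate endpoint vertices of $S$ (and conversely), using that $I_u=\mathbf{span}(S)$ together with Condition~2 of Definition~\ref{def:reducible.set}, which guarantees no ``foreign'' interval is strictly contained in $\mathbf{span}(S)$ and hence no spurious adjacency is created or destroyed. Once the weight-equality of the transformed path is in hand in each case, the biconditional on the maximum value $\ell$ follows immediately, since the two transformations are mutually inverse on weights.
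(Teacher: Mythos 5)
Your overall strategy is the same as the paper's, and your forward direction is correct and complete: Lemma~\ref{full.component.or.none} plus Lemma~\ref{lem:consecutive.vertices}, then contraction of the consecutive $S$-block, with the adjacencies preserved because $I_u=\mathbf{span}(S)\supseteq I_{u_1},I_{u_{|S|}}$. The gap is in the reverse direction, at exactly the point you flag as ``the main obstacle'' but then dismiss too quickly. Condition~2 of Definition~\ref{def:reducible.set} gives you that every neighbour $z$ of $u$ in $G^{\prime}$ has $I_z\nsubseteq\mathbf{span}(S)$, hence $I_z$ contains $l_u$ or $r_u$ and so $z$ is adjacent to $u_1$ \emph{or} to $u_{|S|}$ --- but it does not tell you which. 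To splice the chain $(u_1,\dots,u_{|S|})$ into $P^{\prime}=(\dots,v,u,v^{\prime},\dots)$ you need the predecessor $v$ adjacent to $u_1$ and the successor $v^{\prime}$ adjacent to $u_{|S|}$ (or the mirror image), and for an arbitrary maximum-weight path this can fail: if both $v$ and $v^{\prime}$ straddle only the left end of $\mathbf{span}(S)$ (say $I_{u_1}=[0,6]$, $I_{u_2}=[4,10]$, $I_v=[-2,2]$, $I_{v^{\prime}}=[-3,1]$), each is adjacent to $u_1$ but not to $u_{|S|}$, the path $(v,u,v^{\prime})$ exists in $G^{\prime}$, and neither orientation of the chain can be inserted between $v$ and $v^{\prime}$.

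The missing idea is to first take $P^{\prime}$ normal via Lemma~\ref{lem:normal.path.existence} and then use normality to pin down the two boundary adjacencies, which is what the paper does: Lemma~\ref{lem:turning.point} forces $u<_{\sigma}v^{\prime}$ (otherwise $I_{v^{\prime}}\subseteq\mathbf{span}(S)$ would put $v^{\prime}$ in $S$, which has been deleted), whence $l_{v^{\prime}}<r_u=r_{u_{|S|}}<r_{v^{\prime}}$ and so $v^{\prime}u_{|S|}\in E(G)$; a separate two-case analysis on whether $v<_{\sigma}u$ or $u<_{\sigma}v$ gives $vu_1\in E(G)$. Without invoking normality of $P^{\prime}$, your expansion step is not justified as written.
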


\begin{proof}
First assume that the maximum weight of a path $P$ in~$G$ is $\ell $.
Without loss of generality, from Lemma~\ref{lem:normal.path.existence}, we
may also assume that $P$ is normal. Furthermore, either $S\subseteq V(P)$ or 
$S\cap V(P)=\emptyset $ by Lemma~\ref{full.component.or.none}. Notice that
if $S\cap V(P)=\emptyset $, then $P$ is also a path of~$G^{\prime }$.
Suppose that $S\subseteq V(P)$. Then from Lemma~\ref%
{lem:consecutive.vertices}, we can obtain a path $\widehat{P}$ such that $V(%
\widehat{P})=V(P)$ and the vertices of~$S$ appear consecutively in~$\widehat{%
P}$. Notice then that by replacing the subpath of~$\widehat{P}$ consisting
of the vertices of~$S$ by the single vertex $u$, we obtain a path $P^{\prime
}$ of~$G^{\prime }$ such that $V(P^{\prime })\setminus V(P)=\{u\}$ and $%
V(P^{\prime })\cap V(P)=V(P)\setminus \{S\}$. Since $w(u)=\sum_{v\in S}w(v)$%
, we obtain that $\sum_{v\in V(P^{\prime })}w(v)=\sum_{v\in V(P)}w(v)$.
Thus, $G^{\prime }$ has a path $P^{\prime }$ of weight at least $\ell $.

Now assume that the maximum weight of a path $P^{\prime }$ in~$G^{\prime }$
is $\ell $. If $u\notin V(P^{\prime })$, then $V(P^{\prime })$ is also a
path of~$G$. Suppose that $u\in V(P^{\prime })$. Let $P^{\prime
}=(P_{1},v,u,v^{\prime },P_{2})$. Our aim is to show that $u_{|S|}v^{\prime
}\in E(G)$ and $vu_{1}\in E(G)$. Suppose that $v^{\prime }<_{\sigma }u$.
Then Lemma~\ref{lem:turning.point} implies that $I_{v^{\prime }}\subseteq
I_{u}=\mathbf{span}(S)$, and thus $v^{\prime }\in S=V(G)\setminus
V(G^{\prime })$. This is a contradiction, since $v^{\prime }\in V(G^{\prime
})$. Thus $u<_{\sigma }v^{\prime }$, i.e.,~$r_{u}=r_{u_{|S|}}<r_{v^{\prime
}} $. Since $uv^{\prime }\in E(G)$, it follows that $l_{v^{\prime
}}<r_{u}=r_{u_{|S|}}$. Therefore $r_{u_{|S|}}\in I_{v^{\prime }}$, and thus $%
v^{\prime }u_{|S|}\in E(G)$. It remains to show that $vu_{1}\in E(G)$. First
let $u<_{\sigma }v$. Then $I_{u}\subseteq I_{v}$ by Lemma~\ref%
{lem:turning.point}. Furthermore, since $I_{u_{1}}\subseteq I_{u}=\mathbf{%
span}(S)$, it follows that $I_{u_{1}}\subseteq I_{v}$, and thus $vu_{1}\in
E(G)$. Let now $v<_{\sigma }u$, i.e.,~$r_{v}<r_{u}=r_{u_{|S|}}$. Then, since 
$vu\in E(G)$, it follows that $l_{u_{1}}=l_{u}<r_{v}<r_{u}$. If $%
I_{v}\subseteq I_{u}=\mathbf{span}(S)$, then $v^{\prime }\in S=V(G)\setminus
V(G^{\prime })$, which is a contradiction since $v\in V(G^{\prime })$. Thus $%
I_{v}\nsubseteq I_{u}$. Therefore, since $l_{u}<r_{v}<r_{u}$, it follows
that $l_{v}<l_{u}=l_{u_{1}}<r_{v}$, i.e.,~$l_{u_{1}}\in I_{v}$. Thus, $%
u_{1}v\in E(G)$. This implies that we may obtain a path $P$ of~$G$ by
replacing the vertex $u$ in~$P^{\prime }$ by the path $(u_{1},u_{2},\dots
,u_{|S|})$. As before, since $w(u)=\sum_{v\in S}w(v)$, the weight of~$P$ is
equal to the weight of~$P^{\prime }$. Hence, $G$ has a path of weight at
least $\ell $.
\end{proof}

\medskip

In the next definition we reduce the interval graph $G$ to the \emph{weighted%
} interval graph $G^{\#}$ which has fewer vertices than $G$. Then, as we
prove in Theorem~\ref{first-data-reduction-thm}, the longest paths of~$G$
correspond to the maximum-weight paths of $G^{\#}$.

\begin{redrule}[first data reduction]
\label{data-reduction-1-redrule}Let $G=(V,E)$ be an interval graph, $%
\mathcal{I}$ be an interval representation of~$G$, and $D$ be a proper
interval deletion set of~$G$. Let $\mathcal{S}$ be a set of vertex disjoint
reducible sets of~$G$, where $S\cap D=\emptyset $, for every $S\in \mathcal{S%
}$. The \emph{weighted interval graph} $G^{\#}=(V^{\#},E^{\#})$ is induced
by the \emph{weighted interval representation} $\mathcal{I}^{\#}$, which is
derived from $\mathcal{I}$ as follows:\vspace{-0,1cm}

\begin{itemize}
\item for every $S\in \mathcal{S}$, replace in~$\mathcal{I}$ the intervals $%
\{I_{v}:v\in S\}$ with the single interval $I_{S}=\mathbf{span}(S)$ which
has weight $|S|$; all other intervals receive weight $1$.
\end{itemize}
\end{redrule}

In the next theorem we prove the correctness of Reduction Rule~\ref%
{data-reduction-1-redrule}.

\begin{theorem}
\label{first-data-reduction-thm}Let $\ell $ be a positive integer. Let $G$
be an interval graph and $G^{\#}$ be the weighted interval graph derived by
Reduction Rule~\ref{data-reduction-1-redrule}. Then the longest path in $G$
has $\ell $ vertices if and only if the maximum weight of a path in~$G^{\#}$
is $\ell $.
\end{theorem}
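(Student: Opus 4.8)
The plan is to prove Theorem~\ref{first-data-reduction-thm} by reducing it to the already-established Lemma~\ref{lem:equal-weight-first-replacement}, which handles the replacement of a \emph{single} reducible set. The key observation is that Reduction Rule~\ref{data-reduction-1-redrule} replaces each of the sets in the collection $\mathcal{S}=\{S_1,S_2,\ldots,S_p\}$ simultaneously, whereas Lemma~\ref{lem:equal-weight-first-replacement} only addresses one replacement at a time. Thus the natural strategy is an induction on the number $p$ of reducible sets in $\mathcal{S}$, performing the replacements one after the other and tracking that the maximum path weight is preserved at each step.

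First I would set up the induction. In the original (unweighted) interval graph $G$, every vertex has weight $1$, so the weight of any path equals its number of vertices; hence the longest path in $G$ has $\ell$ vertices if and only if the maximum weight of a path in $G$ is $\ell$. This recasts the statement purely in terms of maximum path weight, matching the language of Lemma~\ref{lem:equal-weight-first-replacement}. I would then define the intermediate weighted graphs $G_0=G, G_1, \ldots, G_p=G^{\#}$, where $G_j$ is obtained from $G_{j-1}$ by replacing $\mathcal{I}[S_j]$ with the single weighted interval $I_{S_j}=\mathbf{span}(S_j)$ of weight $|S_j|=\sum_{v\in S_j} w(v)$ (the sum of the unit weights of the vertices of $S_j$). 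For each single step, applying Lemma~\ref{lem:equal-weight-first-replacement} directly yields that the maximum path weight in $G_{j-1}$ equals $\ell$ if and only if the maximum path weight in $G_j$ equals $\ell$. Chaining these equivalences from $j=1$ to $j=p$ gives that the maximum path weight in $G$ equals $\ell$ if and only if the maximum path weight in $G^{\#}$ equals $\ell$, which is the desired conclusion.

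The one genuine obstacle, and the step deserving the most care, is verifying that Lemma~\ref{lem:equal-weight-first-replacement} may legitimately be applied at every stage of the induction. That lemma requires the set being replaced to be a reducible set \emph{in the current graph} $G_{j-1}$, not merely in the original $G$. Since the sets in $\mathcal{S}$ are pairwise vertex-disjoint, I would invoke Lemma~\ref{lem-invariant-reducible}, which guarantees precisely that replacing one reducible set by its span leaves any other disjoint reducible set reducible in the resulting graph. A short inductive argument then shows that $S_j$ remains a reducible set in $G_{j-1}$: starting from the fact that $S_j$ is reducible in $G=G_0$ and that each prior replacement of $S_1,\ldots,S_{j-1}$ preserves its reducibility by Lemma~\ref{lem-invariant-reducible}. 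I would also note that the weight accounting is consistent, since in $G_{j-1}$ the vertices of $S_j$ still carry their original unit weights (they were untouched by the replacements of the other disjoint sets), so $w(u)=\sum_{v\in S_j} w(v)=|S_j|$ matches the weight assigned by the Reduction Rule.

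Finally, I would remark that the order in which the sets $S_1,\ldots,S_p$ are processed is immaterial, since the pairwise disjointness and Lemma~\ref{lem-invariant-reducible} ensure that the collection of remaining reducible sets is unaffected by each replacement; thus the final graph $G_p$ is exactly the graph $G^{\#}$ produced by the Reduction Rule regardless of ordering. Combining the chain of equivalences with the initial translation between path length and path weight in the unit-weighted graph $G$ completes the proof.
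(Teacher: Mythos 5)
Your proposal is correct and follows essentially the same route as the paper: the paper's proof also performs the replacements sequentially, applying Lemma~\ref{lem:equal-weight-first-replacement} at each step and invoking Lemma~\ref{lem-invariant-reducible} to certify that the remaining disjoint sets stay reducible in the intermediate graphs. Your write-up merely makes the induction and the unit-weight bookkeeping more explicit than the paper does.
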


\begin{proof}
The construction of the graph $G^{\#}$ from $G$ in Reduction Rule~\ref%
{data-reduction-1-redrule} can be done sequentially, replacing each time the
intervals of a set $S\in \mathcal{S}$ with one interval of the appropriate
weight. After making this replacement for such a set $S\in \mathcal{S}$, the
maximum weight of a path in the resulting graph is by Lemma~\ref%
{lem:equal-weight-first-replacement} equal to the maximum weight of a path
in the graph before the replacement of~$S$. Furthermore, the vertex set of
every other set $S^{\prime }\in \mathcal{S}\setminus S$ remains reducible in
the resulting graph by Lemma~\ref{lem-invariant-reducible}. Therefore we can
iteratively replace all sets of~$\mathcal{S}$, resulting eventually in the
weighted graph~$G^{\#}$, in which the maximum weight of a path is equal to
the maximum number of vertices in a path in the original graph $G$.
\end{proof}

\medskip

In the next lemma we prove that the weighted interval graph $G^{\#}$, which
is obtained from an interval graph $G$ by applying Reduction Rule~\ref%
{data-reduction-1-redrule} to it, has some useful properties. These
properties will be exploited in Section~\ref{special-interval-sec}
(cf.~Corollary~\ref{first-reduction-G-diesi-properties-cor} in Section~\ref%
{special-interval-sec}) as they are crucial for deriving a \emph{special
weighted interval graph} (cf.~Definition~\ref{special-interval-def} in
Section~\ref{special-interval-sec}).

\begin{lemma}
\label{lem:first:data:reduction} Let $G$ be an interval graph, $D$ be a
proper interval deletion set of~$G$, and $\mathcal{S}$ be a set of
vertex-disjoint reducible sets of~$G$, where $S\cap D=\emptyset $, for every 
$S\in \mathcal{S}$. Suppose that $\mathbf{span}(S)\cap \mathbf{span}%
(S^{\prime })=\emptyset $ for every two distinct sets $S,S^{\prime }\in 
\mathcal{S}$. Furthermore, let $G^{\#}$ be the weighted interval graph
obtained from Reduction Rule~\ref{data-reduction-1-redrule}. Then $D$ is a
proper interval deletion set of~$G^{\#}$. Furthermore $V(G^{\#})\setminus D$
can be partitioned into two sets $A$ and $U^{\#}$, where $%
A=V(G^{\#})\setminus V(G)$ and:

\begin{enumerate}
\item $A$ is an independent set of~$G^{\#}$, and

\item for every $v\in A$ and every $u\in V(G^{\#})\setminus \{v\}$, we have $%
I_{u}\nsubseteq I_{v}$.
\end{enumerate}
\end{lemma}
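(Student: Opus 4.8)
The plan is to set $A=V(G^{\#})\setminus V(G)=\{a_{S}:S\in\mathcal{S}\}$, where $a_{S}$ is the vertex replacing $S$ with interval $I_{a_{S}}=\mathbf{span}(S)$, and $U^{\#}=V(G^{\#})\setminus(D\cup A)$; these two sets clearly partition $V(G^{\#})\setminus D$, since no vertex of $A$ lies in $V(G)\supseteq D$. Every vertex of $U^{\#}\cup D$ is an original vertex that survived the reduction, hence lies in no set of $\mathcal{S}$ and keeps its interval in $\mathcal{I}^{\#}$. Property~1 is immediate: for distinct $S,S'$ we have $I_{a_{S}}\cap I_{a_{S'}}=\mathbf{span}(S)\cap\mathbf{span}(S')=\emptyset$ by hypothesis, so $a_{S}a_{S'}\notin E(G^{\#})$, i.e.\ $A$ is independent. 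For Property~2, fix $a_{S}\in A$ and $u\in V(G^{\#})\setminus\{a_{S}\}$ with $I_{u}\subseteq I_{a_{S}}=\mathbf{span}(S)$. If $u=a_{S'}\in A$, then $\mathbf{span}(S')\subseteq\mathbf{span}(S)$ contradicts span-disjointness; if $u$ is an original vertex, then $I_{u}\subseteq\mathbf{span}(S)$ forces $u\in S$ by Condition~2 of Definition~\ref{def:reducible.set}, contradicting $u\notin S$. Hence $I_{u}\nsubseteq I_{a_{S}}$, proving Property~2.

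The core of the argument is the statement that $D$ is a proper interval deletion set of $G^{\#}$; I would prove it by showing that $G^{\#}\setminus D$ is $K_{1,3}$-free, which by the characterization of proper interval graphs as $K_{1,3}$-free interval graphs~\cite{Roberts69} suffices. The key structural observation is that, since $\mathcal{I}[S]$ is connected (Condition~1 of Definition~\ref{def:reducible.set}), the intervals of $S$ cover $\mathbf{span}(S)$ with no gaps, so $\bigcup_{v\in S}I_{v}=\mathbf{span}(S)$. Consequently, for any original vertex $w\notin S$ we have $w\in N_{G^{\#}}(a_{S})$ if and only if $I_{w}\cap\mathbf{span}(S)\neq\emptyset$ if and only if $w$ is adjacent in $G$ to some vertex of $S$; in particular, whenever $w\sim a_{S}$ in $G^{\#}$ there is a \emph{representative} $s\in S$ with $w\sim s$ in $G$. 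Thus $G^{\#}\setminus D$ is exactly the graph obtained from the proper interval graph $G\setminus D$ by contracting each connected reducible set $S$ to the single vertex $a_{S}$ (legitimate since $S\cap D=\emptyset$).

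Assume for contradiction that $G^{\#}\setminus D$ contains an induced claw $\{c,x,y,z\}$ with centre $c$ and independent leaves $x,y,z$, and split into two cases. If $c=a_{S}$ is a contracted vertex, then no leaf can be contracted (two contracted vertices are non-adjacent by Property~1, while $c$ is adjacent to every leaf), so $x,y,z$ are original, pairwise non-intersecting, and each meets $\mathbf{span}(S)=[L,R]$. Letting $y$ be the positionally middle one, if $l_{y}<L$ then the leftmost leaf lies entirely left of $L$ and misses $[L,R]$, and symmetrically on the right; hence $I_{y}\subseteq\mathbf{span}(S)$, forcing $y\in S$ by Condition~2 of Definition~\ref{def:reducible.set}, a contradiction. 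If instead $c$ is original, I would lift the claw to $G\setminus D$ via a map $\phi$: $\phi(t)=t$ for original $t$, and for a contracted leaf $a_{S'}$ a representative $\phi(a_{S'})\in S'$ with $\phi(a_{S'})\sim c$, which exists by the representative property above. Then $\phi(c)=c$ is adjacent to each $\phi(t)$, while the representatives stay pairwise non-adjacent because $I_{\phi(a_{S'})}\subseteq\mathbf{span}(S')$ and non-adjacency of the corresponding leaves already yields disjointness of the enclosing intervals/spans; all representatives avoid $D$ (as $S'\cap D=\emptyset$ and $U^{\#}\cap D=\emptyset$), so $\{\phi(c),\phi(x),\phi(y),\phi(z)\}$ is an induced $K_{1,3}$ in $G\setminus D$, contradicting that $G\setminus D$ is proper.

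The hard part will be the claw-freeness of $G^{\#}\setminus D$, and within it the case where the centre is original but some leaves are contracted: the delicate point is to choose leaf representatives that are simultaneously adjacent to the centre yet pairwise non-adjacent, which is exactly where I would invoke span-disjointness together with Property~2 to guarantee that distinct enclosing spans—and any non-neighbouring leaf interval—stay disjoint. The remaining checks (distinctness of the four representatives and that the lifted configuration is induced rather than carrying an extra edge) are routine once the adjacency and non-adjacency relations above are established.
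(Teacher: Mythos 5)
Your proposal is correct and follows essentially the same route as the paper's proof: the same partition $A=V(G^{\#})\setminus V(G)$, $U^{\#}=V(G^{\#})\setminus(D\cup A)$, independence of $A$ from span-disjointness, Property~2 from Condition~2 of Definition~\ref{def:reducible.set}, and claw-freeness of $G^{\#}\setminus D$ by the same case split on whether the centre is contracted, with contracted leaves lifted to representatives $u_i\in S_i$ adjacent to the centre (the paper's $z_i$). The only cosmetic difference is that in the contracted-centre case the paper directly invokes the already-proved Property~2 to get a leaf with $I_u\subseteq I_v$, whereas you re-derive the containment of the middle leaf and then apply Condition~2; these are the same argument.
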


\begin{proof}
Define $A=V(G^{\#})\setminus V(G)$ and $U^{\#}=V(G^{\#})\setminus (D\cup A)$%
. Note that the sets $A$ and $U^{\#}$ form a partition of the set $%
V(G^{\#})\setminus D$. First we prove that $A$ is an independent set. Note
by Reduction Rule~\ref{data-reduction-1-redrule} that every vertex $a\in A$
corresponds to a different set $S_{a}\in \mathcal{S}$, which corresponds to
the interval $I_{a}=\mathbf{span}(S)$ in the interval representation $%
\mathcal{I}^{\#}$ of the graph $G^{\#}$. Therefore, since $\mathbf{span}%
(S)\cap \mathbf{span}(S^{\prime })=\emptyset $ for every two distinct sets $%
S,S^{\prime }\in \mathcal{S}$, it follows that $I_{a}\cap I_{a^{\prime
}}=\emptyset $ for every two distinct vertices $a,a^{\prime }\in A$. That
is, $A$ induces an independent set in~$G^{\#}$.

Second we prove that for every $v\in A$ and every $u\in V(G^{\#})\setminus
\{v\}$, we have $I_{u}\nsubseteq I_{v}$. Suppose otherwise that there exist $%
v\in A$ and $u\in V(G^{\#})\setminus \{v\}$ such that $I_{u}\subseteq I_{v}$%
. Then, since $A$ is an independent set, it follows that $u\notin A$, i.e.,~$%
u\in V(G)$. Recall by the construction that $I_{v}=\mathbf{span}(S)$, for
some reducible set $S$ in~$G$. Thus, since $I_{u}\subseteq I_{v}=\mathbf{span%
}(S)$, it follows by Definition~\ref{def:reducible.set} that $u\in S$. This
is a contradiction since by construction $V(G^{\#})\cap S=\emptyset $.
Therefore, for every $v\in A$ and every $u\in V(G^{\#})\setminus \{v\}$, we
have $I_{u}\nsubseteq I_{v}$.

Now we prove that $D$ is a proper interval deletion set of~$G^{\#}$. Suppose
otherwise that $G^{\#}\setminus D$ is not a proper interval graph. Then $%
G^{\#}\setminus D$ contains an induced $K_{1,3}$~\cite{Roberts69}. If this $%
K_{1,3}$ contains no vertex of~$A$, then this $K_{1,3}$ is also contained as
an induced subgraph of~$G\setminus D$. This is a contradiction, since $D$ is
a proper interval deletion set of~$G$ by assumption. Thus this $K_{1,3}$
contains at least one vertex of~$A$. Let $v$ denote the vertex of degree $3$
in this $K_{1,3}$. Suppose that $v\in A$. Then, for at least one vertex $u$
of the other three vertices of this $K_{1,3}$ we have that $I_{u}\subseteq
I_{v}$, which is a contradiction as we proved above. Suppose that $v\notin A$%
, i.e.,~$v\in U^{\#}$. Denote the leaves of the $K_{1,3}$ by $%
v_{1},v_{2},v_{3}$. Let $i\in \lbrack 3]$. If $v_{i}\in U^{\#}$, then $%
v_{i}\in V(G)$. Otherwise, if $v_{i}\in A$, then there exists a reducible
set $S_{i}$ in~$G$ such that $I_{v_{i}}=\mathbf{span}(S_{i})$. Furthermore,
since $v\in U^{\#}$ and $I_{v}\cap \mathbf{span}(S_{i})\neq \emptyset $,
there exists at least one vertex $u_{i}\in S_{i}$ where $u_{i}v\in E(G)$.
For every $i\in \lbrack 3]$, define%
\begin{equation*}
z_{i}=%
\begin{cases}
u_{i} & \text{if }v_{i}\in A, \\ 
v_{i} & \text{otherwise.}%
\end{cases}%
.
\end{equation*}%
Observe that $v,z_{1},z_{2},z_{3}$ belong to the original graph $G$, i.e.,~$%
v,z_{1},z_{2},z_{3}\in V(G)\setminus D$. Since $I_{v_{i}}\cap
I_{v_{j}}=\emptyset $ for every $1\leq i<j\leq 3$ and $I_{z_{i}}\subseteq
I_{v_{i}}$, $i\in \lbrack 3]$ it follows that $I_{z_{i}}\cap
I_{z_{j}}=\emptyset $, for every $1\leq i<j\leq 3$. Therefore, $%
\{z_{1},z_{2},z_{3}\}$ is an independent set in~$G\setminus D$. Moreover $%
vz_{i}\in E(G)$, for every $i\in \lbrack 3]$. Thus, $\{v,z_{1},z_{2},z_{3}\}$
induces a $K_{1,3}$ in~$G\setminus D$. This is a contradiction, since $D$ is
a proper interval deletion set of~$G$ by assumption. Therefore $D$ is a
proper interval deletion set of~$G^{\#}$.
\end{proof}

\subsection{The second data reduction\label{second-data-reduction-subsec}}

Here we present our second data reduction, which is applied to an arbitrary
weighted interval graph $G$ with weights on its vertices (cf.~Reduction Rule~%
\ref{data-reduction-2-redrule}). As we prove in Theorem~\ref%
{second-data-reduction-thm}, the maximum weight of a path in the resulting
weighted interval graph $\widehat{G}$ is the same as the maximum weight of a
path in~$G$. We first introduce the notion of a \emph{weakly reducible} set
of vertices and some related properties, which are needed for our Reduction
Rule~\ref{data-reduction-2-redrule}.

\begin{definition}
\label{def:weakly.reducible.set} Let $G$ be a (weighted) interval graph and $%
\mathcal{I}$ be an interval representation of~$G$. A set $S\subseteq V(G)$
is \emph{weakly reducible} if it satisfies the following conditions:

\begin{enumerate}
\item $\mathcal{I}[S]$ induces a connected proper interval representation of 
$G[S]$ and

\item for every $v\in V(G)$ and every $u\in S$, if $I_{v}\subseteq I_{u}$
then $S\subseteq N(v)$.
\end{enumerate}
\end{definition}

Note here that Condition~2 of Definition~\ref{def:weakly.reducible.set} also
applies to the case where $v=u$. Therefore $S\subseteq N(u)$ for every $u\in
S$, i.e.,~$S$ induces a clique, as the next observation states.

\begin{observation}
\label{obs:weakly.reducible.clique}\label%
{obs:common.intersection.weakly.reducible} Let $G$ be a (weighted) interval
graph, $\mathcal{I}$ be an interval representation of~$G$ and $S$ be a
weakly reducible set in~$G$. Then $G[S]$ is a clique. Furthermore ${[\max
\{l_{v}:v\in S\},\min\{r_{v}:v\in S\}]\subseteq I_{u}}$, for every~${u\in S}$%
.
\end{observation}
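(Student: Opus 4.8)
The plan is to establish the two assertions separately, starting with the clique claim (which the remark just before the observation already anticipates) and then handling the common-intersection claim. First I would fix an arbitrary vertex $u\in S$ and apply Condition~2 of Definition~\ref{def:weakly.reducible.set} with the special choice $v=u$. Since $I_{u}\subseteq I_{u}$ holds trivially, Condition~2 then forces $u$ to be adjacent to every vertex of $S$ other than itself. As $u\in S$ was arbitrary, every pair of distinct vertices of $S$ is joined by an edge, so $G[S]$ is a clique.

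For the second assertion I would first note that the containment is essentially definitional. Fixing any $u\in S$, we have $l_{u}\le\max\{l_{v}:v\in S\}$ and $\min\{r_{v}:v\in S\}\le r_{u}$, simply because $u$ itself is one of the vertices over which the maximum and the minimum are taken. Consequently any point $x$ with $\max\{l_{v}:v\in S\}\le x\le\min\{r_{v}:v\in S\}$ satisfies $l_{u}\le x\le r_{u}$, i.e.~$x\in I_{u}$; hence $[\max\{l_{v}:v\in S\},\min\{r_{v}:v\in S\}]\subseteq I_{u}$ for every $u\in S$.

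The only point that I expect to require a moment's care, and thus the sole (if minor) obstacle, is arguing that the interval $[\max\{l_{v}:v\in S\},\min\{r_{v}:v\in S\}]$ is genuinely nonempty, so that the statement is not merely vacuous. This is exactly where the clique property proved in the first step is used: since $G[S]$ is a clique, the intervals $\{I_{v}:v\in S\}$ pairwise intersect, and by the Helly property of intervals on the real line their total intersection $\bigcap_{v\in S}I_{v}=[\max\{l_{v}:v\in S\},\min\{r_{v}:v\in S\}]$ is nonempty. Should one wish to avoid invoking Helly's theorem, the same conclusion follows directly: letting $a,b\in S$ be vertices attaining $l_{a}=\max\{l_{v}:v\in S\}$ and $r_{b}=\min\{r_{v}:v\in S\}$, either $a=b$, whence $l_{a}\le r_{a}=r_{b}$, or $a\neq b$, in which case $ab\in E(G)$ gives $I_{a}\cap I_{b}\neq\emptyset$ and therefore $l_{a}\le r_{b}$.
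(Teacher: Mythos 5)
Your proof is correct and follows essentially the same route as the paper: the clique claim is obtained exactly as in the remark preceding the observation (apply Condition~2 of Definition~\ref{def:weakly.reducible.set} with $v=u$, using $I_u\subseteq I_u$), and the containment $[\max\{l_v:v\in S\},\min\{r_v:v\in S\}]\subseteq I_u$ is the same immediate consequence of $l_u\le\max\{l_v:v\in S\}$ and $\min\{r_v:v\in S\}\le r_u$ that the paper treats as self-evident. Your extra remark on non-emptiness via pairwise intersection is a harmless bonus not needed for the stated containment.
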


The intuition behind weakly reducible sets is as follows. For every weakly
reducible set~$S$, a longest path~$P$ contains either all vertices of $S$ or
none of them (cf.~Lemma~\ref{lem:full.weakly.reducible.or.none}).
Furthermore let $D$ be a given proper interval deletion set of $G$. Then, in
a certain path $P$ of maximum weight which contains the whole set $S$, the
appearance of the vertices of $S$ in $P$ is \emph{interrupted at most} $%
|D|+3 $ times by vertices outside $S$. That is, such a path~$P$ has at most $%
\min \{|S|,|D|+4\}$ vertex-maximal subpaths with vertices from $S$
(cf.~Lemma~\ref{lem:weakly.consecutive.vertices}). Thus we can reduce the
number of vertices in a maximum-weight normal path~$P$ (without changing its
total weight) by replacing all vertices of~$S$ with~${\min \{|S|,|D|+4\}}$
vertices, cf.~the Reduction Rule~\ref{data-reduction-2-redrule}; each of
these new vertices has the same weight and their total weight sums up to~$%
|S| $.

\begin{lemma}
\label{lem:full.weakly.reducible.or.none} Let $G$ be a weighted interval
graph with weight function $w:V(G)\rightarrow \mathbb{N}$ and let $S$ be a
weakly reducible set in~$G$. Let also $P$ be a path of maximum weight in~$G$%
. Then either $S\subseteq V(P)$ or $S\cap V(P)=\emptyset $.
\end{lemma}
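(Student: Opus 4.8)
The plan is to mirror the structure of the proof of Lemma~\ref{full.component.or.none} for reducible sets. I would argue by contradiction: assume $P$ is a maximum-weight path with $S\cap V(P)\neq\emptyset$ and $S\nsubseteq V(P)$, and show that $P$ can be augmented, contradicting maximality. By Lemma~\ref{lem:normal.path.existence} I may assume $P$ is normal. Since $\mathcal{I}[S]$ induces a connected proper interval representation of $G[S]$ (Condition~1 of Definition~\ref{def:weakly.reducible.set}), there must exist two consecutive vertices $u_{1},u_{2}\in S$ in $\sigma$ (say $u_{1}<_{\sigma}u_{2}$) with exactly one of them on $P$; by Observation~\ref{obs:consecutive.joined.by.edge} we have $u_{1}u_{2}\in E(G)$. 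The task is then to insert the missing vertex next to the one already in $P$.

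First I would handle the case $u_{1}\in V(P)$, $u_{2}\notin V(P)$. Writing $P=(P_1,u_1,P_2)$, if $P_2=\emptyset$ we simply append $u_2$ for a heavier path. Otherwise, letting $w$ be the first vertex of $P_2$, I would show $u_1<_{\sigma}w$: if instead $w<_{\sigma}u_1$, then Lemma~\ref{lem:turning.point} gives $I_w\subseteq I_{u_1}$, and since $u_1\in S$ is weakly reducible, Condition~2 of Definition~\ref{def:weakly.reducible.set} forces $S\subseteq N(w)$, hence in particular $u_2 w\in E(G)$, so I can insert $u_2$ between $u_1$ and $w$ directly. Once $u_1<_{\sigma}w$ is established, either $u_1<_\sigma u_2<_\sigma w$ or $u_1<_\sigma w<_\sigma u_2$, and in both subcases Lemma~\ref{lem:ordering.edges} (applied via $u_1 w\in E$ or $u_1 u_2\in E$, respectively) yields $u_2 w\in E(G)$, so $(P_1,u_1,u_2,w,P_2')$ is a heavier path — a contradiction.

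The symmetric case $u_{1}\notin V(P)$, $u_{2}\in V(P)$ is the delicate one and I expect it to be the main obstacle, exactly as in Lemma~\ref{full.component.or.none}. Writing $P=(P_1,u_2,P_2)$ with last vertex $z$ of $P_1$, the goal is $u_1 z\in E(G)$. If $u_2<_{\sigma}z$, then $I_{u_2}\subseteq I_z$ by Lemma~\ref{lem:turning.point}, so $N(u_2)\subseteq N(z)$ and $u_1 u_2\in E(G)$ gives $u_1 z\in E(G)$. The harder subcase is $z<_{\sigma}u_2$: here I would use Observation~\ref{obs:weakly.reducible.clique} to note that $S$ is a clique, so $u_1 u_2\in E(G)$ is automatic, and then argue as in the reducible-set proof that assuming $u_1 z\notin E(G)$ forces, via the proper-interval ordering of $S$ ($l_{u_1}<l_{u_2}$ by Observation~\ref{obs:order.in.uirepresentation}) together with $u_2 z\in E(G)$, the inclusion $I_z\subseteq I_{u_2}$; then weak reducibility (Condition~2 with $v=z$, $u=u_2$) forces $S\subseteq N(z)$, whence $u_1 z\in E(G)$, a contradiction. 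In either subcase I obtain $u_1 z\in E(G)$ and can insert $u_1$ to produce a heavier path, completing the contradiction. The key difference from the reducible case is that weak reducibility guarantees less about the global structure (vertices with $I_v\subseteq\mathbf{span}(S)$ need not lie in $S$), so wherever the reducible-set proof invoked Condition~2 of Definition~\ref{def:reducible.set} directly, I must instead route the argument through the clique property and the ``$I_v\subseteq I_u\Rightarrow S\subseteq N(v)$'' condition, which is precisely the substitution that makes the weakly-reducible analogue go through.
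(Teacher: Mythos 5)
Your proposal is correct and follows essentially the same route as the paper's proof: contradiction via a normal maximum-weight path, two consecutive vertices of $S$ with exactly one on $P$, and in each of the two cases the insertion of the missing vertex, with the key substitution being exactly the one you identify — replacing the ``$I_v\subseteq\mathbf{span}(S)\Rightarrow v\in S$'' appeals of the reducible-set argument by Condition~2 of Definition~\ref{def:weakly.reducible.set} (via Lemma~\ref{lem:turning.point} yielding $I_w\subseteq I_{u_1}$, resp.\ $I_z\subseteq I_{u_2}$, and hence $S\subseteq N(w)$, resp.\ $S\subseteq N(z)$). The only omission is the trivial subcase $P_1=\emptyset$ in the second case (prepend $u_1$), which mirrors the $P_2=\emptyset$ subcase you did handle.
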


\begin{proof}
Let $P$ be a path of~$G$ of maximum weight and $S$ be a weakly reducible set
in~$G$. Without loss of generality we also assume that $P$ is a normal path
(Lemma~\ref{lem:normal.path.existence}). Suppose towards a contradiction
that $S\cap V(P)\neq \emptyset $ and $S\nsubseteq V(P)$. Then there exist
two consecutive vertices $u_{1},u_{2}\in S$ in the vertex ordering $\sigma $
(where $u_{1}<_{\sigma }u_{2}$) such that either $u_{1}\in V(P)$ and $%
u_{2}\notin V(P)$, or $u_{1}\notin V(P)$ and $u_{2}\in V(P)$. In both cases
we will show that we can augment the path $P$ by adding vertex $u_{2}$ or $%
u_{1}$, respectively, which contradicts our maximality assumption on $P$.
From Observation~\ref{obs:weakly.reducible.clique}, $S$ induces a clique in~$%
G$. Hence, $u_{1}u_{2}\in E(G)$.

First suppose that $u_{1}\in V(P)$ and $u_{2}\notin V(P)$. Let $%
P=(P_{1},u_{1},P_{2})$. Notice first that, if $P_{2}=\emptyset $, then the
path $P^{\prime }=(P_{1},u_{1},u_{2})=(P,u_{2})$ is a path of~$G$ with
greater weight than $P$, which is a contradiction to the maximality
assumption on $P$. Thus, $P_{2}\neq \emptyset $. Let $w\in V(P_{2})$ be the
first vertex of~$P_{2}$, i.e.,~$P=(P_{1},u_{1},w,P_{2}^{\prime })$. We prove
that $u_{2}w\in E(G)$. For this, notice first that either $w<_{\sigma }u_{1}$
or $u_{1}<_{\sigma }w$. If $w<_{\sigma }u_{1}$, then $I_{w}\subseteq
I_{u_{1}} $ by Lemma~\ref{lem:turning.point}. Since $u_{1},u_{2}$ are
vertices of the weakly reducible set $S$ and $I_{w}\subseteq I_{u_{1}}$ then 
$u_{2}w\in E(G)$ (Definition~\ref{def:weakly.reducible.set}). If $%
u_{1}<_{\sigma }w$, then either $u_{1}<_{\sigma }u_{2}<_{\sigma }w$ or $%
u_{1}<_{\sigma }w<_{\sigma }u_{2}$. If $u_{1}<_{\sigma }u_{2}<_{\sigma }w$,
then Lemma~\ref{lem:ordering.edges} implies that $u_{2}w\in E(G)$ since $%
u_{1}w\in E(G)$. If $u_{1}<_{\sigma }w<_{\sigma }u_{2}$, then again Lemma~%
\ref{lem:ordering.edges} implies that $u_{2}w\in E(G)$, since $u_{1}u_{2}\in
E(G) $. This completes the argument that $u_{2}w\in E(G)$. Since $u_{2}w\in
E(G)$, it follows that there exists the path $P^{\prime
}=(P_{1},u_{1},u_{2},w,P_{2}^{\prime })$ which has greater weight than $P$,
which is a contradiction to the maximality assumption on $P$.

Now suppose that $u_{1}\notin V(P)$ and $u_{2}\in V(P)$. Let then $%
P=(P_{1},u_{2},P_{2})$. Notice that, if $P_{1}=\emptyset $, then the path $%
P^{\prime }=(u_{1},u_{2},P_{2})=(u_{1},P)$ is a path of~$G$ with greater
weight than $P$, which is a contradiction. Thus $P_{1}\neq \emptyset $. Let $%
z\in V(P_{1})$ be the last vertex of~$P_{1}$, i.e.,~$P=(P_{1}^{\prime
},z,u_{2},P_{2})$. We show that $u_{1}z\in E(G)$. First let $u_{2}<_{\sigma
}z$. Then $I_{u_{2}}\subseteq I_{z}$ by Lemma~\ref{lem:turning.point}, and
thus $N(u_{2})\subseteq N(z)$. Therefore, since $u_{1}u_{2}\in E(G)$, it
follows that $u_{1}z\in E(G)$ in the case where $u_{2}<_{\sigma }z$. Now let 
$z<_{\sigma }u_{2}$. Suppose that $u_{1}z\notin E(G)$. Note that $%
l_{u_{2}}<r_{u_{1}}$ since $u_{1}<_{\sigma }u_{2}$ and $u_{1}u_{2}\in E(G)$.
Furthermore, since $S$ is a weakly reducible set, $\mathcal{I}[S]$ induces a
proper interval representation of~$G[S]$ by Definition~\ref%
{def:weakly.reducible.set}. Then, since $u_{1}<_{\sigma }u_{2}$ and $%
u_{1},u_{2}\in S$ are consecutive in~$\sigma $, Observation~\ref%
{obs:order.in.uirepresentation} implies that $l_{u_{1}}<l_{u_{2}}$. That is, 
$l_{u_{1}}<l_{u_{2}}<r_{u_{1}}$. Hence, since $u_{2}z\in E(G)$ and $%
u_{1}z\notin E(G)$, it follows that $l_{u_{1}}<l_{u_{2}}<r_{u_{1}}<l_{z}$.
Finally $r_{z}<r_{u_{2}}$, since $z<_{\sigma }u_{2}$ by assumption, and thus 
$I_{z}\subseteq I_{u_{2}}$. Then, since $u_{1},u_{2}$ are vertices of the
weakly reducible set $S$ and $I_{z}\subseteq I_{u_{2}}$, it follows by
Definition~\ref{def:weakly.reducible.set} that $u_{1}z\in E(G)$, a
contradiction. Therefore $u_{1}z\in E(G)$ in the case where $z<_{\sigma
}u_{2}$. That is, always $zu_{1}\in E(G)$. Hence, there exists the path $%
P^{\prime }=(P_{1},u_{1},u_{2},w,P_{2^{\prime }})$ which has greater weight
than~$P$, which is a contradiction to the maximality assumption on~$P$.
\end{proof}

\medskip

We are now ready to present our second data reduction. As we prove in
Theorem~\ref{second-data-reduction-thm}, this data reduction maintains the
maximum weight of a path.

\begin{redrule}[second data reduction]
\label{data-reduction-2-redrule}Let $G$ be a weighted interval graph with
weight function ${w:V(G)\rightarrow \mathbb{N}}$ and $\mathcal{I}$ be an
interval representation of~$G$. Let $D$ be a proper interval deletion set of~%
$G$. Finally, let $\mathcal{S}=\{S_{1},S_{2},\dots ,S_{|\mathcal{S}|}\}$ be
a family of pairwise disjoint weakly reducible sets, where ${S_{i}\cap
D=\emptyset}$ for every ${i\in \lbrack |\mathcal{S}|]}$. We recursively
define the graphs $G_{0},G_{1},\dots ,G_{|\mathcal{S}|}$ with the interval
representations $\mathcal{I}_{0},\mathcal{I}_{1},\dots ,\mathcal{I}_{|%
\mathcal{S}|}$ as follows:\vspace{-0,1cm}

\begin{itemize}
\item $G_{0}=G$ and $\mathcal{I}_{0}=\mathcal{I}$,

\item for $1\leq i \leq |\mathcal{S}|$,~$\mathcal{I}_{i}$ is obtained by
replacing in~$\mathcal{I}_{i-1}$ the intervals ${\{I_{v}:v\in S_{i}\}}$ with 
$\min \{|S_{i}|,|D|+4\}$ copies of the interval $I_{S_{i}}=\mathbf{span}%
(S_{i})$, each having equal weight $\frac{1}{\min \{|S_{i}|,|D|+4\}}%
\sum_{u\in S_{i}}w(u)$; all other intervals remain unchanged, and

\item finally $\widehat{G}=G_{|\mathcal{S}|}$ and $\widehat{\mathcal{I}}=%
\mathcal{I}_{|\mathcal{S}|}$.
\end{itemize}
\end{redrule}

Note that in the construction of the interval representation $\mathcal{I}%
_{i} $ by Reduction Rule~\ref{data-reduction-2-redrule}, where $i\in \lbrack
|\mathcal{S}|]$, we can always slightly perturb the endpoints of the $\min
\{|S_{i}|,|D|+4\}$ copies of the interval $I_{S_{i}}=\mathbf{span}(S_{i})$
such that all endpoints remain distinct in $\mathcal{I}_{i}$, and such that
these $\min \{|S_{i}|,|D|+4\}$ newly introduced intervals induce a proper
interval representation in $\mathcal{I}_{i}$.

Next we prove in Lemma~\ref{lem:weakly.consecutive.vertices} that, for every
weakly reducible set $S$, every maximum-weight path $P$ which contains the
whole set $S$ can be rewritten as a path $P^{\prime }$, where the appearance
of the vertices of $S$ in $P$ is \emph{interrupted at most} $|D|+3$ times by
vertices outside $S$. That is, such a path~$P^{\prime }$ has at most $\min
\{|S|,|D|+4\}$ vertex-maximal subpaths with vertices from $S$. Before we
prove Lemma~\ref{lem:weakly.consecutive.vertices}, we first provide the
following auxiliary lemma.

\begin{lemma}
\label{lem:invariant.weakly.reducible}Let $G$ be an interval graph and $%
\mathcal{I}$ be an interval representation of~$G$. Let also $S,S^{\prime
}\subseteq V(G)$ such that $S$ is a weakly reducible set of~$G$ and $S\cap
S^{\prime }=\emptyset $. Let $G^{\prime }$ be the graph obtained from $G$ by
replacing $\mathcal{I}[S^{\prime }]$ by $\mathbf{span}(S^{\prime })$. Then $%
S $ remains a weakly reducible set of~$G^{\prime }$.
\end{lemma}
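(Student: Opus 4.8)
The plan is to mirror the argument for reducible sets in Lemma~\ref{lem-invariant-reducible}, verifying the two conditions of Definition~\ref{def:weakly.reducible.set} for $S$ in the modified graph $G'$. Write $\mathcal{I}'$ for the representation obtained from $\mathcal{I}$ after replacing $\mathcal{I}[S']$ by $\mathbf{span}(S')$, and let $u'$ be the unique vertex of $V(G')\setminus V(G)$, so that $I_{u'}=\mathbf{span}(S')$. The observation used throughout is that, since $S\cap S'=\emptyset$, none of the intervals of $S$ are touched by the replacement: $\mathcal{I}'[S]=\mathcal{I}[S]$, and, more generally, all intervals of $V(G)\setminus S'$, together with all adjacencies among them, are preserved in passing from $G$ to $G'$. (The degenerate case $S'=\emptyset$ is trivial, since then $G'=G$, so I assume $S'\neq\emptyset$.)

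First I would dispose of Condition~1. Because $\mathcal{I}'[S]=\mathcal{I}[S]$ and $G'[S]=G[S]$, the fact that $\mathcal{I}[S]$ induces a connected proper interval representation of $G[S]$, which holds since $S$ is weakly reducible in $G$, transfers verbatim to $G'$.

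For Condition~2 I would split into two cases according to whether a candidate vertex $v\in V(G')$ with $I_v\subseteq I_u$ (for some $u\in S$) is an old vertex or the new vertex $u'$. If $v\in V(G)\setminus S'$, then $I_v$ and $I_u$ are both unchanged, so $I_v\subseteq I_u$ already holds in $\mathcal{I}$; weak reducibility of $S$ in $G$ gives $S\subseteq N_G(v)$, and since the adjacencies between $v$ and the vertices of $S$ (all lying in $V(G)\setminus S'$) are preserved, we obtain $S\subseteq N_{G'}(v)$. This case is routine.

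The one genuinely new case, and the only mild obstacle, is $v=u'$, i.e.,~$\mathbf{span}(S')\subseteq I_u$ for some $u\in S$. Here I would pick any $w_0\in S'$ and observe that $I_{w_0}\subseteq\mathbf{span}(S')=I_{u'}\subseteq I_u$, since every interval of $S'$ is contained in its span. Applying Condition~2 of weak reducibility in $G$ to $w_0$ with this $u\in S$ yields $S\subseteq N_G(w_0)$, so every $s\in S$ meets $I_{w_0}$; as $I_{w_0}\subseteq I_{u'}$, it follows that $I_s\cap I_{u'}\neq\emptyset$, and thus $s\in N_{G'}(u')$ (note $s\neq u'$ as $s\in V(G)$). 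Hence $S\subseteq N_{G'}(u')$, which establishes Condition~2 in this case and completes the verification that $S$ is weakly reducible in $G'$.
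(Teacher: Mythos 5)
Your proposal is correct and follows essentially the same route as the paper's proof: Condition~1 via $\mathcal{I}'[S]=\mathcal{I}[S]$, and Condition~2 by splitting on whether the containing-interval vertex is old or the new span vertex, with the new-vertex case handled by passing through an interval of $S'$ contained in the span and invoking weak reducibility of $S$ in $G$. The only (immaterial) difference is that you fix a single $w_0\in S'$ where the paper quantifies over all of $S'$, which changes nothing.
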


\begin{proof}
Let $u$ be the vertex of~$V(G^{\prime })\setminus V(G)$, i.e.,~$I_{u}=%
\mathbf{span}(S^{\prime })$. Denote by $\mathcal{I}^{\prime }$ the interval
representation obtained from $\mathcal{I}$ after replacing $\mathcal{I}%
[S^{\prime }]$ by $\mathbf{span}(S^{\prime })$. First note that $\mathcal{I}%
^{\prime }[S]=\mathcal{I}[S]$, and thus $\mathcal{I}^{\prime }[S]$ induces a
connected proper interval representation as $S$ is a weakly reducible set by
assumption. This proves Condition~1 of Definition~\ref%
{def:weakly.reducible.set}.

Let now $x\in V(G^{\prime })$ and $v\in S$. Assume that $I_{x}\subseteq
I_{v} $. If $x\neq u$, then $x\in V(G)$, and thus $I_{x}\subseteq I_{v}$ in $%
\mathcal{I}$. Therefore, since $S$ is a weakly reducible set of $G$ by
assumption, it follows that $S\subseteq N(x)$. If $x=u$, then $%
I_{x}=I_{u}\subseteq I_{v}$ in $\mathcal{I}^{\prime }$. Thus, since $I_{u}=%
\mathbf{span}(S^{\prime })$, it follows that for every vertex $x^{\prime
}\in S^{\prime }$, we have $I_{x^{\prime }}\subseteq I_{u}\subseteq I_{v}$.
Thus, since $S$ is a weakly reducible set of $G$, it follows that $%
S\subseteq N(x^{\prime })$, i.e.,~$I_{x^{\prime }}\cap I_{w}\neq \emptyset $
for every $w\in S$. Therefore, since $I_{x^{\prime }}\subseteq I_{u}$, it
follows that also $I_{u}\cap I_{w}\neq \emptyset $ for every $w\in S$, i.e.,~%
$S\subseteq N(u)=N(x)$. This proves Condition~2 of Definition~\ref%
{def:weakly.reducible.set}.
\end{proof}

\medskip

The next observation follows directly by the definition of the graphs $%
G_{0},G_{1},\ldots ,G_{|\mathcal{S}|}$ (see the Reduction Rule~\ref%
{data-reduction-2-redrule}) and by Lemma~\ref{lem:invariant.weakly.reducible}%
.

\begin{observation}
\label{obs:weakly.reducible.invariant2} For every $i\in [|\mathcal{S}|]$,
the sets $S_{i}$, $\dots$, $S_{|\mathcal{S}|}$ are weakly reducible sets of~$%
G_{i-1}$.
\end{observation}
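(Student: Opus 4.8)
The plan is to prove the statement by induction on $i$, taking as induction hypothesis precisely the claim for index $i$, namely that $S_i, S_{i+1}, \dots, S_{|\mathcal{S}|}$ are weakly reducible sets of $G_{i-1}$. The base case $i=1$ is immediate: since $G_0 = G$ by definition, and the sets $S_1, \dots, S_{|\mathcal{S}|}$ are weakly reducible in $G$ by the hypothesis of Reduction Rule~\ref{data-reduction-2-redrule}, the claim holds for $G_0$.

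For the inductive step I would assume the claim for $i$ and establish it for $i+1$; that is, assuming $S_i, \dots, S_{|\mathcal{S}|}$ are weakly reducible in $G_{i-1}$, I would show that $S_{i+1}, \dots, S_{|\mathcal{S}|}$ are weakly reducible in $G_i$. By Reduction Rule~\ref{data-reduction-2-redrule}, $G_i$ is obtained from $G_{i-1}$ solely by replacing the intervals $\{I_v : v \in S_i\}$ with $\min\{|S_i|,|D|+4\}$ copies of the single interval $\mathbf{span}(S_i)$, while every other interval, and hence every adjacency not involving $S_i$, is left untouched. Since the family $\mathcal{S}$ is pairwise disjoint, $S_i \cap S_j = \emptyset$ for each $j \geq i+1$, so $\mathcal{I}_i[S_j] = \mathcal{I}_{i-1}[S_j]$ and all adjacencies internal to $S_j$ are preserved; consequently Condition~1 of Definition~\ref{def:weakly.reducible.set} transfers verbatim from $G_{i-1}$ to $G_i$ for each such $S_j$. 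This reduces the whole step to verifying Condition~2 of Definition~\ref{def:weakly.reducible.set} for each $S_j$ with $j \geq i+1$, which is essentially the content of Lemma~\ref{lem:invariant.weakly.reducible} applied with $S := S_j$ and $S' := S_i$.

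The only point requiring attention --- and the place where the present setup departs from the literal statement of Lemma~\ref{lem:invariant.weakly.reducible} --- is that Reduction Rule~\ref{data-reduction-2-redrule} inserts several copies of $\mathbf{span}(S_i)$ rather than the single interval considered in that lemma. I would dispel this by noting that the proof of Lemma~\ref{lem:invariant.weakly.reducible} uses only the property $I_u = \mathbf{span}(S_i)$ of the inserted vertex $u$, a property shared by each new copy, so its argument applies unchanged to every copy. Explicitly, for Condition~2 one checks the two kinds of vertices $v \in V(G_i)$: for a vertex with unchanged interval the verification is identical to the case $x \neq u$ in the proof of Lemma~\ref{lem:invariant.weakly.reducible}; and if $v$ is one of the new copies with $I_v = \mathbf{span}(S_i) \subseteq I_w$ for some $w \in S_j$, then picking any $x \in S_i$ (nonempty, since a weakly reducible set induces a connected representation) gives $I_x \subseteq \mathbf{span}(S_i) \subseteq I_w$, whence weak reducibility of $S_j$ in $G_{i-1}$ yields $I_x \cap I_y \neq \emptyset$ for all $y \in S_j$, and $I_x \subseteq I_v$ then forces $I_v \cap I_y \neq \emptyset$, i.e.~$S_j \subseteq N(v)$. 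Applying this to every copy and to every $j \geq i+1$ completes the inductive step, and the observation follows. I do not expect any genuine obstacle here: the induction is routine, and the single nontrivial check is precisely this mild generalization of Lemma~\ref{lem:invariant.weakly.reducible} from one inserted interval to several identical ones.
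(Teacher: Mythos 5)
Your proof is correct and follows essentially the same route as the paper, which justifies the observation as an immediate consequence of Reduction Rule~\ref{data-reduction-2-redrule} and an iterated application of Lemma~\ref{lem:invariant.weakly.reducible}. You are in fact slightly more careful than the paper in explicitly noting that the lemma is stated for a single inserted copy of $\mathbf{span}(S_i)$ while the rule inserts several, and your check that the lemma's argument applies verbatim to each copy is accurate.
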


\begin{lemma}
\label{lem:weakly.consecutive.vertices} Let $G$ be a weighted interval graph
with weight function $w:V(G)\rightarrow \mathbb{N}$ and $\mathcal{I}$ be an
interval representation of~$G$. Let $D$ be a proper interval deletion set of 
$G$ and $\mathcal{S}=\{S_{1},S_{2},\dots ,S_{|\mathcal{S}|}\}$ be a family
of pairwise disjoint weakly reducible sets of~$G$ such that $S_{i}\cap
D=\emptyset $, $i\in \lbrack |\mathcal{S}|]$. Also, let $i\in \lbrack |%
\mathcal{S}|]$ and $G_{i-1}$ be one of the graphs obtained by Reduction Rule~%
\ref{data-reduction-2-redrule}. Finally, let $P$ be a path of maximum weight
in~$G_{i-1}$ and let $S_{i}\subseteq V(P)$. Then there exists a path $%
P^{\prime }$ in~$G_{i-1}$ on the same vertices as $P$, which has at most $%
\min \{|S_{i}|,|D|+4\}$ vertex-maximal subpaths consisting only of vertices
from $S_{i}$.
\end{lemma}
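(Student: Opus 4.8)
The plan is to begin exactly as the earlier lemmas do, by invoking Lemma~\ref{lem:normal.path.existence} to assume that $P$ is \emph{normal}, and then to dispose of the easy regime first: since distinct vertex-maximal $S_i$-subpaths are nonempty and vertex-disjoint, \emph{any} path has at most $|S_i|$ of them, so if $|S_i|\le |D|+4$ the path $P$ itself already witnesses the claimed bound. Hence I may assume $|S_i|\ge |D|+5$ and only need to produce $P'$ with at most $|D|+4$ such subpaths. Because $S_i$ is still weakly reducible in $G_{i-1}$ (Observation~\ref{obs:weakly.reducible.invariant2}), its restriction $\mathcal{I}[S_i]$ is a connected proper interval representation, so Lemma~\ref{lem:propers.consecutive.in.path} forces the vertices of $S_i$ to occur along $P$ in their $\sigma$-order $u_1<_\sigma\dots<_\sigma u_{|S_i|}$; I fix this labelling. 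By Observation~\ref{obs:common.intersection.weakly.reducible} all these intervals share a common region $[c,d]=[\max_{v\in S_i} l_v,\min_{v\in S_i} r_v]$, and $S_i$ induces a clique.

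Next I classify the \emph{interruptions} of $P$, i.e.\ the maximal subpaths lying strictly between two consecutive $S_i$-subpaths. Call an interruption \emph{dirty} if it meets $D$ and \emph{free} otherwise; as the interruptions are pairwise vertex-disjoint, at most $|D|$ of them are dirty. The core of the argument is to understand the free interruptions, and here I would combine normality with weak reducibility. For a free interruption flanked by $u_a<_\sigma u_{a+1}$ I analyse each of its vertices $x$ via the turning-point lemmas (Lemmas~\ref{lem:turning.point}, \ref{lem:oxi.anapoda}, \ref{lem:not.turning.point}, \ref{lem:endiamesa}): Lemma~\ref{lem:turning.point} leaves the two possibilities that either $u_a<_\sigma x$, forcing $I_x$ to reach to the right of the common region, or $I_x\subseteq I_{u_a}$, in which case Condition~2 of Definition~\ref{def:weakly.reducible.set} applies and gives $S_i\subseteq N(x)$. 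The aim of this analysis is to prove that every free-interruption vertex is adjacent to \emph{all} of $S_i$, and that its interval meets $[c,d]$ in a controlled manner, so that the free-interruption vertices split into a bounded number of subfamilies, each sharing a common point and therefore each a clique fully joined to $S_i$; three such families will suffice for the final count.

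With this structure in hand the rearrangement is clean. First, for every free interruption I delete all its vertices and join its two flanking vertices $u_a,u_{a+1}$ directly, which is legitimate since $u_au_{a+1}\in E(G_{i-1})$ because $S_i$ is a clique. The resulting path $P^{\ast}$ still contains every vertex of $D\cap V(P)$, so deleting those at most $|D|$ vertices breaks $P^{\ast}$ into at most $|D|+1$ segments, and inside each segment every surviving non-$S_i$ vertex sitting between two $S_i$-vertices would have been a free-interruption vertex and hence is gone; so the $S_i$-vertices of each segment are consecutive and $P^{\ast}$ has at most $|D|+1$ vertex-maximal $S_i$-subpaths. Finally I reinsert the deleted vertices: grouping them into the (at most three) common-point cliques found above, each clique is spliced, as one contiguous run, into the interior of an $S_i$-subpath of size at least two (one exists because $|S_i|\ge |D|+5>|D|+1$); each such splice is valid (clique plus two flanking $S_i$-vertices are pairwise adjacent) and splits a single subpath in two, adding at most one new $S_i$-subpath. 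The outcome is a path $P'$ with $V(P')=V(P)$ and at most $(|D|+1)+3=|D|+4$ vertex-maximal $S_i$-subpaths, which together with the trivial bound $|S_i|$ gives $\min\{|S_i|,|D|+4\}$.

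The step I expect to be the genuine obstacle is the structural analysis of the free interruptions in the second paragraph: showing that their vertices are universal to $S_i$ and that their intervals cluster into a constant number of common-point families. The delicate configurations are the ``turning points'' in which an interruption vertex nests inside some $I_{u_k}$; these are precisely what Condition~2 of Definition~\ref{def:weakly.reducible.set} is designed to govern, but excluding an unbounded alternation of such nested excursions in a normal path—without which the reinsertion could create more than a constant number of extra subpaths—is where essentially all of the case work resides and must be driven by the normal-path lemmas established earlier.
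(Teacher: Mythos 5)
There is a genuine gap, and you have in fact pointed at it yourself: your entire second paragraph is a statement of intent rather than an argument, and the claim it aims at is not the one that makes the proof work. Your dichotomy treats an interruption as accounted for (``dirty'') when it meets $D$, and defers everything else to the assertion that the free-interruption vertices are universal to $S_i$ and cluster into three common-point cliques. But an interruption can avoid $D$ entirely and still be irreducibly ``bad'': its first vertex $z_w$ may satisfy $z_w<_{\sigma}u_w$, whence $I_{z_w}\subseteq I_{u_w}$ by Lemma~\ref{lem:turning.point}, and two such nested first vertices coming from different interruptions are \emph{non-adjacent}, so they can never be merged into one clique and spliced back as a single contiguous run. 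Nothing in your proposal bounds how many such interruptions there are, and in particular you never invoke the one fact that can bound them, namely that $G\setminus D$ is $K_{1,3}$-free. The paper splits the interruptions not by whether they meet $D$ but by the $\sigma$-direction of their first vertex: the ``backward'' first vertices form an independent set $Z_{1}$ each of whose members is adjacent to all of $S_i$, so three of them outside $D$ together with any $u\in S_i$ would induce a claw in $G\setminus D$ (after pulling the configuration back to the original graph $G$, since a vertex of $G_{i-1}$ may itself be the contraction of an earlier weakly reducible set --- a translation step your plan also omits). This yields $|Z_{1}|\le|D|+2$, and it is these up to $|D|+2$ interruptions --- not the at most $|D|$ ones meeting $D$ --- that are left in place and account for the non-constant part of the bound; the ``forward'' interruptions are then handled by showing that everything on $P$ between $u_{i_0}$ and $u_{|S_i|}$ is a single clique (via Lemma~\ref{lem:endiamesa}), so they can all be bunched after $u_{i_0}$, contributing the remaining $+2$.

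Your rearrangement step likewise rests on the unverified premise that each of the three postulated families is a clique fully joined to $S_i$; for the vertices of nested ($Z_{1}$-type) interruptions this is exactly what fails, since distinct nested first vertices are independent, so the number of required ``families'' is at least the number of such interruptions, which no counting against $|D|$ alone can control. Until the structural analysis you flag as ``the genuine obstacle'' is actually carried out --- and it has to be carried out along the claw-freeness route --- the argument is a plan, not a proof.
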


\begin{proof}
Let $P$ be a path of maximum weight in~$G_{i-1}$ such that $S_{i}\subseteq
V(P)$. Denote $S_{i}=\{u_{1},u_{2},\dots ,u_{|S_{i}|}\}$, where $%
u_{1}<_{\sigma }u_{2}<_{\sigma }\dots <_{\sigma }u_{|S_{i}|}$. Without loss
of generality we may assume by Lemma~\ref{lem:normal.path.existence} that $P$
is normal. Observation~\ref{obs:weakly.reducible.invariant2} yields that $%
S_{i}$ is a weakly reducible set of~$G_{i-1}$. Thus, since $\mathcal{I}%
[S_{i}]$ induces a proper interval representation of~$G_{i-1}[S_{i}]$
(cf.~Definition~\ref{def:weakly.reducible.set}), Lemma~\ref%
{lem:propers.consecutive.in.path} implies that $u_{1}<_{P}u_{2}<_{P}\dots
<_{P}u_{|S_{i}|}$, i.e.,~the vertices of~$S_{i}$ appear in the same order
both in the vertex ordering $\sigma $ and in the path $P$. Furthermore,
Observation~\ref{obs:weakly.reducible.clique} implies that $u_{w}u_{w+1}\in
E(G_{i-1})$, for every $w\in \lbrack |S_{i}|-1]$. Let $%
P=(P_{0},u_{1},P_{1},u_{2},\dots ,P_{|S_{i}|-1},u_{|S_{i}|},P_{|S_{i}|})$.
For every $w\in \lbrack |S_{i}|-1]$, we denote 
\begin{equation*}
z_{w}=%
\begin{cases}
\text{the first vertex of }P_{w} & \text{if }P_{w}\neq \emptyset  \\ 
u_{w+1} & \text{otherwise}%
\end{cases}%
.
\end{equation*}

Let 
\begin{eqnarray*}
Z_{1} &=&\{z_{w}:~z_{w}<_{\sigma }u_{w},w\in \lbrack |S_{i}|-1]\}, \\
Z_{2} &=&\{z_{w}:~u_{w}<_{\sigma }z_{w},w\in \lbrack |S_{i}|-1]\}.
\end{eqnarray*}%
Note that, as $u_{w}<_{\sigma }u_{w+1}$, for every $w\in \lbrack |S_{i}|-1]$
it follows that $Z_{1}\cap S_{i}=\emptyset $.

In the first part of the proof we show that $|Z_{1}|\leq |D|+2$. First we
show that for every $z\in Z_{1}$, $S_{i}\subseteq N(z)$. Let $w\in \lbrack
|S_{i}|-1]$, such that $z_{w}\in Z_{1}$, i.e.,~$z_{w}<_{\sigma }u_{w}$. Then
from Lemma~\ref{lem:turning.point}, $I_{z_{w}}\subseteq I_{u_{w}}$.
Therefore, since $u_{w}$ is a vertex of the weakly reducible set of~$S_{i}$
in~$G_{i-1}$, it follows by Definition~\ref{def:weakly.reducible.set} that $%
S_{i}\subseteq N(z_{w})$ in the graph~$G_{i-1}$. Thus $S_{i}\subseteq N(z)$
for every $z\in Z_{1}$.

We now prove that $Z_{1}$ is an independent set. Towards a contradiction we
assume that there exist $z_{w},z_{j}\in Z_{1}$ with $z_{w}<_{P}z_{j}$ and $%
z_{w}z_{j}\in E(G_{i-1})$. Then, since $u_{w}z_{w},u_{w}z_{j}\in E(G_{i-1})$
and $z_{w}$ is the successor of~$u_{w}$ in the normal path $P$, it follows
that $z_{w}<_{\sigma }z_{j}$, i.e.,~$z_{w}$ and $z_{j}$ appear in the same
order in the vertex ordering $\sigma $ and in the path $P$. Let $v$ be the
successor of~$z_{w}$ in~$P$. Then since $z_{w}<_{\sigma }u_{w}$, Lemma~\ref%
{lem:not.turning.point} implies that $z_{w}<_{\sigma }v$. Furthermore, as $%
z_{w}<_{\sigma }z_{j}$, from Lemma~\ref{lem:endiamesa} we obtain that $%
z_{w}<_{\sigma }u<_{\sigma }z_{j}$, for every $u\in V(P)$ such that $%
z_{w}<_{P}u<_{P}z_{j}$. Since $u_{j}$ is the predecessor of~$z_{j}$ in~$P$,
it follows that $z_{w}<_{P}u_{j}<_{P}z_{j}$. Therefore $u_{j}<_{\sigma
}z_{j} $, a contradiction as $z_{j}\in Z_{1}$. Therefore $Z_{1}$ is an
independent set.

Assume now (towards a contradiction) that $|Z_{1}|\geq |D|+3$. Then $%
|Z_{1}\setminus D|\geq |Z_{1}|-|D|\geq 3$. Hence, there exist at least 3
vertices $z^{\prime },z^{\prime \prime },z^{\prime \prime \prime }\in
Z_{1}\setminus D$. Let $u\in S_{i}$. Note that $u\notin D$, since $S_{i}\cap
D=\emptyset $. Then $u,z^{\prime },z^{\prime \prime },z^{\prime \prime
\prime }\in V(G_{i-1}\setminus D)$. Moreover, recall that $S_{i}\subseteq
N(z)$, for every $z\in Z_{1}$. Thus $uz^{\prime },uz^{\prime \prime
},uz^{\prime \prime \prime }\in E(G_{i-1}\setminus D)$. Recall also that $%
Z_{1}$ is an independent set and thus the vertices $z^{\prime },z^{\prime
\prime },z^{\prime \prime \prime }$ form an independent set in~$%
G_{i-1}\setminus D$. Thus, $\{u,z^{\prime },z^{\prime \prime },z^{\prime
\prime \prime }\}$ induce a $K_{1,3}$ in~$G_{i-1}\setminus D$. Notice that
by construction of~$G_{i-1}$ (cf.~Reduction Rule~\ref%
{data-reduction-2-redrule}) each one of the vertices $z^{\prime },z^{\prime
\prime },z^{\prime \prime \prime }$ is either a vertex of~$G$, or it appears
in~$G_{i-1}$ as the replacement of some weakly reducible set $S$. If $%
z^{\prime }\notin V(G)$, then denote by $S^{\prime }$ the weakly reducible
set corresponding to vertex $z^{\prime }$. Otherwise, if $z^{\prime }\in
V(G) $, we define $S^{\prime }=\{z^{\prime }\}$. Similarly, if $z^{\prime
\prime }\notin V(G)$ (resp.~if $z^{\prime \prime \prime }\notin V(G)$) then
denote by $S^{\prime \prime }$ (resp.~by $S^{\prime \prime \prime }$) the
weakly reducible set corresponding to vertex $z^{\prime \prime }$ (resp.~$%
z^{\prime \prime \prime }$). Otherwise, if $z^{\prime \prime }\in V(G)$
(resp.~if $z^{\prime \prime \prime }\in V(G)$), we define $S^{\prime \prime
}=\{z^{\prime \prime }\}$ (resp.~$S^{\prime \prime \prime }=\{z^{\prime
\prime \prime }\}$). Since $z^{\prime }u\in E(G_{i-1})$, observe that there
always exists at least one vertex $v^{\prime }\in S^{\prime }$ such that $%
I_{v^{\prime }}\cap I_{u}\neq \emptyset $, and thus $uv^{\prime }\in E(G)$.
Similarly, since $z^{\prime \prime }u,z^{\prime \prime \prime }u\in
E(G_{i-1})$, there always exist vertices $v^{\prime \prime }\in S^{\prime
\prime }$ and $v^{\prime \prime \prime }\in S^{\prime \prime \prime }$ such
that $I_{v^{\prime \prime }}\cap I_{u}\neq \emptyset $ and $I_{v^{\prime
\prime \prime }}\cap I_{u}\neq \emptyset $, and thus $uv^{\prime \prime
},uv^{\prime \prime \prime }\in E(G)$. Note that $v^{\prime },v^{\prime
\prime },v^{\prime \prime \prime }\in V(G)\setminus D$. Furthermore, by the
definition of~$v^{\prime },v^{\prime \prime },v^{\prime \prime \prime }$, it
follows that $I_{v^{\prime }}\subseteq I_{z^{\prime }}$, $I_{v^{\prime
\prime }}\subseteq I_{z^{\prime \prime }}$, and $I_{v^{\prime \prime \prime
}}\subseteq I_{z^{\prime \prime \prime }}$. Therefore, since $z^{\prime
},z^{\prime \prime },z^{\prime \prime \prime }$ form an independent set, the
vertices $v^{\prime },v^{\prime \prime },v^{\prime \prime \prime }$ also
form an independent set. Furthermore, as $uv^{\prime },uv^{\prime \prime
},uv^{\prime \prime \prime }\in E(G)$, it follows that $\{u,v^{\prime
},v^{\prime \prime },v^{\prime \prime \prime }\}$ induce a $K_{1,3}$ in~$%
G\setminus D$, which is a contradiction to the assumption that $D$ is a
proper interval deletion set of~$G$. Thus $|Z_{1}|\leq |D|+2$. This
completes the first part of our proof.

Let now $z_{i_{0}}$ be the vertex of~$Z_{2}$ that appears first in~$P$. In
the second part of the proof we show that the set $\{x\in
V(G_{i-1}):~u_{i_{0}}<_{P}x<_{P}u_{|S_{i}|}\}$ induces a clique in~$G_{i-1}$%
. Note first that, since $u_{i_{0}}<_{\sigma }z_{i_{0}}\leq _{\sigma
}u_{|S_{i}|}$ and $u_{i_{0}}u_{|S_{i}|}\in E(G_{i-1})$, Lemma~\ref%
{lem:endiamesa} implies that $u_{i_{0}}<_{\sigma }x<_{\sigma }u_{|S_{i}|}$,
for every $x\in V(G_{i-1})$ such that $u_{i_{0}}<_{P}x<_{P}u_{|S_{i}|}$.
Therefore, since $u_{i_{0}}u_{|S_{i}|}\in E(G_{i-1})$, it follows by Lemma~%
\ref{lem:ordering.edges} that $xu_{|S_{i}|}\in E(G_{i-1})$ for every $x\in
V(G_{i-1})$ such that $u_{i_{0}}<_{\sigma }x<_{\sigma }u_{|S_{i}|}$.

We now prove that $u_{i_{0}}x\in E(G_{i-1})$, for every $x\in V(G_{i-1})$
such that $u_{i_{0}}<_{\sigma }x<_{\sigma }u_{|S_{i}|}$. Suppose otherwise
that there exists such an $x$ where $u_{i_{0}}x\notin E(G_{i-1})$. Note that 
$l_{u_{|S_{i}|}}<r_{u_{i_{0}}}$, since $u_{i_{0}}<_{\sigma }u_{|S_{i}|}$ and 
$u_{i_{0}}u_{|S_{i}|}\in E(G_{i-1})$. Furthermore, since $S_{i}$ is a weakly
reducible set, $\mathcal{I}[S_{i}]$ induces a proper interval representation
of~$G_{i-1}[S_{i}]$ by Definition~\ref{def:weakly.reducible.set}. Then,
since $u_{i_{0}}<_{\sigma }u_{|S_{i}|}$, it follows by Observation~\ref%
{obs:order.in.uirepresentation} that $l_{u_{i_{0}}}<l_{u_{|S_{i}|}}$. That
is, $l_{u_{i_{0}}}<l_{u_{|S_{i}|}}<r_{u_{i_{0}}}$. Hence, since $%
u_{|S_{i}|}x\in E(G_{i-1})$ and $u_{i_{0}}x\notin E(G_{i-1})$ by our
assumption, it follows that $r_{u_{i_{0}}}<l_{x}$, i.e.,~$%
l_{u_{i_{0}}}<l_{u_{|S_{i}|}}<r_{u_{i_{0}}}<l_{x}$. Finally $%
r_{x}<r_{u_{|S_{i}|}}$ from the choice of~$x$, and thus $I_{x}\subseteq
I_{u_{|S_{i}|}}$. Therefore, since $S_{i}$ is a weakly reducible set, $ux\in
E(G_{i-1})$ for every $u\in S_{i}$ by Definition~\ref%
{def:weakly.reducible.set}. This is a contradiction to our assumption that $%
u_{i_{0}}x\notin E(G_{i-1})$. Therefore $xu_{i_{0}}\in E(G_{i-1})$ for every 
$x\in V(G_{i-1})$ such that $u_{i_{0}}<_{P}x<_{P}u_{|S_{i}|}$.

We finally show that the set $\{x\in
V(G_{i-1}):~u_{i_{0}}<_{P}x<_{P}u_{|S_{i}|}\}$ induces a clique in~$G_{i-1}$%
. Since $u_{i_{0}}x\in E(G_{i-1})$ for every vertex $x$ in this set, we
obtain that $l_{x}<r_{u_{i_{0}}}<r_{x}$. Hence, the set $\{x\in
V(G_{i-1}):~u_{i_{0}}<_{P}x<_{P}u_{|S_{i}|}\}$ induces a clique in~$G_{i-1}$%
, as all such intervals $I_{x}$ contain the point $r_{u_{i_{0}}}$. This
completes the second part of our proof.

In the third part of our proof we show that there exists a path $P^{\prime }$
which has at most $|D|+4$ vertex-maximal subpaths that consist only of
vertices from $S_{i}$. Define $P^{\prime
}=(P_{0},u_{1},P_{1},u_{2},P_{2},\dots
,P_{i_{0}-1},u_{i_{0}},P_{i_{0}},P_{i_{0}+1},\dots
,P_{|S_{i}|-1},u_{i_{0}+1},\dots ,u_{|S_{i}|},P_{|S_{i}|})$. Notice that $%
V(P^{\prime })=V(P)$. From the second part of our proof the set $%
\bigcup_{j=i_{0}}^{|S_{i}|-1}V(P_{j})\bigcup_{j=i_{0}+1}^{|S_{i}|-1}\{u_{j}%
\} $ induces a clique. Thus, $P^{\prime }$ is a path of~$G_{i-1}$ with $%
V(P^{\prime })=V(P)$. From the choice of~$i_{0}$, for every $w\in \lbrack
i_{0}-1]$, if $P_{w}\neq \emptyset $, then $z_{w}\in Z_{1}$. Since $%
|Z_{1}|\leq |D|+2$, from the first part of our proof, there exist at most $%
|D|+2$ paths~$P_{w}$, $w\in \lbrack i_{0}-1]$, that are not empty. Thus, the
subpath $P_{1}^{\prime }=(P_{0},u_{1},P_{1},u_{2},P_{2},\dots ,P_{i_{0}-1})$
contains at most $|D|+2$ vertex-maximal subpaths consisting only of vertices
of~$S_{i}$. Furthermore, the subpath $P_{2}^{\prime
}=(u_{i_{0}},P_{i_{0}},P_{i_{0}+1},\dots ,P_{|S_{i}|-1},u_{i_{0}+1},\dots
,u_{|S_{i}|},P_{|S_{i}|})$ of~$P^{\prime }$ clearly contains two
vertex-maximal subpaths consisting only of vertices of~$S_{i}$, namely the
subpaths $(u_{i_{0}})$ and $(u_{i_{0}},\dots ,u_{|S_{i}|})$. Since, $%
P^{\prime }=(P_{1}^{\prime },P_{2}^{\prime })$, it follows that $P^{\prime } 
$ has at most $|D|+4$ vertex-maximal subpaths that consist only of vertices
from $S_{i}$. Moreover, $P^{\prime }$ has clearly at most $|S_{i}|$ such
vertex-maximal subpaths from vertices of~$S_{i}$. This completes the proof
of the lemma.
\end{proof}

\medskip

We now present the next auxiliary technical lemma that will be used to prove
the correctness of Reduction Rule~\ref{data-reduction-2-redrule} in Theorem~%
\ref{second-data-reduction-thm}.

\begin{lemma}
\label{lem:equal.weight.second.replacement.prelim}Let $\ell $ and $k$ be
positive integers. Let $G$ be a weighted interval graph with weight function 
$w:V(G)\rightarrow \mathbb{N}$ and $\mathcal{I}$ be an interval
representation of~$G$. Also, let $S$ be a weakly reducible set of~$G$.
Finally, let $G^{\prime }$ be the graph obtained from $G$ by replacing $%
\mathcal{I}[S]$ with $\min \{|S|,k+4\}$ copies of the interval $I_{v_{j}}=%
\mathbf{span}(S)$, each having weight $\frac{1}{\min \{|S|,k+4\}}\sum_{u\in
S}w(u)$. If the maximum weight of a path in~$G^{\prime }$ is $\ell $, then
the maximum weight of a path in~$G$ is at least $\ell $.
\end{lemma}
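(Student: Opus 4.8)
The plan is to start from a maximum-weight path $P'$ in $G'$ (so $P'$ has weight $\ell$) and to convert it into a path of~$G$ of weight at least~$\ell$. By Lemma~\ref{lem:normal.path.existence} we may assume that $P'$ is normal. Denote the newly introduced copies of~$\mathbf{span}(S)$ by $T=\{v_1,\dots,v_q\}$, where $q=\min\{|S|,k+4\}$, and write $W=\sum_{u\in S}w(u)$, so that each copy has weight $W/q$. Every vertex of $V(G')\setminus T$ is an unchanged vertex of~$G$, and two such vertices are adjacent in~$G'$ if and only if they are adjacent in~$G$; hence if $V(P')\cap T=\emptyset$ then $P'$ is already a path of~$G$ of weight~$\ell$ and we are done. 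So assume $V(P')\cap T\neq\emptyset$. The first step is to show that in fact $T\subseteq V(P')$. The copies are clones: they induce a clique (each of their intervals essentially equals $\mathbf{span}(S)$, so they pairwise intersect) and, having the same interval, they have identical neighbourhoods outside~$T$. Thus if some copy $v_j$ lies on~$P'$ while a copy $v^{\ast}$ does not, we may insert $v^{\ast}$ immediately after~$v_j$: $v^{\ast}v_j\in E(G')$ since $T$ is a clique, and $v^{\ast}$ is adjacent to the successor of~$v_j$ (if any) because that successor is adjacent to the clone~$v_j$. This produces a path of weight $\ell+W/q>\ell$, contradicting maximality of~$P'$. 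Hence all $q$ copies lie on~$P'$ and together contribute exactly $q\cdot(W/q)=W$ to its weight.

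Next I would analyse how the copies sit inside~$P'$. Since $\mathcal{I}[T]$ is a proper interval representation, Lemma~\ref{lem:propers.consecutive.in.path} forces the copies to appear on~$P'$ in their $\sigma$-order, so after relabelling $P'=(Q_0,v_1,Q_1,v_2,\dots,Q_{q-1},v_q,Q_q)$, where each $Q_i$ is a (possibly empty) subpath avoiding~$T$. We may assume the copies are mutually $\sigma$-consecutive, as their endpoints are only perturbations of those of $\mathbf{span}(S)$ and can be clustered away from all other endpoints. Consequently, if the successor $w_i$ of~$v_i$ satisfies $v_i<_{\sigma}w_i$, then Lemma~\ref{lem:endiamesa} (applied with the two adjacent copies $v_i,v_{i+1}$) gives $v_i<_{\sigma}x<_{\sigma}v_{i+1}$ for every $x\in V(Q_i)$, which is impossible for mutually $\sigma$-consecutive copies, so $Q_i=\emptyset$. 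Therefore every nonempty inner segment $Q_i$ begins with a vertex $w_i<_{\sigma}v_i$, which by Lemma~\ref{lem:turning.point} satisfies $I_{w_i}\subseteq\mathbf{span}(S)$. This locates all interruptions of the copies strictly inside the span.

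The final step replaces the copies by the genuine vertices of~$S$, producing a path $P$ of~$G$ with $V(P)=(V(P')\setminus T)\cup S$ and hence weight $(\ell-W)+W=\ell$, which gives the claim. Since $|S|\ge q$ and $G[S]$ is a clique (Observation~\ref{obs:weakly.reducible.clique}), the vertices of~$S$ can be strung together in any order and expanded out of the block of copies; using that each vertex inside the span interrupting the copies is adjacent to the appropriate members of~$S$, the segments $Q_0,\dots,Q_q$ are reattached. This mirrors, and reverses, the rewriting of Lemma~\ref{lem:weakly.consecutive.vertices}. The main obstacle is exactly this rerouting: a copy may be adjacent to an external or trapped vertex~$x$ only through the whole interval $\mathbf{span}(S)$, whereas an individual $s\in S$ need not be adjacent to~$x$. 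Overcoming it requires selecting, at each junction of the path, a vertex of~$S$ whose interval actually reaches~$x$, and exploiting the common-point structure of the weakly reducible set (Observation~\ref{obs:common.intersection.weakly.reducible}, Definition~\ref{def:weakly.reducible.set}) together with the monotonicity of normal paths (Lemmas~\ref{lem:turning.point} and~\ref{lem:endiamesa}) to guarantee that all the required adjacencies hold simultaneously.
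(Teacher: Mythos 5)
Your setup follows the paper's proof closely and correctly: normalize $P^{\prime}$, argue that either no copy or every copy of $\mathbf{span}(S)$ lies on $P^{\prime}$, use Lemma~\ref{lem:propers.consecutive.in.path} to order the copies along the path, and decompose $P^{\prime}=(Q_{0},v_{1},Q_{1},\dots ,v_{q},Q_{q})$. The problem is that the decisive step is never carried out. Your last paragraph correctly names the obstacle --- a vertex adjacent to a copy is only known to meet $\mathbf{span}(S)$, not any particular $u\in S$ --- but then merely asserts that it can be overcome by ``selecting, at each junction, a vertex of $S$ whose interval actually reaches $x$'' and ``exploiting the common-point structure.'' That is a description of a proof obligation, not a proof, and it is exactly the content of the lemma. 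Your intermediate observation that the first vertex $w_{i}$ of each nonempty inner segment satisfies $I_{w_{i}}\subseteq \mathbf{span}(S)$ does not discharge it: containment in $\mathbf{span}(S)$ does not imply containment in some single $I_{u}$ with $u\in S$ (so Condition~2 of Definition~\ref{def:weakly.reducible.set} does not apply directly), and you say nothing at all about the \emph{last} vertex of each inner segment, which must also be reattached to a vertex of~$S$.

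The paper closes this gap with a clean dichotomy that you should reconstruct. Let $z$ be any predecessor or successor of a copy that lies in an inner segment $Q_{1}\cup \dots \cup Q_{q-1}$, and suppose $S\nsubseteq N(z)$ in $G$. By Observation~\ref{obs:common.intersection.weakly.reducible}, $I_{z}$ must then miss the common core $[\max \{l_{v}:v\in S\},\min \{r_{v}:v\in S\}]=[l_{u_{|S|}},r_{u_{1}}]$, so either $r_{z}<l_{u_{|S|}}$ or $r_{u_{1}}<l_{z}$. In each case one compares $l_{z}$ with $l_{u_{1}}$ (resp.\ $r_{z}$ with $r_{u_{|S|}}$): one subcase gives $I_{z}\subseteq I_{u_{1}}$ (resp.\ $I_{z}\subseteq I_{u_{|S|}}$), whence $S\subseteq N(z)$ by Condition~2 of Definition~\ref{def:weakly.reducible.set}, a contradiction; the other subcase makes $\{I_{z},\mathbf{span}(S)\}$ a proper pair, and Lemma~\ref{lem:propers.consecutive.in.path} then forces $z$ into $Q_{0}$ or $Q_{q}$, contradicting that $z$ is in an inner segment. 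Hence \emph{every} inner junction vertex is adjacent to \emph{all} of $S$, which makes the reassembly trivial: replace $v_{j}$ by $u_{j}$ for $j<q$ and $v_{q}$ by the clique block $(u_{q},\dots ,u_{|S|})$, with no ``selection'' of matching vertices needed. Until you supply this argument (or an equivalent one), the proof is incomplete at its central point.
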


\begin{proof}
Denote $S=\{u_{1},u_{2},\dots ,u_{|S|}\}$, where $u_{1}<_{\sigma
}u_{2}<_{\sigma }\dots <_{\sigma }u_{|S|}$. That is, $r_{u_{1}}<r_{u_{2}}<%
\ldots <r_{u_{|S|}}$. Furthermore, since $\mathcal{I}[S]$ induces a proper
interval representation, it follows by Observation~\ref%
{obs:order.in.uirepresentation} that also $l_{u_{1}}<l_{u_{2}}<\ldots
<l_{u_{|S|}}$. That is, $l_{u_{|S|}}=\max \{l_{u}:~u\in S\}$ and $%
r_{u_{1}}=\min \{r_{u}:~u\in S\}$.

Assume that the maximum weight of a path $P^{\prime }$ in~$G^{\prime }$ is $%
\ell $. Define $V_{0}=V(G^{\prime })\setminus V(G)$ and denote $%
V_{0}=\{v_{j}:~j\in \left[ \min \{|S|,k+4\}\right] \}$. If $V(P^{\prime
})\cap V_{0}=\emptyset $, then $P^{\prime }$ is also a path in~$G$. Assume
now that $V(P^{\prime })\cap V_{0}\neq \emptyset $. Then we may assume
without loss of generality that $V_{0}\subseteq V(P^{\prime })$. Indeed,
otherwise we can augment $P^{\prime }$ to a path of~$G^{\prime }$ with
greater weight by adding the missing copies of~$\mathbf{span}(S)$ right
after the last copy of~$\mathbf{span}(S)$ in~$P^{\prime }$, which is a
contradiction to the maximality assumption on $P^{\prime }$. Furthermore, by
Lemma~\ref{lem:normal.path.existence} we may assume without loss of
generality that $P^{\prime }$ is normal.

Let $P^{\prime }=\left( P_{0}^{\prime },v_{1},P_{1}^{\prime },v_{2},\dots
,v_{|V_{0}|-1},P_{|V_{0}|-1}^{\prime },v_{|V_{0}|},P_{|V_{0}|}^{\prime
}\right) $. Denote by $Z$ the set of all predecessors and successors of the
vertices of~$V_{0}$ in the path $P^{\prime }$. Consider a vertex $z\in Z\cap
\left( V(P_{1}^{\prime })\cup V(P_{2}^{\prime })\cup \ldots \cup
V(P_{|V_{0}|-1}^{\prime })\right) $. Since $z$ is a predecessor or a
successor of a vertex $v\in V_{0}$ in~$P^{\prime }$, where $I_{v}=\mathbf{%
span}(S)$, it follows that $N(z)\cap S\neq \emptyset $. Assume that $%
S\nsubseteq N(z)$ in the initial graph $G$. Recall that $%
[l_{u_{|S|}},r_{u_{1}}]=[\max \{l_{v}:~v\in S\},\min \{r_{v}:~v\in S\}]$.
Therefore, since we assumed that $S\nsubseteq N(z)$, Observation~\ref%
{obs:common.intersection.weakly.reducible} implies that $I_{z}\cap \lbrack
l_{u_{|S|}},r_{u_{1}}]=\emptyset $, and thus either $r_{z}<l_{u_{|S|}}$ or $%
r_{u_{1}}<l_{z}$.

Suppose first that $r_{z}<l_{u_{|S|}}$, i.e.,~$r_{z}<l_{u_{|S|}}<r_{u_{1}}$.
Let $l_{u_{1}}<l_{z}$, i.e.,~$l_{u_{1}}<l_{z}<r_{z}<r_{u_{1}}$. Then $%
I_{z}\subseteq I_{u_{1}}$, and thus $S\subseteq N(z)$, since $S$ is a weakly
reducible set (cf.~Definition~\ref{def:weakly.reducible.set}). This is a
contradiction to our assumption that $S\nsubseteq N(z)$. Let $%
l_{z}<l_{u_{1}} $, i.e.,~$l_{z}<l_{u_{1}}<r_{z}<r_{u_{1}}$. Then $\mathcal{I}%
[\{z,\mathbf{span}(S)\}]$ induces a proper interval representation.
Therefore, since the path $P^{\prime }$ of~$G^{\prime }$ is normal, Lemma %
\ref{lem:propers.consecutive.in.path} implies that vertex~$z$ appears in~$%
P^{\prime }$ before the first vertex $v_{1}$ of~$V_{0}$, i.e.,~$z\in
P_{0}^{\prime }$. This is a contradiction to our assumption that $z\in Z\cap
\left( V(P_{1}^{\prime })\cup V(P_{2}^{\prime })\cup \ldots \cup
V(P_{|V_{0}|-1}^{\prime })\right) $.

Suppose now that $r_{u_{1}}<l_{z}$, i.e.,~$l_{u_{|S|}}<r_{u_{1}}<l_{z}$. Let 
$r_{z}<r_{u_{|S|}}$, i.e.,~$l_{u_{|S|}}<l_{z}<r_{z}<r_{u_{|S|}}$. Then $%
I_{z}\subseteq I_{u_{|S|}}$, and thus $S\subseteq N(z)$, since $S$ is a
weakly reducible set (cf.~Definition~\ref{def:weakly.reducible.set}). This
is a contradiction to our assumption that $S\nsubseteq N(z)$. Let $%
r_{u_{|S|}}<r_{z}$, i.e.,~$l_{u_{|S|}}<l_{z}<r_{u_{|S|}}<r_{z}$. Then $%
\mathcal{I}[\{z,\mathbf{span}(S)\}]$ induces a proper interval
representation. Therefore, since the path $P^{\prime }$ of~$G^{\prime }$ is
normal, Lemma~\ref{lem:propers.consecutive.in.path} implies that vertex $z$
appears in~$P^{\prime }$ after the last vertex $v_{|V_{0}|}$ of~$V_{0}$,
i.e.,~$z\in P_{|V_{0}|}^{\prime }$. This is a contradiction to our
assumption that $z\in Z\cap \left( V(P_{1}^{\prime })\cup V(P_{2}^{\prime
})\cup \ldots \cup V(P_{|V_{0}|-1}^{\prime })\right) $.

Summarizing, $S\subseteq N(z)$ in the initial graph $G$, for every vertex $%
z\in V(P_{1}^{\prime })\cup V(P_{2}^{\prime })\cup \ldots \cup
V(P_{|V_{0}|-1}^{\prime })$ which is a predecessor or a successor of a
vertex of~$V_{0}$ in~$P^{\prime }$. Therefore $P=\left( P_{0}^{\prime
},u_{1},P_{1}^{\prime },u_{2},\dots ,u_{|V_{0}|-1},P_{|V_{0}|-1}^{\prime
},u_{|V_{0}|},u_{|V_{0}|+1},\ldots ,u_{|S|},P_{|V_{0}|}^{\prime }\right) $
is a path in~$G$, where $V(P)=V(P^{\prime })$. Finally, since $\sum_{i\in
\lbrack |S|]}w(u_{i})=\sum_{j\in \left[ \min \{|S|,k+4\}\right] }w(v_{j})$,
it follows that $\sum_{v\in V(P)}w(v)=\sum_{v\in V(P^{\prime })}w(v)$. Thus $%
G$ has a path $P$ of weight at least~$\ell $.
\end{proof}

\medskip

Now we are ready to prove the correctness of our Reduction Rule~\ref%
{data-reduction-2-redrule}.

\begin{theorem}
\label{second-data-reduction-thm}Let $\ell $ be a positive integer. Let $G$
be a weighted interval graph and $\widehat{G}$ be the weighted interval
graph obtained by Reduction Rule~\ref{data-reduction-2-redrule}. Then the
maximum weight of a path in~$G$ is $\ell $ if and only if the maximum weight
of a path in~$\widehat{G}$ is $\ell $.
\end{theorem}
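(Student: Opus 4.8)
The plan is to establish the stronger claim that $\mathrm{mw}(G_{i-1})=\mathrm{mw}(G_i)$ for every $i\in[|\mathcal{S}|]$, where $\mathrm{mw}(\cdot)$ denotes the maximum weight of a path, and then to chain these equalities along the sequence $G=G_0,G_1,\dots,G_{|\mathcal{S}|}=\widehat{G}$ produced by Reduction Rule~\ref{data-reduction-2-redrule}. Reducing to a single replacement step is legitimate because, by Observation~\ref{obs:weakly.reducible.invariant2}, the set $S_i$ is a weakly reducible set of $G_{i-1}$, so each step is exactly one application of the replacement to a weakly reducible set, and $S_i$ may be assumed nonempty (otherwise the step is vacuous). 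Writing $t=\min\{|S_i|,|D|+4\}$, I would prove the two inequalities $\mathrm{mw}(G_i)\leq\mathrm{mw}(G_{i-1})$ and $\mathrm{mw}(G_{i-1})\leq\mathrm{mw}(G_i)$ separately; together they give $\mathrm{mw}(G_{i-1})=\mathrm{mw}(G_i)$, and the stated equivalence for arbitrary $\ell$ follows immediately from the resulting equality $\mathrm{mw}(G)=\mathrm{mw}(\widehat{G})$.

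For the inequality $\mathrm{mw}(G_i)\leq\mathrm{mw}(G_{i-1})$ I would simply invoke Lemma~\ref{lem:equal.weight.second.replacement.prelim} with $G:=G_{i-1}$, $S:=S_i$, and $k:=|D|$: a maximum-weight path of $G_i$ of weight $\ell$ certifies a path of weight at least $\ell$ in $G_{i-1}$. This direction needs nothing beyond that lemma.

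The substance lies in the reverse inequality $\mathrm{mw}(G_{i-1})\leq\mathrm{mw}(G_i)$. Here I would start from a maximum-weight path $P$ of $G_{i-1}$, taken normal via Lemma~\ref{lem:normal.path.existence}. By Lemma~\ref{lem:full.weakly.reducible.or.none} applied to the weakly reducible set $S_i$ of $G_{i-1}$, either $S_i\cap V(P)=\emptyset$, in which case $P$ is already a path of $G_i$, or $S_i\subseteq V(P)$. In the latter case Lemma~\ref{lem:weakly.consecutive.vertices} supplies a path $P'$ on the same vertex set whose $S_i$-vertices occupy at most $t$ vertex-maximal subpaths (call them \emph{blocks}); since $S_i\subseteq V(P)$ there is at least one block. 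I would then turn $P'$ into a path of $G_i$ by replacing the blocks with the $t$ copies $v_1,\dots,v_t$ of $I_{S_i}=\mathbf{span}(S_i)$: distribute \emph{all} $t$ copies among the at most $t$ blocks so that each block receives at least one copy, and replace the $S_i$-vertices of a block by a consecutive run of its assigned copies. Because every $u\in S_i$ satisfies $I_u\subseteq\mathbf{span}(S_i)=I_{v_j}$, each predecessor and successor of a block (which lies outside $S_i$, by maximality, and hence survives unchanged in $G_i$) stays adjacent to the replacing copies; the copies pairwise intersect and form a clique in $G_i$, so each run is internally a valid path. Thus the transformed sequence is a genuine path of $G_i$, and since all $t$ copies are used and each has weight $\tfrac{1}{t}\sum_{u\in S_i}w(u)$, the weight contributed by $S_i$ is preserved exactly while the remaining vertices are untouched; the new path therefore has the same weight as $P'$, namely $\mathrm{mw}(G_{i-1})$.

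I expect the block-replacement in the reverse direction to be the main obstacle, and the delicate point to be the bookkeeping that guarantees all $t$ copies are inserted rather than merely one per block: if fewer than $t$ copies were used, the total weight would drop below $\sum_{u\in S_i}w(u)$. The two ingredients that make the local modifications adjacency-preserving are precisely the containment $I_u\subseteq\mathbf{span}(S_i)$ (immediate from the definition of $\mathbf{span}$) and the clique structure of the copies guaranteed in the construction of $\mathcal{I}_i$ in Reduction Rule~\ref{data-reduction-2-redrule}; the latter is what lets me pad each block with its surplus copies without breaking the path.
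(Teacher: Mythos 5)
Your proposal is correct and follows essentially the same route as the paper: one direction by (iterated) application of Lemma~\ref{lem:equal.weight.second.replacement.prelim}, and the other by taking a normal maximum-weight path, invoking Lemmas~\ref{lem:full.weakly.reducible.or.none} and~\ref{lem:weakly.consecutive.vertices}, and replacing the at most $\min\{|S_i|,|D|+4\}$ blocks of $S_i$-vertices with all the copies of $\mathbf{span}(S_i)$ so that the total weight is preserved. Your explicit bookkeeping that every copy must be inserted (and your adjacency check via $I_u\subseteq\mathbf{span}(S_i)$) matches the paper's construction, which assigns one copy to each of the first $q-1$ blocks and the remaining copies to the last block.
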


\begin{proof}
First assume that the maximum weight of a path in~$\widehat{G}=G_{|\mathcal{S%
}|}$ is $\ell $. Then, by iteratively applying Lemma~\ref%
{lem:equal.weight.second.replacement.prelim} it follows that the maximum
weight of a path in~$G$ is at least $\ell $.

Conversely, assume that the maximum weight of a path in~$G$ is $\ell $. We
show by induction on $i\in \{0,1,\ldots ,|\mathcal{S}|\}$ that the maximum
weight of a path in~$G_{i}$ is at least $\ell $. For $i=0$ we have $G_{0}=G$
and the argument follows by our assumption. This proves the induction basis.

For the induction step, let $i\geq 1$ and assume that the weight of a
maximum path in~$G_{i-1}$ is at least $\ell $. Let $P$ be a path of maximum
weight in~$G_{i-1}$, i.e.,~the weight of~$P$ is at least $\ell $. Without
loss of generality, from Lemma~\ref{lem:normal.path.existence}, we may also
assume that $P$ is normal. Furthermore, Lemma~\ref%
{lem:full.weakly.reducible.or.none} implies that either $S_{i}\subseteq V(P)$
or $S_{i}\cap V(P)=\emptyset $. Notice that if $S_{i}\cap V(P)=\emptyset $,
then $P$ is also a path of~$G_{i}$. Suppose that $S_{i}\subseteq V(P)$. Then
from Lemma~\ref{lem:weakly.consecutive.vertices}, we can obtain a path $%
\widehat{P}$ such that $V(\widehat{P})=V(P)$ and such that $\widehat{P}$ has 
$q\leq \min \{|S_{i}|,|D|+4\}$ vertex-maximal subpaths consisting only of
vertices of~$S_{i}$. Let $\widehat{P}_{1},\widehat{P}_{2},\dots ,\widehat{P}%
_{q}$ be those subpaths. Consider now the path $P^{\prime }$ that is
obtained by replacing in the interval representation of~$\widehat{P}$ each
of the subpaths $\widehat{P}_{1},\widehat{P}_{2},\dots ,\widehat{P}_{q-1}$
with a copy of~$\mathbf{span}(S_{i})$, and by replacing the subpath $%
\widehat{P}_{q}$ with $\min \{|S_{i}|,|D|+4\}-q$ copies of~$\mathbf{span}%
(S_{i})$. Note that $P^{\prime }$ is a path in the graph~$G_{i}$. Recall by
the definition of~$G_{i}$ (cf.~Reduction Rule~\ref{data-reduction-2-redrule}%
) that each of the $\min \{|S_{i}|,|D|+4\}$ copies of the interval $\mathbf{%
span}(S_{i})$ has weight $w(v_{j})=\frac{1}{\min \{|S_{i}|,|D|+4\}}%
\sum_{u\in S_{i}}w(u)$. Since the total weight of all these copies of~$%
\mathbf{span}(S_{i})$ is equal to $\sum_{u\in S_{i}}w(u)$, it follows that $%
\sum_{v\in V(P^{\prime })}w(v)=\sum_{v\in V(P)}w(v)$. That is, the weight of
the path $P^{\prime }$ of~$G_{i}$ is at least $\ell $. Therefore the maximum
weight of a path in~$G_{i}$ is at least $\ell $. This completes the
induction step and the proof of the theorem.
\end{proof}

\section{Special weighted interval graphs\label{special-interval-sec}}

In this section we sequentially apply the two data reductions of Sections~%
\ref{first-data-reduction-subsec} and~\ref{second-data-reduction-subsec} to
a given interval graph~$G$ with a proper interval deletion set $D$. To do
so, we first define a specific family $\mathcal{S}_{1}$ of \emph{reducible}
sets in~$G\setminus D$ and we apply Reduction Rule~\ref%
{data-reduction-1-redrule} to $G$ with respect to the family $\mathcal{S}%
_{1} $, resulting in the weighted interval graph~$G^{\#}$. Then we define a
specific family $\mathcal{S}_{2}$ of \emph{weakly reducible} sets in~$%
G^{\#}\setminus D$ and we apply Reduction Rule~\ref{data-reduction-2-redrule}
to $G^{\#}$ with respect to the family $\mathcal{S}_{2}$, resulting in the
weighted interval graph $\widehat{G}$. As it turns out, the vertex sets of $%
\mathcal{S}_{1}\cup \mathcal{S}_{2}$ are a partition of the graph $%
G\setminus D$. The final graph $\widehat{G}$ is then given as input to our
fixed-parameter algorithm of Section~\ref{algorithm-sec}.

We now introduce the notion of a \emph{special weighted interval graph} with 
\emph{parameter} $\kappa $. As we will prove at the end of this section, the
constructed graph $\widehat{G}$ is a special weighted interval graph with
parameter $\kappa $, where $\kappa $ depends only on the size of~$D$
(cf.~Theorem~\ref{special-weighted-interval-graph-thm}). Furthermore $%
\widehat{G}$ can be computed in $O(k^{2}n)$ time (cf.~Theorem~\ref%
{second-reduction-computation-thm}).

\begin{definition}[special weighted interval graph with parameter $\protect%
\kappa $]
\label{special-interval-def} Let $G=(V,E)$ be a weighted interval graph, $%
\mathcal{I}=\{I_{v}:v\in V\}$ be an interval representation of~$G$, and $%
\kappa \in \mathbb{N}$, where the vertex set $V$ can be partitioned into two
sets $A$ and $B$ such that:

\begin{enumerate}
\item $A$ is an independent set in~$G$,

\item for every $v\in A$ and every $u\in V\setminus \{v\}$, we have $%
I_{u}\nsubseteq I_{v}$, and 

\item $|B|\leq \kappa $.
\end{enumerate}

Then $G$ (resp.~$\mathcal{I}$) is a \emph{special weighted interval graph}
(resp.~\emph{special interval representation}) with \emph{parameter} $\kappa 
$. The partition $V=A\cup B$ is a \emph{special vertex partition} of~$G$.
\end{definition}

\subsection{The graph $G^{\#}$\label{graph-G-diesi-subsec}}

Let $D$ be a proper interval deletion set of~$G$ with $k$ vertices $%
d_{1},d_{2},\ldots ,d_{k}$, i.e.,~$G\setminus D$ is a proper interval graph.
First we add two isolated dummy vertices $d_{0},d_{k+1}$ to the set $D$
(together with the corresponding intervals $I_{d_{0}},I_{d_{k+1}}$), such
that $d_{0}<_{\sigma }v<_{\sigma }d_{k+1}$ for every vertex $v\in V(G)$.
Note that $\mathcal{I}^{\prime }=\mathcal{I}\cup \{I_{d_{0}},I_{d_{k+1}}\}$
is the interval representation of an interval graph $G^{\prime }$, where $%
V(G^{\prime })=V(G)\cup \{d_{0},d_{k+1}\}$ and $E(G^{\prime })=E(G)$. For
simplicity of the presentation we refer in the following to $G^{\prime }$
and $\mathcal{I}^{\prime }$ by $G$ and $\mathcal{I}$, respectively.
Furthermore we denote $D=\{d_{0},d_{1},\ldots ,d_{k},d_{k+1}\}$, where $%
d_{0}<_{\sigma }d_{1}<_{\sigma }\ldots <_{\sigma }d_{k}<_{\sigma }d_{k+1}$
and $d_{0},d_{k+1}$ are the two dummy isolated vertices. Now we define the
following four sets: 
\begin{eqnarray*}
L &=&\{l_{v}:~v\in D\}, \\
R &=&\{r_{v}:~v\in D\}, \\
U &=&V(G)\setminus D, \\
U^{\ast } &=&\{u\in U:~I_{u}\cap R=\emptyset \}.
\end{eqnarray*}%
Furthermore, for every $i\in \lbrack k+1]$ we define the following set: 
\begin{equation}
L_{i}=\{l\in L:~r_{d_{i-1}}<l<r_{d_{i}}\}\cup \{r_{d_{i-1}},r_{d_{i}}\}.
\label{Li-eq}
\end{equation}

For every $i\in \lbrack k+1]$ we denote $L_{i}=\{l_{i,0},l_{i,1},\dots
,l_{i,p_{i}}\}$, where $l_{i,0}<l_{i,1}<\dots <l_{i,p_{i}}$. Note that $%
l_{i,0}=r_{d_{i-1}}$ and $l_{i,p_{i}}=r_{d_{i}}$, and thus $|L_{i}|\geq 2$
for every $i\in \lbrack k+1]$. Now, for every $i\in \lbrack k+1]$ and $x\in
\lbrack p_{i}]$ define%
\begin{equation*}
U_{i,x}^{\ast }=\{u\in U^{\ast }:~l_{i,x-1}<r_{u}<l_{i,x}\}
\end{equation*}%
and 
\begin{equation*}
U_{i,x}^{\ast \ast }=\left\{ u\in U_{i,x}^{\ast }:~N(u)\cap \left(
\bigcup\nolimits_{j\in \lbrack x-1]}U_{i,j}^{\ast }\right) =\emptyset
\right\} .
\end{equation*}%
Note that the set $U^{\ast }$ is partitioned by the sets $\{U_{i,x}^{\ast
}:i\in \lbrack k+1],x\in \lbrack p_{i}]\}$.

\begin{observation}
\label{U-star-star-between-consecutive-R-or-L-obs}Let $u\in U$ such that $%
I_{u}$ is strictly contained between two consecutive points of~$R\cup L$.
Then $u\in U_{i,x}^{\ast \ast }$, for some $i\in \lbrack k+1]$ and $x\in
\lbrack p_{i}]$.
\end{observation}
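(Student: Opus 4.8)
The plan is to verify the three defining conditions of $U_{i,x}^{\ast\ast}$ in turn, reducing everything to elementary interval geometry. First I would observe that the hypothesis ``$I_u$ is strictly contained between two consecutive points of $R\cup L$'' means precisely that $I_u$ contains no point of $R\cup L$; in particular $I_u\cap R=\emptyset$, so together with $u\in U$ this gives $u\in U^{\ast}$. Since the sets $\{U_{i,x}^{\ast}\}$ partition $U^{\ast}$ (as noted right before the statement), there are unique indices $i\in[k+1]$ and $x\in[p_i]$ with $u\in U_{i,x}^{\ast}$, i.e.\ $l_{i,x-1}<r_u<l_{i,x}$. It then remains only to establish the ``no earlier neighbor'' condition $N(u)\cap\bigl(\bigcup_{j\in[x-1]}U_{i,j}^{\ast}\bigr)=\emptyset$, which places $u$ in $U_{i,x}^{\ast\ast}$.

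Next I would pin down the exact location of $I_u$ inside the gap. Recall that every element of $L_i$ lies in $R\cup L$: its interior points come from $L$, and its two extreme points $l_{i,0}=r_{d_{i-1}}$ and $l_{i,p_i}=r_{d_i}$ come from $R$. In particular $l_{i,x-1}\in R\cup L$. Since $l_{i,x-1}<r_u$ and, by hypothesis, $I_u$ meets no point of $R\cup L$, we must have $l_u>l_{i,x-1}$: otherwise $l_u\le l_{i,x-1}<r_u$ would force $l_{i,x-1}\in[l_u,r_u]=I_u$, a contradiction. Thus $l_{i,x-1}<l_u\le r_u<l_{i,x}$, i.e.\ the whole interval $I_u$ sits strictly inside the open gap $(l_{i,x-1},l_{i,x})$. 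I expect this step---establishing $l_u>l_{i,x-1}$---to be the only real content of the argument, and it is exactly where the word ``strictly'' in the hypothesis is used; a vertex whose interval merely had its right endpoint in this gap need not satisfy the conclusion.

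Finally I would derive the neighbor condition by a disjointness argument. Suppose towards a contradiction that some $u'\in N(u)$ lies in $U_{i,j}^{\ast}$ for some $j\in[x-1]$. By definition of $U_{i,j}^{\ast}$ we have $r_{u'}<l_{i,j}$, and since the points of $L_i$ are strictly increasing and $j\le x-1$ we get $l_{i,j}\le l_{i,x-1}$. Combining with the previous paragraph yields $r_{u'}<l_{i,j}\le l_{i,x-1}<l_u$, so $I_{u'}$ lies entirely to the left of $I_u$ and hence $I_{u'}\cap I_u=\emptyset$, i.e.\ $u'u\notin E(G)$---contradicting $u'\in N(u)$. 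Therefore no such $u'$ exists, $N(u)\cap\bigl(\bigcup_{j\in[x-1]}U_{i,j}^{\ast}\bigr)=\emptyset$, and consequently $u\in U_{i,x}^{\ast\ast}$, as required.
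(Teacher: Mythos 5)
Your proof is correct. The paper states this as an observation without giving a proof, and your argument supplies exactly the details one would expect: membership in $U^{\ast}$ and the partition into the sets $U_{i,x}^{\ast}$ locate $u$, the key step $l_{i,x-1}<l_{u}$ uses the word ``strictly'' in the hypothesis, and the disjointness estimate $r_{u'}<l_{i,j}\leq l_{i,x-1}<l_{u}$ rules out neighbors in earlier sets $U_{i,j}^{\ast}$.
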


\begin{lemma}
\label{sum-pi-lem}%
\begin{equation*}
\displaystyle\sum_{i=1}^{k+1}p_{i}=2(k+1).
\end{equation*}
\end{lemma}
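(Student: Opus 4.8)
The plan is to count the elements of each $L_i$ in terms of how many left endpoints of $D$-intervals they contain, and then to sum. First I would record the immediate consequence of the definition of $L_i$: since its smallest and largest elements are $l_{i,0}=r_{d_{i-1}}$ and $l_{i,p_i}=r_{d_i}$, we have $|L_i| = p_i+1$, while on the other hand $L_i$ consists of exactly the two boundary points $r_{d_{i-1}},r_{d_i}$ together with those left endpoints $l\in L$ satisfying $r_{d_{i-1}}<l<r_{d_i}$. Writing $m_i$ for the number of such interior left endpoints, this gives $p_i+1 = |L_i| = m_i + 2$, i.e. $p_i = m_i+1$. Summing over $i\in[k+1]$ then yields $\sum_{i=1}^{k+1} p_i = (k+1) + \sum_{i=1}^{k+1} m_i$, so the whole statement reduces to proving that $\sum_{i=1}^{k+1} m_i = k+1$.

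Next I would interpret $\sum_i m_i$ as a single global count. Because $d_0 <_{\sigma} d_1 <_{\sigma} \cdots <_{\sigma} d_{k+1}$, the right-endpoint ordering (Lemma~\ref{lem:ordering.edges}) gives $r_{d_0}<r_{d_1}<\cdots<r_{d_{k+1}}$, so the $k+1$ open intervals $(r_{d_{i-1}},r_{d_i})$ are pairwise disjoint and their union is $(r_{d_0},r_{d_{k+1}})$ with the interior right endpoints removed. Since all interval endpoints are assumed distinct, no left endpoint $l_{d_j}$ coincides with any $r_{d_i}$, and hence each left endpoint lying in $(r_{d_0},r_{d_{k+1}})$ falls into exactly one of these open intervals. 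Consequently $\sum_{i=1}^{k+1} m_i$ equals the total number of left endpoints $l_{d_j}$, $j\in\{0,1,\dots,k+1\}$, that lie strictly inside $(r_{d_0},r_{d_{k+1}})$.

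It then remains to determine, among the $k+2$ left endpoints $l_{d_0},\dots,l_{d_{k+1}}$, exactly which lie in $(r_{d_0},r_{d_{k+1}})$; this is the step that uses the special role of the two dummy vertices and is where I expect the only real care is needed. Since $d_0$ is an isolated vertex that is leftmost in $\sigma$, its interval $I_{d_0}$ is disjoint from and entirely to the left of every other interval, so $r_{d_0}<l_{d_j}$ for every $j\geq 1$; at the same time $l_{d_0}<r_{d_0}$ trivially, so $l_{d_0}$ is the unique left endpoint lying to the left of $r_{d_0}$ and is therefore excluded. For the upper bound, $l_{d_j}<r_{d_j}\leq r_{d_{k+1}}$ forces $l_{d_j}<r_{d_{k+1}}$ for every $j$. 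Hence each of $l_{d_1},\dots,l_{d_{k+1}}$ lies in $(r_{d_0},r_{d_{k+1}})$ while $l_{d_0}$ does not, giving exactly $k+1$ interior left endpoints, i.e. $\sum_{i=1}^{k+1} m_i = k+1$. Combining this with the reduction of the first paragraph yields $\sum_{i=1}^{k+1} p_i = (k+1)+(k+1) = 2(k+1)$, as claimed.
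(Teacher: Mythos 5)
Your proof is correct and takes essentially the same route as the paper's: write $p_i = 1 + |\{l\in L :\ r_{d_{i-1}}<l<r_{d_i}\}|$ and sum over $i$. The only difference is that you explicitly justify the final count $\sum_{i=1}^{k+1}|\{l\in L :\ r_{d_{i-1}}<l<r_{d_i}\}|=k+1$ by locating each of the $k+2$ left endpoints of $D$-intervals relative to $(r_{d_0},r_{d_{k+1}})$, a step the paper asserts without comment.
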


\begin{proof}
First, note by Eq.~(\ref{Li-eq}) that $|L_{i}|=2+|\{l\in
L:~r_{d_{i-1}}<l<r_{d_{i}}\}|$. Furthermore, since $L_{i}=\{l_{i,0},l_{i,1},%
\dots ,l_{i,p_{i}}\}$, it follows that $p_{i}=|L_{i}|-1$, i.e.,~$%
p_{i}=1+|\{l\in L:~r_{d_{i-1}}<l<r_{d_{i}}\}|$. Therefore, 
\begin{equation*}
\sum_{i=1}^{k+1}p_{i}=(k+1)+\sum_{i=1}^{k+1}|\{l\in
L:~r_{d_{i-1}}<l<r_{d_{i}}\}|=2(k+1).
\end{equation*}
\end{proof}

\begin{lemma}
\label{lem:proper.repres} Let $i\in [k+1]$ and $x\in [p_{i}]$. The interval
representation $\mathcal{I}[U_{i,x}^{*}]$ of~$G[U_{i,x}^{*}]$ is a proper
interval representation.
\end{lemma}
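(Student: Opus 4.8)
The plan is to prove the contrapositive of non-properness: I would show that $\mathcal{I}[U_{i,x}^{*}]$ contains no containment $I_{u}\subsetneq I_{v}$ between two distinct vertices of $U_{i,x}^{*}$, which by the very definition of a proper interval representation (recall all endpoints are distinct) is exactly what it means for $\mathcal{I}[U_{i,x}^{*}]$ to be proper. So I would fix $u,v\in U_{i,x}^{*}$ and assume, towards a contradiction, that $I_{u}\subsetneq I_{v}$; since endpoints are distinct this gives $l_{v}<l_{u}<r_{u}<r_{v}$.

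The first substantive step is to invoke semi-properness. Since we assume throughout that $\mathcal{I}$ has been preprocessed to be semi-proper (Definition~\ref{semi-proper-representation-def}, Theorem~\ref{interval-representation-preprocessing-thm}), the containment $I_{u}\subseteq I_{v}$ forces $u$ and $v$ to lie together in an induced claw $\{u,v,a,b\}$ of $G$. I then need to identify the roles in this claw. As $I_{u}\cap I_{v}\neq\emptyset$, the vertices $u,v$ are adjacent, so one of them is the center. I would rule out $u$ being the center: if it were, then each of the other leaves $a,b$ satisfies $I_{a}\cap I_{u}\neq\emptyset$, hence $I_{a}\cap I_{v}\neq\emptyset$ (as $I_{u}\subseteq I_{v}$), contradicting that the leaves of the claw are pairwise non-adjacent. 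Thus $v$ is the center and $u$ is a leaf. Because $D$ is a proper interval deletion set, $G\setminus D$ is $K_{1,3}$-free~\cite{Roberts69}; since $u,v\in U=V(G)\setminus D$, at least one of the remaining leaves, call it $c$, must belong to $D$.

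The heart of the argument is a short geometric computation trapping $l_{c}$, using \emph{both} defining properties of $U_{i,x}^{*}$. Since $v\in U^{*}$ we have $I_{v}\cap R=\emptyset$, and since $c\in D$ its right endpoint $r_{c}\in R$; combined with $I_{c}\cap I_{v}\neq\emptyset$ (leaf–center adjacency) this forces $r_{c}>r_{v}$. Leaf-disjointness $I_{c}\cap I_{u}=\emptyset$ together with $r_{c}>r_{v}>r_{u}$ then forces $l_{c}>r_{u}$ (the alternative $r_{c}<l_{u}$ is impossible), and $I_{c}\cap I_{v}\neq\emptyset$ with $r_{c}>r_{v}$ forces $l_{c}<r_{v}$. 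Hence $r_{u}<l_{c}<r_{v}$. But $r_{u},r_{v}\in(l_{i,x-1},l_{i,x})$ by membership in $U_{i,x}^{*}$, so $l_{c}\in(l_{i,x-1},l_{i,x})\subseteq(r_{d_{i-1}},r_{d_{i}})$ (using $l_{i,0}=r_{d_{i-1}}$, $l_{i,p_{i}}=r_{d_{i}}$ and $1\le x\le p_{i}$). Since $c\in D$ we have $l_{c}\in L$, so $r_{d_{i-1}}<l_{c}<r_{d_{i}}$ places $l_{c}\in L_{i}$ by Eq.~(\ref{Li-eq}); yet $l_{c}$ lies strictly between the consecutive elements $l_{i,x-1},l_{i,x}$ of $L_{i}$, which is impossible. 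This contradiction finishes the proof.

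The step I expect to require the most care is this geometric pinning of $l_{c}$: it is essential to exploit both special features of $U_{i,x}^{*}$ simultaneously — the $U^{*}$ condition to push $r_{c}$ beyond $r_{v}$, and the common gap housing $r_{u}$ and $r_{v}$ to squeeze $l_{c}$ between them — and it is the inequality $l_{c}>r_{u}>l_{v}$ that automatically excludes the degenerate configuration $I_{v}\subseteq I_{c}$ without a separate case split. The only other genuine (if brief) logical point is establishing that $v$, not $u$, is the center of the claw; translating ``no containment'' into ``proper representation'' and reading off $l_{c}\in L_{i}$ are routine bookkeeping.
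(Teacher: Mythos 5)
Your proof is correct and follows essentially the same route as the paper's: semi-properness of the preprocessed representation yields an induced claw containing $u$ and $v$, the $K_{1,3}$-freeness of $G\setminus D$ forces one of the remaining leaves into $D$, and the two defining conditions of $U_{i,x}^{*}$ (namely $I_v\cap R=\emptyset$ and the confinement of $r_u,r_v$ to the gap $(l_{i,x-1},l_{i,x})$) pin an endpoint of that $D$-leaf into a position that contradicts the construction of $L_i$. The only cosmetic difference is that the paper extracts the claw in the specific form $z<_{\sigma}u<_{\sigma}z'$ with $z,z'\in N(v)\setminus N(u)$ (coming from the preprocessing construction) and then splits into cases according to which leaf lies in $D$, whereas you work from the definition of semi-properness alone, re-derive that $v$ is the center, and merge the two $D$-cases by first forcing $r_c>r_v$; both arguments are sound.
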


\begin{proof}
The proof is done by contradiction. Assume that, for some $i\in \lbrack k+1]$
and $x\in \lbrack p_{i}]$, $\mathcal{I}[U_{i,x}^{\ast }]$ is not a proper
interval representation of~$G[U_{i,x}^{\ast }]$. That is, there exist two
vertices $u,v\in U_{i,x}^{\ast }$ such that $I_{u}\subseteq I_{v}$. From the
preprocessing of Theorem~\ref{interval-representation-preprocessing-thm}
there exist two vertices $z,z^{\prime }\in V(G)$ such that $z<_{\sigma
}u<_{\sigma }z^{\prime }$, where $z,z^{\prime }\in N(v)\setminus N(u)$.

First suppose that $z,z^{\prime }\notin D$. Then the vertices $\{z,z^{\prime
},v,u\}$ induce a $K_{1,3}$ in~$G\setminus D$, which is a contradiction to
the assumption that $D$ is a proper interval deletion set of~$G$~\cite%
{Roberts69}.

Now suppose that $z\in D$. Since $zv\in E(G)$ and $r_{z}<r_{u}<r_{v}$, it
follows that $l_{v}<r_{z}$. Thus $r_{z}\in I_{v}$. Therefore, since we
assumed that $z\in D$, it follows that $I_{v}\cap R\neq \emptyset $ and thus 
$v\notin U^{\ast }$ (cf.~the definition of the set $U^{\ast }$). This is a
contradiction to the assumption that $v\in U_{i,x}^{\ast }\subseteq U^{\ast
} $.

Finally suppose that $z^{\prime }\in D$. Then, since $u<_{\sigma }z^{\prime
} $ and $z^{\prime }\notin N(u)$, it follows that $r_{u}<l_{z^{\prime }}$.
If $r_{v}<l_{z^{\prime }}$, then $z^{\prime }\notin N(v)$, which is a
contradiction. Thus $l_{z^{\prime }}<r_{v}$, i.e.,~$r_{u}<l_{z^{\prime
}}<r_{v}$. Therefore, since we assumed that $z^{\prime }\in D$, it follows
that $u,v$ do not belong to the same set $U_{i,x}^{\ast }$ (cf.~the
definition of~$U_{i,x}^{\ast }$), which is a contradiction to our assumption.

Thus, for every ${i\in \lbrack k+1]}$ and ${x\in \lbrack p_{i}]}$, ${%
\mathcal{I}[U_{i,x}^{\ast}]}$ is a proper interval representation of~${%
G[U_{i,x}^{\ast}]}$.
\end{proof}

\medskip

For every $i\in \lbrack k+1]$ and $x\in \lbrack p_{i}]$ we denote the
connected components of~$U_{i,x}^{\ast \ast }$ by $C_{i,x}^{1},C_{i,x}^{2},%
\dots ,C_{i,x}^{q(i,x)}$, such that $V(C_{i,x}^{1})<_{\sigma
}V(C_{i,x}^{2})<_{\sigma }\dots <_{\sigma }V(C_{i,x}^{q(i,x)})$. Note that
any two distinct components $C_{i,x}^{t}$ and $C_{i^{\prime },x^{\prime
}}^{t^{\prime }}$ are disjoint, i.e.,~$V(C_{i,x}^{t})\cap V(C_{i^{\prime
},x^{\prime }}^{t^{\prime }})=\emptyset $. Furthermore we define the family $%
\mathcal{S}_{1}$ of vertex subsets of~$V(G)\setminus D$ as follows:%
\begin{equation}
\mathcal{S}_{1}=\{V(C_{i,x}^{t}):i\in \lbrack k+1],\ x\in \lbrack p_{i}],\
t\in \lbrack q(i,x)]\}.  \label{S-1-family-eq}
\end{equation}

\begin{lemma}
\label{all-components-U**-reducible-lem}Every set $S\in \mathcal{S}_{1}$ is
reducible.
\end{lemma}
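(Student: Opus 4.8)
The plan is to verify the two defining conditions of a reducible set (Definition~\ref{def:reducible.set}) for each $S = V(C_{i,x}^{t}) \in \mathcal{S}_1$. Recall that $C_{i,x}^{t}$ is a connected component of the graph $G[U_{i,x}^{**}]$, so Condition~1 is almost immediate: connectivity holds by definition of a connected component, and the fact that $\mathcal{I}[S]$ is a proper interval representation follows because $S \subseteq U_{i,x}^{*}$ and Lemma~\ref{lem:proper.repres} guarantees that $\mathcal{I}[U_{i,x}^{*}]$ is already a proper interval representation (properness is inherited by induced subrepresentations). So the real work lies in Condition~2.

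First I would fix a component $S = V(C_{i,x}^{t})$ and take an arbitrary vertex $v \in V(G)$ with $I_v \subseteq \mathbf{span}(S)$; the goal is to show $v \in S$. The natural first move is to observe that $\mathbf{span}(S)$ lies strictly between two consecutive points of $R \cup L$, because every $u \in S \subseteq U_{i,x}^{*}$ has its right endpoint $r_u$ squeezed into $(l_{i,x-1}, l_{i,x})$, and moreover $S \subseteq U_{i,x}^{**}$ means no vertex of $S$ reaches back to touch an earlier $U_{i,j}^{*}$. I would argue that $\mathbf{span}(S)$ is itself contained between two consecutive points of $R \cup L$ (using that the left endpoints of vertices in $S$ cannot cross the relevant $L$-points, otherwise $I_v$ would meet a point of $R$ and violate membership in $U^{*}$). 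Consequently $I_v$ is strictly contained between two consecutive points of $R \cup L$ as well, so by Observation~\ref{U-star-star-between-consecutive-R-or-L-obs} we get $v \in U_{i',x'}^{**}$ for some indices, and in fact $v \in U_{i,x}^{**}$ since $I_v \subseteq \mathbf{span}(S)$ pins down the same strip.

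The remaining step is to show that $v$ lies in the \emph{same} connected component as the vertices of $S$, not merely in $U_{i,x}^{**}$. Here I would use that $I_v \subseteq \mathbf{span}(S) = [\min\{l_u : u \in S\}, \max\{r_u : u \in S\}]$ together with the connectivity of $G[S]$: since $I_v$ overlaps the union of intervals spanned by the connected set $S$, the interval $I_v$ must intersect $I_u$ for at least one $u \in S$ (a single interval contained in the span of a connected interval subgraph necessarily meets one of its members). That adjacency $vu \in E(G)$, combined with $v \in U_{i,x}^{**}$ and $u \in C_{i,x}^{t}$, forces $v$ into the component $C_{i,x}^{t}$, whence $v \in S$.

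\textbf{The main obstacle} I anticipate is the careful boundary bookkeeping in the middle step: confirming that $I_v \subseteq \mathbf{span}(S)$ really does place $v$ in the \emph{same} strip $U_{i,x}^{*}$ rather than an adjacent one, and in particular ruling out the degenerate possibility that $I_v$ sneaks past an endpoint of $\mathbf{span}(S)$ to touch a point of $R$ (which would push $v$ out of $U^{*}$) or reach into an earlier $U_{i,j}^{*}$ (which would push $v$ out of $U_{i,x}^{**}$). Handling the relationship between $\mathbf{span}(S)$ and the consecutive points of $R \cup L$ cleanly, and invoking Observation~\ref{U-star-star-between-consecutive-R-or-L-obs} at exactly the right moment, is where the argument must be most precise.
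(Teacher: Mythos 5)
Your handling of Condition~1 and of the final ``same component'' step is fine, but there is a genuine gap in the middle of your argument for Condition~2. The claim that $\mathbf{span}(S)$ lies strictly between two consecutive points of $R\cup L$ is false in general, and the justification you give for it (``the left endpoints of vertices in $S$ cannot cross the relevant $L$-points, otherwise $I_v$ would meet a point of $R$'') conflates $L$ with $R$: membership in $U^{\ast}$ only forbids an interval from containing a point of $R$, so a vertex $u\in U_{i,x}^{\ast\ast}$ may well have $l_{u}<l_{i,x-1}$ when $l_{i,x-1}$ is the left endpoint of a $D$-interval. Being in $U_{i,x}^{\ast\ast}$ only says that $u$ has no \emph{neighbor} in an earlier strip $U_{i,j}^{\ast}$, $j<x$; it does not say that $I_{u}$ stays to the right of $l_{i,x-1}$ (for instance, if all sets $U_{i,j}^{\ast}$ with $j<x$ are empty, nothing stops $I_{u}$ from reaching left past $l_{i,x-1}$). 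Consequently Observation~\ref{U-star-star-between-consecutive-R-or-L-obs} need not apply to $v$ in the way you want, and your ``pins down the same strip'' step breaks: a vertex $v$ with $I_{v}\subseteq\mathbf{span}(S)$ and $r_{v}<l_{i,x-1}$ would land in an earlier strip, not in $U_{i,x}^{\ast}$, and your argument never excludes this case.

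The lemma is still true, but excluding that case has to go through the $\ast\ast$-condition explicitly, which is what the paper does. Writing $S=\{u_{1},\dots,u_{a}\}$ with $u_{1}<_{\sigma}\dots<_{\sigma}u_{a}$, one first shows $v\in U^{\ast}$ (since $\mathbf{span}(S)=\bigcup_{u\in S}I_{u}$ is disjoint from $R$ by connectivity of $G[S]$), and then that $r_{u_{1}}\leq r_{v}$: otherwise $I_{v}\subseteq I_{u_{1}}$, so $v\in N(u_{1})$, and then either $v\in U_{i,x}^{\ast}$, contradicting the properness of $\mathcal{I}[U_{i,x}^{\ast}]$ (Lemma~\ref{lem:proper.repres}), or $v\in U_{i,j}^{\ast}$ for some $j<x$, so that $v\in N(u_{1})\cap U_{i,j}^{\ast}$ contradicts $u_{1}\in U_{i,x}^{\ast\ast}$. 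Together with $r_{v}\leq r_{u_{a}}$ this places $v$ in $U_{i,x}^{\ast}$; membership in $U_{i,x}^{\ast\ast}$ then follows because $N(v)\subseteq\bigcup_{u\in S}N(u)$, and the component identification is as you describe. So your overall skeleton is salvageable, but the step you flag as ``boundary bookkeeping'' is exactly where the argument as written fails.
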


\begin{proof}
Consider a set $S\in \mathcal{S}_{1}$. Then $S=V(C_{i,x}^{t})$, for some $%
i\in \lbrack k+1]$, $x\in \lbrack p_{i}]$, and $t\in \lbrack q(i,x)]$. We
need to prove that the two conditions of Definition~\ref{def:reducible.set}
are satisfied for the connected component $C_{i,x}^{t}$ of~$U_{i,x}^{\ast
\ast }$. For Condition 1, recall that $V(C_{i,x}^{t})\subseteq U_{i,x}^{\ast
\ast }\subseteq U_{i,x}^{\ast }$. Thus, since the interval representation $%
\mathcal{I}[U_{i,x}^{\ast }]$ of~$G[U_{i,x}^{\ast }]$ is proper by Lemma~\ref%
{lem:proper.repres}, the interval representation $\mathcal{I}[V(C_{i,x}^{t})]
$ of~$G[V(C_{i,x}^{t})]$ is a connected proper interval representation. This
proves Condition~1 of Definition~\ref{def:reducible.set}.

Now let $v\in V(G)$ such that $I_{v}\subseteq \mathbf{span}(V(C_{i,x}^{t}))$%
. Recall by the definition of~$V(C_{i,x}^{t})$ that $G[V(C_{i,x}^{t})]$ is
connected. Therefore, since $I_{u}\cap R=\emptyset $ for every $u\in ~$ $%
V(C_{i,x}^{t})\subseteq U^{\ast }$, it follows that $\mathbf{span}%
(V(C_{i,x}^{t}))\cap R=\emptyset $, and thus also $I_{v}\cap R=\emptyset $.
That is, $v\in U^{\ast }$ (cf.~the definition of~$U^{\ast }$). Let $%
V(C_{i,x}^{t})=\{u_{1},u_{2},\ldots ,u_{a}\}$, where $u_{1}<_{\sigma
}u_{2}<_{\sigma }\ldots <_{\sigma }u_{a}$. Note that $r_{v}\leq r_{u_{a}}$,
since $I_{v}\subseteq \mathbf{span}(V(C_{i,x}^{t}))$ by assumption.
Furthermore, since $\mathcal{I}[V(C_{i,x}^{t})]$ is a proper interval
representation as we proved above, Observation~\ref%
{obs:order.in.uirepresentation} implies that $l_{u_{1}}<l_{u_{2}}<\ldots
<l_{u_{a}}$. Suppose that $v<_{\sigma }u_{1}$. Then, since $I_{v}\subseteq 
\mathbf{span}(V(C_{i,x}^{t}))$, it follows that $I_{v}\subseteq I_{u_{1}}$.
This is a contradiction to Condition~1 of Definition~\ref{def:reducible.set}
that we proved above. Thus $u_{1}\leq _{\sigma }v$. That is, $%
l_{i,x-1}<r_{u_{1}}\leq r_{v}\leq r_{u_{a}}<l_{i,x}$. Therefore $v\in
U_{i,x}^{\ast }$ (cf.~the definition of~$U_{i,x}^{\ast }$). Moreover, since $%
I_{v}\subseteq \mathbf{span}(V(C_{i,x}^{t}))$ by assumption, it follows that 
$N(v)\subseteq \ \bigcup_{u\in V(C_{i,x}^{t})}N(u)$, and thus $N(v)\cap
\bigcup_{j\in \lbrack x-1]}U_{i,j}^{\ast }=\emptyset $ (cf.~the definition
of~$U_{i,x}^{\ast \ast }$). It follows that $v\in U_{i,x}^{\ast \ast }$, and
thus $v\in V(C_{i,x}^{t})$. This proves Condition~2 of Definition~\ref%
{def:reducible.set}.
\end{proof}

\begin{lemma}
\label{span-S-1-disjoint-lem}If $S$ and $S^{\prime }$ are two distinct
elements of~$\mathcal{S}_{1}$, then $\mathbf{span}(S)\cap \mathbf{span}%
(S^{\prime })=\emptyset $.
\end{lemma}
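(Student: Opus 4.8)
The plan is to show that two distinct sets $S=V(C_{i,x}^{t})$ and $S'=V(C_{i',x'}^{t'})$ of $\mathcal{S}_1$ have disjoint spans by assuming the contrary and extracting an edge between them that violates one of the three defining conditions: membership in $U^{\ast}$, membership in $U_{i,x'}^{\ast\ast}$, or being distinct connected components of $U_{i,x}^{\ast\ast}$.

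First I would record a preliminary fact. Since each $S\in\mathcal{S}_1$ is the vertex set of a connected component $C_{i,x}^{t}$, the graph $G[S]$ is connected, and hence the union $\bigcup_{v\in S}I_{v}$ has no gaps and equals exactly $\mathbf{span}(S)$. (If a point $p$ strictly inside $\mathbf{span}(S)$ lay in no interval of $S$, then the vertices with $r_{v}<p$ and those with $l_{v}>p$ would partition $S$ into two nonempty parts with no edge between them, contradicting connectedness.) Consequently, if $\mathbf{span}(S)\cap\mathbf{span}(S')\neq\emptyset$, a common point lies in some $I_{u}$ with $u\in S$ and in some $I_{v}$ with $v\in S'$, so $uv\in E(G)$; and since the members of $\mathcal{S}_1$ are pairwise disjoint, $u\neq v$. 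Relabelling if needed, I may assume $r_{u}<r_{v}$, i.e.\ $u<_{\sigma}v$, and then $uv\in E(G)$ forces $l_{v}\le r_{u}$.

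Next I would locate the two slots. By construction $u\in U_{i,x}^{\ast}$ and $v\in U_{i',x'}^{\ast}$ give $r_{u}\in(l_{i,x-1},l_{i,x})\subseteq(r_{d_{i-1}},r_{d_{i}})$ and $r_{v}\in(l_{i',x'-1},l_{i',x'})\subseteq(r_{d_{i'-1}},r_{d_{i'}})$, so $r_{u}<r_{v}$ yields $i\le i'$. If $i<i'$, then $r_{u}<r_{d_{i}}\le r_{d_{i'-1}}<r_{v}$, and with $l_{v}\le r_{u}$ this gives $l_{v}\le r_{u}<r_{d_{i}}<r_{v}$, so $r_{d_{i}}\in I_{v}$; since $r_{d_{i}}\in R$ this contradicts $v\in U^{\ast}$. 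If $i=i'$, then $r_{u}<r_{v}$ forces $x\le x'$ (if $x>x'$ then $r_{u}>l_{i,x-1}\ge l_{i,x'}>r_{v}$, impossible). In the subcase $x<x'$ we have $u\in U_{i,x}^{\ast}$ with $x\in[x'-1]$, so the edge $uv$ contradicts the condition $N(v)\cap\big(\bigcup_{j\in[x'-1]}U_{i,j}^{\ast}\big)=\emptyset$ defining $v\in U_{i,x'}^{\ast\ast}$. In the remaining subcase $x=x'$ both sets lie in the same $U_{i,x}^{\ast\ast}$; as $S\neq S'$ they are distinct connected components of $G[U_{i,x}^{\ast\ast}]$, yet $u,v\in U_{i,x}^{\ast\ast}$ makes $uv$ an edge of $G[U_{i,x}^{\ast\ast}]$ across two different components, a contradiction. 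Every case being impossible, $\mathbf{span}(S)\cap\mathbf{span}(S')=\emptyset$.

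The main obstacle is the preliminary step that an intersection of spans actually produces an \emph{edge} between $S$ and $S'$; this is what activates all three defining conditions, and it relies on each $S$ inducing a connected subgraph so that its span coincides with the union of its intervals. Once that is in place, the case analysis is driven purely by which slot boundary the offending interval $I_{v}$ is forced to contain: a right endpoint $r_{d_i}\in R$ (contradicting $U^{\ast}$) when the blocks differ, and a neighbour in an earlier slot (contradicting $U_{i,x'}^{\ast\ast}$) when only the slot index differs.
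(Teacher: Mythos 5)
Your proof is correct and follows essentially the same route as the paper's: from the span intersection you extract adjacent vertices $u\in S$ and $v\in S'$, then rule out $i\neq i'$ by forcing a point of $R$ into one of the intervals (contradicting membership in $U^{\ast}$), rule out $x\neq x'$ via the defining condition of $U_{i,x'}^{\ast\ast}$, and rule out $x=x'$ by the distinctness of connected components. The only difference is that you explicitly justify, via connectedness of $G[S]$, the step that intersecting spans yield intersecting intervals --- a step the paper asserts without comment --- which is a welcome but minor addition.
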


\begin{proof}
Towards a contradiction let $S$ and $S^{\prime }$ be two distinct elements
of~$\mathcal{S}_{1}$. That is, $S=V(C_{i,x}^{t})$ and $S^{\prime}=V(C_{i^{%
\prime },x^{\prime }}^{t^{\prime }})$, for some $i,i^{\prime}\in \lbrack
k+1] $, $x\in \lbrack p_{i}]$, $x^{\prime}\in \lbrack p_{i^{\prime}}]$, and $%
t\in \lbrack q(i,x)]$, $t^{\prime}\in \lbrack q(i^{\prime},x^{\prime})]$.
Assume that $\mathbf{span}(S)\cap \mathbf{span}(S^{\prime })\neq \emptyset $%
. Then there exist $u\in S$ and $u^{\prime }\in S^{\prime } $ such that $%
I_{u}\cap I_{u^{\prime }}\neq \emptyset $.

First suppose that $i\neq i^{\prime }$. Without loss of generality we assume
that $i^{\prime }<i$. Note that $l_{i,x-1}<r_{u}<l_{i,x}$, since $u\in
V(C_{i,x}^{t})\subseteq U_{i,x}^{\ast }$. Furthermore, since by the
definition of the set $L_{i}$ we have $r_{d_{i-1}}\leq l_{i,x-1}$ and $%
l_{i,x}\leq r_{d_{i}}$, it follows that $r_{d_{i-1}}<r_{u}<r_{d_{i}}$.
Similarly it follows that $r_{d_{i^{\prime }-1}}<r_{u^{\prime
}}<r_{d_{i^{\prime }}}$. Therefore, $r_{u^{\prime }}<r_{d_{i^{\prime }}}\leq
r_{d_{i-1}}<r_{u}$. Furthermore, since $I_{u}\cap I_{u^{\prime }}\neq
\emptyset $, it follows that $l_{u}<r_{u^{\prime }}$. That is, $%
l_{u}<r_{u^{\prime }}<r_{d_{i-1}}<r_{u}$, and thus $r_{d_{i-1}}\in I_{u}$.
This is a contradiction, since $u\in U^{\ast }$ (cf.~the definition of~$%
U^{\ast }$). Therefore, $i^{\prime }=i$.

Now suppose that $x^{\prime }\neq x$. Without loss of generality we assume
that $x^{\prime }<x$. Note that $u\in V(C_{i,x}^{t})\subseteq U_{i,x}^{\ast
\ast }$ and that $u^{\prime }\in V(C_{i,x^{\prime }}^{t^{\prime }})\subseteq
U_{i,x^{\prime }}^{\ast }\subseteq \bigcup\nolimits_{j\in \lbrack
x-1]}U_{i,j}^{\ast }$. Therefore, since $~N(u)\cap \left(
\bigcup\nolimits_{j\in \lbrack x-1]}U_{i,j}^{\ast }\right) =\emptyset $ (cf.
the definition of~$U_{i,x}^{\ast \ast }$), it follows that $u^{\prime
}\notin N(u)$. This is a contradiction, since $I_{u}\cap I_{u^{\prime }}\neq
\emptyset $. Therefore, $x^{\prime }=x$.

Summarizing, the sets $S$ and $S^{\prime }$ are two different connected
components of~$U_{i,x}^{\ast \ast }$. Therefore, there are no vertices $u\in
S$ and $u^{\prime }\in S^{\prime }$ such that $I_{u}\cap I_{u^{\prime }}\neq
\emptyset $, which is a contradiction to our assumption. It follows that $%
\mathbf{span}(S)\cap \mathbf{span}(S^{\prime })\neq \emptyset $, for any two
distinct sets $S,S^{\prime }\in \mathcal{S}_{1}$.
\end{proof}

\medskip

Note that, for every $i,x,t$, the connected component $C_{i,x}^{t}$ of~$%
U_{i,x}^{\ast \ast }$ contains no vertices of~$D$, since by definition $%
U_{i,x}^{\ast \ast }\subseteq U=V(G)\setminus D$. Therefore, since all sets
of~$\mathcal{S}_{1}$ are reducible (by Lemma~\ref%
{all-components-U**-reducible-lem}) and disjoint, we can apply Reduction
Rule~\ref{data-reduction-1-redrule} to the graph~$G$ with respect to the
sets of~$\mathcal{S}_{1}$, by replacing in the interval representation $%
\mathcal{I}$ the intervals $\{I_{v}:v\in S\}$ with the interval $I_{S}=%
\mathbf{span}(S)$ with weight $|S|$, for every $S\in \mathcal{S}_{1}$.
Denote the resulting weighted graph by $G^{\#}=(V^{\#},E^{\#})$ and its
interval representation by $\mathcal{I}^{\#}$. Furthermore denote by $\sigma
^{\#}$ the right-endpoint ordering of~$\mathcal{I}^{\#}$. Then, the next
corollary follows immediately by Theorem~\ref{first-data-reduction-thm}.

\begin{corollary}
\label{G-diesi-correctness-cor}The maximum number of vertices of a path in~$%
G $ is equal to the maximum weight of a path in~$G^{\#}$.
\end{corollary}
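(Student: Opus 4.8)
The plan is to recognize that the corollary is an immediate consequence of Theorem~\ref{first-data-reduction-thm}, so that essentially all that remains is to confirm that the family $\mathcal{S}_1$ defined in Eq.~(\ref{S-1-family-eq}) meets the three hypotheses required by Reduction Rule~\ref{data-reduction-1-redrule}. First I would verify that every member of $\mathcal{S}_1$ is a reducible set; this is precisely the statement of Lemma~\ref{all-components-U**-reducible-lem}. Next I would verify pairwise vertex-disjointness: each member is the vertex set of a connected component $C_{i,x}^{t}$ of some $U_{i,x}^{\ast\ast}$, and distinct such components are disjoint by construction, so that $V(C_{i,x}^{t})\cap V(C_{i',x'}^{t'})=\emptyset$ for distinct index triples. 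Finally I would check that each $S\in\mathcal{S}_1$ avoids $D$: since $C_{i,x}^{t}\subseteq U_{i,x}^{\ast\ast}\subseteq U=V(G)\setminus D$, no member of $\mathcal{S}_1$ contains a vertex of~$D$.

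Having established these preconditions, I would simply invoke Theorem~\ref{first-data-reduction-thm}, with $G^{\#}$ the weighted graph produced by applying Reduction Rule~\ref{data-reduction-1-redrule} to $G$ with respect to $\mathcal{S}_1$. The theorem asserts, for every positive integer $\ell$, that the longest path in $G$ has $\ell$ vertices if and only if the maximum weight of a path in $G^{\#}$ is $\ell$. Reading this equivalence across all $\ell$ identifies the largest number of vertices on a path of~$G$ with the largest path weight in~$G^{\#}$, which is exactly the claimed equality.

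I do not anticipate any genuine obstacle: all the substantive effort has already been expended in Lemma~\ref{all-components-U**-reducible-lem} (reducibility) together with the disjointness observations accompanying the definition of~$\mathcal{S}_1$. The only point deserving a moment's care is to phrase the conclusion so that the $\ell$-for-$\ell$ equivalence of Theorem~\ref{first-data-reduction-thm} is read as an equality of maxima; since in the unweighted graph~$G$ the weight of a path coincides with its number of vertices, this translation is immediate and yields the stated equality rather than a one-sided bound.
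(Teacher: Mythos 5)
Your proposal is correct and matches the paper's own justification essentially verbatim: the paper likewise notes that the sets of $\mathcal{S}_1$ are reducible (Lemma~\ref{all-components-U**-reducible-lem}), pairwise disjoint by construction, and contained in $U=V(G)\setminus D$, and then declares the corollary immediate from Theorem~\ref{first-data-reduction-thm}. Your closing remark about reading the $\ell$-for-$\ell$ equivalence as an equality of maxima is a fair point of care, and nothing further is needed.
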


Define $A=V(G^{\#})\setminus V(G)$, i.e.,~each vertex $v\in A$ corresponds
to an interval $I_{v}=\mathbf{span}(S)$ in the interval representation $%
\mathcal{I}^{\#}$ , where $S\in \mathcal{S}_{1}$. Recall that for every $%
S\in \mathcal{S}_{1}$, we have that $S=V(C_{i,x}^{t})$, where $i\in \lbrack
k+1]$, $x\in \lbrack p_{i}]$, and $t\in \lbrack q(i,x)]$. Thus, in the
remainder of this section we denote $A=\{v_{i,x}^{t}:~i\in \lbrack k+1],\
x\in \lbrack p_{i}],\ t\in \lbrack q(i,x)]\}$. Furthermore, for every $%
v_{i,x}^{t}\in A$ we denote for simplicity the corresponding interval in the
representation $\mathcal{I}^{\#}$ by $I_{i,x}^{t}=I_{v_{i,x}^{t}}$. Recall
that $V(C_{i,x}^{1})<_{\sigma }V(C_{i,x}^{2})<_{\sigma }\dots <_{\sigma
}V(C_{i,x}^{q(i,x)})$ in the graph $G$, and thus $v_{i,x}^{1}<_{\sigma
^{\#}}v_{i,x}^{2}<_{\sigma ^{\#}}\ldots <_{\sigma ^{\#}}v_{i,x}^{q(i,x)}$ in
the graph $G^{\#}$. Furthermore, since $\mathbf{span}(S)\cap \mathbf{span}%
(S^{\prime })=\emptyset $ for any two distinct sets $S,S^{\prime }\in 
\mathcal{S}_{1}$ by Lemma~\ref{span-S-1-disjoint-lem}, the next corollary
follows immediately by Lemma~\ref{lem:first:data:reduction}.

\begin{corollary}
\label{first-reduction-G-diesi-properties-cor}The set $D$ remains a proper
interval deletion set of the weighted interval $G^{\#}=(V^{\#},E^{\#})$.
Furthermore the set $V^{\#}\setminus D$ can be partitioned into the sets $%
A=V(G^{\#})\setminus V(G)$ and $U^{\#}=V^{\#}\setminus (D\cup A)$, such that:

\begin{enumerate}
\item $A$ is an independent set of~$G^{\#}$, and

\item for every $v_{i,x}^{t}\in A$ and every $u\in V^{\#}\setminus
\{v_{i,x}^{t}\}$, we have $I_{u}\nsubseteq I_{i,x}^{t}$.
\end{enumerate}
\end{corollary}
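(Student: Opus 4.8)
The plan is to observe that Corollary~\ref{first-reduction-G-diesi-properties-cor} is an almost immediate consequence of Lemma~\ref{lem:first:data:reduction}, whose hypotheses we need to verify for the specific family $\mathcal{S}_1$ constructed above. First I would recall that $\mathcal{S}_1$ (as defined in Eq.~(\ref{S-1-family-eq})) consists of the vertex sets of the connected components $C_{i,x}^t$ of the sets $U_{i,x}^{\ast\ast}$, and that by construction each such set lies in $U=V(G)\setminus D$, so that $S\cap D=\emptyset$ for every $S\in\mathcal{S}_1$. By Lemma~\ref{all-components-U**-reducible-lem}, every $S\in\mathcal{S}_1$ is reducible, and any two distinct components are vertex-disjoint since the sets $U_{i,x}^{\ast}$ partition $U^{\ast}$ and distinct components of a fixed $U_{i,x}^{\ast\ast}$ are disjoint; hence $\mathcal{S}_1$ is a family of vertex-disjoint reducible sets, each avoiding $D$, which is exactly what Reduction Rule~\ref{data-reduction-1-redrule} requires.

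Next I would check the remaining hypothesis of Lemma~\ref{lem:first:data:reduction}, namely that $\mathbf{span}(S)\cap\mathbf{span}(S^{\prime})=\emptyset$ for any two distinct $S,S^{\prime}\in\mathcal{S}_1$. This is precisely the content of Lemma~\ref{span-S-1-disjoint-lem}, so it may be invoked directly. With all hypotheses of Lemma~\ref{lem:first:data:reduction} in place, I would simply apply that lemma to conclude that $D$ remains a proper interval deletion set of $G^{\#}$, and that $V^{\#}\setminus D$ partitions into $A=V(G^{\#})\setminus V(G)$ and $U^{\#}=V^{\#}\setminus(D\cup A)$ with the two stated properties: $A$ is an independent set, and no interval of any other vertex is properly contained in $I_v$ for $v\in A$. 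The only cosmetic step is to translate the generic Property~2 of Lemma~\ref{lem:first:data:reduction}, phrased as $I_u\nsubseteq I_v$ for $v\in A$, into the indexed notation $I_u\nsubseteq I_{i,x}^t$ used in the statement, which follows since each $v\in A$ is named $v_{i,x}^t$ with $I_{i,x}^t=I_{v_{i,x}^t}$.

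Since the corollary is stated as following ``immediately'' from Lemma~\ref{lem:first:data:reduction}, I would keep the proof to a single short paragraph that names the three ingredients being combined. Concretely: the disjointness and reducibility of the members of $\mathcal{S}_1$ come from Lemma~\ref{all-components-U**-reducible-lem} together with the partition structure of the $U_{i,x}^{\ast}$; the span-disjointness comes from Lemma~\ref{span-S-1-disjoint-lem}; and the conclusion is then read off from Lemma~\ref{lem:first:data:reduction}. I do not anticipate a genuine obstacle here, as all the work has been front-loaded into the preceding lemmas; the main thing to be careful about is that I am entitled to invoke Lemma~\ref{lem:first:data:reduction}, which demands that the spans be pairwise disjoint, and that this is exactly supplied by Lemma~\ref{span-S-1-disjoint-lem} rather than by Reduction Rule~\ref{data-reduction-1-redrule} alone. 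The remaining subtlety is purely notational, matching the abstract partition $V(G^{\#})\setminus D=A\cup U^{\#}$ of Lemma~\ref{lem:first:data:reduction} with the concrete sets $A=\{v_{i,x}^t\}$ and $U^{\#}$ introduced just before the corollary.
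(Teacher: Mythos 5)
Your proposal matches the paper's argument exactly: the paper also obtains the corollary by invoking Lemma~\ref{lem:first:data:reduction}, with the reducibility of the members of $\mathcal{S}_1$ supplied by Lemma~\ref{all-components-U**-reducible-lem}, their pairwise disjointness and avoidance of $D$ by construction, and the span-disjointness hypothesis by Lemma~\ref{span-S-1-disjoint-lem}. The verification of the hypotheses and the notational translation you describe are precisely what the paper relies on, so your proof is correct and essentially identical in approach.
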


In the next theorem we prove that the weighted interval graph $G^{\#}$ and
the vertex subset $A$ (cf.~Corollary~\ref%
{first-reduction-G-diesi-properties-cor}) can be computed in $O(n)$ time.

\begin{theorem}
\label{first-reduction-computation-thm}Let $G=(V,E)$ be an interval graph,
where $|V|=n$. Let $D$ be a proper interval deletion set of~$G$, where $%
D=\{d_{0},d_{1},\ldots ,d_{k},d_{k+1}\}$. Then the graph $%
G^{\#}=(V^{\#},E^{\#})$ and the independent set $A\subseteq V^{\#}$ can be
computed in~$O(n)$ time.
\end{theorem}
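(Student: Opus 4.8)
The plan is to compute all the objects underlying $\mathcal{S}_1$ — namely the set $U^{\ast}$, the partition $\{U_{i,x}^{\ast}\}$, the sets $U_{i,x}^{\ast\ast}$, their connected components $C_{i,x}^{t}$, and finally the spans — by a constant number of left-to-right sweeps over the $2n$ sorted interval endpoints, \emph{never} enumerating an edge of $G$. This is the crucial point: since $\mathcal{I}$ is given with all endpoints sorted and $\sigma$ is given, every adjacency test that the construction requires can be replaced by an $O(1)$ comparison of endpoint positions, which is exactly what keeps the computation within $O(n)$ rather than $O(n+m)$. First I would mark, in the sorted endpoint array, the $2(k+2)$ endpoints belonging to $D$ (these are $L$ and $R$); this is trivially $O(n)$ since $|D|=k+2\le n$. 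By one right-to-left scan I precompute, for each position, the nearest point of $R$ to its right. A vertex $u\in U$ then lies in $U^{\ast}$ exactly when this nearest $R$-point to the right of $l_u$ exceeds $r_u$, an $O(1)$ test per vertex, giving $U^{\ast}$ in $O(n)$. During the same pass every $u\in U^{\ast}$ is assigned to its pair $(i,x)$, i.e.\ to $U_{i,x}^{\ast}$, by reading off between which two consecutive $D$-endpoints its right endpoint $r_u$ falls; by Lemma~\ref{sum-pi-lem} there are only $2(k+1)$ such sub-intervals, so this is pure bookkeeping within the sweep.

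The key reduction is the computation of $U_{i,x}^{\ast\ast}$. The blocks are contiguous ranges in $\sigma$, so within a fixed block $i$ the sweep visits vertices in order of increasing right endpoint. For $u\in U_{i,x}^{\ast}$ we have $r_u>l_{i,x-1}\ge r_{d_{i-1}}$, and since $I_u$ contains no point of $R$ this forces $l_u>r_{d_{i-1}}$. Any $v\in\bigcup_{j<x}U_{i,j}^{\ast}$ has $r_v<l_{i,x-1}<r_u$, hence $v<_{\sigma}u$, and $uv\in E(G)$ if and only if $l_u<r_v$. Consequently $u\notin U_{i,x}^{\ast\ast}$ precisely when $l_u<M_{i,x-1}$, where $M_{i,x-1}:=\max\{r_v:v\in\bigcup_{j<x}U_{i,j}^{\ast}\}$. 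Because right endpoints increase along the sweep, $M_{i,x-1}$ is simply a running maximum of $U^{\ast}$-right-endpoints that I snapshot at each sub-interval boundary (resetting at block boundaries); membership in $U_{i,x}^{\ast\ast}$ is then decided by the single comparison $l_u>M_{i,x-1}$, with an empty union treated as $-\infty$. This collapses the quantifier over $N(u)$ in the definition of $U_{i,x}^{\ast\ast}$ into one scalar comparison, so all sets $U_{i,x}^{\ast\ast}$ are obtained in $O(n)$ total.

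Next I would extract the components $C_{i,x}^{1},\dots,C_{i,x}^{q(i,x)}$ of each $G[U_{i,x}^{\ast\ast}]$. Since $U_{i,x}^{\ast\ast}\subseteq U_{i,x}^{\ast}$ and $\mathcal{I}[U_{i,x}^{\ast}]$ is a proper interval representation by Lemma~\ref{lem:proper.repres}, I order the vertices of $U_{i,x}^{\ast\ast}$ by $\sigma$ (equivalently by left endpoint, via Observation~\ref{obs:order.in.uirepresentation}) and begin a new component exactly at each consecutive pair $u<_{\sigma}u'$ with $r_u<l_{u'}$: in a proper representation no edge can bridge such a gap (if $w\le_{\sigma}u<_{\sigma}u'\le_{\sigma}w'$ then $r_w\le r_u<l_{u'}\le l_{w'}$, so $ww'\notin E$), so these runs are precisely the components. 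Within each run the first vertex carries $\min l$ and the last carries $\max r$, so $\mathbf{span}(C_{i,x}^{t})$ and $|V(C_{i,x}^{t})|$ are read off in $O(1)$ during the scan. Finally, to assemble $G^{\#}$ and $A=\{v_{i,x}^{t}\}$, I keep the two extreme endpoints of each component's span and delete its interior endpoints; since the spans are pairwise disjoint (Lemma~\ref{span-S-1-disjoint-lem}) this is one relabelling pass over the sorted endpoint list, producing $\mathcal{I}^{\#}$, the ordering $\sigma^{\#}$, the weights (namely $|V(C_{i,x}^{t})|$ on each new vertex and $1$ elsewhere), and the set $A$ (independent by Corollary~\ref{first-reduction-G-diesi-properties-cor}), all in $O(n)$. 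Summing the passes gives the claimed $O(n)$ bound. The main obstacle throughout is precisely the insistence on $O(n)$ rather than $O(n+m)$: the argument goes through only because the structural facts in Lemmas~\ref{lem:proper.repres} and~\ref{sum-pi-lem} let every neighborhood test (both for $U_{i,x}^{\ast\ast}$ and for connectivity) collapse into a single endpoint comparison.
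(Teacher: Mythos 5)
Your proposal is correct and follows essentially the same route as the paper's proof: a constant number of left-to-right sweeps over the $2n$ sorted endpoints, reducing the $U^{\ast}$-membership test and the $U_{i,x}^{\ast\ast}$-membership test to single endpoint comparisons (the paper keeps exactly your running maximum $M_{i,x-1}$ as a prefix maximum of the values $p_{i,x}=\max\{r_u:u\in U_{i,x}^{\ast}\}$), and extracting the components $C_{i,x}^{t}$ by gap detection in the proper representation via Lemma~\ref{lem:proper.repres} and Observation~\ref{obs:consecutive.joined.by.edge}. The only cosmetic difference is that the paper identifies $U^{\ast}$ by labelling each endpoint with the nearest $r_{d_i}$ to its left and comparing $label(l_u)$ with $label(r_u)$, which is equivalent to your nearest-$R$-point-to-the-right test.
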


\begin{proof}
Denote by $\mathcal{I}$ the given interval representation of~$G$. Recall
that the endpoints of the intervals in~$\mathcal{I}$ are given sorted
increasingly, e.g.,~in a linked list $M$. The sets $L$ and $R$ can be
computed in~$O(k)$ time, since they store the left and the right endpoints
of the intervals of the set $D$, where $|D|=k$. Furthermore, the set $%
U=V\setminus D$ can be clearly computed in~$O(n)$ time.

The set $U^{\ast }$ and the sets $L_{i}$, $i\in \lbrack k+1]$, can be
efficiently computed as follows. First we visit all endpoints of the
intervals in~$\mathcal{I}$ (in increasing order). For every endpoint $p\in
\{l_{u},r_{u}:u\in U\}\cup \{l_{d_{1}},l_{d_{2}},\ldots ,l_{d_{k+1}}\}$ that
we visit, such that $r_{d_{i}}<p<r_{d_{i+1}}$, where $r_{d_{i}},r_{d_{i+1}}%
\in D$, we add to $p$ the label $label(p)=d_{i}$. Initially, we set $%
L_{i}=\{r_{d_{i-1}},r_{d_{i}}\}$, for every $i\in \lbrack k+1]$. Then we
iterate for every $p\in \{l_{d_{1}},l_{d_{2}},\ldots ,l_{d_{k+1}}\}$. If $%
label(p)=d_{i}$ then we add $p$ to the set $L_{i}$. Note that, during this
computation, we can store the elements of each set $L_{i}$ in increasing
order, using a linked list. Furthermore, in the same time we add to every
element $p$ of the set $L_{i}$ a pointer to the position of~$p$ in the
linked list $M$, which keeps the endpoints of the intervals in~$\mathcal{I}$
in increasing order. To compute the set $U^{\ast }$ we iterate for every $%
u\in U$ and we compare $label(l_{u})$ with $label(r_{u})$. If $%
label(l_{u})=label(r_{u})$ then we set $u\in U^{\ast }$, otherwise we set $%
u\notin U^{\ast }$. Similarly, during this computation we can store the
elements of the set $U^{\ast }$ in increasing order (according to the order
of their right endpoints in the linked list $M$). Since the endpoints in~$%
\mathcal{I}$ are assumed to be sorted in increasing order, the computation
of~$U^{\ast }$ and of all sets~$L_{i}$, $i\in \lbrack k+1]$, can be done in
total in~$O(n)$ time.

Note that there are in total $\sum_{i=1}^{k+1}p_{i}=2(k+1)$ different sets $%
U_{i,x}^{\ast }$, where $i\in \lbrack k+1]$ and $x\in \lbrack p_{i}]$ (cf.
Lemma~\ref{sum-pi-lem}). All these sets $U_{i,x}^{\ast }$ can be efficiently
computed as follows, using the sets $U^{\ast }$ and $L_{1},L_{2},\ldots
,L_{k+1}$, which we have already computed, as follows. For every $i\in
\lbrack k+1]$ we iterate over the points $\{l_{i,0},l_{i,1},\dots
,l_{i,p_{i}}\}$ of the set~$L_{i}$ in increasing order. For every $x\in
\lbrack p_{i}]$ we visit sequentially in the linked list $M$ (which keeps
the endpoints of~$\mathcal{I}$ in increasing order) from the point $%
l_{i,x-1} $ until the point $l_{i,x}$. For every point $p$ between $%
l_{i,x-1} $ and $l_{i,x}$, we check whether $p=r_{u}$ and $u\in U^{\ast }$,
and if this is the case then we add vertex $u$ to the set~$U_{i,x}^{\ast }$.
This check on point $p$ can be done on $O(1)$ time, e.g.,~by checking
whether $label(l_{u})=label(r_{u})$, as we described above. Thus, as we scan
once through the linked list $M$, the computation of all sets~$U_{i,x}^{\ast
}$ can be done in~$O(n)$ time in total. Note that, during this computation
we can store the elements of each of the sets $U_{i,x}^{\ast }$ in
increasing order (according to the order of their right endpoints in the
linked list $M$).

The computation of the sets $U_{i,x}^{\ast \ast }$, where $i\in \lbrack k+1]$
and $x\in \lbrack p_{i}]$, can be done efficiently as follows. First, we
scan once through each of the $O(k)$ sets $U_{i,x}^{\ast }\neq \emptyset $
and, for every such set, we compute its rightmost right endpoint $p_{i,x}$
in the interval representation $\mathcal{I}$, i.e.,~$p_{i,x}=\max
\{r_{u}:u\in U_{i,x}^{\ast }\}$. Furthermore define $p_{i,0}=r_{d_{d-1}}$,
for every $i\in \lbrack k+1]$. The computation of all such points $p_{i,x}$
can be done in~$O(n)$ time in total. Now, for every fixed $i\in \lbrack k+1]$%
, the computation of the sets $U_{i,1}^{\ast \ast },U_{i,2}^{\ast \ast
},\ldots ,U_{i,p_{i}}^{\ast \ast }$ can be done as follows. Initially, set $%
p=p_{i,0}$. We iterate for every $x=1,2,\ldots ,p_{i}$ (in this order). For
every such $x$, we first define $p=\max \{p,p_{i,x-1}\}$. Then, for every
vertex $u\in U_{i,x}^{\ast }$ we check whether $p<l_{u}$, and if it is the
case then we add vertex $u$ to the set $U_{i,x}^{\ast \ast }$. It can be
easily checked that this process eventually correctly computes the sets $%
U_{i,x}^{\ast \ast }$ (cf.~the definition of~$U_{i,x}^{\ast \ast }$). Since
every two sets $U_{i,x}^{\ast },U_{i^{\prime },x^{\prime }}^{\ast }$ are
disjoint, all the sets $U_{i,x}^{\ast \ast }$, where $i\in \lbrack k+1]$ and 
$x\in \lbrack p_{i}]$, can be done by these computations in~$O(n)$ time in
total. Moreover, in the same time we can store the elements of each of the
sets $U_{i,x}^{\ast \ast }$ in increasing order (according to the order of
their right endpoints in the linked list $M$).

Let now $i\in \lbrack k+1]$ and $x\in \lbrack p_{i}]$. The computation of
the connected components $C_{i,x}^{1}$, $C_{i,x}^{2},\dots ,C_{i,x}^{q(i,x)}$
of~$U_{i,x}^{\ast \ast }$ can be done efficiently as follows. We visit all
vertices of~$U_{i,x}^{\ast \ast }$ in increasing order. For every such
vertex $u$, we compare its left endpoint $l_{u}$ with the right endpoint $%
r_{u^{\prime }}$ of the previous vertex $u^{\prime }$ of~$U_{i,x}^{\ast \ast
}$. If $l_{u}<r_{u^{\prime }}$ then $u$ belongs to the same connected
component $C_{i,x}^{t}$ where $u^{\prime }$ belongs, otherwise $u$ belongs
to the next connected component $C_{i,x}^{t+1}$. The correctness of this
computation of~$C_{i,x}^{1}$, $C_{i,x}^{2},\dots ,C_{i,x}^{q(i,x)}$ follows
from the fact that the interval representation $\mathcal{I}[U_{i,x}^{\ast }]$
of~$G[U_{i,x}^{\ast }]$ is proper by Lemma~\ref{lem:proper.repres} and by
Observation~\ref{obs:consecutive.joined.by.edge}. Since we visit every
vertex of each $U_{i,x}^{\ast \ast }$ once, where $i\in \lbrack k+1]$ and $%
x\in \lbrack p_{i}]$, the computation of all connected components $%
C_{i,x}^{t}$, where $i\in \lbrack k+1]$, $x\in \lbrack p_{i}]$, and $t\in
\lbrack q(i,x)]$, can be done in~$O(n)$ time in total. Note that, in the
same time we can also compute the interval $\mathbf{span}(C_{i,x}^{t})$ for
every such component $C_{i,x}^{t}$, by just keeping track of the left
endpoint $l_{u}$ (resp.~right endpoint $r_{u^{\prime }}$) of the leftmost
vertex $u$ (resp.~of the rightmost vertex~$u^{\prime }$) in~$C_{i,x}^{t}$.

Summarizing, we can compute in~$O(n)$ time the family $\mathcal{S}_{1}$\
that contains all vertex sets $S=V(C_{i,x}^{t})$, where $i\in \lbrack k+1]$, 
$x\in \lbrack p_{i}]$, and $t\in \lbrack q(i,x)]$, cf.~Eq.~(\ref%
{S-1-family-eq}). Moreover, in the same time we can also compute the
intervals $I_{S}=\mathbf{span}(S)$, where $S\in \mathcal{S}_{1}$. Then the
interval representation $\mathcal{I}^{\#}$ can be computed from $\mathcal{I}$
in~$O(n)$ time, by replacing for every $S\in \mathcal{S}_{1}$ the intervals $%
\{I_{v}:v\in S\}$ with the interval $I_{S}=\mathbf{span}(S)$. Note that
these intervals $\{\mathbf{span}(S):S\in \mathcal{S}_{1}\}$ are exactly the
intervals of the vertices in the independent set $A$ of~$G^{\#}$. Therefore,
the sets $A\subseteq V^{\#}$ and $U^{\#}=V^{\#}\setminus (D\cup A)$ can be
computed in~$O(n)$ time.
\end{proof}

\subsection{The graph $\widehat{G}$\label{graph-G-hat-subsec}}

Consider the weighted interval graph $G^{\#}=(V^{\#},E^{\#})$ with the
interval representation $\mathcal{I}^{\#}$ and the right-endpoint ordering $%
\sigma ^{\#}$ that we constructed in Section~\ref{graph-G-diesi-subsec}.
Recall that, by Corollary~\ref{first-reduction-G-diesi-properties-cor}, $%
D=\{d_{0},d_{1},\ldots ,d_{k},d_{k+1}\}\subseteq V^{\#}$ is a proper
interval deletion set of~$G^{\#}$ and that the vertices of~$V^{\#}\setminus D
$ are partitioned into the independent set $A=\{v_{i,x}^{t}:~i\in \lbrack
k+1],\ x\in \lbrack p_{i}],\ t\in \lbrack q(i,x)]\}$ and the set $%
U^{\#}=V^{\#}\setminus (D\cup A)$. Recall that $v_{i,x}^{1}<_{\sigma
^{\#}}v_{i,x}^{2}<_{\sigma ^{\#}}\ldots <_{\sigma ^{\#}}v_{i,x}^{q(i,x)}$.
For every vertex $v_{i,x}^{t}\in A$, the interval of~$v_{i,x}^{t}$ in the
representation $\mathcal{I}^{\#}$ is denoted by $I_{i,x}^{t}$. 
We define the set $T$ of endpoints in the representation $\mathcal{I}^{\#}$
as

\begin{equation*}
T=R\cup L\cup \bigcup_{i,x}\left\{
l_{I_{i,x}^{1}},r_{I_{i,x}^{1}},l_{I_{i,x}^{2}},r_{I_{i,x}^{2}},l_{I_{i,x}^{q(i,x)-1}},r_{I_{i,x}^{q(i,x)-1}},l_{I_{i,x}^{q(i,x)}},r_{I_{i,x}^{q(i,x)}}\right\} .
\end{equation*}%
Note that $|T|\leq |R|+|L|+8\sum_{i=1}^{k+1}p_{i}$, and thus Lemma~\ref%
{sum-pi-lem} implies that $|T|\leq 18k+16$. We denote $T=\{t_{1},t_{2},\dots
,t_{|T|}\}$, where $t_{1}<t_{2}<\dots <t_{|T|}$. For every $1\leq j\leq
i\leq |T|$ we define 
\begin{equation}
U_{ji}=\{u\in U^{\#}:t_{j-1}<l_{u}<t_{j}\text{ and }t_{i-1}<r_{u}<t_{i}\}.
\label{Uji-eq}
\end{equation}%
Note that $\{U_{ji}:1\leq j\leq i\leq |T|\}$ provides a partition of~$U^{\#}$%
. As the next lemma shows, it suffices to consider in the following only the
sets $U_{ji}$ such that $j\neq i$.

\begin{lemma}
\label{lem:uii.is.empty} For every $i\in [k+1]$, $U_{ii}=\emptyset$.
\end{lemma}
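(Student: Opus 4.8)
The plan is to argue by contradiction, showing that any vertex $u\in U_{ii}$ would be a vertex of $U$ whose interval sits strictly inside a gap of $R\cup L$, which by Observation~\ref{U-star-star-between-consecutive-R-or-L-obs} forces $u$ into some $U_{i',x'}^{\ast\ast}$ --- but these are precisely the vertices absorbed into $A$ by the first data reduction, and so they do not survive in $U^{\#}$. First I would record the membership condition together with the structure of $U^{\#}$. By Eq.~(\ref{Uji-eq}), a vertex $u\in U_{ii}$ satisfies $t_{i-1}<l_u<t_i$ and $t_{i-1}<r_u<t_i$, so the whole interval $I_u=[l_u,r_u]$ lies strictly inside the open interval $(t_{i-1},t_i)$, which by the definition of $T=\{t_1,t_2,\dots,t_{|T|}\}$ contains no point of $T$. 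On the other hand, since $A=V(G^{\#})\setminus V(G)$ consists exactly of the vertices that replace the reducible sets of $\mathcal{S}_1$, and since $\bigcup_{S\in\mathcal{S}_1}S=\bigcup_{i,x}U_{i,x}^{\ast\ast}$ (the connected components $C_{i,x}^t$ partition each $U_{i,x}^{\ast\ast}$), we have
\[
U^{\#}=V^{\#}\setminus(D\cup A)=U\setminus\bigcup_{i,x}U_{i,x}^{\ast\ast}.
\]
In particular $u\in U$ and $u\notin U_{i',x'}^{\ast\ast}$ for every $i',x'$.

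Next I would translate the ``$T$-gap'' condition into the hypothesis of Observation~\ref{U-star-star-between-consecutive-R-or-L-obs}, namely that $I_u$ is strictly contained between two consecutive points of $R\cup L$. Since $R\cup L\subseteq T$, the gap $(t_{i-1},t_i)$ contains no point of $R\cup L$ either. Using the two dummy vertices $d_0,d_{k+1}\in D$, whose endpoints $r_{d_0},l_{d_{k+1}}\in R\cup L$ bound all other interval endpoints (we have $r_{d_0}<l_u$ and $r_u<l_{d_{k+1}}$ because $d_0,d_{k+1}$ are isolated and extremal in $\sigma$), I can let $a$ be the largest point of $R\cup L$ below $l_u$ and $b$ the smallest point of $R\cup L$ above $r_u$; both exist. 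These points are consecutive in $R\cup L$: any point of $R\cup L$ strictly between $a$ and $b$ would lie in $[l_u,r_u]\subseteq(t_{i-1},t_i)$, contradicting that this gap is free of $R\cup L$ points. Hence $a<l_u<r_u<b$, so $I_u$ is strictly contained between the consecutive points $a,b$ of $R\cup L$.

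Finally, applying Observation~\ref{U-star-star-between-consecutive-R-or-L-obs} to $u\in U$ yields $u\in U_{i',x'}^{\ast\ast}$ for some $i'\in[k+1]$ and $x'\in[p_{i'}]$, which contradicts $u\in U^{\#}=U\setminus\bigcup_{i,x}U_{i,x}^{\ast\ast}$. Therefore $U_{ii}=\emptyset$, and I note that the same argument in fact works for every valid index, not only $i\in[k+1]$. The only genuinely delicate step is the consecutiveness argument of the middle paragraph --- correctly exploiting that a $T$-gap is even finer than an $R\cup L$-gap, and that the dummy vertices supply the bracketing endpoints $a,b$; everything else is bookkeeping from the definitions of $T$, $U^{\#}$, and $\mathcal{S}_1$.
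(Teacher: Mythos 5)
Your proposal is correct and follows essentially the same route as the paper's proof: reduce membership in $U_{ii}$ to $I_u$ being strictly contained between two consecutive points of $R\cup L$, invoke Observation~\ref{U-star-star-between-consecutive-R-or-L-obs} to place $u$ in some $U_{i',x'}^{\ast\ast}$, and contradict the fact that all such vertices were absorbed into $A$ by the first data reduction. The only difference is that you spell out (via the dummy vertices and the maximal/minimal bracketing points $a,b$) the step the paper states in one line, namely that a $T$-gap refines an $(R\cup L)$-gap.
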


\begin{proof}
Let $u\in U_{ii}$, for some $i\in |T|$. Since $U_{ii}\subseteq
U^{\#}=V^{\#}\setminus (D\cup A)$, note that vertex $u$ exists also in the
original (unweighted) interval graph $G$. Furthermore, since $I_{u}$ is
strictly contained between two consecutive points of~$T$, it is also
strictly contained between two consecutive points of~$R\cup L\subseteq T$.
Therefore, for some $i\in \lbrack k+1]$ and $x\in \lbrack p_{i}]$, $u\in
U_{i,x}^{\ast \ast }$ by Observation~\ref%
{U-star-star-between-consecutive-R-or-L-obs}. However, all vertices of~$%
\bigcup_{i,x}U_{i,x}^{\ast \ast }$ in the initial interval graph $G$ have
been replaced by the vertex set $A$ in the weighted interval graph $G^{\#}$.
This is a contradiction, since $u\in U^{\#}=V^{\#}\setminus (D\cup A)$. Thus 
$U_{ii}=\emptyset $.
\end{proof}

\medskip

We are now ready to define the family $\mathcal{S}_{2}$ of vertex subsets of 
$U^{\#}$ as follows:%
\begin{equation}
\mathcal{S}_{2}=\{U_{ji}:1\leq j<i\leq |T|\}.  \label{S-2-family-eq}
\end{equation}

\begin{lemma}
\label{Uji-weakly-reducible-lem}Every set $S\in \mathcal{S}_{2}$ is weakly
reducible in the graph $G^{\#}=(V^{\#},E^{\#})$.
\end{lemma}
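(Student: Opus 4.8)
The plan is to verify the two defining conditions of a weakly reducible set (Definition~\ref{def:weakly.reducible.set}) for an arbitrary $S=U_{ji}\in\mathcal{S}_2$, where $1\le j<i\le|T|$, exploiting throughout that, since $j<i$, every $u\in U_{ji}$ satisfies $l_u<t_j\le t_{i-1}<r_u$; thus the whole segment $[t_j,t_{i-1}]$ lies in the interior of $I_u$. This ``common core'' drives the whole argument. First I would establish Condition~1. Connectivity and the clique property are immediate: all intervals of $S$ contain the point $t_j$, hence pairwise intersect, so $G^\#[S]$ is a clique. For properness I argue by contradiction: if $I_v\subseteq I_u$ for distinct $u,v\in S$, then since $\mathcal{I}^\#$ coincides with the (semi-proper, by Theorem~\ref{interval-representation-preprocessing-thm}) representation $\mathcal{I}$ on $U^\#\supseteq S$, the pair $(u,v)$ sits in an induced claw $\{u,v,z_1(u),z_2(u)\}$ of $G$ with $z_1(u)<_\sigma v<_\sigma z_2(u)$ and $z_1(u),z_2(u)\in N(u)\setminus N(v)$. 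As $u,v\in V(G)\setminus D$ and $G\setminus D$ is claw-free, at least one of $z_1(u),z_2(u)$ lies in $D$. Its relevant endpoint ($r_{z_1(u)}\in R$ or $l_{z_2(u)}\in L$) then lies in $R\cup L\subseteq T$ but, by the interval inequalities $l_u<r_{z_1(u)}<l_v$ (resp.\ $r_v<l_{z_2(u)}<r_u$), strictly between the consecutive $T$-points $t_{j-1},t_j$ bounding the left endpoints of $S$ (resp.\ $t_{i-1},t_i$ bounding the right endpoints), contradicting consecutiveness in $T$. Hence $\mathcal{I}^\#[S]$ is proper and Condition~1 holds.

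For Condition~2, let $v'\in V^\#$ and $u\in S$ with $I_{v'}\subseteq I_u$; I must show $S\subseteq N(v')$. Since every $s\in S$ contains the core $[t_j,t_{i-1}]$, it suffices to prove $I_{v'}\cap[t_j,t_{i-1}]\neq\emptyset$. Suppose not. Then, using $I_{v'}\subseteq I_u$ together with $l_u\in(t_{j-1},t_j)$ and $r_u\in(t_{i-1},t_i)$, the interval $I_{v'}$ must lie strictly inside one of the open gaps $(t_{j-1},t_j)$ or $(t_{i-1},t_i)$, i.e.\ strictly between two consecutive points of $T$. I would rule this out by cases on $v'$. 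If $v'\in D$, its endpoints lie in $R\cup L\subseteq T$, which is impossible inside an open $T$-gap. If $v'\in U^\#$, then $v'\in V(G)\setminus D$ with $I_{v'}$ strictly between consecutive points of $R\cup L$, so Observation~\ref{U-star-star-between-consecutive-R-or-L-obs} places $v'$ in some $U^{**}_{i,x}$ and hence among the vertices replaced by $A$, contradicting $v'\in U^\#$.

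The remaining, and hardest, case is $v'\in A$, i.e.\ $I_{v'}=\mathbf{span}(C^{t'}_{i',x'})$; this is exactly the case for which $T$ was enlarged to record the endpoints of the \emph{first two and last two} spans $I^1,I^2,I^{q-1},I^q$ of each group $(i',x')$. If $t'\in\{1,2,q-1,q\}$ then both endpoints of $I_{v'}$ belong to $T$, which again cannot occur inside an open $T$-gap. If $t'$ is a middle index, then $I_{v'}$ is flanked by the recorded endpoints $r_{I^2},l_{I^{q-1}}\in T$; combining this with the claw obtained by applying semi-properness to $(u,w)$ for a vertex $w$ of the component (so $I_w\subseteq I_{v'}\subseteq I_u$ forces a vertex of $D$ into the configuration $\{u,w,z_1(u),z_2(u)\}$) should drive a $D$-endpoint, or a recorded span-endpoint, strictly into the forbidden open $T$-gap. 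I expect this middle-index $A$-subcase to be the main obstacle: one must carefully track \emph{which} of $z_1(u),z_2(u)$ lies in $D$ and play it off against the $T$-recorded flanking span-endpoints and the pinned position $l_u\in(t_{j-1},t_j)$ of $u$, because a single semi-proper claw on $u$ controls only one side of $I_{v'}$. Once $I_{v'}\cap[t_j,t_{i-1}]\neq\emptyset$ is secured in all cases, Condition~2 follows, completing the proof that $S$ is weakly reducible.
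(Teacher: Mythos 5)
Your Condition~1 argument coincides with the paper's (semi-properness from Theorem~\ref{interval-representation-preprocessing-thm} gives the claw $\{u,v,z_1(u),z_2(u)\}$, one leaf must lie in $D$, and its endpoint is a $T$-point trapped strictly between $l_u$ and $l_v$ or between $r_v$ and $r_u$, i.e.\ inside an open $T$-gap), and your reduction of Condition~2 to ``$I_{v'}$ meets the core $[t_j,t_{i-1}]$'' is a correct and slightly cleaner packaging of the paper's case distinction. The subcases $v'\in D$, $v'\in U^{\#}$, and $v'\in A$ with $t'\in\{1,2,q-1,q\}$ are all handled correctly. However, the middle-index $A$-subcase --- which you yourself flag as the main obstacle --- is a genuine gap, and the tool you propose for it does not suffice. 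Applying semi-properness to a pair $(u,w)$ with $w\in C^{t'}_{i',x'}$ and $I_w\subseteq I_{v'}\subseteq I_u$ only forces \emph{at least one} of $z_1(u),z_2(u)$ into $D$. If, say, $I_{v'}$ sits in the left gap $(t_{j-1},t_j)$ but the $D$-leaf happens to be $z_2(u)$, then all you get is $r_w<l_{z_2(u)}<r_u$; since $r_w<t_j$ while $r_u\in(t_{i-1},t_i)$, the point $l_{z_2(u)}\in L\subseteq T$ lands in a range that legitimately contains the $T$-points $t_j,\dots,t_{i-1}$, so no contradiction arises, and the surviving leaf $z_1(u)\notin D$ gives you no claw in $G\setminus D$. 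There is no evident way to ``play off'' the flanking span-endpoints against this, because those endpoints $r_{I^2},l_{I^{q-1}}$ may well coincide with (or lie beyond) the gap boundaries $t_{j-1},t_j$ themselves.

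The idea you are missing is the paper's, and it is of a different nature: it derives a contradiction about $u$ rather than about a $T$-point. Since $3\le t'\le q(i',x')-2$, there are two further components $C^1_{i',x'},C^2_{i',x'}$ to the left of $C^{t'}_{i',x'}$ and two further components $C^{q-1}_{i',x'},C^{q}_{i',x'}$ to its right, all pairwise independent. If $l_u<l_{I^1_{i',x'}}$, then $I_u$ contains the spans of $C^1,C^2,C^{t'}$, so picking one original vertex from each yields an induced $K_{1,3}$ centered at $u$ in $G\setminus D$ --- a contradiction; symmetrically $r_{I^{q}_{i',x'}}>r_u$ is forced. Hence $I_u\subseteq\mathbf{span}(U^{\ast\ast}_{i',x'})$, which places $u$ itself in $U^{\ast\ast}_{i',x'}$ and therefore among the vertices replaced by $A$, contradicting $u\in U_{ji}\subseteq U^{\#}$. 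This is exactly why $T$ records the endpoints of the first \emph{two} and last \emph{two} spans of each group; without this three-component claw argument your proof of Condition~2 is incomplete.
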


\begin{proof}
Consider a set $S\in \mathcal{S}_{2}$. Then $S=U_{ji}$, for some $1\leq
j<i\leq |T|$. In the first part of the proof we show by contradiction that $%
\mathcal{I}^{\#}[U_{ji}]$ is a proper interval representation of~$%
G^{\#}[U_{ji}]$. Assume otherwise that there exist two vertices $v,u\in
U_{ji}$ such that $I_{v}\subseteq I_{u}$. Since the intervals for the
vertices of~$U_{ji}$ are the same in both interval representations $\mathcal{%
I}$ and $\mathcal{I}^{\#}$, it follows that $I_{v}\subseteq I_{u}$ in the
representation $\mathcal{I}$ of the initial (unweighted) interval graph $G$.
Then, from the preprocessing of Theorem~\ref%
{interval-representation-preprocessing-thm} there exist two vertices $%
z,z^{\prime }\in V(G)$ such that $z<_{\sigma }v<_{\sigma }z^{\prime }$,
where $z,z^{\prime }\in N(u)\setminus N(v)$ in the graph $G$.

First suppose that $z,z^{\prime }\notin D$. Then the vertices $\{z,z^{\prime
},u,v\}$ induce a $K_{1,3}$ in~$G\setminus D$, which is a contradiction to
the assumption that $D$ is a proper interval deletion set of~$G$~\cite%
{Roberts69}.

Now suppose that $z\in D$. Then $r_{z}\in R\subseteq T$. If $r_{z}<l_{u}$
then $zu\notin E(G)$, which is a contradiction. Thus $l_{u}<r_{z}$.
Moreover, since $zv\notin E(G)$ and $z<_{\sigma }v$, it follows that $%
r_{z}<l_{v}$. That is, $l_{u}<r_{z}<l_{v}$, where $r_{z}\in T$. This is a
contradiction to the assumption that both $v$ and $u$ belong to the same set 
$U_{ji}$.

Finally suppose that $z^{\prime }\in D$. Then $l_{z^{\prime }}\in L\subseteq
T$. If $r_{u}<l_{z^{\prime }}$ then $z^{\prime }u\notin E(G)$, which is a
contradiction. Thus $l_{z^{\prime }}<r_{u}$. Moreover, since $z^{\prime
}v\notin E(G)$ and $v<_{\sigma }z^{\prime }$, it follows that $%
r_{v}<l_{z^{\prime }}$. That is, $r_{v}<l_{z^{\prime }}<r_{u}$, where $%
l_{z^{\prime }}\in T$. This is a contradiction to the assumption that both $%
v $ and $u$ belong to the same set $U_{ji}$. This proves Condition~1 of
Definition~\ref{def:weakly.reducible.set}.

In the second part of the proof we show by contradiction that for every $%
u\in U_{ji}$ and every $v\in V^{\#}$, if $I_{v}\subseteq I_{u}$, then $%
U_{ji}\subseteq N(v)$ in the graph $G^{\#}$. Let $u\in U_{ji}$ and $v\in
V^{\#}$ such that $I_{v}\subseteq I_{u}$. Assume that there exists a vertex $%
u^{\prime }\in U_{ji}$ such that $u^{\prime }v\notin E^{\#}$. First suppose
that $I_{v}\cap T\neq \emptyset $, i.e.,~$l_{v}\leq t_{0}\leq r_{v}$ for
some $t_{0}\in T$. Let $v<_{\sigma ^{\#}}u^{\prime }$. Then, since $%
u^{\prime }v\notin E^{\#}$ by assumption, it follows that $%
r_{v}<l_{u^{\prime }}$. Furthermore $l_{u}<l_{v}$, since $I_{v}\subseteq
I_{u}$. Therefore, $l_{u}<l_{v}\leq t_{0}\leq r_{v}<l_{u^{\prime }}$, i.e.,~$%
l_{u}<t_{0}<l_{u^{\prime }}$, where $t_{0}\in T$. This is a contradiction to
the assumption that both $u$ and $u^{\prime }$ belong to the same set $%
U_{ji} $. Let $u^{\prime }<_{\sigma ^{\#}}v$. Then, since $u^{\prime
}v\notin E^{\#} $ by assumption, it follows that $r_{u^{\prime }}<l_{v}$.
Furthermore $r_{v}<r_{u} $, since $I_{v}\subseteq I_{u}$. Therefore, $%
r_{u^{\prime }}<l_{v}\leq t_{0}\leq r_{v}<r_{u}$, i.e.,~$r_{u^{\prime
}}<t_{0}<r_{u}$, where $t_{0}\in T $. This is a contradiction to the
assumption that both $u$ and $u^{\prime }$ belong to the same set $U_{ji}$.

Now suppose that $I_{v}\cap T=\emptyset $. If $v\in D$ then both its
endpoints belong to $R\cup T\subseteq T$, and thus $I_{v}\cap T\neq
\emptyset $, which is a contradiction. Thus $v\in V^{\#}\setminus D=A\cup
U^{\#}$. Let $v\in U^{\#}$, i.e.,~$v\in U_{j^{\prime }i^{\prime }}$, for
some $1\leq j^{\prime }\leq i^{\prime }\leq |T|$. Note that $j^{\prime }\neq
i^{\prime }$ by Lemma~\ref{lem:uii.is.empty}, and thus $j^{\prime
}<i^{\prime }$. Thus, it follows by equation (\ref{Uji-eq}) that $%
l_{v}<t_{j^{\prime }}\leq t_{i^{\prime }-1}<r_{v}$, i.e.,~$I_{v}\cap T\neq
\emptyset $, which is a contradiction. Therefore, $v\in U^{\#}$, and thus $%
v\in A$. If $v\in
\bigcup_{i,x}\{v_{i,x}^{1},v_{i,x}^{2},v_{i,x}^{q(i,x)-1},v_{i,x}^{q(i,x)}\}$%
, then both its endpoints belong to $T$ (by the definition of the set $T$),
i.e.,~$I_{v}\cap T\neq \emptyset $, which is a contradiction.

Therefore $v=v_{i,x}^{h}$, for some $i\in \lbrack k+1]$, $x\in \lbrack
p_{i}] $, and $3\leq h\leq q(i,x)-2$. Recall that $v_{i,x}^{1}<_{\sigma
^{\#}}v_{i,x}^{2}<_{\sigma ^{\#}}v_{i,x}^{h}<_{\sigma
^{\#}}v_{i,x}^{q(i,x)-1}<_{\sigma ^{\#}}v_{i,x}^{q(i,x)}$. Furthermore
recall that the intervals $\{I_{z}:z\in U_{i,x}^{\ast \ast }\}$ in the
interval representation $\mathcal{I}$ of the initial graph $G$ have been
replaced by the intervals $\{I_{i,x}^{h}:1\leq h\leq q(i,x)\}$ in the
interval representation $\mathcal{I}^{\#}$ of the graph $G^{\#}$. Suppose
that $l_{u}<l_{v_{i,x}^{1}}$. Then, since $I_{v}=I_{v_{i,x}^{h}}\subseteq
I_{u}$ by assumption, it follows that $I_{u}$ properly contains in the
representation $\mathcal{I}^{\#}$ all three intervals of the vertices $%
v_{i,x}^{1}$, $v_{i,x}^{2}$, and $v=v_{i,x}^{h}$. Therefore the interval $%
I_{u}$ properly contains in the initial representation $\mathcal{I}$ all
triples of intervals $\{I_{z_{1}},I_{z_{2}},I_{z_{h}}\}$, where $z_{1}\in
C_{i,x}^{1}$, $z_{2}\in C_{i,x}^{2}$, and $z_{h}\in C_{i,x}^{h}$. Thus,
since $z_{i},z_{2},z_{3}$ induce an independent set (they belong to
different connected components $C_{i,x}^{1},C_{i,x}^{2},C_{i,x}^{h}$ of~$%
U_{i,x}^{\ast \ast }$ in the initial graph $G$), it follows that the
vertices $\{u,z_{1},z_{2},z_{h}\}$ induce a $K_{1,3}$ in~$G\setminus D$.
This is a contradiction to the assumption that $D$ is a proper interval
deletion set of~$G$~\cite{Roberts69}. Thus $l_{v_{i,x}^{1}}<l_{u}$. Suppose
that $r_{v_{i,x}^{q(i,x)}}<r_{u}$. Then it follows similarly that $I_{u}$
properly contains in the initial representation $\mathcal{I}$ all triples of
intervals $\{I_{z_{h}},I_{z_{q(i,x)-1}},I_{z_{q(i,x)}}\}$, where $z_{h}\in
C_{i,x}^{h}$, $z_{q(i,x)-1}\in C_{i,x}^{q(i,x)-1}$, and $z_{q(i,x)}\in
C_{i,x}^{q(i,x)}$, which is again a contradiction. Therefore, $%
l_{v_{i,x}^{1}}<l_{u}<r_{u}<r_{v_{i,x}^{q(i,x)}}$. That is, the interval $%
I_{u}$ is properly contained in the interval $\mathbf{span}(U_{i,x}^{\ast
\ast })$, and thus $u\in U_{i,x}^{\ast \ast }$ (cf.~the definition of the
sets $U_{i,x}^{\ast }$ and $U_{i,x}^{\ast \ast }$ in Section~\ref%
{graph-G-diesi-subsec}). Therefore, vertex~$u$ has been replaced in the
weighted graph $G^{\#}$ by a vertex of~$A$. This is a contradiction, since $%
u\in U_{ji}\subseteq U^{\#}$ by assumption.

Thus for every $u\in U_{ji}$ and every $v\in V^{\#}$, if $I_{v}\subseteq
I_{u}$, then $U_{ji}\subseteq N(v)$ in the graph $G^{\#}$. This proves
Condition~2 of Definition~\ref{def:weakly.reducible.set}.
\end{proof}

\medskip

Note that for every $1\leq j<i\leq |T|$, the set $U_{ji}$ contains no
vertices of~$D$, since by definition $U_{ji}\subseteq U^{\#}=V^{\#}\setminus
(D\cup A)$. Therefore, since all sets of~$\mathcal{S}_{2}$ are disjoint, we
can apply Reduction Rule~\ref{data-reduction-2-redrule} to the graph $G^{\#}$
with respect to the sets of~$\mathcal{S}_{2}$, by replacing in the interval
representation $\mathcal{I}^{\#}$ the intervals $\{I_{v}:v\in S\}$ with $%
\min \{|S|,|D|+4\}$ copies of the interval $I_{S}=\mathbf{span}(S)$, for
every $S\in \mathcal{S}_{2}$. Denote the resulting weighted graph by $%
\widehat{G}=(\widehat{V},\widehat{E})$ and its interval representation by $%
\widehat{\mathcal{I}}$. Then the next corollary follows immediately by
Theorem~\ref{second-data-reduction-thm}.

\begin{corollary}
\label{G-tilde-correctness-cor}The maximum weight of a path in~$G^{\#}$ is
equal to the maximum weight of a path in~$\widehat{G}$.
\end{corollary}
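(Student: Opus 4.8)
The plan is to obtain Corollary~\ref{G-tilde-correctness-cor} as a direct instantiation of Theorem~\ref{second-data-reduction-thm}, with the weighted interval graph $G^{\#}$ playing the role of the input graph and $\widehat{G}$ the role of its reduction. Accordingly, the work consists entirely of checking that $G^{\#}$ together with the family $\mathcal{S}_{2}$ satisfies every precondition of Reduction Rule~\ref{data-reduction-2-redrule}, after which the theorem applies verbatim.

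First I would record that $G^{\#}$ is a weighted interval graph with interval representation $\mathcal{I}^{\#}$, as constructed in Section~\ref{graph-G-diesi-subsec}, and that by Corollary~\ref{first-reduction-G-diesi-properties-cor} the set $D$ remains a proper interval deletion set of~$G^{\#}$; this supplies the deletion-set hypothesis of Reduction Rule~\ref{data-reduction-2-redrule}. Next I would verify the three properties required of the family $\mathcal{S}_{2}=\{U_{ji}:1\leq j<i\leq |T|\}$. Pairwise disjointness holds because the sets $U_{ji}$ form a partition of~$U^{\#}$, as noted immediately after Eq.~(\ref{Uji-eq}). Each member $U_{ji}\in\mathcal{S}_{2}$ is weakly reducible in~$G^{\#}$ by Lemma~\ref{Uji-weakly-reducible-lem}. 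Finally, disjointness from~$D$ follows from the definition $U_{ji}\subseteq U^{\#}=V^{\#}\setminus(D\cup A)$, so $U_{ji}\cap D=\emptyset$ for every $S=U_{ji}\in\mathcal{S}_{2}$.

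Having confirmed these preconditions, $\widehat{G}$ is precisely the graph produced by applying Reduction Rule~\ref{data-reduction-2-redrule} to~$G^{\#}$ with respect to~$\mathcal{S}_{2}$, which is exactly the setting of Theorem~\ref{second-data-reduction-thm}. That theorem yields, for every positive integer~$\ell$, that the maximum weight of a path in~$G^{\#}$ is~$\ell$ if and only if the maximum weight of a path in~$\widehat{G}$ is~$\ell$; taking $\ell$ to be either of these (equal) maximum weights gives the asserted equality.

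I do not expect a genuine obstacle here, since every step is a citation of an already-established result. The only point requiring care is the bookkeeping that each of the three structural preconditions is discharged by a specific earlier statement, namely Corollary~\ref{first-reduction-G-diesi-properties-cor} for the proper interval deletion set, the partition property after Eq.~(\ref{Uji-eq}) for disjointness, and Lemma~\ref{Uji-weakly-reducible-lem} for weak reducibility, so that Theorem~\ref{second-data-reduction-thm} may be invoked legitimately.
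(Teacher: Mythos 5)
Your proposal is correct and matches the paper's own argument: the paper likewise notes that the sets $U_{ji}\in\mathcal{S}_{2}$ are pairwise disjoint, disjoint from $D$, and weakly reducible in $G^{\#}$ (by Lemma~\ref{Uji-weakly-reducible-lem}), so that Reduction Rule~\ref{data-reduction-2-redrule} applies and the corollary follows immediately from Theorem~\ref{second-data-reduction-thm}. Your explicit bookkeeping of which earlier statement discharges each precondition is exactly the intended (and in the paper, largely implicit) justification.
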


In the next two theorems we provide the main results of this section. In
particular, in Theorem~\ref{special-weighted-interval-graph-thm} we prove
that the constructed weighted interval graph $\widehat{G}$ is a special
weighted interval graph with a parameter~$\kappa$ that is upper bounded by $%
O(k^{3})$ and in Theorem~\ref{second-reduction-computation-thm} we provide a
time bound of $O(k^{2}n)$ for computing $\widehat{G}=(\widehat{V},\widehat{E}%
)$ and a special vertex partition $\widehat{V}=A\cup B$.

\begin{theorem}
\label{special-weighted-interval-graph-thm}The weighted interval graph $%
\widehat{G}=(\widehat{V},\widehat{E})$ is a \emph{special weighted interval
graph} with \emph{parameter} $\kappa =O(k^{3})$.
\end{theorem}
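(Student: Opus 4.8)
The plan is to exhibit an explicit special vertex partition $\widehat{V}=A\cup B$ witnessing Definition~\ref{special-interval-def} and then verify its three properties. I would take $A$ to be exactly the independent set $A=V(G^{\#})\setminus V(G)$ produced by the first reduction (the vertices $v_{i,x}^{t}$ of Corollary~\ref{first-reduction-G-diesi-properties-cor}), and let $B=\widehat{V}\setminus A$ gather everything else, namely the deletion set $D$ together with all span-copies introduced by Reduction Rule~\ref{data-reduction-2-redrule}. The observation that makes this partition legitimate is that the second reduction only replaces vertices of $U^{\#}=\bigcup_{1\le j<i\le|T|}U_{ji}$ (recall $U_{ii}=\emptyset$ by Lemma~\ref{lem:uii.is.empty}) and leaves the intervals of $A$ and of $D$ untouched; hence $\widehat{V}=A\cup D\cup\{\text{span-copies}\}$ and each interval $I_{i,x}^{t}$ of an $A$-vertex is identical in $\widehat{\mathcal{I}}$ and in $\mathcal{I}^{\#}$.

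Property~1 (independence of $A$) is immediate: $A$ is independent in $G^{\#}$ by Corollary~\ref{first-reduction-G-diesi-properties-cor}, and since the second reduction adds no edges among $A$-vertices and does not alter their intervals, $A$ remains independent in $\widehat{G}$. For Property~2 I would fix $v_{i,x}^{t}\in A$ and check $I_{u}\nsubseteq I_{i,x}^{t}$ for every other vertex $u$, splitting on the type of $u$. If $u\in A\cup D\subseteq V^{\#}$, this is exactly Corollary~\ref{first-reduction-G-diesi-properties-cor}(2). The only genuinely new case is a span-copy, where $I_{u}=\mathbf{span}(U_{ji})$ for some nonempty $U_{ji}\in\mathcal{S}_{2}$; here a one-line monotonicity argument suffices, since any $w\in U_{ji}\subseteq U^{\#}$ satisfies $I_{w}\nsubseteq I_{i,x}^{t}$ by Corollary~\ref{first-reduction-G-diesi-properties-cor}(2) while $I_{w}\subseteq\mathbf{span}(U_{ji})$, so $\mathbf{span}(U_{ji})\subseteq I_{i,x}^{t}$ would force $I_{w}\subseteq I_{i,x}^{t}$, a contradiction.

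It remains to bound $|B|$, which is where the exponent $3$ is pinned down. Writing $|B|=|D|+N$ with $N$ the total number of span-copies, Reduction Rule~\ref{data-reduction-2-redrule} introduces at most $|D|+4$ copies per set, and $\mathcal{S}_{2}$ is indexed by the $\binom{|T|}{2}$ pairs $1\le j<i\le|T|$. Using $|T|\le 18k+16$ (established just after the definition of $T$, via Lemma~\ref{sum-pi-lem}) and $|D|=k+2$, I get $N\le\binom{|T|}{2}\,(|D|+4)=O(k^{2})\cdot O(k)=O(k^{3})$, hence $|B|=O(k)+O(k^{3})=O(k^{3})$. Setting $\kappa=|B|$ then makes $\widehat{V}=A\cup B$ a special vertex partition and certifies that $\widehat{G}$ is a special weighted interval graph with parameter $\kappa=O(k^{3})$.

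I do not anticipate a serious obstacle: every structural ingredient is already supplied by the first reduction and by the construction of $\mathcal{S}_{2}$. The single point requiring care is confirming that replacing $U^{\#}$ by spans cannot nest an interval inside some $I_{i,x}^{t}$, which the span-monotonicity step above resolves; the exponent itself is just the product of the $O(k^{2})$ sets of $\mathcal{S}_{2}$ with the $O(k)$ copies allowed per set.
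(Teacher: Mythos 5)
Your proposal is correct and follows essentially the same route as the paper's proof: the same partition $A=V(G^{\#})\setminus V(G)$, $B=\widehat{V}\setminus A$, the same appeal to Corollary~\ref{first-reduction-G-diesi-properties-cor} for Conditions~1 and~2 (including the identical span-monotonicity argument for the newly introduced copies of $\mathbf{span}(U_{ji})$), and the same count $|B|\leq |D|+\binom{|T|}{2}(|D|+4)=O(k^{3})$ using $|T|\leq 18k+16$. No gaps.
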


\begin{proof}
Define $A=V(G^{\#})\setminus V(G)$, i.e.,~$A$ is the set of vertices that
have been introduced in the weighted interval graph $G^{\#}$ by applying
Reduction Rule~\ref{data-reduction-1-redrule} to the initial (unweighted)
interval graph $G$ (cf.~Section~\ref{graph-G-diesi-subsec}). Note that the
vertices of~$A$ also belong to the weighted graph $\widehat{G}$, since they
are not affected by the application of Reduction Rule~\ref%
{data-reduction-2-redrule} to the graph $G^{\#}$. Furthermore, we define the
vertex set $B=\widehat{V}\setminus A$, i.e.,~$\widehat{V}$ is partitioned
into the sets $A$ and $B$.

We will prove that $A$ and $B$ satisfy the three conditions of Definition %
\ref{special-interval-def}. Since the vertices of~$A$ are not affected by
the application of Reduction Rule~\ref{data-reduction-2-redrule}, Corollary~%
\ref{first-reduction-G-diesi-properties-cor} implies that $A$ induces an
independent set in~$\widehat{G}$. This proves Condition~1 of Definition~\ref%
{special-interval-def}.

Let $v_{i,x}^{t}\in A$ and $u\in \widehat{V}\setminus \{v_{i,x}^{t}\}$.
Assume that $I_{u}\subseteq I_{i,x}^{t}$ in the interval representation~$%
\widehat{\mathcal{I}}$. If $u$ is also a vertex of the weighted graph $%
G^{\#} $, then Corollary~\ref{first-reduction-G-diesi-properties-cor}
implies that $I_{u}\nsubseteq I_{i,x}^{t}$ in the interval representation $%
\mathcal{I}^{\#} $ (and thus also in the representation $\widehat{\mathcal{I}%
}$). This is a contradiction to the assumption that $I_{u}\subseteq
I_{i,x}^{t}$ in~$\widehat{\mathcal{I}}$. Otherwise, if $u$ is a vertex of~$%
\widehat{G}$ but not a vertex of~$G^{\#}$, then $I_{u}=\mathbf{span}(S)$,
for some $S\in \mathcal{S}_{2}$. Therefore, for every vertex $u^{\prime }\in
S$, we have that $I_{u^{\prime }}\subseteq \mathbf{span}(S)=I_{u}\subseteq
I_{i,x}^{t}$ in the interval representation $\mathcal{I}^{\#}$ of the graph~$%
G^{\#}$. This is a contradiction by Corollary~\ref%
{first-reduction-G-diesi-properties-cor}. Therefore, for every $%
v_{i,x}^{t}\in A$ and every $u\in \widehat{V}\setminus \{v_{i,x}^{t}\}$, we
have $I_{u}\subseteq I_{i,x}^{t}$ in the interval representation $\widehat{%
\mathcal{I}}$. This proves Condition~2 of Definition~\ref%
{special-interval-def}.

Recall by Corollary~\ref{first-reduction-G-diesi-properties-cor} that the
set $V^{\#}\setminus D$ is partitioned into the sets $A$ and $U^{\#}$.
Furthermore, recall that $\{U_{ji}:1\leq j<i\leq |T|\}$ provides a partition
of~$U^{\#}$, and that each of these vertex sets $U_{ji}$ is replaced in the
graph $\widehat{G}$ by at most $\min \{|U_{ji}|,|D|+4\}$ vertices. Thus the
vertex set $B$ contains all vertices of~$D$ and at most $|D|+4$ vertices for
each of the vertex subsets $\{U_{ji}:1\leq j<i\leq |T|\}$ of~$G^{\#}$.
Recall that $|T|\leq 18k+16$, and thus there exist at most $\binom{18k+16}{2}
$ different sets $U_{ji}$. Furthermore, recall that $D=\{d_{0},d_{1},\ldots
,d_{k},d_{k+1}\}$, i.e.,~$|D|=k+2$. Therefore, $|B|\leq |D|+\binom{18k+16}{2}%
\cdot (|D|+4)=(k+2)+\binom{18k+16}{2}\cdot (k+6)=O(k^{3})$. This proves
Condition 3 of Definition~\ref{special-interval-def} and completes the proof
of the theorem.
\end{proof}

\begin{theorem}
\label{second-reduction-computation-thm}Let $G=(V,E)$ be an interval graph,
where $|V|=n$. Let $D=\{d_{0},d_{1},\ldots ,d_{k},d_{k+1}\}$ be a proper
interval deletion set of~$G$. Then the special weighted interval graph $%
\widehat{G}=(\widehat{V},\widehat{E})$ and a special vertex partition $%
\widehat{V}=A\cup B$ can be computed in~$O(k^{2}n)$ time.
\end{theorem}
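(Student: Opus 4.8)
The plan is to mirror the structure of the proof of Theorem~\ref{first-reduction-computation-thm}: first obtain $G^{\#}$ together with all auxiliary data, then assemble the endpoint set $T$ and the family $\mathcal{S}_2$, and finally carry out Reduction Rule~\ref{data-reduction-2-redrule} while charging $O(n)$ time to each of the $O(k^2)$ sets of $\mathcal{S}_2$. First I would invoke Theorem~\ref{first-reduction-computation-thm} to compute, in $O(n)$ time, the weighted graph $G^{\#}$ with its representation $\mathcal{I}^{\#}$ (endpoints kept sorted in a linked list $M$ and bijectively mapped to $\{1,\dots,2|V^{\#}|\}$), the independent set $A$, the set $U^{\#}=V^{\#}\setminus(D\cup A)$, and for every pair $(i,x)$ the connected components $C_{i,x}^1,\dots,C_{i,x}^{q(i,x)}$ of $U_{i,x}^{\ast\ast}$ together with their spans. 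From these data the endpoint set $T=R\cup L\cup\bigcup_{i,x}\{\dots\}$ is assembled directly: by Lemma~\ref{sum-pi-lem} there are only $\sum_i p_i=2(k+1)$ pairs $(i,x)$, each contributing at most eight endpoints to $T$, so $|T|\le 18k+16=O(k)$ as already noted, and the points of $T$ can be read off in increasing order from $M$ in $O(n)$ time.

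Next I would build the family $\mathcal{S}_2=\{U_{ji}:1\le j<i\le|T|\}$ of Eq.~(\ref{S-2-family-eq}). For each vertex $u\in U^{\#}$ I determine the unique indices $j,i$ with $t_{j-1}<l_u<t_j$ and $t_{i-1}<r_u<t_i$; these strict inequalities are well defined because all endpoints are distinct and no endpoint of a $U^{\#}$-vertex lies in $T$. I then place $u$ into $U_{ji}$, and by Lemma~\ref{lem:uii.is.empty} the diagonal sets $U_{ii}$ are empty, so indeed $j<i$. A direct way to collect the $O(k^2)$ sets is to iterate over the at most $\binom{|T|}{2}=O(k^2)$ index pairs and, for each, scan $U^{\#}$ to gather its members, which costs $O(k^2\cdot n)$ time; each resulting $U_{ji}$ is weakly reducible in $G^{\#}$ by Lemma~\ref{Uji-weakly-reducible-lem}, so Reduction Rule~\ref{data-reduction-2-redrule} applies to $\mathcal{S}_2$.

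Finally I would apply Reduction Rule~\ref{data-reduction-2-redrule} to $G^{\#}$ with respect to $\mathcal{S}_2$, processing the $O(k^2)$ sets one after another as in the definition of $G_0,G_1,\dots,G_{|\mathcal{S}_2|}$. For each $U_{ji}$ I compute $\mathbf{span}(U_{ji})$ and $\min\{|U_{ji}|,|D|+4\}$ from its already-bucketed members (recall $|D|=k+2$), create the corresponding equally weighted copies of the span, and re-integrate their suitably perturbed, distinct endpoints into the global sorted endpoint list---precisely the bookkeeping used in the efficient implementation of Theorem~\ref{interval-representation-preprocessing-thm}, which re-scans and re-maps the $O(n)$ endpoints in $O(n)$ time. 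Charging this $O(n)$ cost to each of the $O(k^2)$ replacements again yields $O(k^2 n)$ total time. The output is $\widehat{G}=(\widehat{V},\widehat{E})$, and setting $B=\widehat{V}\setminus A$ produces the partition; that $\widehat{V}=A\cup B$ is a \emph{special} vertex partition is exactly the content of Theorem~\ref{special-weighted-interval-graph-thm}, so no extra work is needed to certify it, and both $A$ and $B$ fall out of the construction for free.

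The main obstacle is not the arithmetic but the implementation detail of keeping the endpoint list consistently sorted and re-indexed after each replacement \emph{without} ever recomputing adjacencies from scratch; this is what forces the per-set $O(n)$ re-scan. I note that the instance never blows up, since the number of newly created intervals is $\sum_{j<i}\min\{|U_{ji}|,|D|+4\}\le|U^{\#}|\le n$, so every intermediate $G_i$ still has $O(n)$ vertices, and the overall running time is dominated by the $O(k^2)$ replacements handled in $O(n)$ time each.
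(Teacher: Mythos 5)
Your proposal follows essentially the same route as the paper's proof: obtain $G^{\#}$ and the auxiliary data in $O(n)$ time via Theorem~\ref{first-reduction-computation-thm}, assemble $T$ with $|T|=O(k)$, compute each of the $O(k^2)$ sets $U_{ji}$ and its span by an $O(n)$ scan of $U^{\#}$, perform the replacements of Reduction Rule~\ref{data-reduction-2-redrule}, and set $B=\widehat{V}\setminus A$. The only (harmless) deviation is that you charge $O(n)$ to re-sort endpoints after each of the $O(k^2)$ replacements, whereas the paper observes that all replacements together cost only $O(n)$ since the sets $U_{ji}$ are disjoint and contain at most $n$ vertices in total; both give the claimed $O(k^2 n)$ bound.
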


\begin{proof}
First recall that the graph $G^{\#}=(V^{\#},E^{\#})$ and the independent set 
$A\subseteq V^{\#}$ can be computed in~$O(n)$ time by Theorem~\ref%
{first-reduction-computation-thm}. Furthermore, recall by the proof of
Theorem~\ref{first-reduction-computation-thm} that, during the computation
of the interval representation $\mathcal{I}^{\#}$ of the graph $G^{\#}$, we
also compute the points of~$R\cup L$ and the intervals $I_{i,x}^{t}=\mathbf{%
span}(C_{i,x}^{t})$ for the connected components $C_{i,x}^{t}$ of~$%
U_{i,x}^{\ast \ast }$, where $i\in \lbrack k+1]$, $p_{i}\in \lbrack p_{i}]$,
and $t\in \lbrack q(i,x)]$. Thus, since all these computations can be done
in~$O(n)$ time, we can also compute the set $T$ of endpoints in the interval
representation $\mathcal{I}^{\#}$ in~$O(n)$ time in total.

Now, for every pair $\{j,i\}$ such that $1\leq j<i\leq |T|$, we can compute
the set $U_{ji}$ in~$O(n)$ time by visiting each vertex $u\in U^{\#}$ once
and by checking whether $t_{j-1}<l_{u}<t_{j}$ and $t_{i-1}<r_{u}<t_{i}$ (cf.
the definition of the sets $U_{ji}$). Furthermore, we can compute in the
same time the interval $\mathbf{span}(U_{ji})$ by keeping the leftmost left
endpoint and the rightmost right endpoint of~$U_{ji}$, respectively. Thus,
since there are $\binom{|T|}{2}\leq \binom{18k+16}{2}=O(k^{2})$ such pairs
of indices $\{j,i\}$, all sets $U_{j,i}$ and all intervals $\mathbf{span}%
(U_{ji})$ can be computed in~$O(k^{2}n)$ time in total.

Once we have computed all intervals $\mathbf{span}(U_{ji})$, we can
iteratively remove from the representation $\mathcal{I}^{\#}$ the intervals
of the vertices of~$U_{ji}$ and replace them with $\min
\{|U_{ji}|,|D|+2\}\leq |U_{ji}|$ copies of the interval $\mathbf{span}%
(U_{ji})$, resulting thus at the interval representation $\widehat{\mathcal{I%
}}$ of~$\widehat{G}$. Since the number of vertices in all these sets $U_{ji}$
is at most $n$, all these replacements can be done in~$O(n)$ time in total.
Finally, since the set~$A$ can be computed in~$O(n)$ time by Theorem~\ref%
{first-reduction-computation-thm}, the set $B=\widehat{V}\setminus A$ can be
also computed in~$O(n)$ time. Summarizing, the interval representation $%
\widehat{\mathcal{I}}$ of~$\widehat{G}$ and the special vertex partition $%
\widehat{V}=A\cup B$ can be computed in~$O(k^{2}n)+O(n)=O(k^{2}n)$ time in
total.
\end{proof}

\medskip

Note here that, although $\widehat{G}=(\widehat{V},\widehat{E})$ is a
special weighted interval graph with a parameter $\kappa$ that depends only
on the size of~$D$ by Theorem~\ref{special-weighted-interval-graph-thm}, $%
\widehat{G}$ may still have $O(n)$ vertices, as the independent set $A$ in
its special vertex partition $\widehat{V}=A\cup B$ may be arbitrarily large.

\section{Parameterized longest path on interval graphs\label{algorithm-sec}}

In this section, we first present Algorithm~\ref%
{weighted-path-special-interval-graph-alg} (cf.~Section~\ref%
{special-algorithm-subsec}) which computes in~$O(\kappa ^{3}n)$ time the
maximum weight of a path in a \emph{special weighted interval graph} with
parameter $\kappa $ (cf.~Definition~\ref{special-interval-def}). Then, using
Algorithm~\ref{weighted-path-special-interval-graph-alg} and the results of
Sections~\ref{data-reductions-sec} and~\ref{special-interval-sec}, we
conclude in Section~\ref{general-algorithm-subsec} with our fixed-parameter
algorithm for \textsc{Longest Path on Interval Graphs}, where the parameter $%
k$ is the size of a minimum proper interval deletion set $D$. Since
Algorithm~\ref{weighted-path-special-interval-graph-alg} can be implemented
to run in~$O(\kappa ^{3}n)$ time and $\kappa =O(k^{3})$ by Theorem~\ref%
{special-weighted-interval-graph-thm}, the algorithm of Section~\ref%
{general-algorithm-subsec} runs in~$O(k^{9}n)$ time.

\subsection{The algorithm for special weighted interval graphs\label%
{special-algorithm-subsec}}

Consider a \emph{special weighted interval graph} $G=(V,E)$ with parameter $%
\kappa \in \mathbb{N}$, which is given along with a special interval
representation $\mathcal{I}$ and a special vertex partition $V=A\cup B$.
Recall by Definition~\ref{special-interval-def} that $A$ is an independent
set and that $|B|\leq \kappa $. Let $w:V\rightarrow \mathbb{N}$ be the
vertex weight function of~$G$. Now we add to the set $B$ an isolated dummy
vertex $v_{0}$ such that $v_{0}<_{\sigma }v_{1}$ and $w(v_{0})=0$. Thus,
after the addition of~$v_{0}$ to $G$, we have $|B|\leq \kappa +1$. Note that 
$v_{0}$ is not contained in any maximum-weight path of this augmented graph.
Thus, every maximum-weight path in the augmented graph is also a maximum
weight path in~$G$, and vice versa. In the following we denote this
augmented graph by $G$. Furthermore, denote by $\mathcal{I}$ the augmented
interval representation and by $\sigma =(v_{0},v_{1},v_{2},\dots ,v_{n})$
its right-endpoint ordering. For every vertex $v\in B$, we define%
\begin{equation*}
\xi _{v}=%
\begin{cases}
l_{u} & \text{if }l_{v}\in I_{u}\text{ for some }u\in A, \\ 
l_{v} & \text{otherwise.}%
\end{cases}%
\end{equation*}

\begin{lemma}
\label{lem:xi.is.well.defined} For every vertex $v\in B$, $\xi_{v}$ is
well-defined.
\end{lemma}

\begin{proof}
It is enough to prove that if there exists a vertex $u\in A$ such that $%
l_{v}\in I_{u}$ then $u$ is unique. Let us assume to the contrary that there
exist two distinct vertices $u$ and $u^{\prime }$ in~$A$ such that $l_{v}\in
I_{u}$ and $l_{v}\in I_{u^{\prime }}$. Then $I_{u}\cap I_{u^{\prime }}\neq
\emptyset $ and $uu^{\prime }\in E(G)$. This is a contradiction, since $A$
is an independent set.
\end{proof}

\medskip

Now we define the set $\Xi $ as%
\begin{equation}
\Xi =\{\xi _{v},l_{v}:v\in B\}.  \label{Ksi-eq}
\end{equation}%
Note that $|\Xi |\leq 2|B|\leq 2(\kappa +1)$. Furthermore, let $u,v\in V$,
where $u\in N(v)$ and $u<_{\sigma}v$. We define the vertex 
\begin{equation}
\pi_{u,v}=\max_{\sigma }\{\{u\}\cup \{w\in B\cap N(u):u<_{\sigma }w<_{\sigma
}v\}\}  \label{pi-eq}
\end{equation}%
Note that, by definition, if $\pi _{u,v}\neq u$ then $\pi_{u,v}\in B$.
Furthermore, due to the condition that $u\in N(v)$ in the definition of the
vertex $\pi _{u,v}$, it follows that $u\in B$ or $v\in B$, since $A$ is an
independent set. That is, vertex~$\pi_{u,v}$ is defined for at most $%
2(\kappa +1)(n+1)=O(\kappa n)$ pairs of vertices $u,v$.

\begin{definition}
\label{subgraphs-interval-definition} Let $\xi \in \Xi $ and $i\in \lbrack
n] $ such that $\xi <r_{v_{i}}$. We define the induced subgraph ${%
G_{\xi}(v_{i})=G[\{v\in V:\xi \leq l_{v}<r_{v}\leq r_{v_{i}}\}]}$ of~$G$
which contains all vertices whose intervals (in the representation $\mathcal{%
I}$ of~$G$) are entirely contained between the points $\xi $ and $r_{v_{i}}$.
\end{definition}

Note by Definition~\ref{subgraphs-interval-definition} that, if $%
l_{v_{i}}<\xi $, then the vertex $v_{i}$ does not belong to the subgraph $%
G_{\xi }(v_{i})$.

\begin{notation}
\label{notation-interval-paths} Let $\xi \in \Xi $ and $i\in \lbrack n]$
such that $\xi <r_{v_{i}}$. Furthermore, let $y\in V(G_{\xi }(v_{i}))$ such
that ${y\in N(v_{i})}$. We denote by $P_{\xi }(v_{i},y)$ a \emph{maximum
weight normal path} of~$G_{\xi }(v_{i})$, among those normal paths whose
last vertex is $y$. For every path $P_{\xi }(v_{i},y)$, we denote its weight 
$w(P_{\xi }(v_{i},y))$ by $W_{\xi }(v_{i},y)$.
\end{notation}

Before we present Algorithm~\ref{weighted-path-special-interval-graph-alg},
we first present some auxiliary technical lemmas (cf.~Lemmas~\ref%
{lem:vi.not.in.p}-\ref{lem:algorithm.second-direction-1}) that will be
useful in the proof of correctness and the running time analysis of the
algorithm (cf.~Theorems~\ref{special-algorithm-correcntess-thm} and~\ref%
{special-algorithm-running-time-thm}, respectively).

\begin{lemma}
\label{lem:vi.not.in.p} Let $\xi \in \Xi $ and $i\in \lbrack n]$, where $\xi
<r_{v_{i}}$, and let $y\in V(G_{\xi }(v_{i}))$ and $y\in N(v_{i})$. Let also 
$P$ be a normal path of~$G_{\xi }(v_{i})$ that has $y$ as its last vertex.
If $v_{i}\notin V(P)$, then $P$ is a path of~$G_{\xi }(\pi _{y,v_{i}})$.
\end{lemma}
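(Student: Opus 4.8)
The plan is to show that every vertex of~$P$ lies in $G_\xi(\pi_{y,v_i})$, i.e.\ that its interval is contained in $[\xi, r_{\pi_{y,v_i}}]$. Since $P$ is by hypothesis a path of $G_\xi(v_i)$, every vertex $v\in V(P)$ already satisfies $\xi\le l_v$ (cf.~Definition~\ref{subgraphs-interval-definition}), so the left-endpoint condition is automatic and it suffices to prove $r_v\le r_{\pi_{y,v_i}}$ for all $v\in V(P)$. First I would record that $\pi_{y,v_i}$ is well defined: $y$ is the last vertex of~$P$, hence $y\in V(P)$, so $v_i\notin V(P)$ forces $y\neq v_i$; together with $r_y\le r_{v_i}$ and distinct endpoints this gives $y<_\sigma v_i$, and combined with $y\in N(v_i)$ the defining conditions of~$\pi_{y,v_i}$ in Eq.~(\ref{pi-eq}) are met. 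Let $v^\ast$ denote the $\sigma$-maximum vertex of~$P$; since $v^\ast\in V(P)$ and $y\in V(P)$ we have $y\le_\sigma v^\ast$, and it is enough to prove $r_{v^\ast}\le r_{\pi_{y,v_i}}$, because then every $v\in V(P)$ satisfies $r_v\le r_{v^\ast}\le r_{\pi_{y,v_i}}$. If $v^\ast=y$ this is immediate, as $\pi_{y,v_i}\ge_\sigma y$ by definition gives $r_y\le r_{\pi_{y,v_i}}$. So the remaining (main) case is $y<_\sigma v^\ast$, i.e.\ $v^\ast\neq y$.

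In this case I would first argue $v^\ast\in N(y)$. Since $y$ is the last vertex of~$P$ and $v^\ast\neq y$, we have $v^\ast<_P y$; together with $y<_\sigma v^\ast$, Lemma~\ref{lem:oxi.anapoda} yields $y v^\ast\in E(G)$. Next I would show $v^\ast\in B$. Because $v^\ast<_P y$, the vertex $v^\ast$ is not the last vertex of~$P$ and therefore has a successor $s$ in~$P$; as $v^\ast$ is the $\sigma$-maximum of~$P$ and $s\neq v^\ast$, we get $s<_\sigma v^\ast$. Applying Lemma~\ref{lem:turning.point} to the normal path $P$ at the consecutive pair $(v^\ast,s)$ gives $I_s\subseteq I_{v^\ast}$, and since endpoints are distinct this containment is proper. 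If $v^\ast$ were in~$A$, this would contradict Condition~2 of Definition~\ref{special-interval-def}, which forbids $I_u\subseteq I_{v^\ast}$ for any $u\neq v^\ast$. Hence $v^\ast\in B$. The step establishing $v^\ast\in B$ is the crux of the argument: the key idea is that the $\sigma$-maximum of a normal path, if it is not the terminal vertex, must have a successor whose interval it strictly contains, which is incompatible with membership in the "no properly contained interval" independent set~$A$.

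Finally, I would combine these facts. We have $v^\ast\in B\cap N(y)$ with $y<_\sigma v^\ast$, and $v^\ast<_\sigma v_i$ (indeed $v^\ast\in V(G_\xi(v_i))$ gives $r_{v^\ast}\le r_{v_i}$, while $v^\ast\neq v_i$ because $v_i\notin V(P)$, so $r_{v^\ast}<r_{v_i}$ by distinctness of endpoints). Thus $v^\ast$ belongs to the set $\{y\}\cup\{w\in B\cap N(y):y<_\sigma w<_\sigma v_i\}$ over which $\pi_{y,v_i}$ is the $\sigma$-maximum, whence $v^\ast\le_\sigma\pi_{y,v_i}$ and $r_{v^\ast}\le r_{\pi_{y,v_i}}$. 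As noted above this bounds the right endpoint of every vertex of~$P$, so every vertex of~$P$ lies in $G_\xi(\pi_{y,v_i})$; moreover the edges of~$P$ are edges of~$G$ among these vertices and hence edges of the induced subgraph $G_\xi(\pi_{y,v_i})$, so $P$ is a path of $G_\xi(\pi_{y,v_i})$, as required. I do not anticipate difficulty with the routine verifications (distinctness of endpoints, membership in induced subgraphs, well-definedness of $G_\xi(\pi_{y,v_i})$ which follows from $\xi<r_y\le r_{\pi_{y,v_i}}$); the only genuinely non-obvious point is the identification of $v^\ast$ and the use of the special-graph structure to place it in~$B$.
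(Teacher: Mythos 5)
Your proof is correct and follows essentially the same route as the paper's: both identify the $\sigma$-maximum vertex of~$P$, show it is adjacent to $y$ via Lemma~\ref{lem:oxi.anapoda}, place it in $B$ by combining an interval containment with Condition~2 of Definition~\ref{special-interval-def}, and conclude that it is $\leq_{\sigma}\pi_{y,v_{i}}$ by the definition of $\pi_{y,v_{i}}$. The only (immaterial) difference is the source of the containment: the paper gets $I_{y}\subseteq I_{y'}$ from Lemma~\ref{lem:propers.consecutive.in.path}, while you get $I_{s}\subseteq I_{v^{\ast}}$ for the $P$-successor $s$ of $v^{\ast}$ from Lemma~\ref{lem:turning.point}.
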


\begin{proof}
Let $y^{\prime }$ denote the rightmost vertex of~$P$ (in the ordering $%
\sigma $). Since $v_{i}\notin V(P)$, we have that $y^{\prime }\neq v_{i}$.
Note also that either $y=y^{\prime }$ or $y<_{\sigma }y^{\prime }$. We claim
that $y^{\prime }\leq _{\sigma }\pi _{y,v_{i}}$. Notice that the statement
trivially holds if $y=y^{\prime }$. Thus, it is enough to prove that $%
y^{\prime }\leq _{\sigma }\pi _{y,v_{i}}$ when $y<_{\sigma }y^{\prime }$.
First, as $y<_{\sigma }y^{\prime }$ and $y^{\prime }<_{P}y$, Lemma~\ref%
{lem:oxi.anapoda} implies that $y^{\prime }y\in E(G)$. Furthermore, since $P$
is normal, Lemma~\ref{lem:propers.consecutive.in.path} implies that $%
\mathcal{I}[\{y,y^{\prime }\}]$ does not induce a proper interval
representation. Therefore, since $y<_{\sigma }y^{\prime }$, it follows that $%
I_{y}\subseteq I_{y^{\prime }}$, and thus $y^{\prime }\in B$. Hence, since
also $y^{\prime }<_{\sigma }v_{i}$ and $y^{\prime }y\in E(G)$, it follows
that $y^{\prime }\leq _{\sigma }\pi _{x,v_{i}}$. Therefore $P$ is a path of~$%
G_{\xi }(\pi _{y,v_{i}})$.
\end{proof}

\begin{lemma}
\label{extra-computation-lem}Let $\xi \in \Xi $ and $i\in \lbrack n]$, where 
$\xi <r_{v_{i}}$, and let $y\in V(G_{\xi }(v_{i}))$ and $y\in N(v_{i})$. If $%
l_{y}<l_{v_{i}}$ or $v_{i}\notin V(G_{\xi }(v_{i}))$, then $W_{\xi
}(v_{i},y)=W_{\xi }(\pi _{y,v_{i}},y)$.
\end{lemma}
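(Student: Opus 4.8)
The plan is to prove the two inequalities $W_{\xi}(\pi_{y,v_i},y)\le W_{\xi}(v_i,y)$ and $W_{\xi}(v_i,y)\le W_{\xi}(\pi_{y,v_i},y)$ separately, the first being routine and the second being the substantive direction, for which I would reduce everything to showing that a maximum-weight normal path ending at $y$ cannot use $v_i$.

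First I would record a preliminary observation used in both directions. Since every element of the set defining $\pi_{y,v_i}$ is $\sigma$-smaller than $v_i$, we have $\pi_{y,v_i}<_{\sigma}v_i$, and since that set contains $y$ we have $y\le_{\sigma}\pi_{y,v_i}$; hence $r_y\le r_{\pi_{y,v_i}}<r_{v_i}$, so that $V(G_{\xi}(\pi_{y,v_i}))\subseteq V(G_{\xi}(v_i))$ and $y\in V(G_{\xi}(\pi_{y,v_i}))$. Moreover, a path all of whose vertices lie in $V(G_{\xi}(\pi_{y,v_i}))$ is normal in $G_{\xi}(\pi_{y,v_i})$ if and only if it is normal in $G_{\xi}(v_i)$, because the condition of Definition~\ref{normal-path-def} refers only to neighbourhoods intersected with the (common) vertex set of the path, and the two graphs share the ordering~$\sigma$. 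The inequality $W_{\xi}(\pi_{y,v_i},y)\le W_{\xi}(v_i,y)$ follows at once: any normal path ending at~$y$ in the smaller graph $G_{\xi}(\pi_{y,v_i})$ is also a normal path ending at~$y$ in $G_{\xi}(v_i)$ of the same weight.

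For the reverse inequality I would take a path $P=P_{\xi}(v_i,y)$ realizing $W_{\xi}(v_i,y)$ and show that $v_i\notin V(P)$. If $v_i\notin V(G_{\xi}(v_i))$ this is immediate, as $P$ is a path of $G_{\xi}(v_i)$. The interesting case is $l_y<l_{v_i}$. Here, since $y\in V(G_{\xi}(v_i))$ we have $r_y\le r_{v_i}$, and $y\neq v_i$ (because $l_y<l_{v_i}$), so $r_y<r_{v_i}$, i.e.\ $y<_{\sigma}v_i$; combining $l_y<l_{v_i}$ with the adjacency $y\in N(v_i)$ forces $l_{v_i}<r_y$, whence $l_y<l_{v_i}<r_y<r_{v_i}$. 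Thus $I_{v_i}$ and $I_y$ overlap with neither containing the other, so $\mathcal{I}[\{v_i,y\}]$ is a proper interval representation. Assuming towards a contradiction that $v_i\in V(P)$: as $y$ is the last vertex of~$P$ and $v_i\neq y$, we have $v_i<_{P}y$, so $y\not<_{P}v_i$. But $\{v_i,y\}$ is a proper set, both its vertices lie on the normal path~$P$, and $y<_{\sigma}v_i$, so Lemma~\ref{lem:propers.consecutive.in.path} gives $y<_{P}v_i$, a contradiction. Hence $v_i\notin V(P)$.

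Once $v_i\notin V(P)$ is established, Lemma~\ref{lem:vi.not.in.p} shows that $P$ is a path of $G_{\xi}(\pi_{y,v_i})$, and by the normality-preservation observation above it is a normal path of $G_{\xi}(\pi_{y,v_i})$ ending at~$y$; therefore $W_{\xi}(v_i,y)=w(P)\le W_{\xi}(\pi_{y,v_i},y)$. Combined with the first inequality this yields the claimed equality. The only real obstacle is the case $l_y<l_{v_i}$, and its crux is the observation that the hypotheses make $\{v_i,y\}$ a \emph{proper} pair, so that the order-monotonicity of Lemma~\ref{lem:propers.consecutive.in.path} clashes with $v_i$ occurring strictly before the final vertex~$y$ on a normal path; everything else is bookkeeping, including the harmless boundary case $\pi_{y,v_i}=y$, in which $W_{\xi}(y,y)$ is read as the maximum weight of a normal path of $G_{\xi}(y)$ ending at its $\sigma$-maximum vertex~$y$ and both inequalities go through verbatim.
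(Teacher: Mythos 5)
Your proof is correct and follows essentially the same route as the paper's: both establish that $v_i\notin V(P_{\xi}(v_i,y))$ (trivially when $v_i\notin V(G_{\xi}(v_i))$, and via the properness of $\mathcal{I}[\{y,v_i\}]$ together with Lemma~\ref{lem:propers.consecutive.in.path} when $l_y<l_{v_i}$), and then invoke Lemma~\ref{lem:vi.not.in.p} plus the containment $G_{\xi}(\pi_{y,v_i})\subseteq G_{\xi}(v_i)$ to conclude the equality. Your write-up is merely more explicit about the two inequalities and the preservation of normality between the nested subgraphs, which the paper leaves implicit.
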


\begin{proof}
Let $l_{y}<l_{v_{i}}$ or $v_{i}\notin V(G_{\xi }(v_{i}))$. First we prove
that $v_{i}\notin V(P_{\xi }(v_{i},y))$. If $v_{i}\notin V(G_{\xi }(v_{i}))$%
, then clearly $v_{i}\notin V(P_{\xi }(v_{i},y))$. Suppose now that $%
v_{i}\in V(G_{\xi }(v_{i}))$ and that $l_{y}<l_{v_{i}}$. For the sake of
contradiction, assume that $v_{i}\in V(P_{\xi }(v_{i},y))$. Since $%
l_{y}<l_{v_{i}}$ and $r_{y}<r_{v_{i}}$, note that $I_{y}\nsubseteq I_{v_{i}}$
and $I_{v_{i}}\nsubseteq I_{y}$. Thus $\mathcal{I}[\{y,v_{i}\}]$ induces a
proper interval representation. Therefore, since $y<_{\sigma }v_{i}$ and $%
y,v_{i}\in V(P_{\xi }(v_{i},y))$ by assumption, Lemma~\ref%
{lem:propers.consecutive.in.path} implies that $y<_{P}v_{i}$. This is a
contradiction to the assumption that $y$ is the last vertex of~$P_{\xi
}(v_{i},y)$. Therefore $v_{i}\notin V(P_{\xi }(v_{i},y))$.

Thus Lemma~\ref{lem:vi.not.in.p} implies that $V(P_{\xi }(v_{i},y))$ is a
path of $G_{\xi }(\pi _{y,v_{i}})$. Therefore, since $G_{\xi }(\pi
_{y,v_{i}})$ is a subgraph of $G_{\xi }(v_{i}) $ (cf.~Eq.~(\ref{pi-eq})), it
follows that $W_{\xi }(v_{i},y)=W_{\xi }(\pi _{y,v_{i}},y)$.
\end{proof}

\begin{lemma}
\label{lem:algorithm.first} Let $\xi \in \Xi $ and $i\in \lbrack n]$, where $%
\xi <r_{v_{i}}$, and let $v_{i}\in V(G_{\xi }(v_{i}))$. Then $P_{\xi
}(v_{i},v_{i})=(P_{1},v_{i})$, where 
\begin{equation}
w(P_{1})=\max \{W_{\xi }(\pi _{x,v_{i}},x):x\in
V(G_{\xi}(v_{i})),l_{v_{i}}<r_{x}<r_{v_{i}}\}  \label{w-P1-eq-1}
\end{equation}
\end{lemma}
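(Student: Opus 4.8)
The plan is to prove the claimed identity by establishing the two inequalities ``$\le$'' and ``$\ge$'' between $w(P_1)$ and the right-hand side of~(\ref{w-P1-eq-1}), reading the equation as the correctness statement of a dynamic-programming recurrence. First I would record the structural fact that $v_i$ is the $\sigma$-maximum vertex of~$G_\xi(v_i)$: every $v\in V(G_\xi(v_i))$ satisfies $r_v\le r_{v_i}$, so (endpoints being distinct) every $v\ne v_i$ has $v<_\sigma v_i$. From this I would check that the constraint ``$x\in V(G_\xi(v_i))$ and $l_{v_i}<r_x<r_{v_i}$'' in~(\ref{w-P1-eq-1}) describes exactly the candidate predecessors of~$v_i$: the condition $r_x<r_{v_i}$ is automatic for $x\ne v_i$, and $l_{v_i}<r_x$ together with $x<_\sigma v_i$ is equivalent to $I_x\cap I_{v_i}\ne\emptyset$, i.e.\ to $x\in N(v_i)$. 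In particular $\pi_{x,v_i}$ is well-defined for every such~$x$.

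For the inequality $w(P_1)\le\max\{\,\dots\,\}$, I would write $P_\xi(v_i,v_i)=(P_1,v_i)$ and let $x$ be the last vertex of~$P_1$, i.e.\ the predecessor of~$v_i$. Then $x\in V(G_\xi(v_i))\cap N(v_i)$ with $x\ne v_i$, so $x$ satisfies the constraint above. Since a prefix of a normal path is again normal (the leftmost-conditions of Definition~\ref{normal-path-def} only become easier when the suffix set shrinks), $P_1$ is a normal path of~$G_\xi(v_i)$ whose last vertex is~$x$, and $v_i\notin V(P_1)$. Applying Lemma~\ref{lem:vi.not.in.p} to~$P_1$ then shows that $P_1$ is in fact a path of~$G_\xi(\pi_{x,v_i})$. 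Hence $P_1$ is a normal path of~$G_\xi(\pi_{x,v_i})$ ending at~$x$, so by the maximality built into Notation~\ref{notation-interval-paths} we get $w(P_1)\le W_\xi(\pi_{x,v_i},x)$, which is at most the right-hand side of~(\ref{w-P1-eq-1}).

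For the reverse inequality, I would fix any admissible~$x$ and set $Q=P_\xi(\pi_{x,v_i},x)$. Since $\pi_{x,v_i}<_\sigma v_i$ by definition of~$\pi$, the subgraph $G_\xi(\pi_{x,v_i})$ is contained in $G_\xi(v_i)$ and does not contain~$v_i$; thus $v_i\notin V(Q)$ and every vertex of~$Q$ is $\sigma$-below~$v_i$. As $x\in N(v_i)$, Observation~\ref{obs:appending.a.vertex} guarantees that $(Q,v_i)$ is a normal path of~$G_\xi(v_i)$ ending at~$v_i$. By maximality of $P_\xi(v_i,v_i)=(P_1,v_i)$ we obtain $w(Q)+w(v_i)\le w(P_1)+w(v_i)$, i.e.\ $W_\xi(\pi_{x,v_i},x)\le w(P_1)$; taking the maximum over all admissible~$x$ yields the claim, and combining the two inequalities proves~(\ref{w-P1-eq-1}).

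The main obstacle, and the reason the recurrence is stated with $\pi_{x,v_i}$ rather than simply~$x$, lies in the ``$\le$'' direction: the prefix $P_1$ may legitimately contain vertices lying $\sigma$-after its own endpoint~$x$ (namely $B$-vertices adjacent to~$x$ that occur before~$v_i$), so $P_1$ need \emph{not} be a path of~$G_\xi(x)$. Lemma~\ref{lem:vi.not.in.p} is precisely the tool that pins down the correct right boundary~$\pi_{x,v_i}$ for these vertices, and verifying its hypotheses (normality of~$P_1$ and $v_i\notin V(P_1)$) is the crux. Finally I would dispose of the degenerate cases: when $P_1$ is empty we have $w(P_1)=0$, and the reverse-inequality argument forces every admissible term $W_\xi(\pi_{x,v_i},x)$ to be~$0$ as well (otherwise $(Q,v_i)$ would beat~$(v_i)$), so with the convention $\max\emptyset=0$ the identity persists; and when $\pi_{x,v_i}=x$ the term $W_\xi(x,x)$ is read via Notation~\ref{notation-interval-paths} with last vertex~$x$ itself, which is consistent since the single-vertex path~$(x)$ already lies in~$G_\xi(x)$.
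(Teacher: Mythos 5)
Your proposal is correct and follows essentially the same route as the paper's proof: both directions rest on applying Lemma~\ref{lem:vi.not.in.p} to the normal prefix $P_{1}$ ending at the predecessor $x$ of $v_{i}$ (giving $w(P_{1})\leq W_{\xi }(\pi _{x,v_{i}},x)$) and on Observation~\ref{obs:appending.a.vertex} to append $v_{i}$ to $P_{\xi }(\pi _{x,v_{i}},x)$ for the reverse bound. Your version is merely a slight rearrangement (two explicit inequalities against the maximum, rather than the paper's equality-by-contradiction for the specific predecessor), with some added care about degenerate cases that the paper leaves implicit.
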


\begin{proof}
Let $P=(P_{1},v_{i})$ be a normal path of~$G_{\xi }(v_{i})$ such that $%
w(P)=W_{\xi }(v_{i},v_{i})$. Denote by $x$ the last vertex of~$P_{1}$. Then,
since $P$ is a normal path of~$G_{\xi }(v_{i})$, $P_{1}$ is a normal path of~%
$G_{\xi }(v_{i})$ that does not contain~$v_{i}$. Lemma~\ref{lem:vi.not.in.p}
implies that $P_{1}$ is a normal path of~$G_{\xi }(\pi _{x,v_{i}})$ that has 
$x$ as its last vertex and hence 
\begin{equation*}
w(P_{1})\leq W_{\xi }(\pi _{x,v_{i}},x).
\end{equation*}

We will now prove that $w(P_{1})=W_{\xi }(\pi _{x,v_{i}},x)$. For this,
assume towards a contradiction that $w(P_{1})<W_{\xi }(\pi _{x,v_{i}},x)$.
Recall that $P_{\xi }(\pi _{x,v_{i}},x)$ is a normal path of~$G_{\xi }(\pi
_{x,v_{i}})\subseteq G_{\xi }(v_{i})$ that has $x$ as its last vertex. Since 
$xv_{i}\in E(G)$, this implies that $(P_{\xi }(\pi _{x,v_{i}},x),v_{i})$ is
a path of~$G_{\xi }(v_{i})$ that has $v_{i}$ as its last vertex.
Furthermore, since $v_{i}$ is the rightmost vertex of the path, it follows
that $(P_{\xi }(\pi _{x,v_{i}},x),v_{i})$ is normal (Observation~\ref%
{obs:appending.a.vertex}). Moreover, 
\begin{equation*}
w(P)=w(P_{1})+w(v_{i})<W_{\xi }(\pi _{x,v_{i}},x)+w(v_{i})=w((P_{\xi }(\pi
_{x,v_{i}},x),v_{i})),
\end{equation*}%
a contradiction to the assumption that $w(P)=W_{\xi }(v_{i},v_{i})$. Hence, 
\begin{equation*}
w(P_{1})=W_{\xi }(\pi _{x,v_{i}},x).
\end{equation*}

To conclude, $P_{\xi }(v_{i},v_{i})=(P_{1},v_{i})$, where 
\begin{equation*}
w(P_{1})=\max \{W_{\xi }(\pi _{x,v_{i}},x) : x\in V(G_{\xi
}(v_{i})),l_{v_{i}}\leq r_{x}\leq r_{v_{i}}\}
\end{equation*}%
and this completes the proof of the lemma.
\end{proof}

\begin{lemma}
\label{lem:algorithm.second-direction-2} Let $\xi \in \Xi $ and $i\in
\lbrack n]$, where $\xi <r_{v_{i}}$, and let $v_{i},y\in V(G_{\xi }(v_{i}))$
and $y\in N(v_{i})$. Let $\zeta \in \{l_{y}\}\cup \{\xi \in \Xi
:l_{v_{i}}<\xi <l_{y}\}$ and $x\in V(G_{\xi }(v_{i}))$ be such that $%
l_{v_{i}}<r_{x}<\zeta $. Furthermore, let $P_{1}$ be a normal path of~$%
G_{\xi }(\pi _{x,v_{i}})$ with $x$ as its last vertex and $P_{2}$ be a
normal path of~$G_{\zeta }(\pi _{y,v_{i}})$ with $y$ as its last vertex.
Then $P=(P_{1},v_{i},P_{2})$ is a normal path of~$G_{\xi }(v_{i})$ with $y$
as its last vertex.
\end{lemma}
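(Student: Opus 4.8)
The plan is to verify in turn that $P=(P_1,v_i,P_2)$ is a well-defined simple path, that all its vertices lie in $G_{\xi}(v_i)$, that it ends at $y$, and finally that it is normal. First I would record the ordering relations forced by the hypotheses. Since $v_i\in V(G_{\xi}(v_i))$ we have $\xi\le l_{v_i}$, and from $l_{v_i}<r_x<\zeta\le l_y$ together with $\zeta>l_{v_i}\ge\xi$ we obtain $\zeta>\xi$. Because $\pi_{x,v_i}$ and $\pi_{y,v_i}$ are, by~(\ref{pi-eq}), either the base vertex itself or a $B$-neighbour strictly left of $v_i$, both satisfy $\pi_{x,v_i}<_{\sigma}v_i$ and $\pi_{y,v_i}<_{\sigma}v_i$, hence $r_{\pi_{x,v_i}}<r_{v_i}$ and $r_{\pi_{y,v_i}}<r_{v_i}$. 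Consequently $G_{\xi}(\pi_{x,v_i})\subseteq G_{\xi}(v_i)$ and, using $\zeta>\xi$, also $G_{\zeta}(\pi_{y,v_i})\subseteq G_{\xi}(v_i)$, so every vertex of $P$ lies in $G_{\xi}(v_i)$. These same inequalities show that $v_i$ is the \emph{strictly rightmost} vertex of $V(P)$ in $\sigma$; in particular $v_i\notin V(P_1)\cup V(P_2)$, and this fact will be the workhorse of the normality check.

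Next I would isolate the structural sub-claim that drives everything: if $v\in V(P_1)$ with $x<_{\sigma}v$, then $I_x\subseteq I_v$. Indeed $v<_{P_1}x$ (as $x$ is last) together with $x<_{\sigma}v$ gives $xv\in E$ by Lemma~\ref{lem:oxi.anapoda}, and then Lemma~\ref{lem:propers.consecutive.in.path} rules out $\mathcal{I}[\{x,v\}]$ being proper, forcing $I_x\subseteq I_v$ since $r_x<r_v$. Using this I would check the connecting edges and disjointness. Writing $z$ for the leftmost (first) vertex of $P_2$, its interval satisfies $l_z\ge\zeta>l_{v_i}$ and $r_z\le r_{\pi_{y,v_i}}<r_{v_i}$, so $I_z\cap I_{v_i}\ne\emptyset$ and $v_iz\in E$; likewise $l_{v_i}<r_x$ and $l_x<r_x\le r_{v_i}$ give $xv_i\in E$. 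For disjointness it remains to show $V(P_1)\cap V(P_2)=\emptyset$: a common vertex $v$ would satisfy $l_v\ge\zeta$ (from $P_2$) while $v\ne x$ with $x<_{\sigma}v$, so by the sub-claim $I_x\subseteq I_v$ gives $l_v<l_x<r_x<\zeta\le l_v$, a contradiction.

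The heart of the proof is normality, checked against Definition~\ref{normal-path-def} by examining the three kinds of consecutive steps of $P=(a_1,\dots,a_p,v_i,b_1,\dots,b_q)$ with $a_p=x$, $b_1=z$, $b_q=y$. Steps interior to $P_2$ and the step $v_i\to z$ transfer verbatim from the normality of $P_2$, because during the $P_2$-portion the remaining vertices of $P$ coincide with those of $P_2$, and $z$ is leftmost in $V(P_2)$. The step $x\to v_i$ is immediate: every vertex of $P_2$ has left endpoint $\ge\zeta>r_x$, so none is adjacent to $x$, whence $N(x)\cap(\{v_i\}\cup V(P_2))=\{v_i\}$. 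The delicate case is a step $a_{j-1}\to a_j$ interior to $P_1$, where I must rule out a vertex $b_\ell\in V(P_2)\cap N(a_{j-1})$ with $b_\ell<_{\sigma}a_j$ (the candidate $v_i$ never beats $a_j$, being rightmost). If $b_\ell\in N(a_{j-1})$ then $r_{a_{j-1}}\ge l_{b_\ell}\ge\zeta>r_x$, so $a_{j-1}$ is interior to $P_1$ and lies right of $x$; by the sub-claim $I_x\subseteq I_{a_{j-1}}$, so $x\in N(a_{j-1})$, and as $x$ follows $a_{j-1}$ in $P_1$, normality of $P_1$ forces its successor $a_j\le_{\sigma}x$, i.e. $r_{a_j}\le r_x<\zeta\le l_{b_\ell}$. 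Thus $a_j<_{\sigma}b_\ell$, so $a_j$ stays leftmost in $N(a_{j-1})$ intersected with the remaining vertices. Finally condition one of Definition~\ref{normal-path-def} holds since $a_1$ is leftmost in $V(P_1)$ with $r_{a_1}\le r_x<\zeta\le l_{b_\ell}$ and $a_1<_{\sigma}v_i$, making it leftmost in all of $V(P)$; and the last vertex of $P$ is the last vertex $y$ of $P_2$.

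I expect the main obstacle to be precisely this interior-$P_1$ case. A priori $P_1$ may contain vertices far to the right of its last vertex $x$ that are adjacent to vertices of $P_2$, and crude interval bounds do not separate $P_1$ from $P_2$ nor preclude a normality violation there. The resolution is the sub-claim combined with one extra step: any such right-of-$x$ vertex properly contains $I_x$, is therefore adjacent to $x$, and hence by the normality of $P_1$ is immediately succeeded by a vertex no further right than $x$, which lands strictly left of $\zeta$ and thus left of every vertex of $P_2$. Once this observation is in place, all three conditions of normality fall out routinely and the lemma follows.
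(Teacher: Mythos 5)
Your proposal is correct and follows essentially the same route as the paper's own proof: both establish that every vertex of $P_{2}$ lies strictly to the right of $x$ (hence the pieces are disjoint and the connecting edges exist), and both resolve the one delicate normality case -- a vertex of $P_{1}$ adjacent to a vertex of $P_{2}$ -- by observing that such a vertex lies right of $x$ in $\sigma$, is therefore adjacent to $x$ by Lemma~\ref{lem:oxi.anapoda}, so that normality of $P_{1}$ forces its successor to be no further right than $x$ and hence left of all of $V(P_{2})$. Your packaging of this as the sub-claim $I_{x}\subseteq I_{v}$ (via Lemma~\ref{lem:propers.consecutive.in.path}) is slightly stronger than the bare adjacency the paper uses, but the argument is the same.
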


\begin{proof}
Since $V(P_{2})\subseteq V(G_{\zeta }(\pi _{y,v_{i}}))=\{v\in V(G) : \zeta
\leq l_{v}\leq r_{v}\leq r_{\pi _{y,v_{i}}}\}$ it follows that $\zeta \leq
l_{v}$ for every vertex $v\in V(P_{2})$. Therefore, since $r_{x}<\zeta $, it
follows that 
\begin{equation}
x<_{\sigma }v,\text{ for every }v\in V(P_{2}).  \label{eq:eqinequal}
\end{equation}

Therefore, since $P_{1}$ is normal, $x$ is the last vertex of~$P_{1}$, and $%
x<_{\sigma }v$, $xv\notin E(G)$ for every $v\in V(P_{2})$, it follows that $%
V(P_{1})\cap V(P_{2})=\emptyset $ (Lemma~\ref{lem:oxi.anapoda}). Moreover,
since $l_{v_{i}}<r_{x}$, $xv_{i}\in E(G)$ and since $\zeta \leq l_{v}\leq
r_{v}\leq r_{\pi _{y,v_{i}}}\leq r_{v_{i}}$ it follows that $v_{i}v\in E(G)$
for every vertex in~$V(P_{2})$. Therefore, $(P_{1},v_{i},P_{2})$ is a path
that has $y$ as its last vertex (as $y$ is the last vertex of~$P_{2}$).
Moreover, since $V(P_{1})\subseteq V(G_{\xi }(\pi _{x,v_{i}}))\subseteq
V(G_{\xi }(v_{i}))$, $V(P_{2})\subseteq V(G_{\zeta }(\pi
_{y,v_{i}}))\subseteq V(G_{\xi }(\pi _{x,v_{i}}))$ and $v_{i}\in V(G_{\xi
}(v_{i}))$, $P$ is a path of~$G_{\xi }(v_{i})$. It remains to show that $P$
is normal.

We first show that if $v_{1}$ is the first vertex of~$P_{1}$, then $%
v_{1}<_{\sigma }v$ for every vertex $v\in V(P)\setminus \{v_{1}\}$. Notice
that $v_{1}<_{\sigma }v$, for every vertex $v\in V(P_{1})\setminus \{v\}$,
since $P_{1}$ is a normal path and $v_{1}$ is its first vertex. Recall also
that $x<_{\sigma }v$, for every vertex $v\in P_{2}\cup \{v_{i}\}$ (equation~(%
\ref{eq:eqinequal})). As $v_{1}<_{\sigma }x<_{\sigma }v$ for every vertex $%
v\in P_{2}\cup \{v_{i}\}$, it indeed follows that $v_{1}<_{\sigma }v$, for
every vertex in~$V(P)\setminus \{v_{1}\}$.

We now show that for every vertex $v\in V(P)$, with successor $v^{\prime
}\in V(P)$ and every vertex $u\in V(P)$ such that $v^{\prime }<_{P}u$, and $%
vu\in E(G)$ it holds that $v^{\prime }<_{\sigma }u$. Let us assume to the
contrary that for some $v\in V(P)$, with successor $v^{\prime }\in V(P)$
there exists a vertex $u\in V(P)$ such that $v^{\prime }<_{P}u$, $vu\in E(G)$%
, and $u<_{\sigma }v^{\prime }$. Notice that if $\{v,v^{\prime
},u\}\subseteq V(P_{1})$ or $\{v,v^{\prime },u\}\subseteq V(P_{2})$, then we
obtain a contradiction to the assumptions that $P_{1}$ and $P_{2}$ are
normal paths. Similarly, if $v=v_{i}$ we obtain a contradiction to the fact
that $P_{2}$ is a normal path since the successor of~$v_{i}$ in~$P$ is the
first vertex of~$P_{2}$. Moreover, as the only neighbor of~$x$ in~$%
V(P)\setminus V(P_{1})$ is $v_{i}$ we obtain that $v\in V(P_{1})\setminus
\{x\}$ and $u\in V(P_{2})$. Notice then that since $vu\in E(G)$, it holds
that $r_{v}>l_{u}\geq \zeta >r_{x}$, and since $x<_{\sigma }v$ and $v<_{P}x$%
, as $P_{1}$ is normal from Lemma~\ref{lem:oxi.anapoda}, $xv\in E(G)$.
However, then $x<_{\sigma }u<_{\sigma }v^{\prime }$, a contradiction to the
assumption that $P_{1}$ is normal. Therefore, we conclude that for every
vertex $v\in V(P)$, with successor $v^{\prime }\in V(P)$ and every vertex $%
u\in V(P)$ such that $v^{\prime }<_{P}u$, and $vu\in E(G)$ it holds that $%
v^{\prime }<_{\sigma }u$. Thus, we completed the proof that $P$ is a normal
path of~$G_{\xi }(v_{i})$ that has $y$ as its last vertex.
\end{proof}

\begin{lemma}
\label{lem:algorithm.second-direction-1} Let $\xi \in \Xi $ and $i\in
\lbrack n]$, where $\xi <r_{v_{i}}$, and let $v_{i},y\in V(G_{\xi }(v_{i}))$
and $y\in N(v_{i})$. Let $P_{\xi }(v_{i},y)=(P_{1},v_{i},P_{2})$. If $%
P_{2}\neq (y)$, then there exists some $\zeta \in \Xi $, where $%
l_{v_{i}}<\zeta \leq l_{y}$, such that%
\begin{eqnarray}
w(P_{1}) &=&\max \{W_{\xi }(\pi _{x,v_{i}},x):x\in
V(G_{\xi}(v_{i})),l_{v_{i}}<r_{x}<\zeta \},  \label{w-P1-eq-2} \\
w(P_{2}) &=&W_{\zeta }(\pi _{y,v_{i}},y).  \label{w-P2-eq}
\end{eqnarray}%
Otherwise, if $P_{2}=(y)$ then $l_{v_{i}}<l_{y}$ and 
\begin{equation}
w(P_{1})=\max \{W_{\xi }(\pi _{x,v_{i}},x):x\in
V(G_{\xi}(v_{i})),l_{v_{i}}<r_{x}<l_{y}\}.  \label{w-P1-eq-3}
\end{equation}
\end{lemma}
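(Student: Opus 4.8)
The plan is to prove the converse of Lemma~\ref{lem:algorithm.second-direction-2}: given the optimal path $P=P_{\xi}(v_{i},y)=(P_{1},v_{i},P_{2})$, I decompose it at the pivot $v_{i}$ and show each part is an optimal subpath in the appropriate smaller subgraph, with a separating boundary $\zeta\in\Xi$. The starting observation is that $v_{i}$ is the $\sigma$-maximum vertex of $G_{\xi}(v_{i})$ (largest right endpoint), so every other vertex of $P$ is strictly to its left in $\sigma$. Write $x$ for the predecessor of $v_{i}$ in $P$ (the last vertex of $P_{1}$; note $P_{1}\neq\emptyset$, since otherwise $v_{i}$ would be simultaneously $\sigma$-leftmost and $\sigma$-maximum of $V(P)$, forcing $V(P)=\{v_{i}\}$ and contradicting $y\in N(v_{i})$). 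First I establish the key separation facts: applying Lemma~\ref{lem:tobenamed1} to $(P_{1}',x,v_{i},P_{2})$ together with $v_{i}$ being $\sigma$-maximum gives $x<_{\sigma}v$ for every $v\in V(P_{2})$; and a short normality argument (if $l_{v}<r_{x}$ then $xv\in E(G)$, contradicting that $v_{i}$ is the $\sigma$-leftmost neighbour of $x$ among the remaining vertices) yields $r_{x}<l_{v}$, hence $I_{v}\subseteq I_{v_{i}}$, for every $v\in V(P_{2})$. In particular $l_{v_{i}}<r_{x}$ follows from $xv_{i}\in E(G)$ and $x<_{\sigma}v_{i}$.

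Next I would verify the structural properties needed to invoke the existing machinery. The prefix $P_{1}$ is normal (a prefix of a normal path is normal) and avoids $v_{i}$, so Lemma~\ref{lem:vi.not.in.p} places it inside $G_{\xi}(\pi_{x,v_{i}})$ with $x$ as its last vertex. The suffix $P_{2}$ is also normal: since $V(P_{2})\subseteq N(v_{i})$, its first vertex $y'$ (the successor of $v_{i}$) is the $\sigma$-leftmost vertex of $V(P_{2})$ by normality of $P$, and the internal leftmost-neighbour conditions are inherited; as $v_{i}\notin V(P_{2})$, a second application of Lemma~\ref{lem:vi.not.in.p} will place $P_{2}$ inside $G_{\zeta}(\pi_{y,v_{i}})$ once $\zeta$ is fixed.

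The heart of the proof is choosing the boundary $\zeta\in\Xi$ with $r_{x}<\zeta\le\min_{v\in V(P_{2})}l_{v}\ (\le l_{y})$, and this is the step I expect to be the main obstacle, because $\Xi$ only records endpoints attached to the small set $B$. Let $v^{\ast}$ minimise $l_{v}$ over $V(P_{2})$. If $v^{\ast}\in B$ I take $\zeta=l_{v^{\ast}}\in\Xi$. The delicate case is $v^{\ast}\in A$: here I first argue $v^{\ast}=y'$, since otherwise $I_{y'}\subseteq I_{v^{\ast}}$ would contradict Condition~2 of Definition~\ref{special-interval-def}; then, using that $P_{2}\neq(y)$ supplies a second vertex $y''$ with $y'y''\in E(G)$ and $y''\in B$ (as $A$ is independent), I show $l_{y'}<l_{y''}<r_{y'}$, so $l_{y''}\in I_{y'}$ and hence $\xi_{y''}=l_{y'}\in\Xi$ by the definition of $\xi$ (well defined by Lemma~\ref{lem:xi.is.well.defined}). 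In either case $\zeta=l_{v^{\ast}}\in\Xi$ lies in the required range, so $l_{v_{i}}<r_{x}<\zeta\le l_{y}$.

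Finally I would close the recurrence using Lemma~\ref{lem:algorithm.second-direction-2} as a recombination tool. The inequalities $w(P_{1})\le W_{\xi}(\pi_{x,v_{i}},x)$ and $w(P_{2})\le W_{\zeta}(\pi_{y,v_{i}},y)$ hold because $P_{1},P_{2}$ are admissible paths for those subproblems. For the reverse direction, replacing $P_{1}$ (resp.\ $P_{2}$) by an optimal subpath and recombining via Lemma~\ref{lem:algorithm.second-direction-2} produces a normal path of $G_{\xi}(v_{i})$ ending at $y$ whose weight cannot exceed $w(P)=W_{\xi}(v_{i},y)$, forcing equality; running the same recombination over every admissible $x'$ with $l_{v_{i}}<r_{x'}<\zeta$ yields the maximisation in~(\ref{w-P1-eq-2}), and~(\ref{w-P2-eq}) follows. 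For the degenerate case $P_{2}=(y)$ I would first note $l_{v_{i}}<l_{y}$ (otherwise $\mathcal{I}[\{y,v_{i}\}]$ is proper and Lemma~\ref{lem:propers.consecutive.in.path} forces $y<_{P}v_{i}$, contradicting that $y$ is the last vertex) and then run the identical $w(P_{1})$ argument with boundary $\zeta=l_{y}$ and the trivial normal path $(y)$, giving~(\ref{w-P1-eq-3}).
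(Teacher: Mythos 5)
Your proposal is correct and follows essentially the same route as the paper's proof: decompose $P$ at $v_{i}$, establish $I_{v}\subseteq I_{v_{i}}$ for all $v\in V(P_{2})$, locate $\zeta\in\Xi$ via the $\xi_{u}$ mechanism (your ``minimizer $v^{\ast}$'' formulation is equivalent to the paper's definition of $\zeta$ as $\min\{l_{v}\in\Xi:v\in V(P_{2})\}$ combined with its proof that $\zeta\le l_{v}$ for all $v\in V(P_{2})$), place $P_{1}$ and $P_{2}$ into $G_{\xi}(\pi_{x,v_{i}})$ and $G_{\zeta}(\pi_{y,v_{i}})$ via Lemma~\ref{lem:vi.not.in.p}, and close with the exchange argument through Lemma~\ref{lem:algorithm.second-direction-2}. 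The only deviations are cosmetic reorganizations (e.g., deriving $r_{x}<l_{v}$ for all $v\in V(P_{2})$ in one normality argument rather than the paper's two separate steps), so no further comparison is needed.
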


\begin{proof}
Denote $P=P_{\xi }(v_{i},y)$. Notice first that, since $v_{i}$ is the
rightmost vertex of~$P$, the path $P_{1}$ is not empty by Observation~\ref%
{obs:first.vertex}. Denote by $x$ the last vertex of~$P_{1}$. Then, since $%
P_{1}$ is the prefix of the normal path $P$, observe that $P_{1}$ is a
normal path of~$G_{\xi }(v_{i})$ that has $x$ as its last vertex and does
not contain~$v_{i}$. Furthermore, $P_{1}$ is a path of~$G_{\xi }(\pi
_{x,v_{i}})$ by Lemma~\ref{lem:vi.not.in.p}. Therefore $P_{1}$ is a normal
path of~$G_{\xi }(\pi _{x,v_{i}})$, and thus, 
\begin{equation*}
w(P_{1})\leq W_{\xi }(\pi _{x,v_{i}},x).
\end{equation*}

Let $v\in V(P_{2})$. Since $v_{i}<_{P}v$ and $v<_{\sigma }v_{i}$, Lemma~\ref%
{lem:oxi.anapoda} implies that $v_{i}v\in E(G)$. Therefore, again since $%
v_{i}<_{P}v$ and $v<_{\sigma }v_{i}$, Lemma~\ref%
{lem:propers.consecutive.in.path} implies that $\mathcal{I}[v,v_{i}]$ does
not induce a proper interval representation, i.e.,~either $I_{v}\subseteq
I_{v_{i}}$ or $I_{v_{i}}\subseteq I_{v}$. Thus, since $v<_{\sigma }v_{i}$,
it follows that $I_{v}\subseteq I_{v_{i}}$. That is, $I_{v}\subseteq
I_{v_{i}}$ for every $v\in V(P_{2})$.

Therefore, since $y\in V(P_{2})$, it follows that $I_{y}\subseteq I_{v_{i}}$%
, and thus in particular $l_{v_{i}}<l_{y}$. Now let 
\begin{equation*}
\zeta =%
\begin{cases}
\min \{l_{v}\in \Xi :~v\in V(P_{2})\} & \text{if }P_{2}\neq (y), \\ 
l_{y} & \text{otherwise.}%
\end{cases}%
\end{equation*}%
We show that, if $P_{2}\neq (y)$, then $\{l_{v}\in \Xi :~v\in V(P_{2})\}\neq
\emptyset $, and thus $\zeta $ is well-defined. Notice first that, if $y\in
B $, then $l_{y}\in \Xi $. Let $y\in A$, then let $y^{\prime }$ be the
neighbor of~$y$ in~$P_{2}$ (note that $y^{\prime }$ always exists since $%
P_{2}\neq (y) $). Then, as $A$ is an independent set, it follows that $%
y^{\prime }\in B$, and thus $l_{y^{\prime }}\in B$. Hence, if $P_{2}\neq (y)$%
, then in any case $\{l_{v}\in \Xi :~v\in V(P_{2})\}\neq \emptyset $.

Suppose that $\zeta <r_{x}$. Then, by definition of~$\zeta $, there exists
some $v\in V(P_{2})$ such that $\zeta =l_{v}<r_{x}$. Let $r_{x}<r_{v}$,
i.e.,~ $l_{v}<r_{x}<r_{v}$. Then $xv\in E(G)$. Therefore, since $v<_{\sigma
}v_{i}$ for every $v\in V(P_{2})$, it follows by the normality of~$P$ that $%
v_{i}$ is not the next vertex of~$x$ in~$P$, which is a contradiction. Let $%
r_{v}<r_{x}$, i.e.,~$v<_{\sigma }x$. Then, since $x<_{P}v$, Lemma~\ref%
{lem:oxi.anapoda} implies that $xv\in E(G)$, which is again a contradiction
by the normality of~$P$. Therefore $r_{x}<\zeta $. Now note that $%
l_{v_{i}}<r_{x}$, since $xv_{i}\in E(G)$. That is, $l_{v_{i}}<r_{x}<\zeta $.
Therefore, since $\zeta \leq l_{y}$ by the definition of~$\zeta $, it
follows that 
\begin{equation*}
l_{v_{i}}<r_{x}<\zeta \leq l_{y}.
\end{equation*}

Let now $P_{2}\neq (y)$. We prove that $\zeta \leq l_{v}$, for every $v\in
V(P_{2})$. Assume otherwise that there exists a vertex $v\in V(P_{2})$ such
that $l_{v}<\zeta $. Then, by the definition of~$\zeta $, it follows that $%
l_{v}\notin \Xi $. Therefore $v\notin B$, and thus $v\in A$ (cf.~the
definition of the set $\Xi $). Since $P_{2}\neq (y)$, it follows that $v$
has at least one neighbor~$u$ in~$P_{2}$. Then $u\in B$, since $A$ is an
independent set. Furthermore, $l_{v}<\zeta \leq l_{u}$. Therefore, since $%
uv\in E(G)$ by assumption, it follows that $l_{u}\in I_{v}$. That is, $\xi
_{u}=l_{v}\in \Xi $ (cf.~the definition of~$\xi _{u}$ for a vertex $u\in B$%
). Therefore $\zeta \leq \xi _{u}=l_{v}$, which is a contradiction to our
assumption. Thus $\zeta \leq l_{v}$, for every $v\in V(P_{2})$.

Now recall that $I_{v}\subseteq I_{v_{i}}$ for every $v\in V(P_{2})$, as we
proved above, and thus $v_{i}v\in E(G)$ for every $v\in V(P_{2})$.
Furthermore, recall that all vertices of~$P_{2}$ appear in~$P$ after vertex $%
v_{i}$. Therefore, since $P$ is a normal path by assumption, it follows that 
$P_{2}$ is also a normal path.

Thus, since $\zeta \leq l_{v}$ for every $v\in V(P_{2})$, as we proved
above, it follows that $P_{2}$ is a normal path of~$G_{\zeta }(v_{i})$ that
does not contain~$v_{i}$. Therefore Lemma~\ref{lem:vi.not.in.p} implies that 
$P_{2}$ is a path of~$G_{\zeta }(\pi _{y,v_{i}})$. Thus, since $y$ is the
last vertex of~$P_{2}$, it follows that%
\begin{equation*}
w(P_{2})\leq W_{\zeta }(\pi _{y,v_{i}},y).
\end{equation*}

In the remainder of the proof we show that $w(P_{1})=W_{\xi }(\pi
_{x,v_{i}},x)$ and $w(P_{2})=W_{\zeta }(\pi _{y,v_{i}},y)$. Towards a
contradiction assume that at least one of the equalities does not hold.
Notice first that from Lemma~\ref{lem:algorithm.second-direction-2}, $%
P=(P_{\xi }(\pi _{x,v_{i}},x),v_{i},P_{\zeta }(\pi _{y,v_{i}},y))$ is a
normal path of~$G_{\xi }(v_{i})$ that has $y$ as its last vertex. Notice now
that 
\begin{equation*}
W_{\xi }(v_{i},y)=w(P_{1})+w(v_{i})+w(P_{2})<W_{\xi }(\pi
_{x,v_{i}},x)+w(v_{i})+W_{\zeta }(\pi _{y,v_{i}},y),
\end{equation*}%
a contradiction. Therefore, 
\begin{equation*}
w(P_{1})=W_{\xi }(\pi _{x,v_{i}},x)
\end{equation*}%
and 
\begin{equation*}
w(P_{2})=W_{\zeta }(\pi _{y,v_{i}},y).
\end{equation*}%
Summarizing, if $P_{2}\neq (y)$ we obtain that 
\begin{eqnarray*}
w(P_{1}) &=&\max \{W_{\xi }(\pi _{x,v_{i}},x) : x\in V(G_{\xi
}(v_{i})),l_{v_{i}}<r_{x}<\zeta \} , \\
w(P_{2}) &=&W_{\zeta }(\pi _{y,v_{i}},y),
\end{eqnarray*}%
and if $P_{2}=(y)$ we obtain that 
\begin{equation*}
w(P_{1})=\max \{W_{\xi }(\pi _{x,v_{i}},x) : x\in V(G_{\xi
}(v_{i})),l_{v_{i}}<r_{x}<l_{y}\}.
\end{equation*}%
This completes the proof of the lemma.
\end{proof}

\medskip

We are now ready to present Algorithm~\ref%
{weighted-path-special-interval-graph-alg}, which computes the maximum
weight of a path in a given \emph{special} \emph{weighted} interval graph $G$%
. It is easy to check that Algorithm~\ref%
{weighted-path-special-interval-graph-alg} can be slightly modified such
that it returns the actual path $P$ instead of its weight only.

First we give a brief overview of the algorithm. Using dynamic programming,
it computes a 3-dimensional table. In this table, for every point $\xi \in
\Xi $, every index $i\in \lbrack n]$, and every vertex $y\in V(G_{\xi
}(v_{i}))$, where $\xi <r_{v_{i}}$ and $y\in N(v_{i})$, the entry $W_{\xi
}(v_{i},y)$ (resp.~the entry~$W_{\xi }(v_{i},v_{i})$) keeps the weight of a
normal path in the subgraph $G_{\xi }(v_{i})$ which is the largest among
those normal paths whose last vertex is $y$ (resp.~$v_{i}$). Thus, since $%
w(v_{0})=0$ for the dummy isolated vertex $v_{0}$ (cf.~line~\ref{alg-line-1}
of the algorithm), the maximum weight of a path in $G$ will be eventually
stored in one of the entries $\left\{ W_{l_{v_{0}}}(v_{i},v_{i}):1\leq i\leq
n\right\} $ or in one of the entries $\left\{ W_{l_{v_{0}}}(v_{i},y):1\leq
i\leq n, \ y<_{\sigma}v_{i}, \ y\in N(v_{i})\right\} $, depending on whether
the last vertex $y$ of the desired maximum-weight path coincides with the
rightmost vertex $v_{i}$ of this path in the ordering $\sigma $ (cf.~line~%
\ref{alg-line-18} of the algorithm).

Note that for every computed entry $W_{\xi }(v_{i},y)$ the vertices $v_{i}$
and $y$ are adjacent, and thus $v_{i}\in B$ or $y\in B$, since $A$ is an
independent set. Thus, since $|B|=O(\kappa )$, there are at most $O(\kappa
n) $ such eligible pairs of vertices $v_{i},y$. Furthermore, since also $%
|\Xi |=O(\kappa )$, the computed 3-dimensional table stores at most $%
O(\kappa ^{2}n)$ entries $W_{\xi }(v_{i},v_{i})$ and $W_{\xi }(v_{i},y)$.
From the \emph{for}-loops of lines~\ref{alg-line-2}-\ref{alg-line-3} of the
algorithm and from the obvious inductive hypothesis we may assume that
during the $\{i,\xi \}$th iteration all previous values $W_{\xi ^{\prime
}}(v_{i^{\prime}},v_{i^{\prime}})$ and $W_{\xi ^{\prime
}}(v_{i^{\prime}},y^{\prime })$, where $i^{\prime }<i$ or $\xi ^{\prime
}<\xi $, have been correctly computed at a previous iteration.

In the initialization phase for a particular pair $\{i,\xi \}$ (cf.~lines~%
\ref{alg-line-4}-\ref{alg-line-6}) the algorithm computes some initial
values for $W_{\xi }(v_{i},v_{i})$ and $W_{\xi }(v_{i},y)$. For a path with $%
v_{i}$ as its last vertex, we are only interested in the case where $%
v_{i}\in V(G_{\xi }(v_{i}))$; in this case we initialize $W_{\xi
}(v_{i},v_{i})=w(v_{i})$, cf.~line~\ref{alg-line-4}. For a path with $y\neq
v_{i}$ as its last vertex (cf.~lines~\ref{alg-line-5}-\ref{alg-line-6}), we
initialize $W_{\xi }(v_{i},y)=W_{\xi }(\pi _{y,v_{i}},y)$, since the path $%
P_{\xi }(\pi _{y,v_{i}},y)$ is indeed a normal path of the graph $G_{\xi
}(\pi _{y,v_{i}})$, which is an induced subgraph of $G_{\xi }(v_{i})$.

For the induction step phase (cf.~lines~\ref{alg-line-7}-\ref{alg-line-17})
the algorithm updates the initialized entries $W_{\xi }(v_{i},v_{i})$ and $%
W_{\xi }(v_{i},y)$ according to Lemmas~\ref{extra-computation-lem}-\ref%
{lem:algorithm.second-direction-1}. To update the value $W_{\xi
}(v_{i},v_{i})$ we only need to consider the case where $v_{i}\in V(G_{\xi
}(v_{i}))$; in this case $W_{\xi }(v_{i},v_{i})$ is updated in lines~\ref%
{alg-line-7}-\ref{alg-line-9} according to Lemma~\ref{lem:algorithm.first}.
The values of $W_{\xi }(v_{i},y)$, where $y \neq v_{i}$, are updated in
lines~\ref{alg-line-10}-\ref{alg-line-17}. In particular, in the case where $%
l_{y}<l_{v_{i}}$ or $v_{i}\notin V(G_{\xi }(v_{i}))$, the value of $W_{\xi
}(v_{i},y)$ is updated in lines~\ref{alg-line-11}-\ref{alg-line-12}
according to Lemma~\ref{extra-computation-lem}. Otherwise, $W_{\xi
}(v_{i},y) $ is updated in lines~\ref{alg-line-14}-\ref{alg-line-17}
according to Lemma~\ref{lem:algorithm.second-direction-1}.

The correctness of the algorithm is proved in Theorem~\ref%
{special-algorithm-correcntess-thm} and its running time is proved in
Theorem~\ref{special-algorithm-running-time-thm}.

\begin{algorithm}[htb]
\caption{Computing a maximum-weight path of a special weighted interval graph} \label{weighted-path-special-interval-graph-alg}
\begin{algorithmic}[1]
\REQUIRE{A special weighted interval graph $G=(V,E)$ with parameter $\kappa\in \mathbb{N}$, along with the special interval representation $\mathcal{I}$ of~$G$ and the partition $V=A\cup B$, 
where $\sigma = (v_1, v_2, \ldots, v_n)$ is a right-endpoint ordering of~$V$.}
\ENSURE{The maximum weight of a path in~$G$}

\medskip

\STATE{Add an isolated dummy vertex $v_{0}$ with $w(v_{0})=0$ to set $B$, where $v_{0} <_{\sigma} v_{1}$; denote $\sigma=(v_{0},v_{1},v_{2},\ldots,v_{n})$}\label{alg-line-1} 

\medskip%

\FOR{$i=0$ to $n$} \label{alg-line-2}
     \FOR{every $\xi \in \Xi$ where $\xi < r_{v_{i}}$} \label{alg-line-3}

\medskip
          
          \STATE{\textbf{if} $v_{i}\in V(G_{\xi}(v_{i}))$ \textbf{then} $W_{\xi}(v_{i},v_{i}) \leftarrow w(v_{i})$} \label{alg-line-4}
          \COMMENT{initialization}
     
\vspace{0,1cm}
     
          \FOR{every $y\in V(G_{\xi}(v_{i}))$ where $y\in N(v_{i})$} \label{alg-line-5}
               \STATE{$W_{\xi}(v_{i},y) \leftarrow W_{\xi}(\pi_{y,v_{i}},y)$} \label{alg-line-6}
               \COMMENT{initialization}
          \ENDFOR

\medskip

          \IF{$v_{i} \in V(G_{\xi}(v_{i}))$} \label{alg-line-7}
               \STATE{$W_{1} \leftarrow \max \{W_{\xi }(\pi _{x,v_{i}},x) :  x\in V(G_{\xi}(v_{i})),l_{v_{i}}<r_{x}<r_{v_{i}}\}$} \label{alg-line-8}
               \STATE{$W_{\xi}(v_{i},v_{i}) \leftarrow \max\{W_{\xi}(v_{i},v_{i}), W_{1} + w(v_{i})\}$} \label{alg-line-9}
          \ENDIF

\vspace{0.1cm}
          
          \FOR{every $y\in V(G_{\xi}(v_{i}))$ where $y\in N(v_{i})$} \label{alg-line-10}
               \IF{$l_{y} < l_{v_{i}}$ or $v_{i} \notin V(G_{\xi}(v_{i}))$} \label{alg-line-11}
                    \STATE{$W_{\xi}(v_{i},y) \leftarrow W_{\xi}(\pi_{y,v_{i}},y)$} \label{alg-line-12}
               \ELSE \label{alg-line-13}
                    \STATE{$W^{\prime}_{1} \leftarrow \max \{W_{\xi }(\pi _{x,v_{i}},x) :  x\in V(G_{\xi}(v_{i})),l_{v_{i}}<r_{x}<l_{y} \}$} \label{alg-line-14}
                    \FOR{every $\zeta \in \Xi$ with $l_{v_{i}} < \zeta \leq l_{y}$} \label{alg-line-15}
                         \STATE{$W_{1} \leftarrow \max \{W_{\xi }(\pi _{x,v_{i}},x) : x\in V(G_{\xi}(v_{i})),l_{v_{i}}<r_{x}<\zeta \}$} \label{alg-line-16}
                         \STATE{$W_{\xi}(v_{i},y) \leftarrow \max\{W_{\xi}(v_{i},y), W^{\prime}_{1} + w(v_{i}) + w(y), W_{1} + w(v_{i}) + W_{\zeta }(\pi _{y,v_{i}},y)\}$} \label{alg-line-17}
                    \ENDFOR
               \ENDIF
          \ENDFOR
     \ENDFOR
\ENDFOR

\medskip

\RETURN{$\max\{W_{l_{v_{0}}}(v_{i},v_{i}), W_{l_{v_{0}}}(v_{i},y) : 1\leq i \leq n, \ y<_{\sigma} v_{i}, \ y\in N(v_{i}) \}$} \label{alg-line-18}
\end{algorithmic}
\end{algorithm}

\begin{theorem}
\label{special-algorithm-correcntess-thm}Let $G=(V,E)$ be a special weighted
interval graph, given along with a special interval representation $\mathcal{%
I}$ and a special vertex partition $V=A\cup B$. Then Algorithm~\ref%
{weighted-path-special-interval-graph-alg} computes the maximum weight of a
path $P$ in~$G$.
\end{theorem}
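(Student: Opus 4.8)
The plan is to prove, by induction on the index $i$ (equivalently, on the pairs $(i,\xi)$ in the order in which the two outer \emph{for}-loops of lines~\ref{alg-line-2}--\ref{alg-line-3} process them), that after the iteration for $(i,\xi)$ every table entry $W_{\xi}(v_{i},v_{i})$ and $W_{\xi}(v_{i},y)$ holds exactly the weight of a maximum-weight normal path of $G_{\xi}(v_{i})$ whose last vertex is $v_{i}$, respectively $y$, in the sense of Notation~\ref{notation-interval-paths}. The key observation making this induction well-founded is that every entry read by the recurrences (lines~\ref{alg-line-6},~\ref{alg-line-8},~\ref{alg-line-12},~\ref{alg-line-14},~\ref{alg-line-16},~\ref{alg-line-17}) is of the form $W_{\xi}(\pi_{x,v_{i}},x)$ or $W_{\zeta}(\pi_{y,v_{i}},y)$, and since $\pi_{x,v_{i}},\pi_{y,v_{i}}<_{\sigma}v_{i}$ by definition, these always refer to a vertex of strictly smaller index than $i$; hence they have already been computed correctly under the induction hypothesis. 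For the base case $i=0$ the only relevant point is $\xi=l_{v_{0}}$, the subgraph $G_{l_{v_{0}}}(v_{0})$ consists of the isolated dummy vertex $v_{0}$ alone, and line~\ref{alg-line-4} correctly sets $W_{l_{v_{0}}}(v_{0},v_{0})=w(v_{0})=0$ while the $y$-loop is empty.

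For the inductive step I would treat the two kinds of entries separately. Since $v_{i}$ is the vertex of largest right endpoint in $G_{\xi}(v_{i})$ (Definition~\ref{subgraphs-interval-definition}), any path ending at $v_{i}$ has $v_{i}$ as its rightmost vertex; thus Lemma~\ref{lem:algorithm.first} gives precisely the recurrence evaluated in lines~\ref{alg-line-7}--\ref{alg-line-9}, where $W_{1}$ is the maximum appearing in Eq.~(\ref{w-P1-eq-1}) and the initial value $w(v_{i})$ of line~\ref{alg-line-4} accounts for the degenerate path $(v_{i})$. For an entry $W_{\xi}(v_{i},y)$ with $y\neq v_{i}$, let $P=P_{\xi}(v_{i},y)$ be an optimal normal path. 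If $v_{i}\notin V(P)$ --- which, by Lemma~\ref{extra-computation-lem}, is exactly the situation captured by the guard $l_{y}<l_{v_{i}}$ or $v_{i}\notin V(G_{\xi}(v_{i}))$ --- then $W_{\xi}(v_{i},y)=W_{\xi}(\pi_{y,v_{i}},y)$, and the value is set correctly in lines~\ref{alg-line-11}--\ref{alg-line-12} (and already at the initialization line~\ref{alg-line-6}).

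Otherwise $v_{i}\in V(P)$, so $v_{i}$ is the rightmost vertex of $P$ and $P=(P_{1},v_{i},P_{2})$ with $P_{2}$ ending at $y$. Here I would invoke the two directional lemmas. Lemma~\ref{lem:algorithm.second-direction-1} decomposes $P$ and shows that $w(P_{1})$ and $w(P_{2})$ are realized by the maxima in Eqs.~(\ref{w-P1-eq-2})--(\ref{w-P1-eq-3}); according to whether $P_{2}=(y)$ or $P_{2}\neq(y)$, this matches respectively the term $W_{1}'+w(v_{i})+w(y)$ and the term $W_{1}+w(v_{i})+W_{\zeta}(\pi_{y,v_{i}},y)$ taken over all $\zeta\in\Xi$ with $l_{v_{i}}<\zeta\leq l_{y}$ in lines~\ref{alg-line-14}--\ref{alg-line-17}, so the computed value is at least the optimum. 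Conversely, Lemma~\ref{lem:algorithm.second-direction-2} guarantees that every combination $(P_{\xi}(\pi_{x,v_{i}},x),v_{i},P_{\zeta}(\pi_{y,v_{i}},y))$ over which the algorithm maximizes is itself a normal path of $G_{\xi}(v_{i})$ ending at $y$, so the computed value never exceeds the optimum; together with the induction hypothesis on the smaller-index entries this yields equality.

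Finally, to justify the return value in line~\ref{alg-line-18}, I would use Lemma~\ref{lem:normal.path.existence} to replace any maximum-weight path of $G$ by a normal path on the same vertices, let $v_{i}$ be its rightmost vertex (so the whole path lies in $G_{l_{v_{0}}}(v_{i})$, since $l_{v_{0}}$ is the globally leftmost point), and observe that its last vertex is either $v_{i}$ or some $y<_{\sigma}v_{i}$; in the latter case Lemma~\ref{lem:oxi.anapoda} forces $y\in N(v_{i})$. Hence the optimum is stored in one of the entries maximized over in line~\ref{alg-line-18}, and conversely each such entry is the weight of a genuine path, giving exactly the maximum weight of a path in $G$. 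The main obstacle is the inductive step for $W_{\xi}(v_{i},y)$: the real work lies in matching the three terms of the max in line~\ref{alg-line-17} to the case split ($P_{2}=(y)$ versus $P_{2}\neq(y)$) of Lemma~\ref{lem:algorithm.second-direction-1} and in checking that both inequalities (the algorithm captures the optimum, and never overshoots) hold simultaneously; once the auxiliary lemmas are in place, however, the remainder is careful bookkeeping of the ranges and endpoints rather than new combinatorial insight.
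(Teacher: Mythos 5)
Your proof follows the paper's own argument essentially step for step: induction on $i$ (with the same observation that every entry read on the right-hand side of a recurrence has vertex index $\pi_{x,v_i},\pi_{y,v_i}<_{\sigma}v_i$), the same case analysis --- initialization, the entries $W_{\xi}(v_i,v_i)$ via Lemma~\ref{lem:algorithm.first}, the guard of line~\ref{alg-line-11} via Lemma~\ref{extra-computation-lem}, and the decomposition $(P_1,v_i,P_2)$ with the two directional Lemmas~\ref{lem:algorithm.second-direction-1} and~\ref{lem:algorithm.second-direction-2} --- and the same justification of the return value in line~\ref{alg-line-18}.

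One imprecision is worth fixing. You assert that, by Lemma~\ref{extra-computation-lem}, the guard ``$l_y<l_{v_i}$ or $v_i\notin V(G_{\xi}(v_i))$'' is \emph{exactly} the situation in which $v_i\notin V(P_{\xi}(v_i,y))$. The lemma gives only one implication: the guard forces $v_i\notin V(P_{\xi}(v_i,y))$. In the else-branch it can still happen that the optimal normal path ending at $y$ avoids $v_i$; then Lemma~\ref{lem:algorithm.second-direction-1} (whose hypothesis is $P_{\xi}(v_i,y)=(P_1,v_i,P_2)$) does not apply, and your ``the computed value is at least the optimum'' direction is not covered by the lemmas you cite. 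The repair is the one the paper spells out, and you already have the ingredients: in that subcase Lemma~\ref{lem:vi.not.in.p} gives $W_{\xi}(v_i,y)=W_{\xi}(\pi_{y,v_i},y)$, which is precisely the value stored at initialization in line~\ref{alg-line-6}, and line~\ref{alg-line-17} only ever takes maxima over quantities realized by genuine normal paths (Lemma~\ref{lem:algorithm.second-direction-2}), so the entry ends up neither too small nor too large. With that subcase made explicit, your argument coincides with the paper's proof.
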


\begin{proof}
In lines~\ref{alg-line-2}-\ref{alg-line-17}, Algorithm~\ref%
{weighted-path-special-interval-graph-alg} iterates for every $i\in
\{0,1,2,\ldots ,n\}$ and for every $\xi \in \Xi $ such that $\xi <r_{v_{i}}$%
. For every such $i$ and $\xi $, the algorithm computes the values $W_{\xi
}(v_{i},v_{i})$ and the values $W_{\xi }(v_{i},y)$, for every vertex $y\in
V(G_{\xi }(v_{i}))$ such that $y\in N(v_{i})$. We will prove by induction on 
$i$ that these values are the weights of the maximum-weight normal paths $%
P_{\xi }(v_{i},v_{i})$ and the values $P_{\xi }(v_{i},y)$, respectively (cf.
Notation~\ref{notation-interval-paths}).

For the induction basis, let $i=0$. Then, since $v_{0}$ is an isolated
vertex (cf.~line~\ref{alg-line-1} of Algorithm~\ref%
{weighted-path-special-interval-graph-alg}), the only $\xi \in \Xi $, for
which $\xi <r_{v_{0}}$, is $\xi =l_{v_{0}}$. Then line~\ref{alg-line-4} of
the algorithm is executed and the algorithm correctly computes the value $%
W_{\xi }(v_{0},v_{0})=w(v_{0})=0$. Furthermore, since $v_{0}$ is a dummy
vertex by assumption, the lines~\ref{alg-line-5}-\ref{alg-line-6} and the
lines~\ref{alg-line-10}-\ref{alg-line-17} of the algorithm are not executed
at all for $i=0$. Finally, in lines~\ref{alg-line-7}-\ref{alg-line-9} the
algorithm recomputes the value $W_{\xi }(v_{0},v_{0})=w(v_{0})=0$, since
there exists no vertex $x$ such that $l_{v_{i}}<r_{x}<r_{v_{i}}$ (cf.~line %
\ref{alg-line-8} of the algorithm). This value of~$W_{\xi }(v_{0},v_{0})$ is
clearly correct. This completes the induction basis.

For the induction step, let $i\geq 1$. Consider the iteration of the
algorithm for any $\xi \in \Xi $, where $\xi <r_{v_{i}}$. First the
algorithm initializes in lines~\ref{alg-line-4}-\ref{alg-line-6} the values $%
W_{\xi }(v_{i},v_{i})$ and the values $W_{\xi }(v_{i},y)$, for every vertex $%
y\in V(G_{\xi }(v_{i}))$ such that $y\in N(v_{i})$. The initialization of
line~\ref{alg-line-4} is correct, since the single-vertex path $P=(v_{i})$
is clearly a normal path of the graph $G_{\xi }(v_{i})$ which has $v_{i}$ as
its last vertex. The initialization of lines~\ref{alg-line-5}-\ref%
{alg-line-6} is correct, since the path $P_{\xi }(\pi _{y,v_{i}},y)$ is
indeed a normal path of~$G_{\xi }(\pi _{y,v_{i}})$, which is an induced
subgraph of~$G_{\xi }(v_{i})$ (cf.~Definition~\ref%
{subgraphs-interval-definition}).

In lines~\ref{alg-line-7}-\ref{alg-line-9} the algorithm updates the current
(initialized) value of~$W_{\xi }(v_{i},v_{i})$. The correctness of this
update follows directly by Lemma~\ref{lem:algorithm.first}. Furthermore, in
lines~\ref{alg-line-10}-\ref{alg-line-17} the algorithm iterates for every
vertex $y\in V(G_{\xi }(v_{i}))$ such that $y\in N(v_{i})$. For every such
value of~$y$, the algorithm updates the current (initialized) value of~$%
W_{\xi }(v_{i},y)$.

The correctness of the update in line~\ref{alg-line-12} follows directly by
Lemma~\ref{extra-computation-lem}.

During the execution of lines~\ref{alg-line-14}-\ref{alg-line-17} the
algorithm deals with the case where $v_{i}\in V(G_{\xi }(v_{i},y))$ and $%
l_{v_{i}}<l_{y}$. If $v_{i}$ does not belong to the desired path $P_{\xi
}(v_{i},y)$, then by Lemma~\ref{lem:vi.not.in.p} $P_{\xi }(v_{i},y)$ is also
a normal path of~$G_{\xi }(\pi _{y,v_{i}})$, which is an induced subgraph of 
$G_{\xi }(v_{i})$. Therefore, in this case, $W_{\xi }(v_{i},y)=W_{\xi }(\pi
_{y,v_{i}},y)$. The algorithm does not update the current value of~$W_{\xi
}(v_{i},y)$, since $W_{\xi }(v_{i},y)$ has been initialized to $W_{\xi }(\pi
_{y,v_{i}},y)$ in line~\ref{alg-line-6}.

For the remainder of the proof, assume that $v_{i}$ belongs to the desired
path $P_{\xi }(v_{i},y)$, i.e.,~$P_{\xi }(v_{i},y)=(P_{1},v_{i},P_{2})$, for
some sub-paths $P_{1}$ and $P_{2}$ of~$P_{\xi }(v_{i},y)$. In lines~\ref%
{alg-line-14}-\ref{alg-line-16} the algorithm distinguishes between the
cases where $P_{2}=(y)$ and $P_{2}\neq (y)$. To deal with the case where $%
P_{2}=(y)$, i.e.,~with the case where $P_{\xi }(v_{i},y)=(P_{1},v_{i},y)$,
the algorithm computes in line~\ref{alg-line-14} the value $W_{1}^{\prime
}=\max \{W_{\xi }(\pi _{x,v_{i}},x):x\in V(G_{\xi
}(v_{i})),l_{v_{i}}<r_{x}<l_{y}\}$ of the desired path $P_{1}$ (cf.~Eq.\ (%
\ref{w-P1-eq-3}) of Lemma~\ref{lem:algorithm.second-direction-1}). Then it
compares in line~\ref{alg-line-17} the current value of~$W_{\xi }(v_{i},y)$
with the value $W_{1}^{\prime }+w(v_{i})+w(y)$, and it stores the greatest
value between them in~$W_{\xi }(v_{i},y)$. This update is correct by Eq.~(%
\ref{w-P1-eq-3}) of Lemma~\ref{lem:algorithm.second-direction-1}.

To deal with the case where $P_{2}\neq (y)$, the algorithm iterates in lines %
\ref{alg-line-15}-\ref{alg-line-16} for every $\zeta \in \Xi $ such that $%
l_{v_{i}}<\zeta \leq l_{y}$. For every such value of~$\zeta $ it computes
the value $W_{1}$ of the desired path $P_{1}$ (cf.~Eq.~(\ref{w-P1-eq-2}) of
Lemma~\ref{lem:algorithm.second-direction-1}). Then the algorithm compares
in line~\ref{alg-line-17} the current value of~$W_{\xi }(v_{i},y)$ with the
value $W_{1}+w(v_{i})+W_{\zeta }(\pi _{y,v_{i}},y)$ and it stores the
greatest between them in~$W_{\xi }(v_{i},y)$. For every $\zeta \in \Xi $,
where $l_{v_{i}}<\zeta \leq l_{y}$, Lemma~\ref%
{lem:algorithm.second-direction-2} implies that the path $(P_{\xi }(\pi
_{x,v_{i}},x),v_{i},P_{\zeta }(\pi _{y,v_{i}},y))$ is a normal path of~$%
G_{\xi }(v_{i})$ with $y$ as its last vertex. Therefore $W_{1}+w(v_{i})+W_{%
\zeta }(\pi _{y,v_{i}},y)\leq W_{\xi }(v_{i},y)$, for every such value of~$%
\zeta $. Furthermore Lemma~\ref{lem:algorithm.second-direction-1} implies
that there exists at least one such value $\zeta $, such that the values $%
W_{1}=\max \{W_{\xi }(\pi _{x,v_{i}},x):x\in V(G_{\xi
}(v_{i})),l_{v_{i}}<r_{x}<\zeta \}$ and $W_{\zeta }(\pi _{y,v_{i}},y)$ are
equal to the weights of the sub-paths $P_{1}$ and $P_{2}$ of~$P_{\xi
}(v_{i},y)$, respectively. Therefore, these updates of~$W_{\xi }(v_{i},y)$
for all values of~$\zeta $ are correct. This completes the induction step.

Therefore, after the execution of lines~\ref{alg-line-2}-\ref{alg-line-17},
Algorithm~\ref{weighted-path-special-interval-graph-alg} has correctly
computed all values $W_{\xi }(v_{i},v_{i})$ and $W_{\xi }(v_{i},y)$, where $%
i\in \{0,1,2,\ldots ,n\}$, $\xi \in \Xi $ such that $\xi <r_{v_{i}}$, and $%
y\in V(G_{\xi }(v_{i}))$ such that $y\in N(v_{i})$. Thus, since for every $i$
and every $\xi $ the graph $G_{\xi }(v_{i})$ is an induced subgraph of~$%
G_{l_{v_{0}}}(v_{i})$, it follows that the maximum weight of a path in~$G$
is one of the values $W_{l_{v_{0}}}(v_{i},v_{i})$ and $%
W_{l_{v_{0}}}(v_{i},y) $. Therefore the algorithm returns the correct value
in line~\ref{alg-line-18}.
\end{proof}

\begin{theorem}
\label{special-algorithm-running-time-thm}Let $G=(V,E)$ be a special
weighted interval graph with $n$ vertices and parameter $\kappa $. Then
Algorithm~\ref{weighted-path-special-interval-graph-alg} can be implemented
to run in~$O(\kappa ^{3}n)$ time.
\end{theorem}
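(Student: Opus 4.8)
The plan is to charge the running time against the entries of the dynamic-programming table and to reduce every range-maximum query in Algorithm~\ref{weighted-path-special-interval-graph-alg} to an $O(1)$ lookup after a cheap precomputation. First I would bound the number of table entries. As already observed before the algorithm, whenever $W_\xi(v_i,y)$ is defined we have $y\in N(v_i)$, and since $A$ is an independent set this forces $v_i\in B$ or $y\in B$; as $|B|=O(\kappa)$ there are only $O(\kappa n)$ such pairs $(v_i,y)$, and multiplying by $|\Xi|=O(\kappa)$ shows that the table holds $O(\kappa^2 n)$ entries of the forms $W_\xi(v_i,v_i)$ and $W_\xi(v_i,y)$. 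The same counting bounds the number of executions of the outer two \emph{for}-loops (lines~\ref{alg-line-2}--\ref{alg-line-3}) by $O(\kappa n)$, the number of $(i,\xi,y)$ triples reached in line~\ref{alg-line-10} by $O(\kappa^2 n)$, and --- since the innermost loop over $\zeta\in\Xi$ (line~\ref{alg-line-15}) runs $O(\kappa)$ times --- the number of executions of line~\ref{alg-line-16} by $O(\kappa^3 n)$.

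The next step is to precompute the vertices $\pi_{x,v_i}$ and the values $W_\xi(\pi_{x,v_i},x)$ that appear in lines~\ref{alg-line-8}, \ref{alg-line-14} and~\ref{alg-line-16}. Since $\pi_{u,v}$ is defined only for the $O(\kappa n)$ pairs with $u\in N(v)$, and for fixed $x$ the vertex $\pi_{x,v_i}$ is the rightmost element of $\{x\}\cup(B\cap N(x))$ that precedes $v_i$ in $\sigma$, all these vertices can be tabulated within $O(\kappa^2 n)$ time using that $|B|=O(\kappa)$. Moreover every required value $W_\xi(\pi_{x,v_i},x)$ has first argument $\pi_{x,v_i}<_\sigma v_i$, so by the processing order of lines~\ref{alg-line-2}--\ref{alg-line-3} it was computed at an earlier iteration and is available.

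The heart of the argument --- and the step I expect to be the main obstacle --- is to answer each range maximum $\max\{W_\xi(\pi_{x,v_i},x):x\in V(G_\xi(v_i)),\,l_{v_i}<r_x<\tau\}$ in $O(1)$ time, despite the summand depending on $v_i$ through $\pi_{x,v_i}$. My plan is to fix a pair $(i,\xi)$ and process its candidate set $\{x\in N(v_i):l_{v_i}<r_x\}$ sorted by right endpoint $r_x$, maintaining a prefix-maximum array of the values $W_\xi(\pi_{x,v_i},x)$. The key observation is that in line~\ref{alg-line-16} the threshold $\tau=\zeta$ ranges only over the $O(\kappa)$ points of $\Xi$, and in line~\ref{alg-line-14} over the endpoints $l_y$ of the eligible vertices $y$; hence for this fixed $(i,\xi)$ I can, in one merge of the sorted candidate list with the sorted sets $\Xi$ and $\{l_y\}$, tabulate the prefix maximum attached to every threshold that will ever be queried, so that lines~\ref{alg-line-8}, \ref{alg-line-14} and~\ref{alg-line-16} each become a single array access. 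The cost of this precomputation for a fixed $(i,\xi)$ is $O(\kappa)$ plus the size of its candidate set plus the number of its eligible vertices $y$; summing over all $O(\kappa n)$ pairs $(i,\xi)$, and using that the candidate sets and the eligible-$y$ sets each have total size $O(\kappa n)$ over all $i$, yields $O(\kappa^2 n)$ overall.

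Finally I would assemble the bounds. Initialization (lines~\ref{alg-line-4}--\ref{alg-line-6}) and the assignments of lines~\ref{alg-line-11}--\ref{alg-line-12} are $O(1)$ lookups charged to the $O(\kappa^2 n)$ entries; the single query of line~\ref{alg-line-8} costs $O(1)$ per $(i,\xi)$, i.e.\ $O(\kappa n)$ in total; line~\ref{alg-line-14} costs $O(1)$ per $(i,\xi,y)$ triple, i.e.\ $O(\kappa^2 n)$; and, with each range maximum now an $O(1)$ lookup, the $O(\kappa^3 n)$ executions of lines~\ref{alg-line-16}--\ref{alg-line-17} dominate. Together with the $O(\kappa^2 n)$ precomputation this gives the claimed $O(\kappa^3 n)$ running time, and I would remark that the same bookkeeping lets the algorithm return an actual maximum-weight path rather than only its weight.
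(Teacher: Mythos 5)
Your proposal is correct and follows essentially the same route as the paper's proof: the same $O(\kappa n)$ and $O(\kappa^2 n)$ counts of table entries via the observation that $y\in N(v_i)$ forces $v_i\in B$ or $y\in B$, the same $O(\kappa^2 n)$ tabulation of the $\pi_{x,v_i}$, and the same key device of a per-$(i,\xi)$ prefix-maximum scan over the candidates sorted by $r_x$ (the paper stores these prefix maxima as values $\omega_\xi(q,v_i)$ at every endpoint $q$ in the range rather than only at the queried thresholds, but this is an immaterial variation). The dominant $O(\kappa^3 n)$ term from the $\zeta$-loop and the final assembly match the paper exactly.
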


\begin{proof}
Since $G$ is a special weighted interval graph with $V=A\cup B$ as its
special vertex partition (cf.~Definition~\ref{special-interval-def}), $A$ is
an independent set and $|B|\leq \kappa +1$ (after the addition of the dummy
vertex $v_{0}$ to the set~$B$). Recall that the endpoints of the intervals
in~$\mathcal{I}$ are given sorted increasingly, e.g.,~in a linked list $M$.
The points $\{\xi _{v}:v\in B\}$ can be efficiently as follows. First we
visit all endpoints of the intervals in~$\mathcal{I}$ (in increasing order).
For every endpoint $l_{v}$, where $v\in B$, which we visit between the
endpoints $l_{u}$ and $r_{u}$, for some $u\in A$, we define $\xi _{v}=l_{u}$%
. Thus the points $\{\xi _{v}:v\in B\}$ can be computed in~$O(n)$ time.
Furthermore the points $\{l_{v}:v\in B\}$ can be computed in~$O(\kappa )$
time by just enumerating all vertices of~$B$. Therefore the set $\Xi $ can
be computed in~$O(\kappa +n)=O(n)$ time in total. Furthermore recall that
there are $O(\kappa n)$ different vertices $\pi _{u,v}$ (cf.~Eq.~(\ref{pi-eq}%
)) and note that, given two adjacent vertices $u,v$, we can compute the
vertex $\pi _{u,v}$ in~$O(\kappa )$ time by enumerating in worst case all
vertices of~$B$. Thus, all vertices $\pi _{u,v}$ can be computed in total $%
O(\kappa ^{2}n)$ time.

Now we provide an upper bound on the number of values $W_{\xi }(v_{i},v_{i})$
and $W_{\xi }(v_{i},y)$ that are computed by Algorithm~\ref%
{weighted-path-special-interval-graph-alg}. Since $\xi \in B$ and $v_{i}\in
V $, there are in total at most $O(\kappa n)$ different values $W_{\xi
}(v_{i},v_{i})$. Furthermore, the values $W_{\xi }(v_{i},y)$, where $y\neq
v_{i}$, are computed for every $i\in \{0,1,2,\ldots ,n\}$, every $\xi \in
\Xi $ such that $\xi <r_{v_{i}}$, and every $y\in V(G_{\xi }(v_{i}))$ such
that $y\in N(v_{i})$. Thus, due to the condition that $y\in N(v_{i})$, it
follows that $v_{i}\in B$ or $y\in B$. That is, there are in total at most $%
2(\kappa +1)(n+1)=O(\kappa n)$ pairs of vertices $v_{i},y$ for which we
compute the values $W_{\xi }(v_{i},y)$. Therefore, since $\xi \in B$ and $%
|B|\leq \kappa +1$, there are at most $O(\kappa ^{2}n)$ different values $%
W_{\xi }(v_{i},y)$. Summarizing, Algorithm~\ref%
{weighted-path-special-interval-graph-alg} computes at most $O(\kappa ^{2}n)$
different values $W_{\xi }(v_{i},v_{i})$ and $W_{\xi }(v_{i},y)$.

We now show that all computations performed in the lines~\ref{alg-line-8},~%
\ref{alg-line-14}, and~\ref{alg-line-16} of the algorithm can be implemented
to run in total $O(\kappa ^{2}n)$ time. Denote by $Q=\{l_{v},r_{v}:v\in V\}$
the set of all endpoints of the intervals in~$\mathcal{I}$. Recall that the
points of~$Q$ are assumed to be already sorted increasingly. Note that, in
order to perform all computations of the lines~\ref{alg-line-8},~\ref%
{alg-line-14}, and~\ref{alg-line-16}, it suffices to store at each point $%
q\in Q$ the values 
\begin{equation}
\omega _{\xi }(q,v_{i})=\max \{W_{\xi }(\pi _{x,v_{i}},x):x\in V(G_{\xi
}(v_{i})),\ l_{v_{i}}<r_{x}<q\}  \label{omega-eq-1}
\end{equation}%
for every $\xi \in \Xi $ and every $i\in \lbrack n]$ such that $\xi \leq
l_{v_{i}}<q\leq r_{v_{i}}$. Indeed, once we have computed all possible
values $\omega _{\xi }(q,v_{i})$, lines~\ref{alg-line-8},~\ref{alg-line-14},
and~\ref{alg-line-16} of the algorithm can be executed in $O(1)$ time by
just accessing the stored values $\omega _{\xi }(r_{v_{i}},v_{i})$, $\omega
_{\xi }(l_{y},v_{i})$, and $\omega _{\xi }(\zeta ,v_{i})$, respectively.
Observe that for every point $q\in Q$ such that $l_{v_{i}}<q\leq r_{v_{i}}$,
the vertex which has $q$ as an endpoint is adjacent to vertex $v_{i}$. Thus,
since there are at most $O(\kappa n)$ pairs of adjacent vertices in~$G$, it
follows that there are $O(\kappa n)$ such pairs of a point $q\in Q$ and a
vertex $v_{i}\in V$.

Given a point $\xi \in \Xi $ and a vertex $v_{i}$, we can compute all values 
$\omega _{\xi }(q,v_{i})$ in~$O(|N(v_{i})|)$ time as follows. Let $%
q_{0}>l_{v_{i}}$ be the first endpoint after $l_{v_{i}}$ in the ordering of
the endpoints in~$Q$. As there does not exist any vertex $x$ such that $%
l_{v_{i}}<r_{x}<q_{0}$ (cf.~Eq.~(\ref{omega-eq-1})), we store at point $%
q_{0} $ the value $\omega _{\xi }(q_{0},v_{i})=0$. Then, we visit in
increasing order all points of~$q\in Q$ between $q_{0}$ and $r_{v_{i}}$.
Note that we can visit all these vertices in $O(|N(v_{i})|)$ time as the
points of~$Q$ are already sorted increasingly. Let $q\in Q$ be the currently
visited point between $q_{0}$ and $r_{v_{i}}$, and let $q^{\prime }$ be the
predecessor of~$q$ in the ordering of~$Q$. Then it follows by the definition
of $\omega _{\xi }(q,v_{i})$ in Eq.~(\ref{omega-eq-1}) that 
\begin{equation}
\omega _{\xi }(q,v_{i})=\max \{\omega _{\xi }(q^{\prime },v_{i}),\ W_{\xi
}(\pi _{x,v_{i}},x):x\in V(G_{\xi }(v_{i})),\ q^{\prime }\leq r_{x}<q\}.
\label{omega-eq-2}
\end{equation}%
Therefore, since $q^{\prime }$ and $q$ are two consecutive points of $Q$
between $q_{0}$ and $r_{v_{i}}$, the value $\omega _{\xi }(q,v_{i})$ can be
computed in $O(1)$ time using the value of $\omega _{\xi }(q^{\prime},v_{i})$%
, as follows:%
\begin{equation}
\omega _{\xi }(q,v_{i})=%
\begin{cases}
\max \{\omega _{\xi }(q^{\prime },v_{i}),\ W_{\xi }(\pi _{x,v_{i}},x_{0})\}
& \text{if }q^{\prime }=r_{x_{0}}\text{, for some }x_{0}\in V(G_{\xi
}(v_{i})) \\ 
\omega _{\xi }(q^{\prime },v_{i}) & \text{otherwise}%
\end{cases}%
.  \label{omega-eq-3}
\end{equation}%
Since the value of $\omega _{\xi }(q,v_{i})$ can be computed by Eq.~(\ref%
{omega-eq-3}) in $O(1)$ time, all these computations of the values $\{\omega
_{\xi }(q,v_{i}):q\in Q,\ l_{v_{i}}<q\leq r_{v_{i}}\}$ (for a fixed $\xi $
and a fixed $v_{i}$) can be performed in~$O(|N(v_{i})|)$ time in total.
Thus, since $|\Xi |=O(\kappa )$ and $\sum\nolimits_{i\in \lbrack
n]}|N(v_{i})|=O(\kappa n)$, we can compute all values $\omega _{\xi
}(q,v_{i})$ in total $O(\kappa ^{2}n)$ time. That is, all computations
performed in the lines~\ref{alg-line-8},~\ref{alg-line-14}, and~\ref%
{alg-line-16} of the algorithm can be implemented to run in total $O(\kappa
^{2}n)$ time.

In the remainder of the proof we assume that each of the lines~\ref%
{alg-line-8},~\ref{alg-line-14}, and~\ref{alg-line-16} is executed in~$O(1)$
time. Each of the lines~\ref{alg-line-4},~\ref{alg-line-8}, and~\ref%
{alg-line-9} is executed for every $i\in \{0,1,\ldots ,n\}$ and at most for
every $\xi \in \Xi $, i.e.,~$O(\kappa n)$ times in total. Furthermore, each
of the lines~\ref{alg-line-6},~\ref{alg-line-11},~\ref{alg-line-12}, and~\ref%
{alg-line-14} is executed at most for every $\xi \in \Xi $ and for every
pair $\{v_{i},y\}$ of adjacent vertices in~$G$, i.e.,~$O(\kappa ^{2}n)$
times in total. Each of the lines~\ref{alg-line-16}-\ref{alg-line-17} is
executed at most for every $\xi \in \Xi $, for every $\zeta \in \Xi $, and
for every pair $\{v_{i},y\}$ of adjacent vertices in~$G$, i.e.,~$O(\kappa
^{3}n)$ times in total.

Finally, once we have computed all values $W_{\xi }(v_{i},v_{i})$ and $%
W_{\xi }(v_{i},y)$ in lines~\ref{alg-line-2}-\ref{alg-line-17}, the output
of line~\ref{alg-line-18} can be computed in~$O(\kappa n)$ time by
considering the $O(\kappa n)$ computed values $W_{l_{v_{0}}}(v_{i},v_{i})$
and $W_{l_{v_{0}}}(v_{i},y)$, for every vertex $v_{i}$ and for at most each
pair of adjacent vertices $v_{i},y$. Summarizing, Algorithm~\ref%
{weighted-path-special-interval-graph-alg} can be implemented to run in
total $O(\kappa ^{2}n+\kappa ^{3}n+\kappa n)=O(\kappa ^{3}n)$ time.
\end{proof}

\subsection{The general algorithm\label{general-algorithm-subsec}}

Here we combine all our results of Sections~\ref{data-reductions-sec},~\ref%
{special-interval-sec}, and~\ref{special-algorithm-subsec} to present our 
\emph{parameterized linear-time} algorithm for \textsc{Longest Path on
Interval Graphs}. The parameter $k$ of this algorithm is the size of a \emph{%
minimum proper interval deletion set} $D$ of the input graph~$G$ and its
running time has a \emph{polynomial dependency} on $k$.

\begin{theorem}
\label{general-longest-path-algorithm-thm}Let $G=(V,E)$ be an interval
graph, where $|V|=n$ and $|E|=m$, and let $k$ be the minimum size of a
proper interval deletion set of~$G$. Let $\mathcal{I}$ be an interval
representation of~$G$ whose endpoints are sorted increasingly. Then:

\begin{enumerate}
\item a proper interval deletion set $D$, where $|D|\leq 4k$, can be
computed in ${O(n+m)}$ time,

\item a semi-proper interval representation $\mathcal{I}^{\prime }$ of~$G$
can be constructed in~$O(n+m)$ time, and

\item given $D$ and $\mathcal{I}^{\prime }$, a longest path of~$G$ can be
computed in~$O(k^{9}n)$ time.
\end{enumerate}
\end{theorem}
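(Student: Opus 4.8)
The plan is to prove the three statements essentially as a composition of results already established, since each ingredient has been set up in the preceding sections. Statement~1 is precisely Theorem~\ref{proper-deletion-set-approximation-thm}, which computes a proper interval deletion set $D$ with $|D|\leq 4k$ in $O(n+m)$ time, so I would simply invoke it. Statement~2 is precisely the preprocessing Theorem~\ref{interval-representation-preprocessing-thm}, which turns a given interval representation $\mathcal{I}$ into a semi-proper representation $\mathcal{I}^{\prime}$ in $O(n+m)$ time, and I would invoke it as well. The only genuine work lies in Statement~3, and even there the argument is a pipeline chaining the data reductions of Section~\ref{data-reductions-sec}, the construction of Section~\ref{special-interval-sec}, and the algorithm of Section~\ref{special-algorithm-subsec}.

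For Statement~3, working with the semi-proper representation $\mathcal{I}^{\prime}$ and the supplied set $D$ (whose size I now treat as the parameter, noting that $|D|=O(k)$ because $D$ is a $4$-approximation), I would first apply Theorem~\ref{second-reduction-computation-thm} to build the special weighted interval graph $\widehat{G}=(\widehat{V},\widehat{E})$ together with a special vertex partition $\widehat{V}=A\cup B$ in $O(|D|^{2}n)=O(k^{2}n)$ time. By Theorem~\ref{special-weighted-interval-graph-thm}, $\widehat{G}$ is a special weighted interval graph with parameter $\kappa=O(|D|^{3})=O(k^{3})$. Correctness of the reduction is then immediate: Corollary~\ref{G-diesi-correctness-cor} gives that the maximum number of vertices of a path in $G$ equals the maximum weight of a path in $G^{\#}$, and Corollary~\ref{G-tilde-correctness-cor} gives that the latter equals the maximum weight of a path in $\widehat{G}$; composing the two yields that a longest path of $G$ corresponds to a maximum-weight path of $\widehat{G}$.

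Next I would run Algorithm~\ref{weighted-path-special-interval-graph-alg} on $\widehat{G}$. Its correctness is Theorem~\ref{special-algorithm-correcntess-thm}, and its running time is $O(\kappa^{3}n)$ by Theorem~\ref{special-algorithm-running-time-thm}; substituting $\kappa=O(k^{3})$ gives $O(k^{9}n)$. Summing the construction time $O(k^{2}n)$ and the algorithm time $O(k^{9}n)$ yields the claimed $O(k^{9}n)$ bound. To return an actual longest path rather than its length, I would use the path-returning variant of Algorithm~\ref{weighted-path-special-interval-graph-alg} noted in Section~\ref{special-algorithm-subsec}, and then unfold the two reductions: each vertex of $A$ is expanded back into its reducible set $S$, inserted consecutively as a Hamiltonian subpath of $G[S]$ in the manner of Lemma~\ref{lem:consecutive.vertices}, and each group of copies of $\mathbf{span}(S)$ for a weakly reducible set is expanded back into the vertices of $S$ as in Lemma~\ref{lem:weakly.consecutive.vertices}; this back-substitution is linear in the output size.

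The step I expect to demand the most care is not a single hard proof but the bookkeeping that keeps the whole chain consistent: verifying that the parameter appearing in Theorems~\ref{second-reduction-computation-thm} and~\ref{special-weighted-interval-graph-thm} (the size of the supplied set $D$) is $O(k)$ precisely because $D$ is a $4$-approximation, so that $\kappa=O(k^{3})$ and hence $\kappa^{3}=O(k^{9})$, and confirming that the maximum-weight path recovered in $\widehat{G}$ maps back through \emph{both} reductions to a genuine longest path of the original unweighted graph $G$, which is exactly what Corollaries~\ref{G-diesi-correctness-cor} and~\ref{G-tilde-correctness-cor} guarantee.
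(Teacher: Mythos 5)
Your proposal is correct and follows essentially the same route as the paper's own proof: Statements~1 and~2 are direct invocations of Theorems~\ref{proper-deletion-set-approximation-thm} and~\ref{interval-representation-preprocessing-thm}, and Statement~3 chains the construction of $\widehat{G}$ (Theorems~\ref{first-reduction-computation-thm} and~\ref{second-reduction-computation-thm}), the bound $\kappa=O(k^{3})$ from Theorem~\ref{special-weighted-interval-graph-thm}, the correctness Corollaries~\ref{G-diesi-correctness-cor} and~\ref{G-tilde-correctness-cor}, and the $O(\kappa^{3}n)$ run of Algorithm~\ref{weighted-path-special-interval-graph-alg}, exactly as the paper does. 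Your added remarks on the $|D|=O(k)$ bookkeeping and on unfolding the two reductions to recover an explicit path are consistent with what the paper asserts informally.
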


\begin{proof}
The first two statements of the theorem follow immediately by Theorems~\ref%
{proper-deletion-set-approximation-thm} and~\ref%
{interval-representation-preprocessing-thm}, respectively. For the remainder
of the proof we assume that the proper interval deletion set $D$ and the
semi-proper interval representation $\mathcal{I}^{\prime }$ of~$G$ have been
already computed.

For the third statement of the theorem, we first compute the weighted
interval graph $G^{\#}=(V^{\#},E^{\#})$ in~$O(n)$ time by Theorem~\ref%
{first-reduction-computation-thm}. Then, given the graph~$G^{\#}$, we
compute the weighted interval graph $\widehat{G}=(\widehat{V},\widehat{E})$
in~$O(k^{2}n)$ time by Theorem~\ref{special-weighted-interval-graph-thm}. By
Theorem~\ref{special-weighted-interval-graph-thm}, this graph $\widehat{G}$
is a special weighted interval graph with parameter $\kappa =O(k^{3})$, cf.
Definition~\ref{special-interval-def}. During the computation of the graph $%
\widehat{G}$, we can compute in the same time (i.e.,~in~$O(k^{2}n)$ time)
also a special vertex partition $\widehat{V}=A\cup B$ of its vertex set.
Furthermore, it follows by Corollaries~\ref{G-diesi-correctness-cor} and~\ref%
{G-tilde-correctness-cor} that the maximum number of vertices of a path in
the initial interval graph $G$ is equal to the maximum weight of a path in
the special weighted interval graph $\widehat{G}$. Therefore, in order to
compute a longest path in~$G$ it suffices to compute a path of maximum
weight in~$\widehat{G}$. Thus, since $\widehat{G}$ is a special weighted
interval graph with parameter $\kappa =O(k^{3})$, we compute the maximum
weight of a path in~$\widehat{G}$ by Algorithm~\ref%
{weighted-path-special-interval-graph-alg}. The running time of Algorithm %
\ref{weighted-path-special-interval-graph-alg} with input $\widehat{G}$ is $%
O(\kappa ^{3}n)=O(k^{9}n)$ by Theorem~\ref%
{special-algorithm-running-time-thm}.
\end{proof}

\section{Kernelization of Maximum Matching\label{matching-kernelization-sec}}

For the sake of completeness, in this section we present the details of the
algorithm for \textsc{Maximum Matching}\footnote{%
Given a graph~$G$, find a maximum-cardinality matching in~$G$; in its
decision version, additionally the desired matching size~$k$ is specified as
part of the input.} that we sketched in Section~\ref{introduction-sec}. The
parameter $k$ is the solution size; for this parameter we show that a \emph{%
kernel} with at most $O(k^{2})$ vertices and edges can be computed in $O(kn)$
time, thus leading to a total running time of $O(kn+k^{3})$. Hence, \textsc{%
Maximum Matching}, parameterized by the solution size, belongs to the class
PL-FPT. First we present two simple data reduction rules, very similar in
spirit to the data reduction rules of Buss for \textsc{Vertex Cover} (see
e.g.,~\cite{DowneyF13,Niedermeierbook06}).

\begin{redrule}
\label{red-rule-1}If $\deg(v)>2(k-1)$ for some vertex $v\in V(G)$, then
return the instance $(G\setminus \{v\},k-1)$. 
\end{redrule}

\begin{redrule}
\label{red-rule-2}If $\deg (v)=0$ for some vertex $v\in V(G)$, then return
the instance $(G\setminus \{v\},k)$. 
\end{redrule}

An instance of parameterized \textsc{Maximum Matching} is called \emph{%
reduced} if none of Reduction Rules~\ref{red-rule-1} and~\ref{red-rule-2}
can be applied to this instance. It can be easily checked that Reduction
Rule~\ref{red-rule-2} is safe. In the next lemma we show that Reduction Rule~%
\ref{red-rule-1} is also safe.

\begin{lemma}
\label{rule-1-safe-lem}Let $k$ be a positive integer and $G$ be a graph. If $%
v\in V(G)$ such that $\deg (v)>2(k-1)$, then $(G\setminus \{v\},k-1)$ is a {%
\textsc{yes}-instance} if and only if $(G,k)$ is a {\textsc{yes}-instance}.
\end{lemma}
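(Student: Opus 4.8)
The plan is to recall that, in its decision version, $(H,j)$ is a \yesinstance\ of \textsc{Maximum Matching} precisely when $H$ admits a matching of size at least~$j$, and then to establish the two implications of the equivalence separately. Throughout I write $v$ for the high-degree vertex and use the trivial observation that every neighbor of~$v$ lies in $V(G)\setminus\{v\}$, hence survives in the graph $G\setminus\{v\}$.

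For the direction ``$(G,k)$ is a \yesinstance\ $\Rightarrow$ $(G\setminus\{v\},k-1)$ is a \yesinstance'', I would start from a matching $M$ of~$G$ with $|M|\geq k$. Since the edges of a matching are pairwise vertex-disjoint, at most one edge of~$M$ is incident to~$v$; deleting that edge (if it exists) leaves a set of edges none of which touches~$v$, that is, a matching $M'$ of $G\setminus\{v\}$ with $|M'|\geq k-1$. This already shows that $(G\setminus\{v\},k-1)$ is a \yesinstance, and I note that this implication does not use the degree hypothesis at all.

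For the converse ``$(G\setminus\{v\},k-1)$ is a \yesinstance\ $\Rightarrow$ $(G,k)$ is a \yesinstance'', I would take a matching $M'$ of $G\setminus\{v\}$ with $|M'|\geq k-1$. If $|M'|\geq k$ already, then $M'$ is also a matching of~$G$ of size at least~$k$ and we are done; so I may assume $|M'|=k-1$ (passing to a sub-matching if necessary). The key counting step is that such an $M'$ saturates exactly $2(k-1)$ vertices, whereas the hypothesis $\deg(v)>2(k-1)$ guarantees that~$v$ has strictly more than $2(k-1)$ neighbors, all of them vertices of $G\setminus\{v\}$. By pigeonhole at least one neighbor~$u$ of~$v$ is left unsaturated by~$M'$, and therefore $M'\cup\{uv\}$ is a matching of~$G$ of size~$k$, proving that $(G,k)$ is a \yesinstance.

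The argument is short and elementary, and the only place where the degree bound is actually needed is the pigeonhole step in the converse direction. The main (mild) obstacle is simply to organize the case distinction cleanly: the case $|M'|\geq k$ must be disposed of first, so that the counting argument is invoked only when the matching has exactly $k-1$ edges, where the bound $2(k-1)$ on saturated vertices is tight enough to force an uncovered neighbor of~$v$.
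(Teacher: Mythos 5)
Your proof is correct and follows essentially the same route as the paper's: the forward direction removes the at most one matching edge incident to $v$, and the converse uses the counting/pigeonhole argument that a matching of size exactly $k-1$ saturates only $2(k-1)$ vertices, so the hypothesis $\deg(v)>2(k-1)$ forces an unsaturated neighbor of $v$ that can be matched to it. The only cosmetic difference is your remark that the degree bound is unused in the forward direction, which the paper does not state explicitly but which is accurate.
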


\begin{proof}
We first show that if $(G,k)$ is a {\textsc{yes}-instance} then $(G\setminus
\{v\},k-1)$ is a {\textsc{yes}-instance}. For this, let $M $ be a matching
of~$G$ of size at least~$k$. If $v\notin V(M)$, then $V(M)\subseteq
V(G\setminus \{v\})$ and hence $M\subseteq E(G\setminus \{v\})$. Therefore, $%
M$ is a matching of~$G\setminus \{v\}$ of size at least $k$ and thus $%
(G\setminus \{v\},k-1)$ is a {\textsc{yes}-instance}. If $v\in V(M)$, then
there exists a unique edge $e\in M$ such that $e=uv$. This implies that $%
V(M\setminus \{e\})\subseteq V(G\setminus \{v\})$ and $M\setminus
\{e\}\subseteq E(G\setminus \{v\})$. Therefore, $M^{\prime }=M\setminus
\{e\} $ is a matching of~$G\setminus \{v\}$ and $|M^{\prime }|=|M|-1\geq k-1$%
. Thus, $(G\setminus \{v\},k-1)$ is again a {\textsc{yes}-instance}.

We now show that if $(G\setminus \{v\},k-1)$ is a {\textsc{yes}-instance},
then $(G,k)$ is also a {\textsc{yes}-instance}. Let $M^{\prime }$ be a
matching of~$G\setminus \{v\}$ of size at least $k-1$. Note that, since $%
G\setminus \{v\}\subseteq G$, any matching of~$G\setminus \{v\}$ is also a
matching of~$G$. If $|M^{\prime }|\geq k$, then $(G,k)$ is clearly a {%
\textsc{yes}-instance, since }$M^{\prime }$ is a matching of~$G$ of size at
least $k$. Suppose now that $|M^{\prime }|=k-1$, that is,~$|V(M^{\prime
})|=2(k-1)$. Then, since $\deg (v)>2(k-1)$ in the graph $G $ by assumption,
there exists at least one vertex $u\in N(v)\setminus V(M^{\prime })$. Thus,
since also $v\notin M^{\prime }$ (as $M^{\prime }$ is a matching of~$%
G\setminus \{v\}$), it follows that the edge set $M=M^{\prime }\cup \{uv\}$
is a matching of~$G$ and $|M|=|M^{\prime }\cup \{uv\}|=|M^{\prime }|+1=k$.
Thus, $(G,k)$ is a {\textsc{yes}-instance}.
\end{proof}

\medskip

In the following, $\mathbf{mm}(G)$ denotes the size of a maximum matching of
graph~$G$. Furthermore, for every subset $S\subseteq V$ we denote $%
N(S)=\bigcup_{v\in S}N(v)$.

\begin{lemma}
\label{matching-bound-kernel-lem} Let $G$ be a graph. If $1\leq \deg (v)\leq
2(k-1)$ for every $v\in V(G)$, then $|E(G)|\leq (4k-5)\cdot \mathbf{mm}(G)$
and $|V(G)|\leq (4k-4)\cdot \mathbf{mm}(G)$.
\end{lemma}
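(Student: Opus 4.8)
The plan is to fix a maximum matching $M$ of $G$, write $\mu=\mathbf{mm}(G)=|M|$, and exploit that the set $V(M)$ of endpoints of $M$ is a vertex cover: since $M$ is maximum it is in particular maximal, so every edge meets $V(M)$ and $I:=V(G)\setminus V(M)$ is an independent set, with $|V(M)|=2\mu$. For the vertex bound I would charge every vertex of $I$ to a matched edge. The hypothesis $\deg(v)\ge 1$ guarantees that no vertex is isolated, so each $w\in I$ has a neighbour, which necessarily lies in $V(M)$; assign $w$ to the matched edge containing one such neighbour. The key point, proved by a short augmenting-path argument, is that each matched edge $x_iy_i$ receives at most $2k-3$ vertices: if two \emph{distinct} vertices $w_1,w_2\in I$ were adjacent to $x_i$ and to $y_i$ respectively, then $(w_1,x_i,y_i,w_2)$ would be an $M$-augmenting path, contradicting maximality; hence either a single vertex of $I$ is adjacent to $\{x_i,y_i\}$, or all $I$-neighbours hang off one endpoint, of which there are at most $\deg(x_i)-1\le 2(k-1)-1=2k-3$. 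Summing over the $\mu$ matched edges gives $|I|\le(2k-3)\mu$ and therefore $|V(G)|=2\mu+|I|\le(2k-1)\mu$.

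For the edge bound I would isolate the clean inequality $|E(G)|\le\mu\cdot(\Delta+1)$, where $\Delta:=\max_{v}\deg(v)\le 2(k-1)$, since this immediately yields $|E(G)|\le(2k-1)\mu$. I would prove it by induction on $|V(G)|$. If $\mu=0$ then $G$ is edgeless and the claim is trivial. Otherwise, if $G$ has an \emph{essential} vertex $v$, i.e.\ one with $\mathbf{mm}(G\setminus\{v\})=\mu-1$, then deleting $v$ removes $\deg(v)\le\Delta$ edges and the induction hypothesis gives $|E(G)|=|E(G\setminus\{v\})|+\deg(v)\le(\mu-1)(\Delta+1)+\Delta=\mu(\Delta+1)-1\le\mu(\Delta+1)$.

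The hard part will be the remaining case, in which \emph{no} vertex is essential, i.e.\ $\mathbf{mm}(G\setminus\{v\})=\mu$ for every $v$; here deleting a single vertex does not make progress, and the naive alternative of deleting both endpoints of a matched edge removes up to $2\Delta-1$ edges and overspends the budget of $\Delta+1$ per matched edge. To handle it I would invoke the Gallai--Edmonds structure theorem (equivalently Gallai's lemma): a connected graph in which every vertex is missed by some maximum matching is factor-critical. Since being missed by some maximum matching of $G$ coincides, component by component, with the same property inside the component, every connected component $F$ of $G$ is then factor-critical, so $n_F:=|V(F)|$ is odd and $\mathbf{mm}(F)=(n_F-1)/2$.

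For such an $F$ a direct handshake count suffices, split into two cases. If $n_F\le\Delta+1$ then $|E(F)|\le\binom{n_F}{2}=\tfrac{n_F-1}{2}\,n_F\le\tfrac{n_F-1}{2}(\Delta+1)$; if $n_F\ge\Delta+1$ then $|E(F)|\le\tfrac{n_F\Delta}{2}\le\tfrac{(n_F-1)(\Delta+1)}{2}$, because $n_F\ge\Delta+1$ is exactly the rearrangement of $n_F\Delta\le(n_F-1)(\Delta+1)$. In both cases $|E(F)|\le(\Delta+1)\,\mathbf{mm}(F)$, and summing over components via $\mathbf{mm}(G)=\sum_F\mathbf{mm}(F)$ closes the induction. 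I expect the factor-critical case to be the delicate step: routing through the Gallai--Edmonds decomposition is what keeps the amortized cost down to $\Delta+1$ per matched edge, which is sharp, as the extremal instance $G=K_{2k-1}$ (where $\Delta=2k-2$, $\mu=k-1$, and $|E|=(2k-1)(k-1)$) shows the bound $(2k-1)\mathbf{mm}(G)$ is attained.
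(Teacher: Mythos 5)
Your proof is correct. For the vertex bound you follow essentially the same route as the paper: fix a maximum matching $M$ with $\mu=\mathbf{mm}(G)$ edges, note that $I=V(G)\setminus V(M)$ is independent, and use the length-three augmenting path $(w_{1},x_{i},y_{i},w_{2})$ to argue that each matched edge sees at most $\deg(x_{i})-1\leq 2k-3$ exposed vertices, whence $|V(G)|\leq 2\mu+(2k-3)\mu$. For the edge bound, however, you take a genuinely different and substantially heavier route. The paper tries to get the edge count almost for free from the same analysis: after normalising so that $N(v_{i})\subseteq N(u_{i})\cup V(M)$ for every matched edge $u_{i}v_{i}$, it asserts that $\{u_{1},\dots,u_{\mu}\}$ is a vertex cover and reads off $|E(G)|\leq(2k-1)\mu$ from the degree bound. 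That assertion does not actually follow from what is proved there (an edge $v_{i}v_{j}$, or an edge from some $v_{i}$ to an exposed vertex, need not meet any $u_{\ell}$; indeed in $K_{2k-1}$ any $k-1$ vertices have total degree $(2k-2)(k-1)<(2k-1)(k-1)=|E(K_{2k-1})|$, so no such set can be a vertex cover), so the paper's own edge-count step has a gap. Your inductive proof of the clean inequality $|E(G)|\leq\mathbf{mm}(G)\cdot(\Delta+1)$ --- peeling off essential vertices, and invoking Gallai's lemma to conclude that in the remaining case every component is factor-critical, so that a handshake count per component suffices --- is sound, and it buys a sharp, self-contained bound (tight on $K_{2k-1}$, and a weakening of the Chv\'atal--Hanson extremal bound for graphs of bounded degree and bounded matching number). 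The price is the appeal to Gallai--Edmonds-type machinery, which is far more than the paper intended to use here; but given the gap in the paper's vertex-cover shortcut, some additional argument of roughly this strength appears to be genuinely needed.
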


\begin{proof}
Let $m_{0}=\mathbf{mm}(G)$ and let $M$ be a maximum matching of~$G$, i.e.~$%
|M|=m_{0}$. Then $V(M)$ is a vertex cover of~$G$, and thus $v\in N(V(M))$
for every vertex  $v\notin V(M)$. Note that $|V(M)|=2m_{0}$. Therefore, as $%
\deg (v)\leq 2k-2$ for every $v\in V(G)$ by the assumption of the lemma, it
follows that both the number of edges and the number of vertices of $G$ that
do not belong to $M$ are at most $2m_{0}(2k-3)$. Thus $|E(G)|\leq
m_{0}+2m_{0}(2k-3)=(4k-5)m_{0}$. Similarly, $|V(G)|\leq
2m_{0}+2m_{0}(2k-3)=(4k-4)m_{0}$.
\end{proof}

\medskip

Now we are ready to provide our kernelization algorithm for \textsc{Maximum
Matching}, together with upper bounds on its running time and on the size of
the resulting kernel.

\begin{theorem}
\label{kernel-thm}\textsc{Maximum Matching}, when parameterized by the
solution size $k$, admits a kernel with at most $O(k^{2})$ vertices and at
most $O(k^{2})$ edges. For an $n$-vertex graph the kernel can be computed in~%
$O(kn)$ time.
\end{theorem}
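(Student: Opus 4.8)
The plan is to apply the two reduction rules exhaustively to obtain a reduced instance and then invoke Lemma~\ref{matching-bound-kernel-lem} to bound the kernel size, while carefully accounting for the time spent. First I would observe that applying Reduction Rule~\ref{red-rule-1} and Reduction Rule~\ref{red-rule-2} to an instance $(G,k)$ produces a reduced instance $(G',k')$ in which every vertex has degree between $1$ and $2(k'-1)$, and by Lemma~\ref{rule-1-safe-lem} (together with the evident safeness of Reduction Rule~\ref{red-rule-2}) the original instance $(G,k)$ is a \yesinstance{} if and only if the reduced instance is. Since Reduction Rule~\ref{red-rule-1} always decreases~$k$ and the remaining parameter stays nonnegative on a \yesinstance, the rule can be applied at most~$k$ times; Reduction Rule~\ref{red-rule-2} only deletes isolated vertices and can be handled by a single scan.

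Next I would bound the size of the reduced kernel. If at any point the parameter drops to~$k'\le 0$ while a matching still exists, or more precisely if more than~$k$ vertices are ever removed by Reduction Rule~\ref{red-rule-1}, we may immediately answer \yes. Otherwise the reduced graph~$G'$ satisfies $1\le\deg(v)\le 2(k'-1)$ for all vertices, so Lemma~\ref{matching-bound-kernel-lem} applies and yields $|V(G')|,|E(G')|\le (2k'-1)\cdot\mathbf{mm}(G')$. Here the point is that if $(G',k')$ is a \yesinstance{} we can stop; otherwise $\mathbf{mm}(G')<k'\le k$, whence $|V(G')|,|E(G')|=O(k^2)$, giving the claimed quadratic kernel. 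This is the conceptually clean part, since it is an immediate consequence of the preceding lemma.

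The main obstacle, and the step I would spend the most care on, is the running-time analysis, specifically realizing the bound~$O(kn)$. The naive approach of recomputing degrees after each deletion is too slow, so the plan is to maintain degree information under vertex deletions. I would precompute all degrees in $O(n+m)$ time, but since we cannot afford to read all $m$ edges when $m$ is large, I would instead exploit the degree cap: as soon as we detect a vertex of degree exceeding $2(k-1)$ we delete it via Reduction Rule~\ref{red-rule-1}, and after at most~$k$ such deletions every surviving vertex has degree $O(k)$. Concretely, I would scan the adjacency lists only up to the threshold $2(k-1)+1$ per vertex, so that identifying a high-degree vertex costs $O(k)$ per vertex and $O(kn)$ overall; each of the at most~$k$ deletions then updates the degrees of the $O(k)$ affected neighbors. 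The delicate accounting is to argue that the total work across all deletions and rescans telescopes to $O(kn)$ rather than $O(kn+m)$, using that we never need to examine more than $O(k)$ incident edges of any retained vertex.

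Finally I would assemble the pieces: the kernelization runs in $O(kn)$ time and outputs an instance with $O(k^2)$ vertices and edges, on which any polynomial-time maximum-matching algorithm (e.g.\ one running in time cubic in the number of vertices of the kernel) solves the problem in $O((k^2)^{3/2})=O(k^3)$ time, for a total of $O(kn+k^3)$ as claimed in the introduction. This establishes both the kernel-size bound and the time bound asserted in the theorem, and hence places \textsc{Maximum Matching} parameterized by solution size in the class PL-FPT.
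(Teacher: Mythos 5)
Your proposal follows essentially the same route as the paper: exhaustive application of the two reduction rules, the contrapositive of Lemma~\ref{matching-bound-kernel-lem} to either answer \yes{} or certify the $O(k^2)$ size bound, and an $O(kn)$ implementation that exploits the degree cap by truncating each adjacency-list scan at $O(k)$ entries (the paper realizes this by marking rather than deleting vertices, so that each scan reads at most $k$ marked plus $O(k)$ unmarked neighbors). Your one imprecision is that a vertex deleted by Reduction Rule~\ref{red-rule-1} has \emph{more} than $2(k-1)$ neighbors (possibly $\Theta(n)$), not $O(k)$, but this does not affect the overall $O(kn)$ bound since at most $k$ such deletions occur.
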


\begin{proof}
Let $(G,k)$ be an instance of parameterized \textsc{Maximum Matching}. Our
kernelization algorithm either returns {\textsc{yes}}, or it computes an
equivalent reduced instance $(G^{\prime },k^{\prime })$.

First, we exhaustively apply Reduction Rule~\ref{red-rule-1} by visiting
every vertex once and removing every vertex of degree greater than $2(k-1)$
in the current graph. Notably, since vertex removals can only reduce the
degree of the remaining vertices, the algorithm does not need to visit any
vertex twice. If we construct an instance $(G^{\prime },0)$ during this
procedure, that is, if we remove $k$ vertices from $G$, then we stop and
return {\textsc{yes}}. The correctness of this decision follows immediately
by the facts that $(G^{\prime },0)$ is clearly a {\textsc{yes}-instance }and
Reduction Rule~\ref{red-rule-1} is safe by Lemma~\ref{rule-1-safe-lem}.

The exhaustive application of Reduction Rule~\ref{red-rule-1} can be
implemented to run in $O(nk)$ time, as follows. Every time we discover a new
vertex $v$ with $\deg (v)>2(k-1)$ in the current graph (and for the current
value of the parameter~$k$), then we do not actually remove $v$ from the
current graph but we mark it as ``removed'' and we proceed to the next
vertex. Furthermore we keep in a counter~$r$ the number of vertices that
have been marked so far as ``removed''. Note that, to check whether we need
to apply Reduction Rule~\ref{red-rule-1} on a vertex~$v$, we only need to
visit at most all \emph{marked} neighbors of $v$ and at most $2(k-r-1)+1<2k$ 
\emph{unmarked} neighbors in the initial graph $G$. Thus, since there exist
at every point at most $r<k$ marked vertices, we only need to check less
than $3k$ neighbors of~$v$ in the initial graph $G$ to decide whether we
mark $v$ as a new ``removed'' vertex. Thus, since there are $n$ vertices in
total, the whole procedure runs in $O(kn)$ time. Denote by $r_{0}$ the total
number of vertices that have been marked as ``removed'' at the end of this
process.

Next, we exhaustively apply Reduction Rule~\ref{red-rule-2} by removing
every unmarked vertex $v$ that has only marked neighbors in $G$. Since such
a vertex $v$ remained unmarked during the exhaustive application of
Reduction Rule~\ref{red-rule-1}, $v$~has less than~$k$ \emph{marked}
neighbors and less than $2k$ \emph{unmarked} neighbors in $G$, that is, at
most $3k$ neighbors in total. Thus we can check in $O(k)$ time whether a
currently unmarked vertex has only marked neighbors; in this case we mark $v$
as \textquotedblleft removed\textquotedblright . This process can be clearly
done in $O(nk)$ time. Let $G^{\prime }$ be the induced subgraph of $G$ on
the unmarked vertices and let $k^{\prime }=k-r_{0}$. Note that every vertex
of $G^{\prime }$ has at least one and at most $2(k^{\prime }-1)$ neighbors
in $G^{\prime }$.

Finally we count the number of vertices and edges of $G^{\prime }$ in $O(kn)$
time. This can be done by visiting again all unmarked vertices $v$ and their
unmarked neighbors in $G$. If $G^{\prime }$ has strictly more than $%
(k^{\prime }-1)(4k^{\prime }-5)$ edges or more than $(k^{\prime
}-1)(4k^{\prime }-4)$ vertices, then we stop and return {\textsc{yes}}.
Otherwise the kernelization algorithm returns the kernel $(G^{\prime
},k^{\prime })$, which has $O(k^{2})$ vertices and $O(k^{2})$ edges.
Consequently, the kernelization algorithm runs in~$O(kn)$ time in total. It
remains to prove that, if $|E(G^{\prime })|>(k^{\prime }-1)(4k^{\prime }-5)$
or $|V(G^{\prime })|>(k^{\prime }-1)(4k^{\prime }-4)$, then $(G^{\prime
},k^{\prime })$ is a \textsc{yes}-instance. Assume otherwise that $%
(G^{\prime },k^{\prime })$ is a \textsc{no}-instance, that is, $\mathbf{mm}%
(G^{\prime })\leq k^{\prime }-1$. Then it follows by Lemma~\ref%
{matching-bound-kernel-lem} that $|E(G^{\prime })|\leq (k^{\prime
}-1)(4k^{\prime }-5)$ and $|V(G^{\prime })|\leq (k^{\prime }-1)(4k^{\prime
}-4)$, which is a contradiction. This completes the proof of the theorem.
\end{proof}

\medskip

Applying the matching algorithm due to Micali and Vazirani~\cite{MicaliV80}
to the kernel we obtained by Theorem~\ref{kernel-thm}, we achieve the
following result.

\begin{corollary}
\label{matching-algorithm-from-kernel-cor}\textsc{Maximum Matching}, when
parameterized by solution size~$k$, can be solved in~$O(nk+k^{3})$ time.
\end{corollary}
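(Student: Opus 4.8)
The plan is to compose the kernelization of Section~\ref{matching-kernelization-sec} with a fast unparameterized matching routine on the small kernel. First I would run the kernelization algorithm of Theorem~\ref{kernel-thm} on the input instance $(G,k)$; by that theorem this costs $O(kn)$ time and either already reports a correct \yes\ answer (namely when the exhaustive application of Reduction Rule~\ref{red-rule-1} deletes $k$ vertices, or when the reduced graph exceeds the size threshold $(k'-1)(2k'-1)$ on its vertices or edges), or it outputs an equivalent reduced instance $(G',k')$ with $|V(G')|=O(k^2)$ and $|E(G')|=O(k^2)$. In the first case we are already done within the claimed bound, so from here on I would assume we hold the explicit kernel $(G',k')$.

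Next I would apply the maximum-matching algorithm of Micali and Vazirani~\cite{MicaliV80}, which computes a maximum matching of a graph with $n_0$ vertices and $m_0$ edges in $O(\sqrt{n_0}\,m_0)$ time, to the kernel $G'$. Substituting $n_0=O(k^2)$ and $m_0=O(k^2)$ gives a running time of $O(\sqrt{k^2}\cdot k^2)=O(k^3)$ for this phase. Comparing $\mathbf{mm}(G')$ with $k'$ then decides whether $(G',k')$ is a \yes-instance, and since Reduction Rule~\ref{red-rule-2} is trivially safe and Reduction Rule~\ref{red-rule-1} is safe by Lemma~\ref{rule-1-safe-lem}, this decision is equivalent to deciding the original instance $(G,k)$. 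Adding the two phases yields the total running time $O(kn)+O(k^3)=O(kn+k^3)$, as claimed.

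To actually return a maximum matching of $G$ rather than only a yes/no decision, I would additionally lift the matching computed on the kernel back to $G$ through the recorded sequence of deletions: each vertex $v$ removed by Reduction Rule~\ref{red-rule-1} had more than $2(k-1)$ neighbors at the moment of its deletion, so, processing the deleted vertices in reverse order, one can always extend the current matching by pairing $v$ with a neighbor that is not yet saturated, exactly as in the ``if'' direction of Lemma~\ref{rule-1-safe-lem}; vertices removed by Reduction Rule~\ref{red-rule-2} are isolated and require no treatment. This bookkeeping costs $O(kn)$ time in total and does not affect the asymptotics.

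The most delicate point is not the running-time arithmetic but making the instantiation of the Micali--Vazirani bound genuinely rigorous: one must ensure that the kernel is passed to that algorithm as an \emph{explicit} graph of size $O(k^2)$, so that its $O(\sqrt{n_0}\,m_0)$ cost is truly $O(k^3)$ and does not covertly retain a dependence on the original $n$, and that the equivalence between solving the problem on $(G',k')$ and on $(G,k)$ is precisely the one guaranteed by the safeness lemmas. Both of these are already supplied by Theorem~\ref{kernel-thm} and Lemma~\ref{rule-1-safe-lem}, so the argument reduces to a clean composition of the kernelization with an off-the-shelf matching algorithm.
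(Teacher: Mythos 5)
Your proposal is correct and follows essentially the same route as the paper: apply the kernelization of Theorem~\ref{kernel-thm} in $O(kn)$ time, then run the Micali--Vazirani algorithm on the $O(k^2)$-vertex, $O(k^2)$-edge kernel in $O(\sqrt{k^2}\cdot k^2)=O(k^3)$ time, and compare the resulting matching size with the reduced parameter. The additional remarks on lifting an actual matching back to $G$ go beyond what the paper proves (it only decides the instance) but do not change the argument.
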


\begin{proof}
Let $(G,k)$ be an instance of parameterized \textsc{Maximum Matching}, where 
$k$ is the solution size. First we apply to~$(G,k)$ the kernelization
algorithm of Theorem~\ref{kernel-thm}, which returns either {\textsc{yes} or
an equivalent instance }$(G^{\prime \prime },k^{\prime \prime })$ {with }$%
O(k^{2})$ vertices and $O(k^{2})$ edges. Then we compute a maximum matching $%
\mathbf{mm}(G^{\prime \prime })$ of the graph $G^{\prime \prime }$ using any
of the known algorithms, e.g.,~the algorithm of Micali and Vazirani~\cite%
{MicaliV80}. It computes $\mathbf{mm}(G^{\prime \prime })$ in~$O\left(
|E(G^{\prime \prime })|\cdot \sqrt{|V(G^{\prime \prime })|}\right) =O(k^{3})$
time. Finally, if $\mathbf{mm}(G^{\prime \prime })\geq k^{\prime \prime }$,
then return {\textsc{yes}}, otherwise return \textsc{no}. In total, \textsc{%
Maximum Matching} can be thus solved in~$O(nk+k^{3})$ time. 
\end{proof}

\section{Outlook and Discussion\label{outlook-sec}}

Our work heads at stimulating a general research program which
systematically exploits the concept of fixed-parameter tractability for
polynomially solvable problems. 
For several fundamental and widely known problems, the time complexities of
the currently fastest algorithms are upper-bounded by polynomials of large
degrees. One of the most prominent examples is arguably the celebrated
polynomial-time recognition algorithm for \emph{perfect graphs}, whose time
complexity still remains $O(n^{9})$~\cite{Chudnovsky05}. Apart from trying
to improve the \emph{worst-case} time complexity for such problems, which
may be a very difficult (if not impossible) task, the complementary approach
that we propose here is to try to spot a \emph{parameter} that causes these
high-degree polynomial-time algorithms and to separate the dependency of the
time complexity from this parameter such that the dependency on the input
size becomes as close to linear as possible. We believe that the
``FPT~inside~P'' field is very rich and offers plenty of research
possibilities.

We conclude with three related topics that may lead to further interactions.
First, we remark that in classical parameterized complexity analysis there
is a growing awareness concerning the polynomial-time factors that often
have been neglected~\cite{Bev14}. Notably, there are some prominent
fixed-parameter tractability results giving \emph{linear-time} factors in
the input size (but quite large exponential factors in the parameter); these
include Bodlaender's famous ``linear-time'' algorithm for computing
treewidth~\cite{Bodlaender96} and the more recent ``linear-time'' algorithm
for computing the crossing number of a graph~\cite{KawarabayashiR07}.
Interestingly, these papers emphasize ``linear time'' in their titles,
instead of ``fixed-parameter tractability''. In this spirit, our result for 
\textsc{Longest Path in Interval graphs} is a ``linear-time'' algorithm
where the dependency on the parameter is not exponential~\cite{Bodlaender96,KawarabayashiR07} but \emph{polynomial}. 
In this line of research, Fomin et al.~studied graph and matrix problems on instances with
small treewidth. In particular the authors presented, among other results,
an $O(k^3 n \log n)$ randomized algorithm for computing the cardinality of a
maximum matching and an $O(k^4 n \log^2 n)$ randomized algorithm for
actually constructing a maximum matching, where $k$ is an upper bound for
the treewidth of the given graph~\cite%
{FominLPSW_Matching-Treewidth_SODA17}.

Second, polynomial-time solvability and the corresponding \emph{lower bounds}
have been of long-standing interest, e.g., it is believed that the famous
3SUM problem is only solvable in quadratic time and this conjecture has been
employed for proving relative lower bounds for other problems~\cite%
{GajentaanO95}. Very recently, there was a significant push in this research
direction with many new relative lower bounds~\cite%
{AbboudW14,AbboudGW15,Bringmann14}. The ``FPT~inside~P'' approach might help
in ``breaking'' these nonlinear relative lower bounds by introducing useful
parameterizations and striving for PL-FPT~results. In this direction an
interesting negative result appeared very recently by Abboud et al.~\cite%
{Abboud-VW-W-SODA16} who proved that, unless some
plausible complexity assumptions fail, for any $\varepsilon>0$ there does
not exist any algorithm with running time $2^{o(k)}n^{2-\varepsilon}$ for $(%
\frac{3}{2}-\delta)$-approximating the diameter or the radius of a graph,
where $k$ is an upper bound for the treewidth. In contrast, the authors
proved that both the diameter and the radius can be computed in $2^{O(k \log
k)}n^{1+o(1)}$ time~\cite{Abboud-VW-W-SODA16}.

Finally, coming back to a practical motivation for ``FPT inside~P'', it has
been very recently observed that identifying various parameterizations for
the same problem may help in designing meta-algorithms that (dynamically)
select the most appropriate solution strategy (also specified by respective
parameters)---this approach is known as ``programming by optimization''~\cite%
{Hoos12}. Note that so far this line of research is still in its infancy
with only one known study~\cite{HartungH15} for NP-hard problems; following
this approach might also be promising within our ``FPT inside~P'' framework.

\paragraph*{Acknowledgments.}

The authors wish to thank Jasper Slusallek for helpful discussions and for suggesting an improvement for the bounds given in Lemma~31.

\providecommand{\noopsort}[1]{}

\end{document}